\def\withanimation{1}
\Crefname{subsubsection}{Section}{Sections}
\algrenewcommand{\algorithmiccomment}[1]{\hfill {\bf {$\mathit{//}$} #1}}
\algnewcommand{\LineComment}[1]{\State {\bf $\mathit{//}$ #1}}
\theoremstyle{plain}
\newtheorem{theorem}{Theorem}[section]
\newtheorem{corollary}{Corollary}[section]
\newtheorem{lemma}{Lemma}
\newtheorem{proposition}{Proposition}[section]
\theoremstyle{definition}
\newtheorem{definition}{Definition}
\newtheorem{example}{Example}[section]
\newtheorem{remark}{Remark}
\newlist{parts}{enumerate}{1}
\Crefname{partsi}{Part}{Parts}
\setlist[parts,1]{label=\alph*.,ref=\alph*}
\newlist{observations}{enumerate}{1}
\Crefname{observationsi}{Observation}{Observations}
\setlist[observations,1]{label=\Roman*.,ref=\Roman*}
\newcommand{\defnitemtitle}[1]{(#1)\textbf{.}}
\newcommand{\thmitemtitle}[1]{\emph{(#1)}\textbf{.}}
\newcommand{\crefpart}[2]{\cref{#1}(\labelcref{#1-#2})}
\newcommand{\refintitle}[1]{\texorpdfstring{\ref{#1}}{\ref*{#1}}}
\newcommand{\eqdef}{\triangleq}
\newcommand{\Rge}{\mathbb{R}_+}
\newcommand{\Rundefined}{\mathbb{R}\cup\{\undefined\}}
\newcommand{\nonemptysubs}[1]{2^{#1}_{\ne\emptyset}}
\newcommand{\types}{\nonemptysubs{[n]}}
\newcommand{\type}{R}
\newcommand{\game}{\bigl((f_j)_{j=1}^n;(\mu^{\type})_{\smash{\type\in\types}}\bigr)}
\newcommand{\idgame}{\bigl((f_j)_{j=1}^n;(\type^i)_{i=1}^k;(f^i_j)_{j\in\type\smash{^i}}^{i\in[k]}\bigr)}
\newcommand{\constraint}{\Bigl((\mu^{\type})_{\type\in\types},\bigl([t_j,T_j]\bigr)_{j=1}^n\Bigr)}
\newcommand{\genconstraint}{\Bigl(\bigl([\mu^i,\mathcal{M}^i]\bigr)_{i=1}^k,\bigl([t_j,T_j]\bigr)_{j=1}^n,\bigl([m^i_j,M^i_j]\bigr)^{i\in[k]}_{j\in[n]}\Bigr)}
\newcommand{\id}{\mathrm{id}}
\newcommand{\may}[1]{Q^{#1}}
\newcommand{\equalize}[1]{\Equalize_{#1}}
\newcommand{\undefined}{\mathrm{undefined}}
\newcommand{\load}[1]{\mu^s_{#1}}
\newcommand{\height}[1]{h^s_{#1}}
\newcommand{\loadt}[1]{\mu^{s'}_{#1}}
\newcommand{\heightt}[1]{h^{s'}_{#1}}
\newcommand{\loadtt}[1]{\mu^{s''}_{#1}}
\newcommand{\heighttt}[1]{h^{s''}_{#1}}
\newcommand{\heightttt}[1]{h^{s'''}_{#1}}
\DeclareMathOperator{\supp}{supp}
\DeclareMathOperator*{\Max}{Max}
\DeclareMathOperator*{\Equalize}{Equalize}
\newcounter{m}
\newcounter{res}
\newcommand{\balloons}[1]{%
\begin{tikzpicture}[#1]
\pgfmathsetmacro{\containerheight}{4.5}
\pgfmathsetmacro{\pistonstart}{5.5}
\pgfmathsetmacro{\yellowamount}{1}
\pgfmathsetmacro{\cyanamount}{1/4}
\pgfmathsetmacro{\purpleamount}{3}
\pgfmathsetmacro{\blueamount}{18/16}
\pgfmathsetmacro{\redamount}{1.25}
\pgfmathsetmacro{\greenamount}{3/4}
\pgfmathsetmacro{\orangeamount}{3/4-1/12+1/32}
\pgfmathsetmacro{\brownamount}{1-1/96}
\pgfmathsetmacro{\pinkamount}{1.5}
\pgfmathsetmacro{\pistonheight}{4-4*\them/\theres}
\ifthenelse{\cnttest{\them*2}<{\theres}}{
\ifthenelse{\cnttest{\them*384}<{\theres*181}}{
\ifthenelse{\cnttest{\them*16}<{\theres*7}}{
\ifthenelse{\cnttest{\them*768}<{\theres*317}}{
\ifthenelse{\cnttest{\them*192}<{\theres*77}}{
\ifthenelse{\cnttest{\them*64}<{\theres*25}}{
\ifthenelse{\cnttest{\them*192}<{\theres*64}}{
\ifthenelse{\cnttest{\them*64}<{\theres*20}}{
\ifthenelse{\cnttest{\them*64}<{\theres*17}}{
\pgfmathsetmacro{\blueinfirst}{\blueamount/3}
\pgfmathsetmacro{\cyaninfirst}{\cyanamount/4}
\pgfmathsetmacro{\purpleinfirst}{\purpleamount/2}
}{
\pgfmathsetmacro{\blueinfirst}{\blueamount/3-(\them/\theres-17/64)*4/3}
\pgfmathsetmacro{\cyaninfirst}{\cyanamount/4-(\them/\theres-17/64)*4/3}
\pgfmathsetmacro{\purpleinfirst}{\purpleamount/2-(\them/\theres-17/64)*4/3}
}
\pgfmathsetmacro{\blueinthird}{(\blueamount-\blueinfirst)/2}
\pgfmathsetmacro{\cyaninfifth}{(\cyanamount-\cyaninfirst)/3}
\pgfmathsetmacro{\cyaninthird}{(\cyanamount-\cyaninfirst)/3}
}{
\pgfmathsetmacro{\cyaninfirst}{0}
\pgfmathsetmacro{\blueinfirst}{\blueamount/3-1/16-(\them/\theres-20/64)*4}
\pgfmathsetmacro{\purpleinfirst}{\purpleamount/2-1/16}
\pgfmathsetmacro{\blueinthird}{\blueamount/3+1/32}
\pgfmathsetmacro{\cyaninthird}{\cyanamount/3-(\them/\theres-20/64)*4}
\pgfmathsetmacro{\cyaninfifth}{(\cyanamount-\cyaninthird)/2}
}
}{
\pgfmathsetmacro{\blueinfirst}{\blueamount/3-1/16-(\them/\theres-20/64)*4}
\pgfmathsetmacro{\purpleinfirst}{\purpleamount/2-1/16}
\pgfmathsetmacro{\blueinthird}{\blueamount/3+1/32-(\them/\theres-64/192)*4}
\pgfmathsetmacro{\cyaninfirst}{0}
\pgfmathsetmacro{\cyaninthird}{0}
\pgfmathsetmacro{\cyaninfifth}{\cyanamount/2}
}
}{
\pgfmathsetmacro{\cyaninfirst}{0}
\pgfmathsetmacro{\cyaninthird}{0}
\pgfmathsetmacro{\blueinfirst}{0}
\pgfmathsetmacro{\purpleinfirst}{\purpleamount/2-1/16-(\them/\theres-25/64)*4}
\pgfmathsetmacro{\blueinthird}{\blueamount/3+1/32-11/48-(\them/\theres-25/64)*8}
\pgfmathsetmacro{\cyaninfifth}{\cyanamount/2}
}
\pgfmathsetmacro{\greeninfourth}{\greenamount/2}
}{
\pgfmathsetmacro{\cyaninfirst}{0}
\pgfmathsetmacro{\cyaninthird}{0}
\pgfmathsetmacro{\blueinfirst}{0}
\pgfmathsetmacro{\blueinthird}{\blueamount/3+1/32-11/48-(\them/\theres-25/64)*8}
\pgfmathsetmacro{\purpleinfirst}{\purpleamount/2-1/16-(\them/\theres-25/64)*4}
\pgfmathsetmacro{\cyaninfifth}{\cyanamount/2}
\pgfmathsetmacro{\greeninfourth}{\greenamount/2-(\them/\theres-77/192)*12}
}
}{
\pgfmathsetmacro{\cyaninfirst}{0}
\pgfmathsetmacro{\cyaninthird}{0}
\pgfmathsetmacro{\blueinfirst}{0}
\pgfmathsetmacro{\blueinthird}{0}
\pgfmathsetmacro{\purpleinfirst}{(\purpleamount+\orangeamount-\yellowamount)/2}
\pgfmathsetmacro{\cyaninfifth}{\cyanamount/2}
\pgfmathsetmacro{\greeninfourth}{\greenamount/2-108/768-(\them/\theres-317/768)*4}
}
\pgfmathsetmacro{\redinfifth}{\redamount/2}
}{
\pgfmathsetmacro{\cyaninfirst}{0}
\pgfmathsetmacro{\cyaninthird}{0}
\pgfmathsetmacro{\blueinfirst}{0}
\pgfmathsetmacro{\blueinthird}{0}
\pgfmathsetmacro{\purpleinfirst}{(\purpleamount+\orangeamount-\yellowamount)/2}
\pgfmathsetmacro{\greeninfourth}{\greenamount/2-108/768-(\them/\theres-317/768)*4}
\pgfmathsetmacro{\cyaninfifth}{\cyanamount/2-(\them/\theres-7/16)*2}
\pgfmathsetmacro{\redinfifth}{\redamount/2-(\them/\theres-7/16)*2}
}
}{
\pgfmathsetmacro{\cyaninfirst}{0}
\pgfmathsetmacro{\cyaninthird}{0}
\pgfmathsetmacro{\blueinfirst}{0}
\pgfmathsetmacro{\blueinthird}{0}
\pgfmathsetmacro{\purpleinfirst}{(\purpleamount+\orangeamount-\yellowamount)/2}
\pgfmathsetmacro{\greeninfourth}{0}
\pgfmathsetmacro{\cyaninfifth}{\cyanamount/2-(\them/\theres-7/16)*2}
\pgfmathsetmacro{\redinfifth}{\redamount/2-(\them/\theres-7/16)*2}
}
}{
\pgfmathsetmacro{\cyaninfirst}{0}
\pgfmathsetmacro{\cyaninthird}{0}
\pgfmathsetmacro{\blueinfirst}{0}
\pgfmathsetmacro{\blueinthird}{0}
\pgfmathsetmacro{\purpleinfirst}{(\purpleamount+\orangeamount-\yellowamount)/2}
\pgfmathsetmacro{\greeninfourth}{0}
\pgfmathsetmacro{\cyaninfifth}{0}
\pgfmathsetmacro{\redinfifth}{\redamount/2-1/8-(\them/\theres-1/2)*4}
}
\pgfmathsetmacro{\yellowinfirst}{\yellowamount}
\pgfmathsetmacro{\purpleinthird}{\purpleamount-\purpleinfirst}
\pgfmathsetmacro{\blueinfourth}{\blueamount-\blueinfirst-\blueinthird}
\pgfmathsetmacro{\cyaninsecond}{(\cyanamount-\cyaninfirst-\cyaninthird-\cyaninfifth}
\pgfmathsetmacro{\orangeinthird}{\orangeamount}
\pgfmathsetmacro{\browninfourth}{\brownamount}
\pgfmathsetmacro{\redinsecond}{\redamount-\redinfifth}
\pgfmathsetmacro{\greeninsecond}{\greenamount-\greeninfourth}
\pgfmathsetmacro{\pinkinfifth}{\pinkamount}
\draw[cyan,very thick] (0.5,\yellowinfirst+\cyaninfirst/2,0)
parabola bend (1.5,\yellowinfirst/2+\cyaninfirst/4+\redinsecond/2+\cyaninsecond/4-.25,0)
(2.5,\redinsecond+\cyaninsecond/2,0)
parabola bend (3.5,\redinsecond/2+\cyaninsecond/4+\orangeinthird/2+\cyaninthird/4-.25,0)
(4.5,\orangeinthird+\cyaninthird/2,0)
parabola bend (6.5,\orangeinthird/2+\cyaninthird/4+\pinkinfifth/2+\cyaninfifth/4-.25,0)
(8.5,\pinkinfifth+\cyaninfifth/2,0);
\draw[purple,very thick] (0.5,\yellowinfirst+\cyaninfirst+\purpleinfirst/2,0)
parabola bend (2.5,\yellowinfirst/2+\cyaninfirst/2+\purpleinfirst/4+\orangeinthird/2+\cyaninthird/2+\purpleinthird/4-.25,0)
(4.5,\orangeinthird+\cyaninthird+\purpleinthird/2,0);
\draw[blue,very thick] (0.5,\yellowinfirst+\cyaninfirst+\purpleinfirst+\blueinfirst/2,0)
parabola bend (2.5,\yellowinfirst/2+\cyaninfirst/2+\purpleinfirst/2+\blueinfirst/4+\orangeinthird/2+\cyaninthird/2+\purpleinthird/2+\blueinthird/4-.25,0)
(4.5,\orangeinthird+\cyaninthird+\purpleinthird+\blueinthird/2,0)
parabola bend (5.5,\orangeinthird/2+\cyaninthird/2+\purpleinthird/2+\blueinthird/4+\browninfourth/2+\greeninfourth/2+\blueinfourth/4-.25,0)
(6.5,\browninfourth+\greeninfourth+\blueinfourth/2,0);
\draw[red,very thick] (2.5,\redinsecond/2,0)
parabola bend (5.5,\redinsecond/4+\pinkinfifth/2+\cyaninfifth/2+\redinfifth/4-.25,0)
(8.5,\pinkinfifth+\cyaninfifth+\redinfifth/2,0);
\draw[green,very thick] (2.5,\redinsecond+\cyaninsecond+\greeninsecond/2,0)
parabola bend (4.5,\redinsecond/2+\cyaninsecond/2+\greeninsecond/4+\browninfourth/2+\greeninfourth/4-.25,0)
(6.5,\browninfourth+\greeninfourth/2,0);
\foreach \x in {0, 1, 2, 3, 4}
{
\draw[shift={(\x*2,0,0)}] (0,0,0) -- (0,\containerheight,0) -- (1,\containerheight,0);
\draw[shift={(\x*2,0,0)}] (0,\containerheight,0) -- (0,\containerheight,1);
}
\draw[fill=yellow,draw=none] (0,0,1) -- (1,0,1) -- (1,0,0) -- (1,\yellowinfirst,0) -- (0,\yellowinfirst,0) -- (0,\yellowinfirst,1) -- cycle;
\draw[fill=cyan,draw=none,shift={(0,\yellowinfirst,0)}] (0,0,1) -- (1,0,1) -- (1,0,0) -- (1,\cyaninfirst,0) -- (0,\cyaninfirst,0) -- (0,\cyaninfirst,1) -- cycle;
\draw[fill=purple,draw=none,shift={(0,\yellowinfirst+\cyaninfirst,0)}] (0,0,1) -- (1,0,1) -- (1,0,0) -- (1,\purpleinfirst,0) -- (0,\purpleinfirst,0) -- (0,\purpleinfirst,1) -- cycle;
\draw[fill=blue,draw=none,shift={(0,\yellowinfirst+\cyaninfirst+\purpleinfirst,0)}] (0,0,1) -- (1,0,1) -- (1,0,0) -- (1,\blueinfirst,0) -- (0,\blueinfirst,0) -- (0,\blueinfirst,1) -- cycle;
\draw[shift={(0,\yellowinfirst+\cyaninfirst+\purpleinfirst+\blueinfirst,0)}] (0,0,0) -- (0,0,1) -- (1,0,1) -- (1,0,0) -- cycle;
\draw[shift={(0,{max(\pistonheight,\yellowinfirst+\cyaninfirst+\purpleinfirst+\blueinfirst)},0)},fill=gray] (0,0,0) -- (0,0,1) -- (1,0,1) -- (1,0,0) -- cycle;
\draw[shift={(.5,0,.5)},very thick] (0,\pistonstart,0) -- (0,{max(\pistonheight,\yellowinfirst+\cyaninfirst+\purpleinfirst+\blueinfirst)},0);
\draw[fill=red,draw=none,shift={(2,0,0)}] (0,0,1) -- (1,0,1) -- (1,0,0) -- (1,\redinsecond,0) -- (0,\redinsecond,0) -- (0,\redinsecond,1) -- cycle;
\draw[fill=cyan,draw=none,shift={(2,\redinsecond,0)}] (0,0,1) -- (1,0,1) -- (1,0,0) -- (1,\cyaninsecond,0) -- (0,\cyaninsecond,0) -- (0,\cyaninsecond,1) -- cycle;
\draw[fill=green,draw=none,shift={(2,\redinsecond+\cyaninsecond,0)}] (0,0,1) -- (1,0,1) -- (1,0,0) -- (1,\greeninsecond,0) -- (0,\greeninsecond,0) -- (0,\greeninsecond,1) -- cycle;
\draw[shift={(2,\redinsecond+\cyaninsecond+\greeninsecond,0)}] (0,0,0) -- (0,0,1) -- (1,0,1) -- (1,0,0) -- cycle;
\draw[shift={(2,{max(\pistonheight,\redinsecond+\cyaninsecond+\greeninsecond)},0)},fill=gray] (0,0,0) -- (0,0,1) -- (1,0,1) -- (1,0,0) -- cycle;
\draw[shift={(2.5,0,.5)},very thick] (0,\pistonstart,0) -- (0,{max(\pistonheight,\redinsecond+\cyaninsecond+\greeninsecond)},0);
\draw[fill=orange,draw=none,shift={(4,0,0)}] (0,0,1) -- (1,0,1) -- (1,0,0) -- (1,\orangeinthird,0) -- (0,\orangeinthird,0) -- (0,\orangeinthird,1) -- cycle;
\draw[fill=cyan,draw=none,shift={(4,\orangeinthird,0)}] (0,0,1) -- (1,0,1) -- (1,0,0) -- (1,\cyaninthird,0) -- (0,\cyaninthird,0) -- (0,\cyaninthird,1) -- cycle;
\draw[fill=purple,draw=none,shift={(4,\orangeinthird+\cyaninthird,0)}] (0,0,1) -- (1,0,1) -- (1,0,0) -- (1,\purpleinthird,0) -- (0,\purpleinthird,0) -- (0,\purpleinthird,1) -- cycle;
\draw[fill=blue,draw=none,shift={(4,\orangeinthird+\cyaninthird+\purpleinthird,0)}] (0,0,1) -- (1,0,1) -- (1,0,0) -- (1,\blueinthird,0) -- (0,\blueinthird,0) -- (0,\blueinthird,1) -- cycle;
\draw[shift={(4,\orangeinthird+\cyaninthird+\purpleinthird+\blueinthird,0)}] (0,0,0) -- (0,0,1) -- (1,0,1) -- (1,0,0) -- cycle;
\draw[shift={(4,{max(\pistonheight,\orangeinthird+\cyaninthird+\purpleinthird+\blueinthird)},0)},fill=gray] (0,0,0) -- (0,0,1) -- (1,0,1) -- (1,0,0) -- cycle;
\draw[shift={(4.5,0,.5)},very thick] (0,\pistonstart,0) -- (0,{max(\pistonheight,\orangeinthird+\cyaninthird+\purpleinthird+\blueinthird)},0);
\draw[fill=brown,draw=none,shift={(6,0,0)}] (0,0,1) -- (1,0,1) -- (1,0,0) -- (1,\browninfourth,0) -- (0,\browninfourth,0) -- (0,\browninfourth,1) -- cycle;
\draw[fill=green,draw=none,shift={(6,\browninfourth,0)}] (0,0,1) -- (1,0,1) -- (1,0,0) -- (1,\greeninfourth,0) -- (0,\greeninfourth,0) -- (0,\greeninfourth,1) -- cycle;
\draw[fill=blue,draw=none,shift={(6,\browninfourth+\greeninfourth,0)}] (0,0,1) -- (1,0,1) -- (1,0,0) -- (1,\blueinfourth,0) -- (0,\blueinfourth,0) -- (0,\blueinfourth,1) -- cycle;
\draw[shift={(6,\browninfourth+\greeninfourth+\blueinfourth,0)}] (0,0,0) -- (0,0,1) -- (1,0,1) -- (1,0,0) -- cycle;
\draw[shift={(6,{max(\pistonheight,\browninfourth+\greeninfourth+\blueinfourth)},0)},fill=gray] (0,0,0) -- (0,0,1) -- (1,0,1) -- (1,0,0) -- cycle;
\draw[shift={(6.5,0,.5)},very thick] (0,\pistonstart,0) -- (0,{max(\pistonheight,\browninfourth+\greeninfourth+\blueinfourth)},0);
\draw[fill=pink,draw=none,shift={(8,0,0)}] (0,0,1) -- (1,0,1) -- (1,0,0) -- (1,\pinkinfifth,0) -- (0,\pinkinfifth,0) -- (0,\pinkinfifth,1) -- cycle;
\draw[fill=cyan,draw=none,shift={(8,\pinkinfifth,0)}] (0,0,1) -- (1,0,1) -- (1,0,0) -- (1,\cyaninfifth,0) -- (0,\cyaninfifth,0) -- (0,\cyaninfifth,1) -- cycle;
\draw[fill=red,draw=none,shift={(8,\pinkinfifth+\cyaninfifth,0)}] (0,0,1) -- (1,0,1) -- (1,0,0) -- (1,\redinfifth,0) -- (0,\redinfifth,0) -- (0,\redinfifth,1) -- cycle;
\draw[shift={(8,\pinkinfifth+\cyaninfifth+\redinfifth,0)}] (0,0,0) -- (0,0,1) -- (1,0,1) -- (1,0,0) -- cycle;
\draw[shift={(8,{max(\pistonheight,\pinkinfifth+\cyaninfifth+\redinfifth)},0)},fill=gray] (0,0,0) -- (0,0,1) -- (1,0,1) -- (1,0,0) -- cycle;
\draw[shift={(8.5,0,.5)},very thick] (0,\pistonstart,0) -- (0,{max(\pistonheight,\pinkinfifth+\cyaninfifth+\redinfifth)},0);
\foreach \x in {0, 1, 2, 3, 4}
{
\draw[shift={(\x*2,0,0)}] (0,\containerheight,1) -- (0,0,1) -- (1,0,1) -- (1,\containerheight,1) -- cycle;
\draw[shift={(\x*2,0,0)}] (1,\containerheight,1) -- (1,\containerheight,0) -- (1,0,0) -- (1,0,1);
}
\end{tikzpicture}%
}
\newcommand{\balloonsstatic}[1]{%
\setcounter{m}{#1}%
\balloons{yscale=0.613,xscale=0.785}%
}
\newcommand{\balloonsanim}{
\setcounter{m}{0}%
\begin{animateinline}[poster = first, controls,buttonsize=4em]{64}%
\whiledo{\not{\cnttest{32*\them}>{17*\theres}}}{%
\balloons{xscale=1.505,yscale=1.497}%
\ifthenelse{\cnttest{32*\them}<{17*\theres}}{%
\newframe%
}{%
\end{animateinline}%
}%
\stepcounter{m}%
}%
}
\newcommand{\balloonswithanimation}{balloons\if\withanimation1,balloons-anim\fi}
\newcommand{\minvessel}[2]{
\draw[x={#2 cm},shift={(#1,0)}] (0,0.2) -- (0,5);
\draw[x={#2 cm},shift={(#1,0)}] (1,0.2) -- (1,2.25);
\draw[x={#2 cm},shift={(#1,0)}] (1,2.25) -- (1.5,1.5);
\draw[x={#2 cm},shift={(#1,0)}] (1.5,1.5) -- (1.25,1.5);
\draw[x={#2 cm},shift={(#1,0)}] (1.25,1.5) -- (1.25,0.5);
\draw[x={#2 cm},shift={(#1,0)}] (1.25,0.5) -- (3.9,0.5);
\draw[x={#2 cm},shift={(#1,0)}] (3.9,0.5) -- (3.9,1.5);
\draw[x={#2 cm},shift={(#1,0)}] (3.9,1.5) -- ({1+2.5/3},1.5);
\draw[x={#2 cm},shift={(#1,0)}] ({1+2.5/3},1.5) -- (1,2.75);
\draw[x={#2 cm},shift={(#1,0)}] (1,2.75) -- (1,5);
}
\newcommand{\oddshapeoutline}{
\draw[shift={(0*1.2,0)}] (0,0.1) -- (0,5);
\draw[shift={(0*1.2,0)}] (1,0.2) -- (1,5);
\draw[shift={(0*1.2,0)}] (1.2,0.2) -- (1,0.2);
\draw[shift={(1*1.2,0)}] (0,0.2) -- (0,5);
\draw[shift={(1*1.2,0)}] (.5,0.2) -- (.5,5);
\draw[shift={(1*1.2,0)}] (1.2,0.2) -- (.5,0.2);
\draw[shift={(2*1.2,0)}] (0,0.2) -- (-.5,5);
\draw[shift={(2*1.2,0)}] (1,0.2) -- (1,5);
\draw[shift={(2*1.2,0)}] (1.2,0.2) -- (1,0.2);
\draw[shift={(3*1.2,0)}] (0,0.2) -- (0,5);
\draw[shift={(3*1.2,0)}] plot[variable=\t,samples=100,domain=1:5.8] ({.5+.5/\t,0.2+\t-1});
\draw[shift={(3*1.2,0)}] (1.075,0.2) -- (1,0.2);
\draw[shift={(4*1.2,0)}] (-.125,0.2) -- (-.125,0.8);
\draw[shift={(4*1.2,0)}] (-.125,0.8) -- (0,0.8);
\draw[shift={(4*1.2,0)}] (-.075,0.2) -- (-.075,0.7);
\draw[shift={(4*1.2,0)}] (0,0.8) -- (0,5);
\draw[shift={(4*1.2,0)}] (-0.075,0.7) -- (1.075,0.7);
\draw[shift={(4*1.2,0)}] (1,0.8) -- (1,5);
\draw[shift={(4*1.2,0)}] (-0.075,0.2) -- (1.075,.2);
\draw[shift={(4*1.2,0)}] (1.125,0.2) -- (1.125,0.8);
\draw[shift={(4*1.2,0)}] (1.125,0.8) -- (1,0.8);
\draw[shift={(4*1.2,0)}] (1.075,0.2) -- (1.075,0.7);
\draw[shift={(4*1.2,0)}] (1.125,0.2) -- (1.2,0.2);
\minvessel{5*1.2}{1}
\draw (0,0.1) -- (5*1.2+1,0.1) -- (5*1.2+1,0.2);
}
\newcommand{\compressor}[1]{
\pgfmathsetmacro{\containerheight}{2.5}
\pgfmathsetmacro{\blueinfirst}{#1}
\pgfmathsetmacro{\redinsecond}{#1}
\draw[blue,very thick] (0.5,\blueinfirst/2) --
(1.7,{\redinsecond+2*(1-\blueinfirst)/2});
\draw[red,very thick] (0.5,{\blueinfirst+2*(1-\redinsecond)/2}) --
(1.7,\redinsecond/2);
\draw[draw=none,fill=blue,shift={(0*1.2,0)}] (0,0) -- (1,0) -- (1,\blueinfirst) -- (0,\blueinfirst) -- cycle;
\draw[draw=none,fill=red,shift={(0*1.2,\blueinfirst)}] (0,0) -- (1,0) -- (1,{2*(1-\redinsecond)}) -- (0,{2*(1-\redinsecond)}) -- cycle;
\draw[draw=none,fill=red,shift={(1*1.2,0)}] (0,0) -- (1,0) -- (1,\redinsecond) -- (0,\redinsecond) -- cycle;
\draw[draw=none,fill=blue,shift={(1*1.2,\redinsecond)}] (0,0) -- (1,0) -- (1,{2*(1-\blueinfirst)}) -- (0,{2*(1-\blueinfirst)}) -- cycle;
\draw[shift={(0*1.2,0)}] (0,0) -- (0,\containerheight);
\draw[shift={(0*1.2,0)}] (1,0) -- (1,\containerheight);
\draw[shift={(0*1.2,0)}] (0,0) -- (1,0);
\draw[shift={(1*1.2,0)}] (0,0) -- (0,\containerheight);
\draw[shift={(1*1.2,0)}] (1,0) -- (1,\containerheight);
\draw[shift={(1*1.2,0)}] (0,0) -- (1,0);
}
\title{A Hydraulic Approach to Equilibria of Resource Selection Games}
\author{Yannai A. Gonczarowski\thanks{Einstein Institute of Mathematics, Rachel \& Selim Benin School of Computer Science \& Engineering and Federmann Center for the Study of Rationality,
The Hebrew University of Jerusalem, Israel; and Microsoft Research, \emph{E-mail}: \href{mailto:yannai@gonch.name}{yannai@gonch.name}.} \and
Moshe Tennenholtz\thanks{William Davidson Faculty of Industrial Engineering and Management, Technion --- Israel Institute of Technology (work carried out in part while at Microsoft Research), \emph{E-mail}: \href{mailto:moshet@ie.technion.ac.il}{moshet@ie.technion.ac.il}.}}
\date{\emph{Dedicated to Dr.\ Bella Kessler, my high-school math teacher, \\ for continuously encouraging my nonstandard proofs.} {\sc-y.a.g.}\\\vspace{1em} June 5, 2016}
\begin{document}
\maketitle

\vspace{-1.4em}
{\small
\begin{quotation}\noindent
{\large \emph{``All is water''}\quad \normalsize ---Thales, c.\ 585 BC}
\end{quotation}}

\begin{abstract}
Drawing intuition from a (physical) hydraulic system, we present a novel framework, constructively showing the existence of a strong Nash equilibrium in resource
selection games (i.e., asymmetric singleton congestion games) with nonatomic players, the coincidence of strong equilibria and Nash equilibria in such games,
and the uniqueness of the cost of each given resource across all Nash equilibria.
Our proofs allow for explicit calculation of Nash equilibrium and for explicit and direct calculation of the resulting (unique) costs of resources,
and do not hinge on any fixed-point theorem, on the Minimax theorem or any equivalent result, on linear programming, or on the existence of a potential (though our analysis does provide powerful insights into the potential, via a natural concrete physical interpretation).
A generalization of resource selection games, called resource selection games with I.D.-dependent weighting, is defined, and the results are extended to this family,
showing the existence of strong equilibria, and
showing that while resource costs are no longer unique across Nash equilibria in games of this family, they are nonetheless unique across all strong Nash equilibria, drawing
a novel fundamental connection between group deviation and I.D.-congestion.
A natural application of the resulting machinery to a large class of constraint-satisfaction problems is also described.
\end{abstract}

\noindent
\textbf{Keywords}:
hydraulic analysis; hydraulic computing; congestion games; nonatomic games; strong equilibrium; equilibrium properties; potential; physical computing

\section{Introduction}

Taking which highway would allow me to arrive at my workplace as fast as possible this morning? Using which computer server would my jobs be completed the soonest? Which router would deliver my packets with least latency? And even... shopping at which fashion store would make my clothes as unique as possible? All these, and more, are dilemmas faced by players in \emph{resource selection games} --- games in which each player's payoff depends solely on the quantity of players choosing the same strategy (resource) as that player --- and, more generally, in \emph{congestion games} --- games in which each player chooses a feasible strategy set (e.g., road segments), and, roughly, aims for its intersections with other chosen strategy sets to be small.

Congestion games \citep{Rosenthal73,MondererShapley96} have been central to the interplay between computer science and game theory \citep{Nisanbook}.
These games arise naturally in many contexts and possess various desirable properties; in particular, both \emph{atomic} congestion games (where each of the finitely many players has positive contribution to the congestion) and \emph{nonatomic} congestion games (where the singular contribution of each of the continuum of players to the congestion is negligible) possess pure-strategy Nash equilibria. It is therefore only natural that topics of major interest in the field of Algorithmic Game Theory, such as the \emph{price of anarchy} \citep{KoutsoupiasPapadimitriouPoA,PapadimitriouPoA,RoughgardenTardos}, which quantifies the social loss in Nash equilibria, have been introduced in the context of such games.

From a game-theoretic perspective, much additional effort has been spent in introducing important extensions of congestion games
in the context of {\bf atomic} congestion games, and in particular in the context of atomic resource selection games.
Such efforts include the study of strong equilibria (stability against group deviations)
in various such games \citep{HolzmanCong1,HolzmanRoute,FeldmanMansour}; player-specific congestion games \citep{Milchtaich96}, where cost functions may be player-specific; and I.D.-congestion games \citep{MondererSolCon}, where the cost of a resource may depend on the identity (rather than merely on the quantity) of the players using it.
Interestingly, the challenges of dealing with such major extensions in the {\bf nonatomic} case have only been partially tackled, largely using only tools generalizing those developed for atomic games. Indeed, while \citeauthor{Milchtaich05} studies models with player-specific payoffs \citep{Milchtaich05} and strong equilibria with player-specific payoffs (\citealp{Milchtaich06}; \citealp[see][for a more restricted model]{Milchtaich04}), extensions dealing with I.D.-congestion (where cost depends on the identity of other players using the same resource) have not been tackled at all, nor have any tools been offered in order to deal with such extensions. This lacuna is especially puzzling given the centrality of nonatomic congestion games in presenting flow and communication networks, central to computer science, as well as in presenting large markets and economies, central to macroeconomics.

In this paper, we deal with such extensions of the study of nonatomic congestion games, and address the related challenges by introducing a novel analysis approach, which we call {\em hydraulic computing}. Using this approach, which draws intuition from a (physical) hydraulic system, in \cref{id-independent} we show the existence of strong equilibria in nonatomic resource selection games, that strong equilibria and Nash equilibria coincide in such games, and that the cost of each given resource is unique across all Nash equilibria. Generalizing to I.D.-congestion games, in \cref{compress-expand}
we show the existence of strong equilibria in resource selection games where the cost of a resource depends on the identity of the players using it, and that the cost of each given resource in such games, while interestingly no longer unique across Nash equilibria, is nonetheless unique across all strong equilibria, drawing a novel fundamental connection between group deviation\footnote{While coalitional deviations in large-scale economies, such as nonatomic games, require massive coordination to involve coalitions of nonnegligible measure, we note that such deviations are by no means purely theoretic; indeed, modern cloud-based social application such as Waze (e.g., for congestion games on graphs) allow for centralized coordination of deviations of immense scales.} and I.D.-congestion. Our theoretical treatment does not hinge on any fixed-point theorem, on the Minimax theorem or any equivalent result, on linear programming, or on the existence of a potential (though it does provide powerful insights into the potential when a potential exists, via a natural concrete physical interpretation --- see \cref{potential}), and is the first to provide explicit formulation of the resource cost obtained in equilibria of congestion games.
Looking beyond the realm of games, in \cref{hall-and-beyond} we show that our framework can serve as a constructive substitute to linear-programming approaches in other contexts as well,
such as that of Hall's marriage theorem and many constraint-satisfaction problems generalizing~it.

\paragraph{Atomic Games}
Atomic congestion games with finitely many players have been introduced by \cite{Rosenthal73}, who has shown the existence of a pure-strategy Nash equilibrium in such games. \cite{MondererShapley96} later introduced potential games, and showed that they coincide with congestion games (with finitely many players). \cite{HolzmanCong1,HolzmanRoute} studied strong equilibria in congestion games and characterized settings in which a strong equilibrium exists; in particular, they showed that the set of strong equilibria and the set of Nash equilibria coincide in resource selection games (with finitely many players). \cite{Milchtaich96} extended congestion games to player-specific congestion games, in which players' costs are player-specific, and showed the existence of Nash equilibrium in player-specific resource selection games (with finitely many players).
\cite{MondererSolCon} introduced a general class of I.D.-congestion games, which are congestion games in which the cost of a resource depends also on the identity of the players using it. On the verge between a finite and a countable cardinality of players, \cite{Milchtaich00} showed the existence and uniqueness of Nash equilibria (uniqueness of strategies, not only of costs of resources) in large replications of generic finite resource selection games, as well as in the limit countable-player game.

\paragraph{Nonatomic Games}
Nonatomic congestion games, such as those that we study, have been very popular in the computer science context; see, e.g., \cite{RoughgardenTardos} and \cite{Nisanbook}.
In such games, \cite{Beckmann56} have shown the existence of Nash equilibrium and the uniqueness of costs of resources in Nash equilibria under certain differentiability assumptions;
a general theorem by \cite{Schmeidler73} implies the existence of Nash equilibrium under the assumption of continuity (rather than differentiability) of the cost functions. \cite{Milchtaich05} characterizes congestion games with player-specific costs in which equilibrium resource costs are unique; in particular, he shows that this is the case in resource selection games with strictly increasing continuous cost functions. \cite{Milchtaich04,Milchtaich06} studies strong equilibria with player-specific costs, and in particular shows the coincidence of Nash and strong equilibria in resource selection games with strictly increasing continuous cost functions.

\vspace{1.1\baselineskip}
\noindent
Analogies to hydraulic systems have sporadically appeared in the economics literature in the past, but seem to be anecdotal in nature.
At the end of the nineteenth century, \cite{Fisher} built a complex hydraulic apparatus for calculating Walrasian equilibrium prices in competitive markets with up to three goods.
\cite{Kaminsky} \cite[see also][]{AumannThreeWives} uses an analogy to a simple hydraulic system to find the nucleolus of a small special set of cooperative games. While of their own interest, we note that it does not seem that any ``deep'' connection exists between the \emph{ad hoc} hydraulic analogies in these papers and our hydraulic framework.

\paragraph{Contributions}
The main contributions of this paper are:
\begin{enumerate}
\item
Introducing the hydraulic computing analysis framework.
\item
Providing an explicit formula (rather than an iterative procedure of computation) for calculating the cost of resources in equilibria of nonatomic resource selection games.
\item
Proving the uniqueness of equilibrium resource costs without any assumption of differentiability or even continuity; and relaxing the requirement of strict monotonicity of cost functions to weak monotonicity across several key results such as existence of strong equilibria, \linebreak equivalence of strong and Nash equilibria, and uniqueness of equilibrium resource costs.
\item
Proving the existence of strong equilibria in resource selection games with I.D.-dependent weighting with continuous cost and weighting functions, and the uniqueness of resource costs across strong equilibria in such games regardless of continuity (showing by example that these costs are not unique across all Nash equilibria),
drawing a novel fundamental connection between group deviation and I.D.-congestion.
\item
Applying hydraulic computing in lieu of linear-programming methods
in a large class of constraint-satisfaction problems, such as generalizations of finding a perfect marriage and proving Hall's theorem.
\end{enumerate}

\section{Notation}

\begin{definition}[Notation]
\leavevmode
\begin{itemize}
\item
\defnitemtitle{Naturals}
We denote the strictly positive natural numbers by $\mathbb{N}\eqdef\{1,2,3,\ldots\}$.
\item
\defnitemtitle{[n]}
For every $n \in \mathbb{N}$, we define $[n]\eqdef\{1,2,\ldots,n\}$.
\item
\defnitemtitle{Reals}
We denote the real numbers by $\mathbb{R}$.
\item
\defnitemtitle{Nonnegative Reals}
We denote the nonnegative reals by $\Rge\eqdef\{r \in \mathbb{R} \mid r \ge 0\}$.
\item
\defnitemtitle{Maximizing Arguments}
Given a set $S$ and a function $f:S\rightarrow\mathbb{R}$ that  attains a maximum value on $S$, we denote the \emph{set} of arguments in $S$ maximizing $f$ by $\arg\Max_{s\in S}f(s)\eqdef\linebreak\{s \in S \mid f(s)=m\}$, where $m\eqdef\Max_{s\in S}f(s)$.
\item
\defnitemtitle{Simplex}
For a set $\type\subseteq S$, we define:

\hfil $\Delta^{\type}=\bigl\{s \in [0,1]^S \:\big|\: \sum_{j \in \type}s_j=1 \And \forall j \in S\setminus \type:s_j=0\bigr\}$.

(The set~$S$ will be clear from context.)
\item
\defnitemtitle{Nonempty Subsets}
For a set $S$, we define $\nonemptysubs{S}\eqdef2^S\setminus\emptyset$ --- the nonempty subsets of~$S$.
\end{itemize}
\end{definition}

\begin{definition}[Plateau Height]
Let $f:\mathbb{R}\rightarrow\mathbb{R}$ be a nondecreasing function. We say that $h\in\mathbb{R}$ is a \emph{plateau height} of $f$ if there exist $x\ne y\in\mathbb{R}$ s.t.\ $f(x)=f(y)=h$.
\end{definition}

\begin{remark}
A strictly increasing function has no plateau heights.
\end{remark}

\section{\texorpdfstring{``}{"}Standard\texorpdfstring{''}{"} Resource Selection Games}\label{id-independent}

\subsection{Setting}

\begin{definition}[Resource Selection]
Let $n\in\mathbb{N}$. An \emph{$n$-resource selection game} is defined by a pair $\game$,
where $f_j:\Rge\rightarrow\mathbb{R}$ is a nondecreasing function for every $j\in[n]$, and $\mu^{\type}\in\Rge$ for every~$\type\in\types$.
\end{definition}

In a resource selection game, each $\type\in\types$  indicates a player type. Each player of type $\type$ may consume only from resources in $\type$; the total mass of the continuum of players of type $\type$ is $\mu^{\type}$. For each resource~$j\in[n]$, $f_j$ is a function from the consumption amount of this resource to the cost of consuming from the resource. We now formally define these concepts.

\begin{definition}[Consumption Profile; $\load{j}$; $\height{j}$; Nash Equilibrium]
Let $G=\game$ be a resource selection game.
\begin{itemize} 
\item
A \emph{consumption (strategy) profile} in $G$ is a function
$s:\types\rightarrow \Rge^{[n]}$ s.t.\ $s(\type) \in \mu^{\type}\cdot\Delta^{\type}$ for every $\type\in\types$.
\item
Given a consumption profile $s$ in $G$, for every $j \in [n]$ we define $\load{j}\eqdef\sum_{\smash{\type\in\types}}s_j(\type)$ --- the load on (i.e., total consumption from) resource $j$. Furthermore, we define $\height{j}\eqdef f_j(\load{j})$ --- the cost of resource $j$.
\item
A \emph{Nash equilibrium} in $G$ is a consumption profile $s$ s.t.\ for every $\type\in\types$
and for every $k\in\supp\bigl(s(\type)\bigr)$ and $j\in\type$, it is the case that $\height{k}\le\height{j}$.
\end{itemize}
\end{definition}

\begin{example}[A Home Internet / Cellular Market]
Consider a scenario in which the resources are internet service providers (ISPs), and the players are customers on the market for home internet. (Alternatively,
one could think of resources as cellular operators, and of players as customers on the market for cellular service.)
Each customer may choose between the providers available in this customer's geographical area, and would like to get a connection with the largest bandwidth possible given this constraint.
$\mu^{\type}$ in this case is proportional to the amount of customers with possible ISPs $\type$, and for each $j\in[n]$, we choose $f_j$ s.t.\ $\height{j}=f_j(\load{j})$ is inversely proportional to the effective bandwidth of each subscriber of ISP~$j$, when there are $\load{j}$ subscribers to this ISP.

If each ISP has the same total (i.e., overall) bandwidth, then the speed of the connection of a single customer subscribed to an ISP is inversely proportional to this ISP's number of subscribers, and so obtaining the fastest connection possible is equivalent to subscribing to a least-subscribed-to ISP, and so this case is captured by setting $f_j\eqdef\id$ for every $j\in[n]$. Generalizing, we may imagine that, say, some ISPs may have different total bandwidths than others (which may be captured by setting $f_j(x)\eqdef\nicefrac{x}{b_j}$, where $b_j$ is the total bandwidth of ISP~$j$), or that some ISPs may even purchase some additional total bandwidth as their subscriber pool grows; in either scenario, in order to surf with greatest speed, each customer would prefer to subscribe not necessarily to a least-subscribed-to ISP (i.e., one with minimal $\load{j}$), but rather to an ISP from which the customer would receive the fastest connection, i.e., one with minimal~\mbox{$\height{j}=f_j(\load{j})$}.
\end{example}

The study of stability against group deviations was initiated by \cite{Aumann59}, who considered
deviations from which all deviators gain. Recently, the CS literature considers a considerably stronger solution concept, according to which a deviation is considered beneficial even if
only some of the participants in the deviating coalition gain, as long as none of the participants lose \cite[see, e.g.,][]{Rozenfeld06}. While stability against the classical all-gaining coalitional deviation is termed \emph{strong equilibrium}, this more demanding concept is referred to as \emph{super-strong equilibrium}; there are very few results showing its existence in nontrivial settings. We now formally define both concepts.

\begin{definition}[Strong / Super-Strong Nash Equilibrium]
Let $G=\game$ be a
resource selection game and let $s$ be a Nash equilibrium in $G$. For every $\type\in\types$ with $\mu^{\type}>0$, let $h^{\type}\eqdef\height{j}$ for
every~$j\in\supp\bigl(s(\type)\bigr)$. ($h^{\type}$ is well defined by definition of Nash equilibrium.)
\begin{itemize}
\item
$s$ is a \emph{strong Nash equilibrium} if
there exists no consumption profile~$s'\ne s$ s.t.\
for every $\type\in\types$ and $k\in\supp\bigl(s'(\type)\bigr)$ s.t.\ $s'_k(\type)>s_k(\type)$, it is the case that $\heightt{k}< h^{\type}$.\footnote{\label{strong-coalition}The minimal coalition that can cause a deviation from $s$ to $s'$ is the coalition containing, for every type $\type\in\types$ and resource $k\in\supp\bigl(s'(\type)\bigr)$ s.t.\ $s'_k(\type)>s_k(\type)$, a mass of $s'_k(\type)\!-\!s_k(\type)$ players of type $\type$ who consume from $k$ in $s'$ but not in $s$.}
\item
$s$ is a \emph{super-strong Nash equilibrium} if
there exists no consumption profile $s'$ s.t.\
for every $\type\in\types$ and $k\in\supp\bigl(s'(\type)\bigr)$ s.t.\ $s'_k(\type)>s_k(\type)$, it is the case that $\heightt{k}\le h^{\type}$,
with $\heightt{k}<h^{\type}$ for at least one pair of type $\type\in\types$ and resource $k\in\supp\bigl(s'(\type)\bigr)$.\footnote{We require that no member of the minimal coalition described in \cref{strong-coalition} lose, but allow the gaining member to be any player, i.e., even one whose consumption is not necessarily changed. (Indeed, we do not require that $s'_k(\type)>s_k(\type)$ for the pair $\type$ and $k$ for which $\heightt{k}<h^{\type}$.)}
\end{itemize}
\end{definition}

\begin{remark}
Every super-strong Nash equilibrium is a strong Nash equilibrium.
\end{remark}

\subsection{Formal Results}\label{results}

In \cref{definitions,derivation}, we constructively prove the following three \lcnamecrefs{nash-exists} and \lcnamecref{indifference}.

\begin{theorem}[$\exists$ Strong Nash Equilibrium]\label{nash-exists}
Let $G=\game$ be a resource selection game.
If $f_1,\ldots,f_n$ are continuous, then a strong Nash equilibrium exists in $G$.
\end{theorem}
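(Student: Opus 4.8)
The plan is to reduce the theorem to two facts: \emph{(i)} every Nash equilibrium of $G$ is automatically a strong Nash equilibrium --- which needs only that the $f_j$ are nondecreasing --- and \emph{(ii)} a Nash equilibrium exists when the $f_j$ are continuous. For (i), let $s$ be a Nash equilibrium and suppose, toward a contradiction, that $s'\ne s$ is a beneficial group deviation, i.e.\ $\heightt{k}<h^{\type}$ for every $\type\in\types$ and every $k\in\supp\bigl(s'(\type)\bigr)$ with $s'_k(\type)>s_k(\type)$. Let $K\subseteq[n]$ be the set of resources onto which some type shifts mass. If $K=\emptyset$ then $s'(\type)\le s(\type)$ coordinatewise for every $\type$, which with $\sum_j s'_j(\type)=\mu^{\type}=\sum_j s_j(\type)$ forces $s'=s$; so $K\ne\emptyset$. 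For $k\in K$, pick a type $\type$ shifting onto $k$ (necessarily $\mu^{\type}>0$); then $k\in\supp\bigl(s'(\type)\bigr)\subseteq\type$, so the Nash property of $s$ yields $h^{\type}\le\height{k}$, hence $\heightt{k}<h^{\type}\le\height{k}$, and since $f_k$ is nondecreasing, $\loadt{k}<\load{k}$. Thus every resource in $K$ strictly loses load, while $\sum_j\loadt{j}=\sum_{\type}\mu^{\type}=\sum_j\load{j}$, so $\sum_{j\notin K}\loadt{j}>\sum_{j\notin K}\load{j}$; hence some $j\notin K$ strictly gains load, i.e.\ $s'_j(\type')>s_j(\type')$ for some $\type'$, contradicting $j\notin K$.

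For (ii) the plan is a hydraulic ``water-filling'' cascade. For a height $h$ and resource $j$, put $C_j(h)\eqdef\sup\{x\in\Rge\mid f_j(x)\le h\}\in[0,\infty]$, the largest load keeping $j$'s cost at most $h$; by continuity, $C_j(h)>0$ whenever $h>f_j(0)$, and if $f_j(0)\le h$ then every load in the (possibly degenerate) interval on which $f_j$ equals $h$ achieves cost exactly $h$. A Hall-/Gale--Hoffman-type argument should show that the players' total mass can be routed --- respecting the type constraints --- into resources each of cost $\le h$ if and only if $\sum_{j\in A}C_j(h)\ge\sum_{\type\subseteq A}\mu^{\type}$ for every $A\subseteq[n]$. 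Let $h^*$ be the least such $h$ (finite and attained, as each $h\mapsto\sum_{j\in A}C_j(h)$ is nondecreasing, right-continuous and unbounded and there are finitely many $A$), and let $A^*$ be a maximal $A$ whose trapped demand $\sum_{\type\subseteq A}\mu^{\type}$ cannot be accommodated at cost $<h^*$. One then shows there is a routing of the types $\type\subseteq A^*$ into $A^*$ that brings \emph{every} resource of $A^*$ to cost exactly $h^*$; delete $A^*$ and those types, and recurse on $[n]\setminus A^*$ with the surviving types restricted to their surviving resources. The cascade ends after at most $n$ stages; maximality of each chosen $A^*$ forces the thresholds $h^*>h^*_2>\cdots$ to strictly decrease, so when a type $\type$ is placed at stage $\ell$ its remaining resources all have cost $h^*_\ell$ while the ones removed at earlier stages --- exhausting $\type$'s accessible resources, as no accessible resource survives to a later stage --- have strictly larger cost, which is precisely the Nash condition for $\type$. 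Gluing the stages yields a Nash equilibrium, and by (i) it is strong.

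I expect the main obstacle to be exactly the routing/feasibility heart of the cascade under an arbitrary bipartite access structure between types and resources: one must genuinely establish the Hall/min-cut feasibility criterion, identify the right maximal ``critical'' set $A^*$, and then route the demand trapped in $A^*$ so that $A^*$ is filled to cost exactly $h^*$ --- a transportation problem with lower \emph{and} upper bounds on resource loads, the lower bounds coming from the requirement that every resource of $A^*$ reach cost $h^*$ and being feasible precisely because $A^*$ is critical. Closely tied to this is the plateau issue: weak monotonicity of the $f_j$ (the weakening the paper advertises) means $C_j(\cdot)$ can jump, so that ``the load realizing cost $h^*$'' is a whole interval; this flexibility is what makes the fill consistent, but it must be tracked carefully so that criticality, feasibility, and ``filled to cost $h^*$'' all line up, and continuity is what rules out the worst pathologies (e.g.\ a resource that cannot be brought to cost $h^*$ at all). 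I would isolate the raise-and-level step as a single equalization subroutine, prove it correct once, and run the cascade on top of it.
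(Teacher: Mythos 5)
Your part (i) is correct, and it is actually a cleaner route than the paper's: the paper establishes that all Nash equilibria are strong (\cref{super-strong}) by induction on $n$, peeling off the highest-costing resources via \cref{highest-stopping}, whereas your global load-counting argument --- $K\ne\emptyset$; every $k\in K$ satisfies $\heightt{k}<h^{\type}\le\height{k}$ and hence $\loadt{k}<\load{k}$; total load is conserved, so some $j\notin K$ gains load and must lie in $K$ --- settles it in one stroke, using only weak monotonicity. So the theorem correctly reduces to the existence of \emph{some} Nash equilibrium.

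The gap is in (ii), and it sits exactly where you predicted the main obstacle, but your proposed resolution fails. The claim that the \emph{maximal} critical set $A^*$ can be filled so that every resource of $A^*$ reaches cost exactly $h^*$ is false once the $f_j$ have plateaus, and the paper's own \cref{need-mg} is a direct counterexample: $n=2$, $f_1=\id$, $f_2(x)=\min\{x,2\}$, $\mu^{\{1\}}=1$, $\mu^{\{2\}}=3$, $\mu^{\{1,2\}}=0$. Here $h^*=2$ and the maximal critical set is $\{1,2\}$, but the mass eligible to enter resource $1$ is only $\mu^{\{1\}}+\mu^{\{1,2\}}=1<2$, so resource $1$ can never be raised to cost $2$; the true set of highest-costing resources is $\{2\}$ and the equilibrium costs are $(1,2)$. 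Retreating to a non-maximal critical set destroys your argument that the thresholds strictly decrease across stages (which is what delivers the Nash condition), and merely routing ``feasibly at level $h^*$'' without leveling does not produce a Nash profile (perturb the example to $\mu^{\{1,2\}}=\varepsilon>0$: a feasible level-$2$ routing may place type $\{1,2\}$ on resource $2$ while resource $1$ costs less). The missing idea is the paper's $M_G/D_G$ refinement of the $\arg\Max$: one must discard every candidate set $S$ containing a subset $S'$ that is \emph{starved}, i.e., for which the players permitted to consume from $S'$ cannot supply enough load to bring $S'$ to cost $E_G(S)$, and only then take $P_G$ as the union of maximizers. With that correction, the two steps you flagged --- the transportation problem with lower and upper load bounds that levels $P_G$ at $h_G$, and the strict decrease $h_G>h_{G-P_G}$ --- are real theorems rather than routine checks (they occupy \cref{distribute-aux,construct} and all of \cref{constrained-distribution}), as is the Gale--Hoffman-type feasibility criterion you invoke without proof; but the cascade architecture itself is the paper's.
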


\begin{theorem}[Uniqueness of Nash Equilibrium Resource Costs]\label{indifference-resource-costs}
Let $G$ be an $n$-resource selection game.
$\height{j}=\heightt{j}$ for every $j\in[n]$ and every two Nash equilibria $s,s'$ in $G$.
\end{theorem}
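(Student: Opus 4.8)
The plan is to argue by contradiction: partition the resources according to whether their equilibrium cost strictly increased in passing from $s$ to $s'$, and then derive a contradiction from the conservation identity $\sum_{j\in[n]}\load{j}=\sum_{\type\in\types}\mu^{\type}=\sum_{j\in[n]}\loadt{j}$ (which holds because $s(\type),s'(\type)\in\mu^{\type}\cdot\Delta^{\type}$ for every $\type$). Concretely, I would set $A\eqdef\{j\in[n]\mid\height{j}<\heightt{j}\}$ and assume toward a contradiction that $A\ne\emptyset$. Since each $f_j$ is nondecreasing, $\height{j}=f_j(\load{j})<f_j(\loadt{j})=\heightt{j}$ forces $\load{j}<\loadt{j}$ for every $j\in A$ (otherwise $f_j(\load{j})\ge f_j(\loadt{j})$); summing over $A$ gives $\sum_{j\in A}\load{j}<\sum_{j\in A}\loadt{j}$.

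The heart of the proof is the following per-type claim: for every $\type\in\types$, $\sum_{j\in A}s'_j(\type)\le\sum_{j\in A}s_j(\type)$ — i.e., no player type shifts net mass \emph{onto} $A$ when moving from $s$ to $s'$. To prove it, suppose it fails for some $\type$. Since $\sum_{j\in[n]}s_j(\type)=\mu^{\type}=\sum_{j\in[n]}s'_j(\type)$, there is then some $j\in A$ with $s'_j(\type)>s_j(\type)\ge0$ (hence $j\in\supp\bigl(s'(\type)\bigr)\subseteq\type$) and some $k\notin A$ with $s_k(\type)>s'_k(\type)\ge0$ (hence $k\in\supp\bigl(s(\type)\bigr)\subseteq\type$); note $j\ne k$ automatically. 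Now the Nash condition for $s'$ applied to $k\in\type$ and $j\in\supp\bigl(s'(\type)\bigr)$ gives $\heightt{j}\le\heightt{k}$; the Nash condition for $s$ applied to $j\in\type$ and $k\in\supp\bigl(s(\type)\bigr)$ gives $\height{k}\le\height{j}$; and $k\notin A$ means, by definition of $A$, $\heightt{k}\le\height{k}$; while $j\in A$ means $\height{j}<\heightt{j}$. Chaining these four facts yields $\heightt{j}\le\heightt{k}\le\height{k}\le\height{j}<\heightt{j}$, a contradiction. This establishes the claim.

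Summing the per-type claim over all $\type\in\types$ gives $\sum_{j\in A}\loadt{j}=\sum_{\type}\sum_{j\in A}s'_j(\type)\le\sum_{\type}\sum_{j\in A}s_j(\type)=\sum_{j\in A}\load{j}$, which contradicts the strict inequality obtained in the first paragraph. Hence $A=\emptyset$, i.e.\ $\height{j}\ge\heightt{j}$ for every $j\in[n]$; applying the identical argument with the roles of $s$ and $s'$ interchanged gives $\heightt{j}\ge\height{j}$ for every $j$, so $\height{j}=\heightt{j}$ for all $j\in[n]$, as desired.

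I expect the main obstacle to be conceptual rather than computational: because the players are nonatomic, a consumption profile records only aggregate mass per type per resource, so there is no individual ``player'' one can follow from one equilibrium to the other, and the usual atomic-style exchange argument does not literally apply. The key move is to replace player-tracking by the per-type mass-conservation identity, which is exactly what produces the pair $(j,k)$ needed to fire the two Nash inequalities. The only genuinely delicate bookkeeping is keeping the directions of all four inequalities in the collapsing chain mutually consistent — in particular that the contribution of ``$j\in A$'' is the strict one that closes the contradiction.
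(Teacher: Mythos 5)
Your proof is correct. The key chain checks out: if some type moved net mass onto $A\eqdef\{j\mid\height{j}<\heightt{j}\}$, conservation of that type's mass produces $j\in A$ with $s'_j(\type)>s_j(\type)$ (so $j\in\supp\bigl(s'(\type)\bigr)\subseteq\type$) and $k\notin A$ with $s_k(\type)>s'_k(\type)$ (so $k\in\supp\bigl(s(\type)\bigr)\subseteq\type$), and the four inequalities $\heightt{j}\le\heightt{k}\le\height{k}\le\height{j}<\heightt{j}$ indeed collapse; combined with $\load{j}<\loadt{j}$ for all $j\in A$ (forced by monotonicity of $f_j$) and global mass conservation, this gives $A=\emptyset$, and symmetry finishes. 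However, your route is genuinely different from the paper's. The paper argues by induction on $n$ via \cref{highest-stopping}: it identifies the set of highest-costing resources with the explicit, equilibrium-independent expression $P_G$ and their common cost with $h_G$, removes that set of resources, and recurses on the smaller game; the payoff is a closed-form, sequential computation of every $\height{j}$ (\cref{compute-height}), which is the feature the authors emphasize throughout. Your argument is direct and non-inductive --- a ``no net flow onto the set where costs rose'' argument in the spirit of classical uniqueness proofs for nonatomic congestion games --- and, like the paper's, it needs neither continuity nor strict monotonicity of the cost functions. What it gives up is any formula for, or structural description of, the common equilibrium costs: it establishes uniqueness without telling you what the unique values are, whereas the paper's longer machinery is precisely what delivers the explicit expression $h_G=\Max_{S\in D_G}E_G(S)$ that the rest of the paper builds on.
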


\begin{corollary}\label{indifference} Let $G=\game$ be a resource selection game.
\begin{parts}
\item\thmitemtitle{Players are Indifferent between Nash Equilibria}\label{indifference-players}
$\height{k}=\heightt{k'}$ for every $k\in\supp\bigl(s(\type)\bigr)$ and $k'\in\supp\bigl(s'(\type)\bigr)$, for every $\type\in\types$ and every two Nash equilibria $s,s'$ in $G$.
\item\thmitemtitle{Uniqueness of Nash Equilibrium Resource Loads}\label{indifference-resources}
If no two of $(f_j)_{j=1}^{n}$ share any plateau height, then $\load{j}=\loadt{j}$ for every $j\in[n]$ and every two Nash equilibria $s,s'$ in $G$.
\end{parts}
\end{corollary}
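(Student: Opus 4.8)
The plan is to derive both parts from \cref{indifference-resource-costs}, which already supplies $\height{j}=\heightt{j}$ for every $j$, so I will not need \cref{nash-exists} at all. Throughout, fix two Nash equilibria $s,s'$ of $G$ and write $h_j\eqdef\height{j}=\heightt{j}$ for brevity.

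For \crefpart{indifference}{players} I would observe that in \emph{any} Nash equilibrium the cost paid by the players of a type $\type$ is pinned to $\min_{j\in\type}h_j$: if $k\in\supp\bigl(s(\type)\bigr)$, the Nash condition gives $\height{k}\le\height{j}$ for all $j\in\type$, and since $k\in\type$ this forces $\height{k}=\min_{j\in\type}\height{j}$. Running the identical argument for $s'$ and substituting $\height{j}=\heightt{j}$ yields $\height{k}=\min_{j\in\type}h_j=\heightt{k'}$ for every $k\in\supp\bigl(s(\type)\bigr)$ and $k'\in\supp\bigl(s'(\type)\bigr)$; the assertion is vacuous when $\mu^{\type}=0$, since then both supports are empty. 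That settles part~(a).

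For \crefpart{indifference}{resources} the crux is a lemma fixing the \emph{total} load on every sublevel set of the cost vector, independently of the equilibrium. Given $c\in\mathbb{R}$, put $L\eqdef\{j\in[n]\mid h_j\le c\}$; I claim $\sum_{j\in L}\load{j}=\sum_{\type:\,\type\cap L\ne\emptyset}\mu^{\type}$. To see it, classify each type $\type$ by whether $\type\cap L\ne\emptyset$. If $\type\cap L=\emptyset$, then $\supp\bigl(s(\type)\bigr)\subseteq\type$ misses $L$, so $\type$ contributes $0$ to $\sum_{j\in L}\load{j}$. If $\type\cap L\ne\emptyset$, fix $j_0\in\type\cap L$; then every $k\in\supp\bigl(s(\type)\bigr)$ satisfies $h_k=\height{k}\le\height{j_0}=h_{j_0}\le c$ by the Nash condition, so $\supp\bigl(s(\type)\bigr)\subseteq L$ and $\type$ contributes its full mass $\mu^{\type}$. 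Summing over $\type$ gives the claim. Since the right-hand side depends only on $(h_j)_j$, which by \cref{indifference-resource-costs} is common to $s$ and $s'$, we get $\sum_{j\in L}\load{j}=\sum_{j\in L}\loadt{j}$ for every $c$; subtracting this identity at $c=v$ from the one at $c$ equal to the largest element of $\{h_j\mid j\in[n]\}$ strictly below $v$ (an empty sum if there is none) shows $\sum_{j:\,h_j=v}\load{j}=\sum_{j:\,h_j=v}\loadt{j}$ for every $v$ in the range of $(h_j)_j$: the two equilibria agree on the aggregate load of each cost level.

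To finish, suppose toward a contradiction that $\load{j}\ne\loadt{j}$ for some $j$, and set $v\eqdef h_j$. By \cref{indifference-resource-costs}, $f_j(\load{j})=h_j=\heightt{j}=f_j(\loadt{j})$, so $v$ is a plateau height of $f_j$. The level-wise identity forces $\sum_{j':\,h_{j'}=v}\bigl(\load{j'}-\loadt{j'}\bigr)=0$, and since the $j$-th summand is nonzero there is some $j'\ne j$ with $h_{j'}=v$ and $\load{j'}\ne\loadt{j'}$; by the same reasoning $v$ is a plateau height of $f_{j'}$ as well, contradicting the hypothesis that no two of $(f_j)_{j=1}^n$ share a plateau height. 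I expect the only real obstacle to be spotting the sublevel-set load lemma; once it is in hand, the level-wise comparison and the plateau argument are pure bookkeeping.
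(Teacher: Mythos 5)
Your proposal is correct and follows essentially the same route as the paper: part~(\labelcref{indifference-players}) is the identical $\min_{j\in\type}$ argument, and part~(\labelcref{indifference-resources}) rests on the same two ingredients the paper uses --- conservation of the total load on each cost level across equilibria (the paper sums over types whose support lies in the level set $S$, you difference sublevel sets, which amounts to the same identity) and the observation that a changed individual load forces a plateau height. The only cosmetic difference is that the paper determines all loads in the level set directly (at most one $f_k$ can have the plateau, so the rest are pinned and the last is recovered from the total), while you argue by contradiction that two changed loads would yield a shared plateau height.
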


\begin{theorem}[All Nash Equilibria are Strong / Super-Strong]\label{super-strong}
Let $G=\game$ be a resource selection game.
\begin{parts}
\item\label{super-strong-strong}
All Nash equilibria in $G$ are strong.
\item\label{super-strong-super-strong}
If $\height{j}$ is not a plateau height of $f_j$ for each $j\in[n]$ in any/every Nash equilibrium $s$, then all Nash equilibria in $G$ are super-strong.
\end{parts}
\end{theorem}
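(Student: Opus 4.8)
The plan is to prove both parts by contradiction, assuming a beneficial (super-)strong deviation $s'$ from a fixed Nash equilibrium $s$ and deriving a contradiction from conservation of total mass. The engine of both arguments is a single observation, which I will call the \emph{entered-resource lemma}: if some type $\type$ moves mass onto a resource $k$ (i.e.\ $s'_k(\type)>s_k(\type)$, so in particular $s'_k(\type)>0$, whence $\mu^{\type}>0$ and $k\in\supp\bigl(s'(\type)\bigr)\subseteq\type$), then the Nash condition at $s$ gives $h^{\type}\le\height{k}$ (players of type~$\type$ weakly prefer their equilibrium resources to $k\in\type$), while the deviation being beneficial gives $\heightt{k}\le h^{\type}$ (with strict inequality in the strong case). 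Chaining these, $f_k(\loadt{k})=\heightt{k}\le\height{k}=f_k(\load{k})$; in the strong case this is strict, and since $f_k$ is nondecreasing it forces $\loadt{k}<\load{k}$. In other words, \emph{every resource that is moved onto loses load in aggregate}.

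For \crefpart{super-strong}{strong}: a beneficial strong deviation satisfies $s'\ne s$, so some type $\type$ (necessarily with $\mu^{\type}>0$) redistributes its mass; since $\sum_j s_j(\type)=\mu^{\type}=\sum_j s'_j(\type)$, some resource is moved onto by $\type$, and by the entered-resource lemma that resource strictly loses load, so the set $D$ of resources $j$ with $\loadt{j}<\load{j}$ is nonempty. Conservation of total mass, $\sum_j\load{j}=\sum_{\type}\mu^{\type}=\sum_j\loadt{j}$, together with $D\ne\emptyset$, then forces some resource $m$ with $\loadt{m}>\load{m}$; since $\loadt{m}-\load{m}=\sum_{\type}\bigl(s'_m(\type)-s_m(\type)\bigr)>0$, resource $m$ is moved onto by some type, so the entered-resource lemma gives $\loadt{m}<\load{m}$ --- a contradiction.

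For \crefpart{super-strong}{super-strong}: the wrinkle is that the entered-resource lemma now only yields $f_k(\loadt{k})\le f_k(\load{k})$, which does not by itself give $\loadt{k}\le\load{k}$. Here the plateau hypothesis closes the gap: if $\loadt{k}>\load{k}$ then monotonicity gives $f_k(\loadt{k})\ge f_k(\load{k})$, hence $f_k(\loadt{k})=f_k(\load{k})=\height{k}$ with $\loadt{k}\ne\load{k}$, exhibiting $\height{k}$ as a plateau height of $f_k$ --- contradicting the hypothesis. So under the hypothesis, every moved-onto resource $k$ has $\loadt{k}\le\load{k}$. To see $D\ne\emptyset$, invoke the strict-gain clause of super-strength: there is a pair $\type$ and $k\in\supp\bigl(s'(\type)\bigr)$ with $\heightt{k}<h^{\type}\le\height{k}$ (the second inequality again by the Nash condition, since $k\in\type$), whence $f_k(\loadt{k})<f_k(\load{k})$ and thus $\loadt{k}<\load{k}$ by monotonicity. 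Now exactly as before, conservation plus $D\ne\emptyset$ produces a resource $m$ with $\loadt{m}>\load{m}$ that is moved onto, contradicting $\loadt{m}\le\load{m}$.

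Two remarks I would record. First, the phrase ``any/every Nash equilibrium'' in \crefpart{super-strong}{super-strong} is legitimate precisely because \cref{indifference-resource-costs} makes $\height{j}$ independent of the chosen Nash equilibrium, so whether $\height{j}$ is a plateau height of $f_j$ does not depend on $s$. Second, the argument uses only that the $f_j$ are nondecreasing, so no continuity assumption enters here. The main obstacle is really just isolating the entered-resource lemma --- the mildly counterintuitive fact that a beneficial coalitional deviation can route players only onto resources whose \emph{aggregate} load drops; once that is in hand, both parts follow at once from mass conservation, with the plateau hypothesis doing exactly the minimal work needed to recover the lemma's conclusion in the non-strict (super-strong) regime.
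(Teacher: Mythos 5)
Your proof is correct, and it takes a genuinely different route from the paper's. The paper proves \crefpart{super-strong}{strong} by induction on $n$: it first isolates the highest-costing resources $P^s$ and shows, via \cref{highest-stopping} (in particular the fact that only types confined to $P^s$ consume from $P^s$ in $s$, so that $\sum_{j\in P^s}\loadt{j}\ge\sum_{j\in P^s}\load{j}$), that no type in $\nonemptysubs{P^s}$ changes its strategy; it then passes to $G-P^s$ and invokes the induction hypothesis, mirroring the layer-by-layer descent of the pistons. You instead run the same counting trick globally and in one shot: your ``entered-resource lemma'' (any resource receiving new mass must satisfy $f_k(\loadt{k})<f_k(\load{k})$, hence $\loadt{k}<\load{k}$) combined with exact global conservation $\sum_j\loadt{j}=\sum_{\type}\mu^{\type}=\sum_j\load{j}$ immediately yields a contradiction, with the plateau hypothesis doing precisely the work needed to recover $\loadt{k}\le\load{k}$ from the non-strict inequality in the super-strong case. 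What your approach buys is a short, self-contained argument that bypasses \cref{highest-stopping} and the resource-removal machinery entirely (consistent with the theorem's lack of any continuity hypothesis); what the paper's approach buys is structural uniformity --- the same $P^s$-peeling lemma drives \cref{indifference-resource-costs}, \cref{compute-height}, and the existence proof, so the strength result comes almost for free once that machinery is in place. Your two closing remarks (that ``any/every'' is licensed by \cref{indifference-resource-costs}, and that only weak monotonicity of the $f_j$ is used) are both accurate.
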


We significantly generalize all of these results in \cref{compress-expand}.
\cref{nash-exists} significantly strengthens a corollary of \cite{Schmeidler73} that shows existence of a (not-necessarily-strong) Nash equilibrium for continuous cost functions; \cref{nash-exists,super-strong} strengthen corollaries of \cite{Milchtaich04,Milchtaich06} that show, for strictly increasing cost functions, existence of a strong equilibrium and equivalence of super-strong and Nash equilibria. \cref{indifference-resource-costs} strengthens a result of \cite{Beckmann56} that requires differentiability of cost functions, and a result of \cite{Milchtaich05} that requires both continuity and strict monotonicity of these functions. We emphasize that unlike \citeauthor{Schmeidler73}'s and \citeauthor{Milchtaich06}'s proofs, our proof of \cref{nash-exists} allows for explicit calculation of an equilibrium. Indeed, none of our proofs hinge on any fixed-point theorem, on the Minimax theorem or any equivalent result, on the existence of a potential, or on linear programming. (Moreover, as shown in \cref{hall-and-beyond}, our results can even replace nonconstructive techniques such as linear programming in certain problems that are traditionally viewed as unrelated to games.) Similarly, unlike \citeauthor{Beckmann56}'s and \citeauthor{Milchtaich05}'s proofs, our proof of \cref{indifference-resource-costs} gives an explicit formula for $\height{j}$ for every~$j\in[n]$ (proving the \lcnamecref{indifference-resource-costs} by noting that this formula does not depend on $s$). See \cref{discussion} for a discussion of the benefits of such explicit formulations.

\subsection{Construction and Hydraulic Intuition}\label{hydraulic-intuition}

In this \lcnamecref{hydraulic-intuition}, we intuitively survey the construction underlying our results, as a prelude to the formal analysis given in \cref{definitions,derivation}.
We start with the special case in which $f_j=\id$ for every~$j\in[n]$, i.e., $\height{j}=\load{j}$ for every $j\in[n]$ and consumption profile~$s$. Our hydraulic construction for this case, from which our analysis draws intuition, consists of a system of containers, interconnected balloons, and pistons, which is illustrated in
\cref{\balloonswithanimation}.\footnote{This construction significantly generalizes an \emph{ad hoc} construction that appears as a secondary auxiliary result in a previous discussion paper by the authors \citep{noncoop-market-alloc}. That discussion paper deals with combining a highly restricted form of resources selection games (with degenerate strategy sets, i.e., where $\mu^{\type}=0$ for all sets $\type$ but those of a very specific form) with facility location games, drawing conclusions regarding the possibility of false appearance of collusion in internet markets and various food markets.}\textsuperscript{,}\footnote{This construction may be thought of, in some sense, as a continuous counterpart to the dual greedy algorithm of \cite{HHKS13}, with hydraulic dynamics replacing their ``packing oracle''. As we show below, our hydraulic approach provides important intuition and novel insights (e.g., into the potential), as well as paves the way for proving a gamut of novel results.}%
\begin{figure}[p]
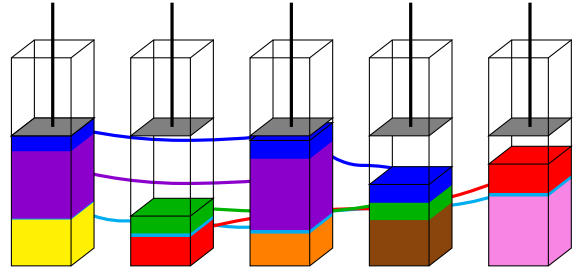
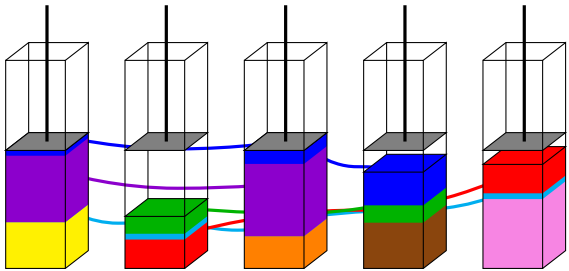
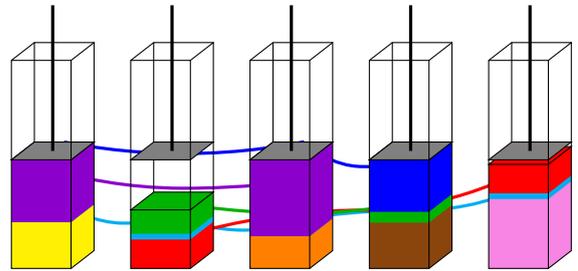
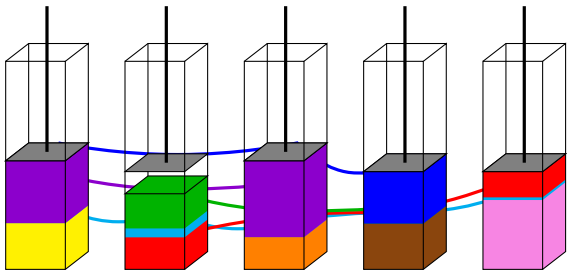
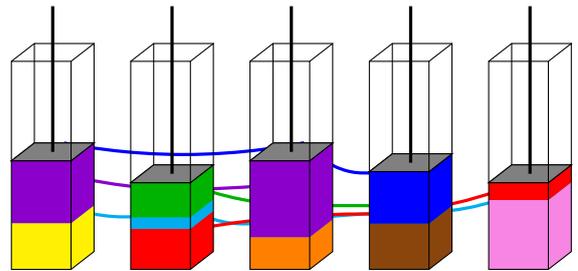
%
\centering%
\subfigure[A set of $5$ open-top hollow box containers, corresponding, from left to right, to resources $1,\ldots,5$, respectively. For each player type $\type$ with $\mu^{\type}>0$ (each such type is assigned a distinct color in the illustration), a balloon, or plastic bag, is placed in each container $j\in\type$. Balloons corresponding to the same type $\type$
are connected via a thin tube emerging from a narrow slit (not shown) running vertically along the back of each container, and are jointly filled with $\mu^{\type}$ liquid.]{%
\balloonsstatic{0}%
}\qquad\quad
\subfigure[Pistons are simultaneously lowered through the top sides of all the containers. As the piston in the first container reaches the balloons in this container, they are compressed, causing the balloons
connected to them (i.e., the purple balloon in the third container, the blue balloons in the third and fourth containers, and the light blue balloons in the second, third, and fifth containers) to inflate.]{%
\balloonsstatic{228}%
}
\subfigure[As the piston in the third container reaches the balloons in this container, they start to compress as well, causing, e.g., the interconnected blue balloon in the fourth container to inflate even faster.]{%
\balloonsstatic{278}%
}\qquad\quad
\subfigure[At a certain point in time, no balloon in the first or third containers can be compressed any further, as all the liquid in these containers that could have escaped to other containers has been depleted. The pistons in the first and third containers halt, and the remaining pistons continue their descent.]{%
\label{balloons-first-third}%
\balloonsstatic{317}%
}
\subfigure[At some later point in time, no balloon in the fourth container can be compressed any further, as all the liquid in this container that could have escaped to any container other
than the first or the third ones has been depleted.]{%
\balloonsstatic{362}%
}\qquad\quad
\subfigure[Eventually, no balloon in the second or fifth containers can be compressed any further, and the process concludes.]{%
\label{balloons-final}%
\balloonsstatic{408}%
}%
\caption{%
(See \if\withanimation1\cref{balloons-anim}\else the online paper\fi\ for an animated version\if\withanimation1, which unfortunately cannot be printed\fi.)
Illustration of the construction underlying our analysis, for $n=5$ and for $f_j=\id$ for every $j\in[5]$. E.g., as exactly~$87.5\%$ of the red liquid in \cref{balloons-final} is in the second container and the remaining $12.5\%$ is in the fifth container, the strategy for the player type corresponding to the red color (i.e., $\type=\{2,5\}$) in the (super-)strong Nash equilibrium that we construct is $0.875\cdot\mu^{\{2,5\}}$ consumption from resource~$2$ and $0.125\cdot\mu^{\{2,5\}}$ consumption from resource~$5$; similarly, as all of the blue liquid is in the fourth container, the strategy for the ``blue'' type ($\{1,3,4\}$) in this equilibrium is $\mu^{\{1,3,4\}}$ consumption, solely from resource~$4$.%
}%
\label{balloons}%
\end{figure}\if\withanimation1\begin{figure}[ht]%
\centering%
\balloonsanim%
\caption{%
Animated version of \cref{balloons}; requires Adobe Reader.
Click the $\triangleright$ button to start the animation.
}%
\label{balloons-anim}%
\end{figure}\fi{}
The intuition underlying our results draws from a number of key \lcnamecrefs{obs-min-liquid} regarding this construction (we generalize and formalize these \lcnamecrefs{obs-min-liquid} in \cref{definitions,derivation}):
\begin{observations}
\item\label{obs-min-liquid}
If the pistons in a set $S$ (e.g., $S=\{1,3\}$ or $S=\{4\}$) of containers stop simultaneously, then at the time of their stopping, no liquid under them can escape to any container in which
the piston has not yet stopped (or else it would do so and the piston above it would not stop).
\item\label{obs-nash}
By \cref{obs-min-liquid}, and as pistons that stop
later in time stop at a lower height,  in the resulting consumption profile
no player type has any incentive to deviate, and so it is indeed a Nash equilibrium.
\item\label{obs-any-nash}
If we initially distribute the liquid of each ``color'' (among the various balloons corresponding to this color) according to some Nash equilibrium (e.g., if we initially distribute the liquid as in \cref{balloons-final}), then the
liquid distribution would not change during the entire process of descent of the pistons. Therefore, each Nash equilibrium may be attained from some
initial liquid distribution.
\item\label{obs-start-over}
After pistons $1$ and $3$ (in \cref{balloons}) stop, we effectively start over, solving a $3$-resource ($2,4,5$) selection game between all player types whose original acceptable resources were not merely resource $1$ and/or $3$.
\item\label{obs-indifference}
In \cref{balloons}, pistons $1$ and $3$ are the earliest to stop.
By \cref{obs-min-liquid} above, no part of the liquid under these pistons when they stop can ever, regardless of the initial liquid distribution, end up in any container other than $1$ or $3$. Therefore, these pistons always stop having under them at least the liquid that is under them in \cref{balloons-first-third}, and accordingly at least at the height at which they stop in \cref{balloons-first-third}. By the same \lcnamecref{obs-min-liquid},
the pistons stopping earliest always stop having under them solely liquid that cannot escape to any other container, and so,
regardless of the initial liquid distribution, if this set were not pistons~$1$ and~$3$, then it would stop below
the stopping height of pistons $1$ and~$3$. Therefore, pistons $1$ and $3$ always stop earliest, and at the same height.
Using \cref{obs-start-over}, an inductive argument can show that the height at which each piston stops (and the stopping order) is independent of the initial liquid distribution, and so by \cref{obs-any-nash}, $\height{j}$ for every $j\in[n]$ is independent of the choice of Nash equilibrium.
Furthermore, by the same argument, each player
always consumes from resources with the same $\height{j}$, independently of the choice of Nash equilibrium~$s$.
\end{observations}

We note that while the final piston heights (i.e., values of $\height{j}$) are independent of the initial distribution of liquid among connected balloons (i.e., of the choice of Nash equilibrium $s$),
the final liquid distribution (i.e., players' strategies) is not; in \cref{balloons-final}, e.g., any amount of light blue liquid may be transferred from the second to the fifth container ``in exchange for'' an identical amount of red liquid.\footnote{While this nonuniqueness seemingly contradicts a uniqueness theorem of \cite{OrdaRomShimkin93}, we note that their setting in fact differs from ours; they deal with finitely many players with splittable demand, while we deal with a continuum of nonatomic players. We furthermore note that their analysis requires that the cost functions~$f_j$ be strictly increasing; in their setting, this guarantees uniqueness of both equilibrium loads $\load{j}$ and consumptions $s$, while in our setting, this assumption guarantees solely the uniqueness of equilibrium loads~$\load{j}$ (without this assumption, we have only uniqueness of equilibrium resource costs $\height{j}$); see \crefpart{indifference}{resources}.}

For the general case of arbitrary $f_j$, we intuitively think of replacing the $j$th box container, for every $j\in[n]$, with a container shaped so that whenever it is filled with any amount $\mu_j\in\Rge$ of liquid, the resulting surface level would be precisely $f_j(\mu_j)$. See \cref{vessels-odd-shapes} for an illustration.%
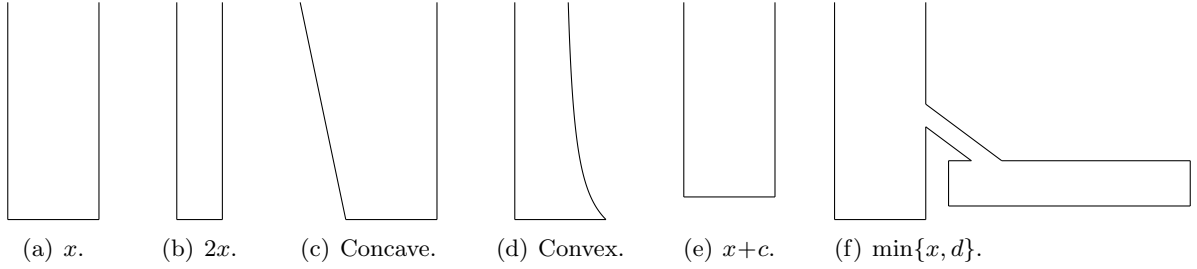
\begin{figure}[ht]%
\centering%
\subfigure[$x$.]{%
\begin{tikzpicture}[yscale=0.6,xscale=1.2]
\draw[shift={(0*1.2,0)}] (0,0.2) -- (0,5);
\draw[shift={(0*1.2,0)}] (0,0.2) -- (1,0.2);
\draw[shift={(0*1.2,0)}] (1,0.2) -- (1,5);
\end{tikzpicture}%
}\qquad
\subfigure[$2x$.]{%
\begin{tikzpicture}[yscale=0.6,xscale=1.2]
\draw[shift={(1*1.2,0)}] (0,0.2) -- (0,5);
\draw[shift={(1*1.2,0)}] (0,0.2) -- (.5,0.2);
\draw[shift={(1*1.2,0)}] (.5,0.2) -- (.5,5);
\draw[shift={(1*1.2,0)},white] (-.05,0.2) -- (-.2,0.2);
\draw[shift={(1*1.2,0)},white] (0.55,0.2) -- (.7,0.2);
\end{tikzpicture}%
}\qquad
\subfigure[Concave.]{%
\begin{tikzpicture}[yscale=0.6,xscale=1.2]
\draw[shift={(2*1.2,0)}] (0,0.2) -- (-.5,5);
\draw[shift={(2*1.2,0)}] (0,0.2) -- (1,0.2);
\draw[shift={(2*1.2,0)}] (1,0.2) -- (1,5);
\end{tikzpicture}%
}\qquad
\subfigure[Convex.]{%
\begin{tikzpicture}[yscale=0.6,xscale=1.2]
\draw[shift={(3*1.2,0)}] (0,0.2) -- (0,5);
\draw[shift={(3*1.2,0)}] (0,0.2) -- (1,0.2);
\draw[shift={(3*1.2,0)}] plot[variable=\t,samples=100,domain=1:5.8] ({.5+.5/\t,0.2+\t-1});
\draw[shift={(3*1.2,0)},white] (-.05,0.2) -- (-.2,0.2);
\draw[shift={(3*1.2,0)},white] (1.05,0.2) -- (1.2,0.2);
\end{tikzpicture}%
}\qquad
\subfigure[$x+c$.]{%
\begin{tikzpicture}[yscale=0.6,xscale=1.2]
\draw[shift={(4*1.2,0)},white] (0,0.2) -- (0,5);
\draw[shift={(4*1.2,0)}] (0,0.7) -- (0,5);
\draw[shift={(4*1.2,0)}] (0,0.7) -- (1,0.7);
\draw[shift={(4*1.2,0)}] (1,0.7) -- (1,5);
\end{tikzpicture}%
}\qquad
\subfigure[$\min\{x,d\}$.~~~~~~~~~~~~~~~~~~~~~~~~~~~]{%
\label{vessels-odd-shapes-min}%
\begin{tikzpicture}[yscale=0.6,xscale=1.2]
\minvessel{5*1.2}{1}
\draw[shift={(5*1.2,0)}] (0,0.2) -- (1,0.2);
\end{tikzpicture}%
}%
\caption{Containers corresponding to various functions $f_j$.
(Assume that the container to the right of the vessel depicted in \cref{vessels-odd-shapes-min} is large enough so as to never fill up, yet may only be occupied by balloons as long as the piston does not pass the tube connecting this container to the main vessel.)}%
\label{vessels-odd-shapes}%
\end{figure}
We emphasize that while the actual construction of such vessels may require that the cost functions $f_j$ meet certain differentiability conditions, our formal proof of \cref{nash-exists} only requires continuity of the cost functions, while our formal proofs of \cref{indifference-resource-costs,indifference,super-strong} do not require even that.
We note that this continuity assumption (in \cref{nash-exists}) is in fact not superfluous; indeed, if even one of the cost functions is discontinuous, then a
Nash equilibrium need not necessarily exist; see \cref{no-nash} for an example and an illustration.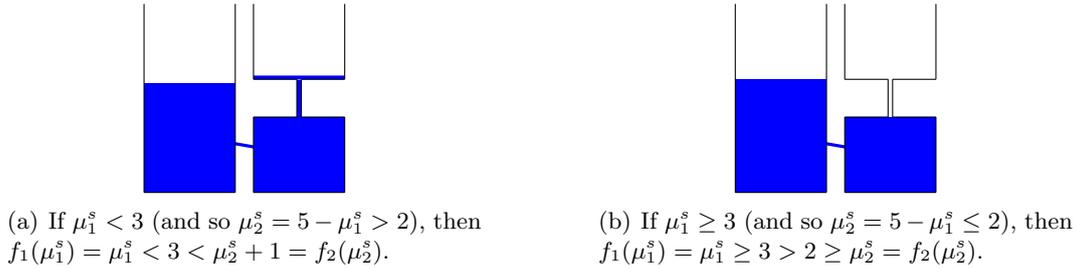
\begin{figure}[ht]%
\centering%
\subfigure[If $\load{1}<3$ (and so $\load{2}=5-\load{1}>2$), then $f_1(\load{1})=\load{1}<3<\load{2}+1=f_2(\load{2})$.]{%
\begin{tikzpicture}[yscale=.5,xscale=1.2]
\pgfmathsetmacro{\containerheight}{5}
\draw[blue,very thick] (0.5,1.5) -- (1.7,1);
\draw[draw=none] (-1.5,0) -- (-1,0);
\draw[draw=none,shift={(1*1.2+1,0)}] (1.5,0) -- (1,0);
\draw[draw=none,fill=blue,shift={(0*1.2,0)}] (0,0) -- (1,0) -- (1,2.9) -- (0,2.9) -- cycle;
\draw[draw=none,fill=blue,shift={(1*1.2,0)}] (0,0) -- (1,0) -- (1,2) -- (0.525,2) -- (0.525,3) -- (0,3) -- (0,3.1) -- (1,3.1) -- (1,3) -- (0.475,3) -- (0.475,2)-- (0,2) -- cycle;
\draw[shift={(0*1.2,0)}] (0,0) -- (0,\containerheight);
\draw[shift={(0*1.2,0)}] (1,0) -- (1,\containerheight);
\draw[shift={(0*1.2,0)}] (0,0) -- (1,0);
\draw[shift={(1*1.2,0)}] (0,0) -- (0,2);
\draw[shift={(1*1.2,0)}] (1,0) -- (1,2);
\draw[shift={(1*1.2,0)}] (0,0) -- (1,0);
\draw[shift={(1*1.2,0)}] (0,2) -- (.475,2);
\draw[shift={(1*1.2,0)}] (1,2) -- (.525,2);
\draw[shift={(1*1.2,0)}] (.475,2) -- (.475,3);
\draw[shift={(1*1.2,0)}] (.525,2) -- (.525,3);
\draw[shift={(1*1.2,0)}] (0,3) -- (0,\containerheight);
\draw[shift={(1*1.2,0)}] (1,3) -- (1,\containerheight);
\draw[shift={(1*1.2,0)}] (0,3) -- (.475,3);
\draw[shift={(1*1.2,0)}] (1,3) -- (.525,3);
\end{tikzpicture}%
}\qquad\qquad
\subfigure[If $\load{1}\ge3$ (and so $\load{2}=5-\load{1}\le2$), then $f_1(\load{1})=\load{1}\ge3>2\ge\load{2}=f_2(\load{2})$.]{%
\begin{tikzpicture}[yscale=.5,xscale=1.2]
\pgfmathsetmacro{\containerheight}{5}
\draw[blue,very thick] (0.5,1.5) -- (1.7,1);
\draw[draw=none] (-1.5,0) -- (-1,0);
\draw[draw=none,shift={(1*1.2+1,0)}] (1.5,0) -- (1,0);
\draw[draw=none,fill=blue,shift={(0*1.2,0)}] (0,0) -- (1,0) -- (1,3) -- (0,3) -- cycle;
\draw[draw=none,fill=blue,shift={(1*1.2,0)}] (0,0) -- (1,0) -- (1,2) -- (0,2) -- cycle;
\draw[shift={(0*1.2,0)}] (0,0) -- (0,\containerheight);
\draw[shift={(0*1.2,0)}] (1,0) -- (1,\containerheight);
\draw[shift={(0*1.2,0)}] (0,0) -- (1,0);
\draw[shift={(1*1.2,0)}] (0,0) -- (0,2);
\draw[shift={(1*1.2,0)}] (1,0) -- (1,2);
\draw[shift={(1*1.2,0)}] (0,0) -- (1,0);
\draw[shift={(1*1.2,0)}] (0,2) -- (.475,2);
\draw[shift={(1*1.2,0)}] (1,2) -- (.525,2);
\draw[shift={(1*1.2,0)}] (.475,2) -- (.475,3);
\draw[shift={(1*1.2,0)}] (.525,2) -- (.525,3);
\draw[shift={(1*1.2,0)}] (0,3) -- (0,\containerheight);
\draw[shift={(1*1.2,0)}] (1,3) -- (1,\containerheight);
\draw[shift={(1*1.2,0)}] (0,3) -- (.475,3);
\draw[shift={(1*1.2,0)}] (1,3) -- (.525,3);
\end{tikzpicture}%
}%
\caption{No Nash equilibrium exists when $n=2$, $f_1=\id$, $f_2(x)=(x\!>\!2\;?\;x\!+\!1 : x)$, $\mu^{\{1,2\}}=5$ and $\mu^{\{1\}}=\mu^{\{2\}}=0$.
(Assume that the tube connecting the two parts of the second vessel is of zero volume.)}%
\label{no-nash}%
\end{figure}

\subsection{Definitions for Formalizing the Observations from Section~\refintitle{hydraulic-intuition}}\label{definitions}

Building upon the intuition of \cref{hydraulic-intuition}, we formally derive the results of \cref{results} in \cref{derivation}, with proofs in \cref{proofs-derivation,proofs-results}. In this \lcnamecref{definitions}, we review the formal definitions underlying this derivation. Full proofs of all claims given below are given in \cref{proofs-derivation}.

\subsubsection{Communicating-Vessel Equalization}

Let $S$ be the set of pistons stopping earliest during the process depicted in \cref{balloons}. Assume that when these pistons stop, the total amount of liquid in the respective containers is $\mu$. At what height did the pistons stop? In this section we formalize the answer to this question.

\begin{definition}[Nondecreasing Function to $\Rundefined$]\label{nondecreasing}
Let $f:\Rge\rightarrow\Rundefined$. We say that $f$ is \emph{nondecreasing} if
$f|_{f^{-1}(\mathbb{R})}$ is nondecreasing; i.e., if for every $\mu<\mu'\in\Rge$, if both $f(\mu)\in\mathbb{R}$
and $f(\mu')\in\mathbb{R}$, then $f(\mu)\le f(\mu')$.
\end{definition}

\begin{definition}[Communicating-Vessel Equalization]\label{equalize}
Let $m\in\mathbb{N}$ and let $f_1,\ldots,f_m:\Rge\rightarrow\Rundefined$ be nondecreasing functions.\footnote{We allow the functions $f_1,\ldots,f_m$ to assume the value $\undefined$ for technical reasons that become apparent in \cref{intermediate-equalize} below. The reader may intuitively think of $f_1,\ldots,f_m$ as real functions until reaching that \lcnamecref{intermediate-equalize}.} We define the function $\equalize{f_1,\ldots,f_m}:\Rge\rightarrow\Rundefined$ by
\[
\mu\mapsto
\begin{cases}
f_1(\mu_1) & \exists\mu_1,\ldots,\mu_m\in\Rge:\sum_{j=1}^{m}\mu_j=\mu \And f_1(\mu_1)=f_2(\mu_2)=\cdots=f_m(\mu_m)\in\mathbb{R} \\
\undefined & \mbox{otherwise.}
\end{cases}
\]
\end{definition}

\begin{remark}[Equalizing Multiple Identical Functions]\label{symmetric-equalize}
If $f_1=f_2=\cdots=f_m$ and this function is defined on all $\Rge$, then $\equalize{f_1,\ldots,f_m}(\mu)=f_1\bigl(\frac{\mu}{m}\bigr)$.
\end{remark}

For $f_1,\ldots,f_m:\Rge\rightarrow\mathbb{R}$, one may intuitively think of $\equalize{f_1,\ldots,f_m}(\mu)$ as exactly the answer to the question raised above:
if $f_1,\ldots,f_m$ are the functions corresponding (see \cref{vessels-odd-shapes}) to the containers of the pistons stopping earliest during the process depicted in \cref{balloons}, and if the total amount of liquid in the respective containers when these pistons stop is $\mu$, then $\equalize{f_1,\ldots,f_m}(\mu)$ is the height at which these pistons stop; $\equalize{f_1,\ldots,f_m}(\mu)=\undefined$ if
it is impossible that all these pistons simultaneously stop when the total amount of liquid in these containers is $\mu$. Alternatively and equivalently,
if empty containers corresponding (see \cref{vessels-odd-shapes}) to $f_1,\ldots,f_m$ are connected at their base and the resulting system of communicating vessels is jointly filled with $\mu$ liquid, then $\equalize{f_1,\ldots,f_m}(\mu)$ is the resulting liquid surface level; see \cref{vessels-equalize}
for an illustration.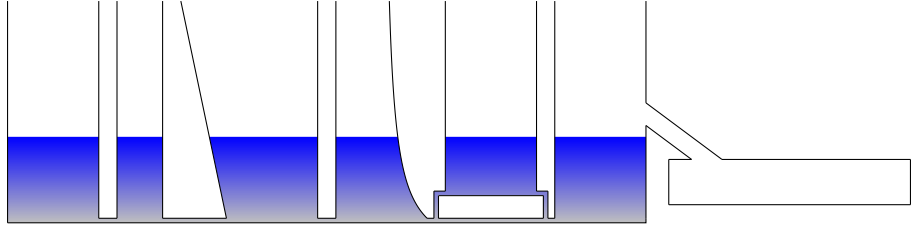
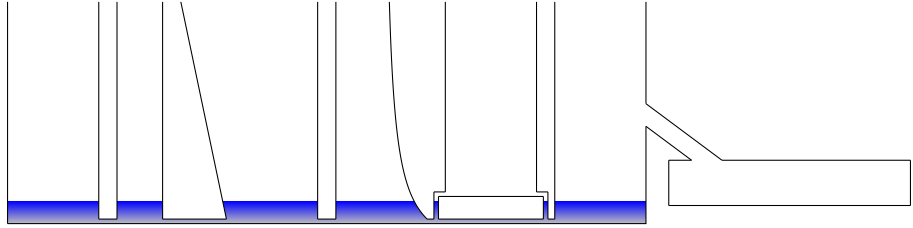
\begin{figure}[ht]%
\centering%
\subfigure[$\equalize{f_1,\ldots,f_6}(\mu)$ equals the liquid surface level when the containers are jointly filled with $\mu$ liquid.]{%
\begin{tikzpicture}[yscale=0.6,xscale=1.2]
\shade[top color=blue,bottom color=gray!50]
(0,0.1) -- (0,2) -- (1,2) -- (1,0.2) --
(1.2,0.2) -- (1.2,2) -- ({1.2+.5},2) -- ({1.2+.5},0.2) --
(2.4,0.2) -- ({2.4-.5/4.8*1.8},2) -- ({2.4+1},2) -- ({2.4+1},0.2) --
(3.6,0.2) -- (3.6,2) -- plot[variable=\t,samples=100,domain=2.8:1] ({3.6+.5+.5/\t,0.2+\t-1}) -- ({3.6+1},0.2) -- (3.6+1.075,0.2) --
(4.8-.125,0.8) -- (4.8,0.8) -- (4.8,2) -- (4.8+1,2) -- ({4.8+1},0.8) -- (4.8+1.125,0.8) -- (4.8+1.125,0.2) -- (4.8+1.2,0.2) --
(6,0.2) -- (6,2) -- (6+1,2) -- (6+1,0.1) --
cycle;
\draw[draw=none,fill=white,shift={(4*1.2,0)}] (-0.075,0.2) -- (-0.075,0.7) -- (1.075,0.7) -- (1.075,.2) -- cycle;
\oddshapeoutline
\end{tikzpicture}%
}
\subfigure[$\equalize{f_1,\ldots,f_6}(\mu)=\undefined$, as no distribution of $\mu$ liquid between the containers results in an even liquid surface level across all containers (recall that the fifth
container corresponds to the function $x+c$ for some constant $c$, and therefore if it is empty, then its liquid surface level is defined as the level $c$ of its bottom side).]{%
\label{vessels-equalize-undefined}%
\begin{tikzpicture}[yscale=0.6,xscale=1.2]
\shade[top color=blue,bottom color=gray!50]
(0,0.1) -- (0,0.6) -- (1,0.6) -- (1,0.2) --
(1.2,0.2) -- (1.2,0.6) -- ({1.2+.5},0.6) -- ({1.2+.5},0.2) --
(2.4,0.2) -- ({2.4-.5/4.8*.4},0.6) -- ({2.4+1},0.6) -- ({2.4+1},0.2) --
(3.6,0.2) -- (3.6,0.6) -- plot[variable=\t,samples=100,domain=1.4:1] ({3.6+.5+.5/\t,0.2+\t-1}) -- ({3.6+1},0.2) -- (3.6+1.075,0.2) --
(4.8-.125,0.6) -- (4.8-0.075,0.6) -- (4.8-0.075,0.2) -- (4.8+1.075,0.2) -- (4.8+1.075,0.6) -- (4.8+1.125,0.6) -- (4.8+1.125,0.2) -- (4.8+1.2,0.2) --
(6,0.2) -- (6,0.6) -- (6+1,0.6) -- (6+1,0.1) --
cycle;
\oddshapeoutline
\end{tikzpicture}%
}%
\caption{Equalizing the functions from \cref{vessels-odd-shapes}; assume that the connecting tubes are of zero volume.}%
\label{vessels-equalize}%
\end{figure}

When two of the functions $f_1,\ldots,f_m$ share a plateau height (cf.\ \crefpart{indifference}{resources}), then the liquid distribution~$\mu_1,\ldots,\mu_m$ may not be well defined; see \cref{vessels-equalize-not-increasing}
for an illustration.\begin{figure}[ht]%
\centering%
\subfigure[]{%
\begin{tikzpicture}[yscale=0.6,xscale=.969]
\shade[top color=blue,bottom color=gray!50]
(-1.1,.1) -- (-1.1,2.25) -- (-.1,2.25) -- (-.1,.2) -- (.1,.2) -- (.1,2.25) -- (1.1,2.25) -- (1.1,.1) -- cycle;
\shade[top color=blue,bottom color=gray!50]
(-4,.5) -- (-4,1.25) -- (-1.35,1.25) -- (-1.35,.5) -- cycle;
\shade[top color=blue,bottom color=gray!50]
(4,.5) -- (4,0.75) -- (1.35,0.75) -- (1.35,.5) -- cycle;
\minvessel{0.1}{-1}
\minvessel{0.1}{1}
\draw (-1.1,.2) -- (-1.1,.1) -- (1.1,.1) -- (1.1,.2);
\draw (-.1,.2) -- (.1,.2);
\end{tikzpicture}%
}\quad
\subfigure[]{%
\begin{tikzpicture}[yscale=0.6,xscale=.969]
\shade[top color=blue,bottom color=gray!50]
(-1.1,.1) -- (-1.1,2.25) -- (-.1,2.25) -- (-.1,.2) -- (.1,.2) -- (.1,2.25) -- (1.1,2.25) -- (1.1,.1) -- cycle;
\shade[top color=blue,bottom color=gray!50]
(-4,.5) -- (-4,0.65) -- (-1.35,0.65) -- (-1.35,.5) -- cycle;
\shade[top color=blue,bottom color=gray!50]
(4,.5) -- (4,1.35) -- (1.35,1.35) -- (1.35,.5) -- cycle;
\minvessel{0.1}{-1}
\minvessel{0.1}{1}
\draw (-1.1,.2) -- (-1.1,.1) -- (1.1,.1) -- (1.1,.2);
\draw (-.1,.2) -- (.1,.2);
\end{tikzpicture}%
}%
\caption{Equalization of two copies of the function from \cref{vessels-odd-shapes-min}, via two distinct liquid distributions.
Formally, when~$\mu>2d$, there exists a continuum of pairs $\mu_1,\mu_2\in\Rge$ s.t.\ $\mu_1+\mu_2=\mu$ and $\min\{\mu_1,d\}=\min\{\mu_2,d\}$.
For all such $\mu_1,\mu_2$, it is nonetheless always the case that $\min\{\mu_1,d\}=d=\min\{\mu_2,d\}$, and so $\equalize{f_1,f_2}(\mu)=d$ is well defined.}%
\label{vessels-equalize-not-increasing}%
\end{figure}
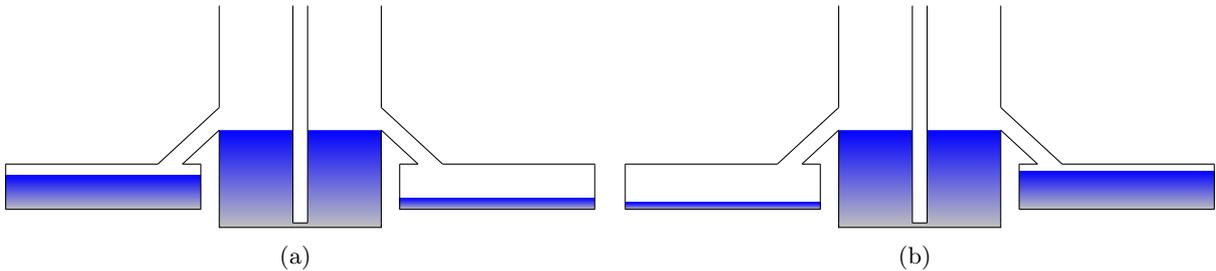
Nonetheless, the following \lcnamecref{equalize-well-defined-nondecreasing} shows that
the resulting surface level $\equalize{f_1,\ldots,f_m}$ is well defined, i.e., independent of the chosen liquid distribution $\mu_1,\ldots,\mu_m$.

\begin{lemma}[Equalization is Well Defined and Nondecreasing]\label{equalize-well-defined-nondecreasing}
Let $m\in\mathbb{N}$ and let $f_1,\ldots,f_m:\Rge\rightarrow\Rundefined$ be nondecreasing functions.
$\equalize{f_1,\ldots,f_m}$ is a well-defined
non\-de\-creasing function from $\Rge$ to $\Rundefined$.
\end{lemma}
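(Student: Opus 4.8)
The plan is to verify the two assertions packed into the \lcnamecref{equalize-well-defined-nondecreasing} --- well-definedness of the value in the first case of \cref{equalize}, and the nondecreasing property --- separately, each by an elementary pigeonhole argument on the liquid-distribution vectors together with monotonicity of the individual $f_j$'s. That $\equalize{f_1,\ldots,f_m}$ maps into $\Rundefined$ is immediate from the definition, so this is not part of the work; the only subtlety to keep in mind throughout is that, by \cref{nondecreasing}, monotonicity of $f_j$ yields an inequality between $f_j(a)$ and $f_j(b)$ only when both of these values lie in $\mathbb{R}$ --- which will always be the case below, since the values in question are the common equalization heights, assumed real.

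\emph{Well-definedness.} Fix $\mu\in\Rge$ and suppose $\mu_1,\ldots,\mu_m$ and $\mu_1',\ldots,\mu_m'$ are two distributions in $\Rge$, each summing to $\mu$, with $f_1(\mu_1)=\cdots=f_m(\mu_m)=h\in\mathbb{R}$ and $f_1(\mu_1')=\cdots=f_m(\mu_m')=h'\in\mathbb{R}$. I would first argue $h\ge h'$: since the two vectors have the same coordinate sum, not every coordinate can satisfy $\mu_j<\mu_j'$, so some index $j$ has $\mu_j\ge\mu_j'$, and then monotonicity of $f_j$ gives $h=f_j(\mu_j)\ge f_j(\mu_j')=h'$. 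Swapping the roles of the two distributions gives $h'\ge h$ by the identical argument, hence $h=h'$; so the value assigned to $\mu$ is independent of the chosen distribution, and in particular it does not matter that the definition writes $f_1(\mu_1)$ rather than $f_2(\mu_2)$, since these agree by the defining condition.

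\emph{Nondecreasing.} Let $\mu<\mu'\in\Rge$ with $\equalize{f_1,\ldots,f_m}(\mu)=h\in\mathbb{R}$ and $\equalize{f_1,\ldots,f_m}(\mu')=h'\in\mathbb{R}$, and I would fix witnessing distributions $(\mu_j)_j$ summing to $\mu$ with all $f_j(\mu_j)=h$, and $(\mu_j')_j$ summing to $\mu'$ with all $f_j(\mu_j')=h'$. Since the coordinate sum of $(\mu_j)_j$ is $\mu$, strictly less than $\mu'$, the coordinate sum of $(\mu_j')_j$, some index $j$ must have $\mu_j<\mu_j'$, whence monotonicity of $f_j$ gives $h=f_j(\mu_j)\le f_j(\mu_j')=h'$. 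This is precisely the condition of \cref{nondecreasing} for $\equalize{f_1,\ldots,f_m}$ to be nondecreasing.

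I do not expect a real obstacle here: the whole argument is two applications of ``equal (resp.\ strictly ordered) coordinate sums force some coordinate to be $\ge$ (resp.\ $<$)'' followed by a single use of monotonicity of one $f_j$. The one place demanding care is the bookkeeping around $\undefined$ --- making sure monotonicity of the $f_j$'s is invoked only on arguments whose images are real --- but this is handled automatically by the standing hypothesis in each case that the relevant equalization heights lie in $\mathbb{R}$.
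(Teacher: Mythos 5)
Your proposal is correct and follows essentially the same route as the paper: the paper factors the pigeonhole-plus-monotonicity step into an auxiliary lemma (\cref{equalize-aux}) and applies it twice, whereas you inline the same argument directly. The handling of $\undefined$ via the standing assumption that the equalization heights are real matches the paper's treatment as well.
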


In the remainder of this section, we derive some additional properties of communicating-vessel equalization.
The following \lcnamecref{intermediate-equalize} notes that ``connecting a single vessel with itself'' has no effect, while connecting several vessels may be done by first connecting subsets
of these vessels into ``intermediate vessels'', and only then connecting all ``intermediate vessels'' together; it is for the sake of the latter that we have allowed the functions
$f_1,\ldots,f_m$ in \cref{equalize} to assume the value $\undefined$.

\begin{lemma}[Composition of Equalizations]\label{intermediate-equalize}
Let $m\in\mathbb{N}$ and let $f_1,\ldots,f_m:\Rge\rightarrow\Rundefined$ be nondecreasing functions.
\begin{parts}
\item\label{intermediate-equalize-idempotent}
$\equalize{f_1}\equiv f_1$.
\item\label{intermediate-equalize-compose}
$\equalize{f_1,\ldots,f_m}\equiv\equalize{\equalize{f_1,\ldots,f_{j_1}},\equalize{f_{{j_1}+1},\ldots,f_{j_2}},\ldots,\equalize{f_{{j_k}+1},\ldots,f_m}}$, for every $k\in[m]$ and $1\le j_1<j_2<\cdots<j_k<m$.
\end{parts}
\end{lemma}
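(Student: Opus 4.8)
The plan is to dispose of \crefpart{intermediate-equalize}{idempotent} by a one‑line unfolding of the definition, and to prove \crefpart{intermediate-equalize}{compose} directly at the level of witnessing ``liquid distributions,'' leaning on \cref{equalize-well-defined-nondecreasing} to guarantee that the nested $\Equalize$ on the right‑hand side is a legitimate, single‑valued object in the first place.

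For \crefpart{intermediate-equalize}{idempotent}: the only way to write $\mu=\mu_1$ with $\mu_1\in\Rge$ is $\mu_1=\mu$, so by definition $\equalize{f_1}(\mu)$ is $f_1(\mu)$ when $f_1(\mu)\in\mathbb{R}$ and is $\undefined$ otherwise --- in both cases exactly $f_1(\mu)$.

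For \crefpart{intermediate-equalize}{compose}: fix $1\le j_1<\cdots<j_k<m$, write $B_0\eqdef\{1,\ldots,j_1\}$, $B_i\eqdef\{j_i+1,\ldots,j_{i+1}\}$ for $1\le i\le k-1$, and $B_k\eqdef\{j_k+1,\ldots,m\}$ --- a partition of $[m]$ into $k+1$ consecutive blocks --- and set $E_i\eqdef\equalize{(f_j)_{j\in B_i}}$. By \cref{equalize-well-defined-nondecreasing} each $E_i$ is a well‑defined nondecreasing function $\Rge\to\Rundefined$, and hence, again by \cref{equalize-well-defined-nondecreasing}, $\equalize{E_0,\ldots,E_k}$ is itself single‑valued. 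Since moreover each of $\equalize{f_1,\ldots,f_m}(\mu)$ and $\equalize{E_0,\ldots,E_k}(\mu)$ is either $\undefined$ or a single real, it is enough to prove, for every $\mu\in\Rge$ and every real $h$, the equivalence
\[
\equalize{f_1,\ldots,f_m}(\mu)=h\iff\equalize{E_0,\ldots,E_k}(\mu)=h,
\]
which pins down both functions (in particular their common domain of definedness). For ``$\Rightarrow$'' I would unfold $\equalize{f_1,\ldots,f_m}(\mu)=h\in\mathbb{R}$ to obtain $\mu_1,\ldots,\mu_m\in\Rge$ with $\sum_{j=1}^m\mu_j=\mu$ and $f_j(\mu_j)=h$ for all $j$; then, putting $\sigma_i\eqdef\sum_{j\in B_i}\mu_j$, the subfamily $(\mu_j)_{j\in B_i}$ witnesses $E_i(\sigma_i)=h$ for each $i$, and $\sum_{i=0}^k\sigma_i=\mu$, so $(\sigma_i)_{i=0}^k$ witnesses $\equalize{E_0,\ldots,E_k}(\mu)=h$. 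For ``$\Leftarrow$'' I would unfold $\equalize{E_0,\ldots,E_k}(\mu)=h$ to get $\sigma_0,\ldots,\sigma_k\in\Rge$ summing to $\mu$ with $E_i(\sigma_i)=h\in\mathbb{R}$ for each $i$, then unfold each of those to get $(\mu_j)_{j\in B_i}$ summing to $\sigma_i$ with $f_j(\mu_j)=h$, and concatenate over $i$ to recover $\mu_1,\ldots,\mu_m\in\Rge$ summing to $\mu$ with $f_j(\mu_j)=h$ for all $j$, i.e.\ $\equalize{f_1,\ldots,f_m}(\mu)=h$.

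I do not expect a genuine obstacle here: the substance of \crefpart{intermediate-equalize}{compose} is only that concatenating finite nonnegative tuples and grouping a finite sum into consecutive blocks are mutually inverse, and that both preserve the single feature the definition of $\Equalize$ records --- that every surface level in sight equals one common real number~$h$. The two points that do need a word of care, and where I would be explicit, are (i) that $\equalize{E_0,\ldots,E_k}$ is well‑formed and single‑valued, which is exactly why \cref{equalize-well-defined-nondecreasing} is a prerequisite, and (ii) that $\undefined$ never participates in a witnessing distribution, which is handled cleanly by framing the argument as ``for all $h\in\mathbb{R}$'' together with single‑valuedness rather than by a case split on definedness. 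An alternative would be to prove the two‑block case $k=1$ first and then induct on $k$, peeling off $B_k$ and using \crefpart{intermediate-equalize}{idempotent} to delete a trivial $\Equalize_{E_k}$ wrapper; but the direct argument above is shorter.
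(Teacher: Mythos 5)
Your proposal is correct and follows essentially the same route as the paper's proof: part~(\labelcref{intermediate-equalize-idempotent}) by direct unfolding, and part~(\labelcref{intermediate-equalize-compose}) by the same two-direction witnessing argument, grouping a distribution $\mu_1,\ldots,\mu_m$ into block sums for one direction and unfolding and concatenating block distributions for the other. Your explicit framing via ``for all real $h$'' plus single-valuedness (justified by \cref{equalize-well-defined-nondecreasing}) is a slightly tidier packaging of the same case analysis on definedness that the paper carries out implicitly.
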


Recall that \cref{nash-exists} requires that $f_1,\ldots,f_n$ be continuous. (An example in which one of these functions is discontinuous and no Nash equilibrium exists was given in \cref{no-nash}.)
We therefore conclude this section with an analysis of the equalization of continuous functions.

\begin{definition}[Function to $\Rundefined$: Continuous / Defined on a Suffix
of $\Rge$]
Let $f:\Rge\rightarrow\Rundefined$.
\begin{itemize}
\item
We say that $f$ is \emph{continuous} if $f|_{f^{-1}(\mathbb{R})}:f^{-1}(\mathbb{R})\rightarrow\mathbb{R}$ is continuous.
\item
We say that $f$ is \emph{defined on a suffix of $\Rge$} if for every $\mu<\mu'\in\Rge$, if $f(\mu)\in\mathbb{R}$, then $f(\mu')\in\mathbb{R}$ as well.
\end{itemize} 
\end{definition}

\begin{lemma}[Equalization of Continuous Functions]\label{equalize-continuous}
Let $m\in\mathbb{N}$ and let $f_1,\ldots,f_m:\Rge\rightarrow\Rundefined$ be nondecreasing functions.
\begin{parts}
\item\label{equalize-continuous-continuous}
If at least one of $f_1,\ldots,f_m$ is continuous, then $\equalize{f_1,\ldots,f_m}$ is continuous.
\item\label{equalize-continuous-suffix}
If each of $f_1,\ldots,f_m$ is continuous and defined on a suffix of $\Rge$, then $\equalize{f_1,\ldots,f_m}$ is continuous and defined on a suffix of $\Rge$ as well.
\end{parts}
\end{lemma}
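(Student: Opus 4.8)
The plan is to reduce both parts to the case $m=2$ and to settle that case by a direct hydraulic argument. The reduction rests on \crefpart{intermediate-equalize}{compose} and \crefpart{intermediate-equalize}{idempotent}, on the obvious symmetry of $\Equalize$ in its arguments (immediate from \cref{equalize}), and on \cref{equalize-well-defined-nondecreasing} (which guarantees that the ``inner'' equalizations below are themselves nondecreasing, so the outer equalization is well defined).

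For part (a), the case $m=1$ is \crefpart{intermediate-equalize}{idempotent}. For $m\ge2$, assume without loss of generality that $f_1$ is continuous; then $\equalize{f_1,\ldots,f_m}\equiv\equalize{f_1,\equalize{f_2,\ldots,f_m}}$, where the inner function is nondecreasing, so it suffices to prove that $\Equalize_{f,g}$ is continuous whenever $f$ is continuous and $g$ is merely nondecreasing. Write $G:=\Equalize_{f,g}$ and $D:=G^{-1}(\mathbb R)$; as $G$ is nondecreasing, $G|_D$ can fail to be continuous only through a ``jump'', and by left/right symmetry it suffices to rule out a right jump at some $\mu_0\in D$, i.e.\ a sequence $\mu_n\in D$, $\mu_n\downarrow\mu_0$, with $G(\mu_n)\ge G(\mu_0)+\varepsilon$. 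Choose decompositions $\mu_n=a_n+b_n$ with $f(a_n)=g(b_n)=G(\mu_n)$ and $\mu_0=a_0+b_0$ with $f(a_0)=g(b_0)=G(\mu_0)$. Since $f(a_n)>f(a_0)$ and $g(b_n)>g(b_0)$, monotonicity forces $a_n\ge a_0$ and $b_n\ge b_0$; as $(a_n-a_0)+(b_n-b_0)=\mu_n-\mu_0\to0$ with both summands nonnegative, we get $a_n\to a_0$ and $b_n\to b_0$, whence, by continuity of $f$, $G(\mu_n)=f(a_n)\to f(a_0)=G(\mu_0)$ --- contradicting $G(\mu_n)\ge G(\mu_0)+\varepsilon$. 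So $G$ is continuous. (Notably this uses continuity of only one of the two functions, matching the hypothesis.)

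For part (b), continuity is immediate from part (a), so only the ``defined on a suffix'' claim remains. By the same composition trick, an induction on $m$ (base case $m=1$ being \crefpart{intermediate-equalize}{idempotent}, step applied to $f_2,\ldots,f_m$) reduces it to showing: if $f_1,f_2$ are continuous, nondecreasing, and defined on suffixes, then $D:=(\Equalize_{f_1,f_2})^{-1}(\mathbb R)$ is upward closed. The structural facts I would use: each $D_i:=f_i^{-1}(\mathbb R)$ is an interval unbounded above, so each range $f_i(D_i)$ is an interval, hence so is $H:=f_1(D_1)\cap f_2(D_2)$, which is precisely the range of $\Equalize_{f_1,f_2}$ (every common value of $f_1,f_2$ being attained). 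If $H=\emptyset$ then $D=\emptyset$; otherwise fix $\mu\in D$ with a decomposition $\mu=\mu_1+\mu_2$ at level $h\in H$, and fix $\mu'>\mu$. Thinking of pouring the surplus $\mu'-\mu$ into the first vessel and re-equalizing, set $c_2(\ell):=\sup\{y\in D_2\mid f_2(y)\le\ell\}$ (the maximal content of vessel $2$ with surface at most $\ell$) and $\Theta(x):=x+c_2\bigl(f_1(x)\bigr)$ for $x\ge\mu_1$. Using that $D_2$ is a suffix and that $f_1,f_2$ are continuous one checks: $\Theta$ is nondecreasing and right-continuous, $\Theta(\mu_1)\ge\mu_1+\mu_2=\mu$, and $\Theta(x)\to\infty$; its only discontinuities are (left) jumps at points $x_0$ at which $f_1$ increases strictly through a plateau height $\ell_0=f_1(x_0)$ of $f_2$, and any total $\mu''$ skipped by such a jump satisfies $\mu''-x_0\in f_2^{-1}(\ell_0)$, so $(x_0,\mu''-x_0)$ witnesses $\mu''\in D$. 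Combining these, every total $\ge\Theta(\mu_1)$ lies in $D$, while every total in $(\mu,\Theta(\mu_1)]$ lies in $D$ because, keeping $\mu_1$ fixed, $f_2^{-1}(h)=[\,\cdot\,,c_2(h)]$ contains $\mu_2$; hence $\mu'\in D$, so $D$ is a suffix, and the composition reduction then yields the claim for every $m$.

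The short, clean ingredients are the continuity statement of part (a) and the reductions to $m=2$. The main obstacle is the ``suffix'' half of part (b): one must see precisely how plateau heights of the $f_i$ produce jumps in the total-liquid-versus-surface-level correspondence, and verify that each ``skipped'' total is nonetheless realized --- by letting one vessel range over a plateau while the other stays put --- so that no gap ever opens above an achievable total.
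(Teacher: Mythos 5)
Part (a) and the reduction to $m=2$ are correct and essentially identical to the paper's argument: both proofs locate the decomposition point of the perturbed total within $|\mu'-\mu|$ of the original one (your inequality $a_n-a_0\le\mu_n-\mu_0$ is the paper's $\mu'_1\in(\mu_1,\mu_1+\mu'-\mu)$) and then invoke continuity of the single continuous function; your sequential packaging versus the paper's $\varepsilon$--$\delta$ is immaterial.

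For the suffix claim in part (b), however, you have chosen a far heavier route than necessary, and the load-bearing steps are exactly the ones you defer to ``one checks.'' Two of the asserted properties of $\Theta$ are false as literally stated and need extra casework: (i) $c_2\bigl(f_1(x)\bigr)=+\infty$ as soon as $f_1(x)\ge\sup f_2$ (e.g.\ $f_2=\min\{y,d\}$ and $f_1(x)\ge d$ --- precisely the bounded cost functions the lemma is meant to cover), in which case $\Theta(x)=\infty$, the pair $\bigl(x,c_2(f_1(x))\bigr)$ is not a decomposition, and ``$\Theta(x)\to\infty$'' plus right-continuity no longer yields that $[\Theta(\mu_1),\infty)$ is covered by attained values and finite jumps; (ii) the discontinuity analysis must also handle the (possibly infinite) jump at the first $x$ with $f_1(x)=\sup f_2$, distinguishing whether $f_2$ attains its supremum (the skipped totals are then realized on the terminal plateau of $f_2$) from the case where it only approaches it (where one must instead argue that $\Theta$ already tends to $\infty$ as $x$ approaches that point from below). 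These cases can all be patched, but the patching is the entire content of the step. The paper avoids all of this with a two-line intermediate-value argument: given $\mu'>\mu$, both endpoints of $[\mu_1,\mu_1+(\mu'-\mu)]$ lie in $f_1^{-1}(\mathbb{R})$ and the corresponding complements lie in $f_2^{-1}(\mathbb{R})$ (this is where ``defined on a suffix'' is used); the continuous function $x\mapsto f_1(x)-f_2(\mu'-x)$ is $\le 0$ at the left endpoint and $\ge 0$ at the right endpoint by monotonicity, so it vanishes somewhere, producing a decomposition of $\mu'$. I would strongly recommend replacing your sweep by this IVT step; it distributes the surplus $\mu'-\mu$ continuously between the two vessels exactly as your $\Theta$ tries to, but without ever having to name maximal vessel contents or classify plateau jumps.
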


\begin{remark}[Equalization of Lipschitz Functions]
Let $m\in\mathbb{N}$ and let $f_1,\ldots,f_m:\Rge\rightarrow\Rundefined$ be nondecreasing functions.
A proof virtually identical to that of \crefpart{equalize-continuous}{continuous} can be used to show that if at least one of $f_1,\ldots,f_m$ is Lipschitz, then $\equalize{f_1,\ldots,f_m}$ is Lipschitz with the same Lipschitz constant.
\end{remark}

The following \lcnamecref{equalize-same-bottom} shows that for continuous real functions, the only ``reason'' for their equalization to be undefined is of the type depicted in 
\cref{vessels-equalize-undefined}, i.e., an uneven bottom of the corresponding containers.

\begin{corollary}\label{equalize-same-bottom}
Let $m\in\mathbb{N}$ and let $f_1,\ldots,f_m:\Rge\rightarrow\mathbb{R}$ be nondecreasing continuous real functions.
$\equalize{f_1,\ldots,f_m}$ is a real function iff $f_1(0)=f_2(0)=\cdots=f_m(0)$.
\end{corollary}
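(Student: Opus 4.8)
The plan is to prove the two implications separately, noting that the forward (``only if'') direction is essentially free and all the content lies in the converse.

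For the ``only if'' direction, suppose $\equalize{f_1,\ldots,f_m}$ is a real function; then in particular $\equalize{f_1,\ldots,f_m}(0)\in\mathbb{R}$, so by \cref{equalize} there are $\mu_1,\ldots,\mu_m\in\Rge$ with $\sum_{j=1}^m\mu_j=0$ and $f_1(\mu_1)=\cdots=f_m(\mu_m)$. Since the $\mu_j$ are nonnegative and sum to $0$, they all vanish, and hence $f_1(0)=\cdots=f_m(0)$. This step uses neither monotonicity nor continuity, and the same trivial observation — that a decomposition of $0$ in $\Rge$ must be the all-zero one — will also drive the converse.

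For the ``if'' direction, I would assume $f_1(0)=\cdots=f_m(0)=:c$ and induct on $m$. The base case $m=1$ is \crefpart{intermediate-equalize}{idempotent}. For $m\ge2$, use \crefpart{intermediate-equalize}{compose} to write $\equalize{f_1,\ldots,f_m}\equiv\equalize{g,f_m}$ with $g\eqdef\equalize{f_1,\ldots,f_{m-1}}$. Evaluating at $0$ as above forces $g(0)=c$, so by the induction hypothesis $g$ is a real function; by \cref{equalize-well-defined-nondecreasing} it is nondecreasing, and by \crefpart{equalize-continuous}{continuous} (each $f_j$ being continuous) it is continuous. Thus everything reduces to the binary case: two nondecreasing continuous real functions $g_1,g_2$ with $g_1(0)=g_2(0)=c$, for which one must show $\equalize{g_1,g_2}$ is real. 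Here I would fix $\mu\in\Rge$ and consider the continuous function $\phi:[0,\mu]\to\mathbb{R}$ given by $\phi(t)\eqdef g_1(t)-g_2(\mu-t)$; monotonicity of $g_2$ gives $\phi(0)=c-g_2(\mu)\le0$ and monotonicity of $g_1$ gives $\phi(\mu)=g_1(\mu)-c\ge0$, so the intermediate value theorem yields $t^*$ with $g_1(t^*)=g_2(\mu-t^*)$, i.e.\ $\equalize{g_1,g_2}(\mu)\in\mathbb{R}$. As $\mu$ is arbitrary, $\equalize{g_1,g_2}$ is a real function, closing the induction.

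I do not expect a genuine obstacle. The only points needing care are the bookkeeping in the inductive step — verifying via \cref{intermediate-equalize,equalize-well-defined-nondecreasing,equalize-continuous} that the ``inner'' function $g$ really is real, nondecreasing, and continuous so that the intermediate value argument applies to it verbatim — and making sure the evaluation-at-$0$ observation is stated cleanly, since it is what pins down $g(0)=c$ and also handles the easy direction.
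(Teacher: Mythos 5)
Your proposal is correct, but it takes a different route from the paper. The paper disposes of the corollary in two lines: since each $f_j$ is a real function, it is trivially ``defined on a suffix of $\Rge$,'' so \crefpart{equalize-continuous}{suffix} gives that $\equalize{f_1,\ldots,f_m}$ is defined on a suffix of $\Rge$; hence it is a real function iff it is defined at $0$, which (by the observation that the only decomposition of $0$ in $\Rge^m$ is the all-zero one) holds iff $f_1(0)=\cdots=f_m(0)$. You instead bypass the suffix lemma entirely, inducting on $m$ via \crefpart{intermediate-equalize}{compose} and settling the binary case by applying the intermediate value theorem to $\phi(t)=g_1(t)-g_2(\mu-t)$; the bookkeeping you flag (that $g=\equalize{f_1,\ldots,f_{m-1}}$ is real, nondecreasing, and continuous, via the induction hypothesis, \cref{equalize-well-defined-nondecreasing}, and \crefpart{equalize-continuous}{continuous}) all checks out, and the sign computation $\phi(0)\le0\le\phi(\mu)$ is exactly right. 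The trade-off: your argument is self-contained and establishes definedness at every $\mu$ directly, but it re-derives in the binary case precisely the IVT step that the paper has already packaged once and for all inside the proof of \crefpart{equalize-continuous}{suffix} (which proves the stronger, reusable fact that definedness propagates upward from any point, not just from $0$). The paper's version is shorter because that work was front-loaded into the lemma; yours is arguably more transparent if read in isolation.
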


\subsubsection{An Explicit Formula for the Highest-Costing Resources and their Cost}

Following the discussion in the previous section, if the set of highest-costing resources (highest-stopping pistons) is $P$, then by \cref{obs-min-liquid} from \cref{hydraulic-intuition}, we expect them to cost (stop at height) $E_G(P)$, where $E_G$ is defined as follows.

\begin{definition}[$E_G$]
Let $G=\game$ be a resource selection game. We define \mbox{$\displaystyle E_G(S)\eqdef\equalize{f_k:k\in S}\Bigl(\smashoperator{\sum_{\type\in\nonemptysubs{S}}}\mu^{\type}\Bigr)$}, for every $S\in\types$.
\end{definition}

A main challenge that remains before moving on to prove the results of \cref{results}, therefore, is to find an expression for $P$, as given such an expression, we could find $E_G(P)$ and proceed inductively via a formalization of \cref{obs-start-over} from \cref{hydraulic-intuition}. A natural first candidate for the role of $P$ may be to take a set of resources with maximal~$E_G$, i.e., some element of $\arg\Max_{\smash{S\in\types}} E_G(S)$, where the value $\undefined$ is here and henceforth treated as~$-\infty$ for comparisons by the $\Max$ operator. Noticing that for many natural choices of resource selection games $G=\game$, the set of all such maximizing sets of resources $\arg\Max_{\smash{S\in\types}} E_G(S)$ is closed under set union, and therefore contains a greatest element (namely, $\bigcup\arg\Max_{\smash{S\in\types}} E_G(S)$), a natural candidate for the role of $P$ would be this greatest element. While for many natural choices of resource selection games $G=\game$, this greatest element indeed exists and coincides with the set of highest-costing resources (indeed, this is the case when all $f_j$ are strictly increasing --- see below), this need not generally be the case. To see that these need not coincide, consider the following example.

\begin{example}[$\bigcup\arg\Max_{\smash{S\in\types}}E_G(S)$ is Not the Set of Highest-Costing Resources]\label{need-mg} Consider the game $G=\game$, for $n=2$, $f_1=\id$, $f_2(x)=\min\{x,2\}$, $\mu^{\{1\}}=1$, $\mu^{\{2\}}=3$, and $\mu^{\{1,2\}}=0$. In this game, consumption of each player type $\{i\}$ solely from resource $i$ is the unique consumption profile and hence the unique Nash
equilibrium --- denote it by~$s$. Note that $\height{1}=1$ and $\height{2}=2$, and so $2$ is the unique highest-costing
resource, albeit $\bigcup\arg\Max_{\smash{S\in\types}}E_G(S)=\bigcup\bigl\{\{2\},\{1,2\}\bigr\}=\{1,2\}$. Indeed, in this case while the set of highest-costing resources $P=\{2\}$ is an element in $\arg\Max_{\smash{S\in\types}} E_G(S)=\bigl\{\{2\},\{1,2\}\bigr\}$, it is not the greatest element of this set.
\end{example}

In fact, as shown in the following \lcnamecref{no-max-without-mg}, the set $\arg\Max_{\smash{S\in\types}} E_G(S)$ need not even contain a greatest element.

\sloppy
\begin{example}[$\arg\Max_{\smash{S\in\types}}E_G(S)$ Has No Greatest Element]\label{no-max-without-mg} Consider the game $G=\game$, for $n=3$, $f_1=f_2=\id$, $f_3(x)=\min\{x,2\}$, $\mu^{\{1\}}=\mu^{\{2\}}=1$, $\mu^{\{3\}}=3$, and $\mu^{\type}=0$ for all nonsingleton $\type\in\types$. It is easy to verify that 
$\arg\Max_{\smash{S\in\types}}E_G(S)=\bigl\{\{3\},\{1,3\},\{2,3\}\bigr\}$, and so this set contains no greatest element.
\end{example}
\fussy

Removing any hope of representing $P$ using some other function of $\arg\Max_{\smash{S\in\types}} E_G(S)$,
the following \lcnamecref{not-enough-info-without-mg} shows that the set of highest-costing resources cannot be inferred from $\arg\Max_{\smash{S\in\types}}E_G(S)$ alone.

\begin{example}[$\arg\Max_{\smash{S\in\types}}E_G(S)$ Does Not Determine the Set of Highest-Costing Resources]\label{not-enough-info-without-mg} Consider the game $G=\game$, for $n=2$, $f_1=f_2=\id$, $\mu^{\{2\}}=\mu^{\{1,2\}}=1$, and $\mu^{\{1\}}=0$. In this game, the unique Nash equilibrium $s$ is for all players of type $\{2\}$ to consume from resource $2$ and for all players of type $\{1,2\}$ to consume from resource $1$. Note that $\height{1}=1$ and $\height{2}=1$, and so the set of highest-costing resources is $\{1,2\}$. We note that $\arg\Max_{\smash{S\in\types}}E_G(S)=\bigl\{\{2\},\{1,2\}\bigr\}$, just as in \cref{need-mg}, even though the set of highest-costing resources in that examples is different.
\end{example}

Examining \cref{need-mg}, we note that while, indeed, the total mass of all players who cannot consume from any resource outside $\{1,2\}$ in that \lcnamecref{need-mg}, when ``equalized'' among the resources in $\{1,2\}$, yields a ``height'' of~$2=\Max_{\smash{S\in\types}} E_G(S)=E_G(P)$, in fact no consumption profile corresponds to this equalization, as not enough of these players are allowed to consume from resource $1$.\footnote{If both $f_1$ and $f_2$ were strictly increasing, then this would imply that $E_G\bigl(\{2\}\bigr)>E_G\bigl(\{1,2\}\bigr)$; such an argument may be used to show that when all $f_j$ are strictly increasing,
$\arg\Max_{\smash{S\in\types}} E_G(S)$ is indeed closed under set union, and that its greatest element indeed
coincides with the set of highest-costing resources.} To derive a general formula for the set of highest-costing resources, we therefore need to exclude such ``problematic'' sets of resources.

\begin{definition}[$M_G$; $D_G$; $P_G$; $h_G$]
Let $G=\game$ be a resource selection game. We define:
\begin{itemize}
\item
$\displaystyle M_G(S)\eqdef\Bigl\{S'\in\nonemptysubs{S}~\Big|~\forall\mu\le\smashoperator{\sum_{\type\in\nonemptysubs{S}\setminus\smash{\nonemptysubs{S\setminus S'}}}}\mu^{\type}:\equalize{f_k:k\in S'}(\mu)\ne E_G(S)\Bigl\}\subseteq\nonemptysubs{S}\setminus\{S\}$, for every $S\in\types$. A set \mbox{$S'\in M_G(S)$} is not allowed for consumption by enough players (out of those who can consume only from $S$) to create a consumption profile s.t.\ only consumers who can consume solely from $S$ consume from $S$, and s.t.\ the cost of each resource in $S$ is $E_G(S)$. In \cref{need-mg}, $M_G\bigl(\{1,2\}\bigr)=\bigl\{\{1\}\bigr\}$ while $M_G\bigl(\{1\}\bigr)=\emptyset$.
\item
$\displaystyle D_G\eqdef\bigl\{S\in\types~\big|~E_G(S)\in\mathbb{R} \And M_G(S)=\emptyset\bigr\}$ --- these are the sets termed ``nonproblematic'' above.
\item
$\displaystyle P_G\eqdef\bigcup\arg\smashoperator{\Max_{S\in D_G}}E_G(S)$ --- we show below that this is precisely the set of highest-costing resources.
\item
$\displaystyle h_G\eqdef\smashoperator{\Max_{S\in D_G}}E_G(S)$ --- we show below that this is precisely the cost of every resource in $P_G$.
\end{itemize}
\end{definition}

\begin{remark}\label{p-remark} Let $G=\game$ be a resource selection game.
We show in \cref{proofs-derivation} that in the cases that we study (i.e., where $G$ has a Nash equilibrium or where $f_1,\ldots,f_n$ are continuous), $P_G$~is the greatest element of $\arg\Max_{S\in D_G}E_G(S)$, and so $h_G=E_G(P_G)$. Furthermore, we show that in these cases $h_G=\Max_{\smash{S\in\types}}E_G(S)$; i.e., $D_G$ may be replaced by $\types$ in the definition of $h_G$. (But not in the definition of $P_G$ if even one of $f_j$ is not strictly increasing, by \cref{need-mg}.)
\end{remark}

\begin{lemma}\label{p-nonempty}
In every resource selection game $G$,\ \ $P_G\ne\emptyset$, and $h_G\in\mathbb{R}$ is well defined.
\end{lemma}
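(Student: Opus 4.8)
The plan is to reduce both assertions to two elementary facts: that $D_G$ is nonempty and finite, and that $E_G$ is real-valued on $D_G$ (the latter being immediate from the definition of $D_G$). Granting these, $h_G=\Max_{S\in D_G}E_G(S)$ is a maximum of finitely many reals, hence attained and real, so $\arg\Max_{S\in D_G}E_G(S)\ne\emptyset$; and since every $S\in\arg\Max_{S\in D_G}E_G(S)$ is an element of $\types$ and therefore a nonempty subset of $[n]$, the union $P_G=\bigcup\arg\Max_{S\in D_G}E_G(S)$ contains a nonempty set and is thus itself nonempty. Finiteness of $D_G$ is free, since $D_G\subseteq\types=\nonemptysubs{[n]}$, which has $2^n-1$ elements; so the entire task is to verify $D_G\ne\emptyset$.

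For that I would simply exhibit an explicit element of $D_G$, namely any singleton $\{j\}$ with $j\in[n]$. For $S=\{j\}$ we have $\nonemptysubs{\{j\}}=\bigl\{\{j\}\bigr\}$, so by the definition of $E_G$ together with \crefpart{intermediate-equalize}{idempotent}, $E_G(\{j\})=\equalize{f_j}(\mu^{\{j\}})=f_j(\mu^{\{j\}})\in\mathbb{R}$, since $f_j$ is a real nondecreasing function on $\Rge$. It then remains to check that $M_G(\{j\})=\emptyset$: the only conceivable member of $M_G(\{j\})$ is $S'=\{j\}$ itself (as $M_G(\{j\})\subseteq\nonemptysubs{\{j\}}$), and for this choice one unwinds $\{j\}\setminus\{j\}=\emptyset$ and $\nonemptysubs{\emptyset}=\emptyset$, so that the universally quantified condition defining $M_G$ ranges over all $\mu\le\mu^{\{j\}}$; it fails at $\mu=\mu^{\{j\}}$, because $\equalize{f_j}(\mu^{\{j\}})=f_j(\mu^{\{j\}})=E_G(\{j\})$. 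Hence $\{j\}\notin M_G(\{j\})$, so $M_G(\{j\})=\emptyset$, and therefore $\{j\}\in D_G$, giving $D_G\ne\emptyset$ as required.

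There is no genuine obstacle here: the whole argument is bookkeeping with the $\nonemptysubs{\cdot}$, $E_G$, and $M_G$ notation specialized to singletons, plus the trivial observation that $D_G$ is finite. The only point worth stating explicitly is that this lemma uses no continuity or monotonicity hypotheses beyond those built into the definition of a resource selection game — consistent with its role as a basic building block for the formulas analyzed in the sequel.
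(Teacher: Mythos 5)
Your proof is correct and follows essentially the same route as the paper's: exhibit a singleton in $D_G$ by checking $E_G(\{j\})=f_j(\mu^{\{j\}})\in\mathbb{R}$ via \crefpart{intermediate-equalize}{idempotent} and $M_G(\{j\})=\emptyset$ via the witness $\mu=\mu^{\{j\}}$, then conclude by finiteness of $D_G$. The paper merely states the observation $S\notin M_G(S)$ in its general form before specializing to $\{1\}$, whereas you specialize to $\{j\}$ directly; the content is identical.
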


\subsubsection{Resource Removal}

The following definition will be useful when formalizing \cref{obs-start-over} from \cref{hydraulic-intuition}.

\begin{definition}[Resource Removal]
Let $G=\game$ be a resource selection game and let $S\subseteq[n]$ be a subset of the resources in $G$.
\begin{itemize}
\item
For every $\type'\in\nonemptysubs{[n]\setminus S}$, we define $\mathcal{R}(\type',G-S)\eqdef\bigl\{\type\in\types~\big|~\type\setminus 
S=\type'\bigr\}\subseteq\types\setminus\nonemptysubs{S}$ --- the set of player types in $G$ for whom the allowed resources outside $S$ are precisely $\type'$.\footnote{We emphasize that $\mathcal{R}(\type',G-S)$ is defined as a function of three parameters: $\type'$, $G$, and $S$, rather than as a function of two parameters ($\type'$ and $G-S$, the latter of which we have not yet defined). We use the notation $\mathcal{R}(\type',G-S)$ rather than $\mathcal{R}(\type',G,S)$ solely for readability.}
\item
We define $G-S \eqdef\Bigl((f_j)_{j\in[n]\setminus S};\bigl(\sum_{\type\in \mathcal{R}(\type',G-S)}\mu^{\type}\bigr)_{\smash{\type'\in\nonemptysubs{[n]\setminus S}}}\Bigr)$ --- the $\bigl|[n]\setminus S\bigr|$-resource selection game obtained from $G$ by disallowing any consumption from resources in $S$ and removing all players who cannot consume from any resource outside $S$.
\end{itemize}
\end{definition}

\begin{lemma}[Fundamental Properties of Resource Removal]\label{associative-remove}
Let $G$ be a resource selection game.
\begin{parts}
\item\label{associative-remove-empty}
$G-\emptyset=G$.
\item\label{associative-remove-associative}
$G-S-S'=G-(S\cup S')$, for every two disjoint subsets $S,S'$ of the resources in G.
\end{parts}
\end{lemma}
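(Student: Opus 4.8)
The plan is to unwind the two relevant definitions --- that of resource removal $G-S$ and that of the associated player-type sets $\mathcal{R}(\,\cdot\,,G-S)$ --- and to verify that the two sides of each equation agree in both their resource component (the tuple of cost functions) and their mass component (the tuple of player-type masses). For the first part I would observe that $[n]\setminus\emptyset=[n]$, so the resource component of $G-\emptyset$ is literally $(f_j)_{j\in[n]}$, that of $G$; and for each $\type'\in\types$ one has $\mathcal{R}(\type',G-\emptyset)=\{\type\in\types\mid\type\setminus\emptyset=\type'\}=\{\type'\}$, so the mass that $G-\emptyset$ assigns to $\type'$ is $\sum_{\type\in\{\type'\}}\mu^{\type}=\mu^{\type'}$, the mass of $\type'$ in $G$. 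Hence $G-\emptyset=G$.

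For the second part, I would first note that disjointness of $S$ and $S'$ makes $S'$ a subset of the resources $[n]\setminus S$ of $G-S$, so that $(G-S)-S'$ is well defined; moreover $([n]\setminus S)\setminus S'=[n]\setminus(S\cup S')$, so the resource components of $G-S-S'$ and $G-(S\cup S')$ coincide (both being $(f_j)_{j\in[n]\setminus(S\cup S')}$). It then remains to match the masses. Fixing $\type''\in\nonemptysubs{[n]\setminus(S\cup S')}$, the game $G-(S\cup S')$ assigns to $\type''$ the mass $\sum_{\type\in\types:\,\type\setminus(S\cup S')=\type''}\mu^{\type}$, while, writing $\nu^{\type'}\eqdef\sum_{\type\in\types:\,\type\setminus S=\type'}\mu^{\type}$ for the mass that $G-S$ assigns to $\type'\in\nonemptysubs{[n]\setminus S}$, the game $(G-S)-S'$ assigns to $\type''$ the mass $\sum_{\type'\in\nonemptysubs{[n]\setminus S}:\,\type'\setminus S'=\type''}\nu^{\type'}=\sum_{\type'\in\nonemptysubs{[n]\setminus S}:\,\type'\setminus S'=\type''}\ \sum_{\type\in\types:\,\type\setminus S=\type'}\mu^{\type}$.

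The crux will then be the set identity
\[
\{\type\in\types\mid\type\setminus(S\cup S')=\type''\}=\bigsqcup_{\substack{\type'\in\nonemptysubs{[n]\setminus S}\\ \type'\setminus S'=\type''}}\{\type\in\types\mid\type\setminus S=\type'\},
\]
read as a disjoint union: the pieces on the right are pairwise disjoint because $\type$ determines $\type\setminus S$; the inclusion $\supseteq$ uses $(\type\setminus S)\setminus S'=\type\setminus(S\cup S')$; and for $\subseteq$, given $\type$ with $\type\setminus(S\cup S')=\type''$, the witness $\type'\eqdef\type\setminus S$ lies in $\nonemptysubs{[n]\setminus S}$ precisely because it contains the nonempty set $\type''=\type'\setminus S'$. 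Summing $\mu^{\type}$ over both sides of this identity equates the two mass expressions above, and together with the already-checked resource components this yields $G-S-S'=G-(S\cup S')$.

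Since the whole argument is definition-chasing plus elementary set algebra, I do not expect a genuine obstacle; the one point demanding care is verifying that the ``double removal'' really decomposes the relevant type set as a \emph{disjoint} union of admissible types of $G-S$ --- in particular that nonemptiness of $\type''$ forces the intermediate type $\type\setminus S$ to be nonempty, so that it is a legitimate player type of $G-S$ --- so that the masses add up without double counting or dropped terms.
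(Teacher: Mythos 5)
Your proof is correct and is exactly the definition-chasing that the paper has in mind --- its own proof of this lemma consists of the single sentence that both parts ``follow straight from definition.'' Your careful verification of the disjoint-union decomposition of $\{\type\in\types\mid\type\setminus(S\cup S')=\type''\}$, including the observation that nonemptiness of $\type''$ guarantees the intermediate type $\type\setminus S$ is a legitimate type of $G-S$, supplies precisely the details the paper omits.
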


\subsection{Formal Derivation of the Results of Section~\refintitle{results}}\label{derivation}

In this \lcnamecref{derivation}, we
present our analysis formalizing the \lcnamecrefs{obs-min-liquid} from \cref{hydraulic-intuition} via the definitions of \cref{definitions} and leading to the results of \cref{results}. Full proofs of all the results of this \lcnamecref{derivation} are given in \cref{proofs-derivation}; the subsequent proofs of the results of \cref{results} are given in \cref{proofs-results}.

\subsubsection{Uniqueness and Strength}

At the heart of our proof of \cref{indifference-resource-costs} lies \cref{highest-stopping},
formalizing \cref{obs-min-liquid,obs-any-nash,obs-start-over,obs-indifference} from \cref{hydraulic-intuition}. We note that unlike \cref{nash-exists}, neither \cref{highest-stopping} nor \namecref{indifference-resource-costs} \labelcref{indifference-resource-costs} or \labelcref{super-strong} require the continuity of $f_1,\ldots,f_n$.

\begin{lemma}[Uniqueness of Highest-Costing Resources and their Cost]\label{highest-stopping}
Let $s$ be a Nash equilibrium in a resource selection game
$G=\game$, and let $P^s\eqdef\arg\Max_{j\in[n]}\height{j}$.
\begin{parts}
\item\label{highest-stopping-who}
$P^s=P_G$.
\item\label{highest-stopping-where}
$\height{j}=h_G$, for every $j\in P^s$.
\item\label{highest-stopping-on}
$s_j(\type)=0$ for every $\type\in\types\setminus\nonemptysubs{P^s}$ and $j\in P^s$.
\item\label{highest-stopping-rest}
The function $s':\nonemptysubs{[n]\setminus P^s}\rightarrow\Rge^{[n]\setminus P^s}$, defined by
$s'_j(\type')\eqdef\sum_{\type\in \mathcal{R}(\type',G-P^s)}s_j(\type)$ for every $j\in[n]\setminus P^s$ and $\type'\in\nonemptysubs{[n]\setminus P^s}$,
constitutes a Nash equilibrium in the game
$G-P^s$.
Furthermore, $\height{j}=\heightt{j}$ for every~$j\in[n]\setminus P^s$.
\end{parts}
\end{lemma}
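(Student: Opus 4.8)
The plan is to set $h\eqdef\Max_{j\in[n]}\height{j}$ (real, since $[n]$ is finite and each $f_j$ is real-valued), so that $P^s=\{j\in[n]\mid\height{j}=h\}$, and to prove the parts in the order \labelcref{highest-stopping-on}, \labelcref{highest-stopping-where}, \labelcref{highest-stopping-who}, \labelcref{highest-stopping-rest}, using throughout two ingredients: the Nash inequalities of $s$, and the trivial observation that in any consumption profile all the mass of the types contained in a resource set $S$ is consumed within $S$ (as $\supp(s(\type))\subseteq\type$). First I would prove Part \labelcref{highest-stopping-on}: if $\type\in\types\setminus\nonemptysubs{P^s}$ then $\type\ni k$ for some $k\notin P^s$, so $\height{k}<h$, and the Nash inequality $\height{j}\le\height{k}$ rules out $j\in\supp(s(\type))$ for every $j\in P^s$, i.e.\ $s_j(\type)=0$. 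Consequently $\sum_{j\in P^s}\load{j}=\sum_{\type\in\nonemptysubs{P^s}}\mu^{\type}$ (the mass on $P^s$ comes only from types $\type\subseteq P^s$ by Part \labelcref{highest-stopping-on}, and each such type places all of its mass on $P^s$ since $\supp(s(\type))\subseteq\type\subseteq P^s$), and since $f_k(\load{k})=h\in\mathbb R$ for every $k\in P^s$, the tuple $(\load{k})_{k\in P^s}$ witnesses, using \cref{equalize,equalize-well-defined-nondecreasing}, that $E_G(P^s)=h$. The same bookkeeping shows that for every proper nonempty $S'\subsetneq P^s$ the load $\sum_{k\in S'}\load{k}$ is at most $\sum_{\type\in\nonemptysubs{P^s}\setminus\nonemptysubs{P^s\setminus S'}}\mu^{\type}$, while $(\load{k})_{k\in S'}$ witnesses $\equalize{f_k:k\in S'}\bigl(\sum_{k\in S'}\load{k}\bigr)=h=E_G(P^s)$; hence $S'\notin M_G(P^s)$, so $M_G(P^s)=\emptyset$ and $P^s\in D_G$ (so $h_G$ is well defined, cf.\ \cref{p-nonempty}).

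Next I would show $E_G(S)\le h$ for every $S\in\types$: this is vacuous if $E_G(S)=\undefined$, and otherwise a witnessing distribution $(\mu_k)_{k\in S}$ of $\sum_{\type\in\nonemptysubs{S}}\mu^{\type}$ with $f_k(\mu_k)=E_G(S)$ for all $k\in S$ would, if $E_G(S)>h\ge\height{k}=f_k(\load{k})$, force $\mu_k>\load{k}$ for every $k\in S$ by monotonicity, whence $\sum_{\type\in\nonemptysubs{S}}\mu^{\type}=\sum_{k\in S}\mu_k>\sum_{k\in S}\load{k}\ge\sum_{\type\in\nonemptysubs{S}}\mu^{\type}$, a contradiction. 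Together with $E_G(P^s)=h$ and $P^s\in D_G$ this gives $h_G=\Max_{S\in D_G}E_G(S)=h$, which is Part \labelcref{highest-stopping-where}, and shows $P^s\in\arg\Max_{S\in D_G}E_G(S)$, so $P^s\subseteq P_G$.

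The crux is the reverse inclusion $P_G\subseteq P^s$. Let $S\in\arg\Max_{S\in D_G}E_G(S)$, so $E_G(S)=h$, and suppose $S'\eqdef S\setminus P^s\ne\emptyset$. The monotonicity argument of the previous paragraph applied to $S$ forces $S\cap P^s\ne\emptyset$ (otherwise $\height{k}<h=E_G(S)$ for all $k\in S$ yields the same contradiction), so $S'$ is a proper nonempty subset of $S$. Since $M_G(S)=\emptyset$ we have $S'\notin M_G(S)$, so there is a distribution $(\nu_k)_{k\in S'}$ with $f_k(\nu_k)=h$ for all $k\in S'$ and $\sum_{k\in S'}\nu_k\le\bar\mu$, where $\bar\mu\eqdef\sum_{\type\in\nonemptysubs{S}\setminus\nonemptysubs{S\setminus S'}}\mu^{\type}=\sum_{\emptyset\ne\type\subseteq S,\ \type\cap S'\ne\emptyset}\mu^{\type}$ is the bound appearing in the definition of $M_G(S)$. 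As $\height{k}<h$ for $k\in S'$, monotonicity gives $\load{k}<\nu_k$, so $\sum_{k\in S'}\load{k}<\bar\mu$. On the other hand, every type $\type$ with $\emptyset\ne\type\subseteq S$ and $\type\cap S'\ne\emptyset$ contains a resource of cost $<h$ (any one in $\type\cap S'$), so by the Nash inequalities no player of type $\type$ consumes from a cost-$h$ resource; since $\type\subseteq S$, such players consume only within $\type\setminus P^s\subseteq S'$, giving $\sum_{k\in S'}\load{k}\ge\sum_{\emptyset\ne\type\subseteq S,\ \type\cap S'\ne\emptyset}\mu^{\type}=\bar\mu$ --- a contradiction. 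Hence $S\subseteq P^s$, so $P_G\subseteq P^s$, and with the previous paragraph $P^s=P_G$, which is Part \labelcref{highest-stopping-who} (this also exhibits $P_G$ as the greatest element of $\arg\Max_{S\in D_G}E_G(S)$ and gives $h_G=E_G(P_G)=\Max_{S\in\types}E_G(S)$, cf.\ \cref{p-remark}). I expect this to be the main obstacle: the examples \cref{need-mg,no-max-without-mg,not-enough-info-without-mg} rule out reading off $P_G$ from $\arg\Max_{S\in\types}E_G(S)$ alone, so the argument must genuinely combine $M_G(S)=\emptyset$ with the fact that Nash-equilibrium players of a type $\type\subseteq S$ that can reach $S\setminus P^s$ stay within $S\setminus P^s$.

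Finally, for Part \labelcref{highest-stopping-rest}, define $s'$ as in the statement. Part \labelcref{highest-stopping-on} makes $s'$ a valid consumption profile in $G-P^s$: for $j\in[n]\setminus P^s$ one has $j\in\type\iff j\in\type\setminus P^s$, so $\supp(s'(\type'))\subseteq\type'$; and each $\type$ with $\type\setminus P^s=\type'\ne\emptyset$ puts zero mass on $P^s$ by Part \labelcref{highest-stopping-on}, hence its full mass $\mu^{\type}$ is consumed within $[n]\setminus P^s$, so $\sum_j s'_j(\type')=\sum_{\type\in\mathcal{R}(\type',G-P^s)}\mu^{\type}$, the mass of type $\type'$ in $G-P^s$. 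The $s'$-load of a resource $j\in[n]\setminus P^s$ equals $\sum_{\type\in\types\setminus\nonemptysubs{P^s}}s_j(\type)$, which equals $\load{j}$ because $s_j(\type)=0$ whenever $\type\subseteq P^s$ (as $j\notin P^s$); hence $\heightt{j}=f_j(\load{j})=\height{j}$, which is the ``Furthermore'' claim. Lastly, if $k'\in\supp(s'(\type'))$ then $k'\in\supp(s(\type))$ for some $\type$ with $\type\setminus P^s=\type'$, so for every $j'\in\type'\subseteq\type$ the Nash inequality for $s$ gives $\height{k'}\le\height{j'}$, i.e.\ $\heightt{k'}\le\heightt{j'}$; thus $s'$ is a Nash equilibrium in $G-P^s$, completing Part \labelcref{highest-stopping-rest}.
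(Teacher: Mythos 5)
Your proposal is correct and follows essentially the same route as the paper's proof: part (\labelcref{highest-stopping-on}) from the Nash inequalities, the mass-accounting identity $\sum_{j\in P^s}\load{j}=\sum_{\type\in\nonemptysubs{P^s}}\mu^{\type}$ yielding $E_G(P^s)=h$ and $M_G(P^s)=\emptyset$, the monotonicity argument giving $E_G(S)\le h$ for all $S$, and---for the crucial inclusion $P_G\subseteq P^s$---the observation that any type $\type\subseteq S$ meeting $S'=S\setminus P^s$ must consume entirely within $S'$, which clashes with the witness supplied by $S'\notin M_G(S)$. The only differences are cosmetic: you reorder the parts and inline the monotonicity reasoning that the paper delegates to \cref{equalize-aux}, and you phrase the final contradiction as the contrapositive of the paper's derivation that $S'\in M_G(S)$.
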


\begin{remark}
In \cref{highest-stopping}, The r.h.s.\ of \cref{highest-stopping-who,highest-stopping-where},
and therefore also the quantifications in \cref{highest-stopping-where,highest-stopping-on,highest-stopping-rest} and the game defined using resource removal in \cref{highest-stopping-rest}, are independent of the choice of $s$.
\end{remark}

The proof of \cref{indifference-resource-costs} using \cref{highest-stopping} is given in \cref{proofs-results}.
This proof effectively follows \cref{compute-height},%
\begin{algorithm}[ht]%
\caption{Direct computation of $\height{j}$ for all $j\in[n]$, regardless of the choice of Nash equilibrium~$s$.}\label{compute-height}%
\small%
\begin{algorithmic}[1]%
\LineComment{$\mathbfit{S}$ is the set of pistons that have already stopped.}
\State $S \gets \emptyset$\Comment{See \crefpart{associative-remove}{empty}.}
\While{$S\ne[n]$}
\LineComment{By \crefpart{highest-stopping}{rest}, the pistons $\mathbfit{P}$ that stop next are those that stop earliest in}
\LineComment{the $\mathbfit{\bigl|[n]\setminus S\bigr|}$-resource selection game $\mathbfit{G-S}$.}
\State $P\gets P_{G-S}$\Comment{By \crefpart{highest-stopping}{who}.}
\LineComment{$\mathbfit{h}$ is the height at which the pistons $\mathbfit{P}$ stop.}
\State $h \gets h_{G-S}$\Comment{By \crefpart{highest-stopping}{where}.}
\ForAll{$j\in P$}
\State $h_j \gets h$
\EndFor
\State $S \gets S\cup P$\Comment{See \crefpart{associative-remove}{associative}.}
\EndWhile
\State \Return{$(h_1,\ldots,h_n)$}
\end{algorithmic}%
\end{algorithm}
a succinct algorithm (based upon \cref{highest-stopping}), which, if any Nash equilibrium exists, directly and explicitly calculates $\height{j}$ for all $j$ in every Nash equilibrium $s$ (without the need to first calculate players' strategies, which are dependent on $s$).

Full proofs of \cref{indifference} and \cref{super-strong} are given in \cref{proofs-results}. The former is based on \cref{indifference-resource-costs} (as explained in \cref{obs-indifference} from \cref{hydraulic-intuition}), and the latter on the analysis of \cref{highest-stopping}, following and formalizing an extension of \cref{obs-nash} from \cref{hydraulic-intuition}.
We conclude this section by demonstrating that, as suggested by the manner in which \cref{super-strong} is stated, a Nash equilibrium is not necessarily super-strong when the condition
of \cref{super-strong-super-strong} of this \lcnamecref{super-strong} (regarding the plateau heights of the cost functions) is not met.

\begin{example}[A Not-Super-Strong
Equilibrium] Consider
the game $G\!=\!\game$, for $n=2$, $f_1=\id$, $f_2(x)=\min\{x,3\}$, $\mu^{\{1\}}=1$, $\mu^{\{2\}}=2$, and $\mu^{\{1,2\}}=3$. In this game, a (strong) Nash equilibrium~$s$ is given by
$\{1\}\mapsto(1,0)$, $\{2\}\mapsto(0,2)$, $\{1,2\}\mapsto(2,1)$. (Note that $\height{2}=3$ is a plateau height of $f_2$.) This Nash equilibrium is not super-strong, since a coalition of players of types $\{1\}$ and $\{1,2\}$ can deviate with $\{1\}\mapsto(1,0)$ (no change) and $\{1,2\}\mapsto(0,3)$ (more players consuming from resource $2$), from which no coalition member is harmed, while coalition members of type $\{1\}$ benefit.
A different deviation showing that this Nash equilibrium is not super-strong and worth mentioning is of a coalition consisting solely of players of type~$\{1,2\}$, which can deviate with $\{1\}\mapsto(1,2)$ (by, e.g., some coalition members switching to consume from resource $2$ instead of resource~$1$ while the others do not change strategies, or, e.g., by each player of type~$\{1,2\}$ switching resources), from which no coalition member is harmed, while the coalition members consuming from resource $1$ (whether they have actually changed strategies or not) benefit.
\end{example}

\subsubsection{Existence}

We proceed to the proof of existence of equilibrium. A full proof of \cref{nash-exists} is given in \cref{proofs-results}. This proof formalizes
\cref{obs-min-liquid,obs-nash} from \cref{hydraulic-intuition}, effectively following the construction of \cref{balloons} and
showing that in each step, the pistons stopping are those computed in \cref{compute-height}. This is done using the following \lcnamecref{construct}, \emph{constructively} showing, even in the absence of prior knowledge of existence of Nash equilibrium, that
the liquid that by \crefpart{highest-stopping}{on} should be under the pistons $P$ when they stop can be distributed appropriately among them, and that \cref{compute-height} indeed finds the sets $P$ in decreasing order of stopping height.

\sloppy
\begin{lemma}[$P_G$ and $h_G$ are Viable as Highest-Costing Resources and their Cost]\label{construct}
Let $G=\game$ be a resource selection game s.t.\ $f_1,\ldots,f_n$ are continuous.
\begin{parts}
\item\label{construct-distribute}\thmitemtitle{Liquid Distribution under $P_G$}
There exists a consumption  profile $s$ in the $|P_G|$-resource selection game $\bigl((f_j)_{j\in P_G};(\mu^{\type})_{\type\in\nonemptysubs{P_G}}\bigr)$,
s.t.\ $\height{j}=h_G$ for every $j\in P_G$.
\item\label{construct-stopping-order}\thmitemtitle{Pistons Stopping Order}
If $P_G\ne[n]$, then $h_G>h_{G-P_G}$.
\end{parts}
\end{lemma}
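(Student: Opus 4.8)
My plan is to build everything on the structure of $P_G$ and $h_G$ established (for continuous $f_j$) in \cref{proofs-derivation}. By \cref{p-remark}, $h_G=\Max_{S\in\types}E_G(S)$ and $P_G$ is the \emph{greatest} element of $\arg\Max_{S\in D_G}E_G(S)$, so $E_G(P_G)=h_G\in\mathbb R$ and $M_G(P_G)=\emptyset$; by \cref{p-nonempty}, $P_G\ne\emptyset$. I will also use that $\{k\}\in D_G$ for every $k\in[n]$ (its $M_G$ is vacuously empty, and $E_G(\{k\})=f_k(\mu^{\{k\}})\in\mathbb R$ by \crefpart{intermediate-equalize}{idempotent}), so $f_k(0)\le E_G(\{k\})\le h_G$: no container has its bottom above $h_G$. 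Finally, unwinding \cref{equalize} from $E_G(P_G)=\equalize{f_k:k\in P_G}(\mu_0)=h_G$, with $\mu_0\eqdef\sum_{\type\in\nonemptysubs{P_G}}\mu^\type$, yields loads realizing cost $h_G$ on each $j\in P_G$; since every $f_j$ is continuous and nondecreasing, $f_j^{-1}(h_G)=[\underline a_j,\overline a_j]$ is a nonempty closed interval (possibly $\overline a_j=\infty$), and these loads witness $\sum_{j\in P_G}\underline a_j\le\mu_0\le\sum_{j\in P_G}\overline a_j$.

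For part~(a), it suffices to find a consumption profile in the $|P_G|$-resource game whose load on each $j\in P_G$ lies in $[\underline a_j,\overline a_j]$. This is a bipartite transportation-feasibility problem --- supplies $\mu^\type$ at the types $\type\in\nonemptysubs{P_G}$, demand-windows $[\underline a_j,\overline a_j]$ at the resources, an edge $\type$--$j$ whenever $j\in\type$ --- so, by a max-flow/min-cut (Gale supply--demand) argument, a feasible routing exists iff for every $\mathcal R\subseteq P_G$: (i) $\sum_{\type\in\nonemptysubs{\mathcal R}}\mu^\type\le\sum_{j\in\mathcal R}\overline a_j$, and (ii) $\sum_{j\in\mathcal R}\underline a_j\le\sum_{\type\in\nonemptysubs{P_G},\,\type\cap\mathcal R\ne\emptyset}\mu^\type$. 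For nonempty $\mathcal R\subsetneq P_G$, condition~(ii) is exactly the statement $\mathcal R\notin M_G(P_G)$ (the least $\mu$ with $\equalize{f_k:k\in\mathcal R}(\mu)=h_G$ is $\sum_{j\in\mathcal R}\underline a_j$, using $h_G\ge f_k(0)$), hence holds because $M_G(P_G)=\emptyset$; for $\mathcal R=P_G$ it is the inequality $\sum\underline a_j\le\mu_0$. For condition~(i): if it failed for some $\mathcal R$, then --- since $h_G\ge\max_{k\in\mathcal R}f_k(0)$ --- the value $E_G(\mathcal R)=\equalize{f_k:k\in\mathcal R}(\sum_{\type\in\nonemptysubs{\mathcal R}}\mu^\type)$ would be real and strictly exceed $h_G$ (the liquid overfills past the $h_G$-plateau), contradicting $h_G=\Max_{S\in\types}E_G(S)$. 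So the routing exists, and it gives the desired profile.

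For part~(b), suppose toward a contradiction that $h_{G-P_G}\ge h_G$. Since $P_G\ne[n]$, the game $G-P_G$ is genuine, and \cref{p-nonempty,p-remark} apply to it; put $S'\eqdef P_{G-P_G}$, a nonempty subset of $[n]\setminus P_G$ with $M_{G-P_G}(S')=\emptyset$ and $E_{G-P_G}(S')=h_{G-P_G}\ge h_G$ real. Because $f_k(0)\le h_G\le\sup f_k$ for every $k\in P_G\cup S'$ (for $k\in S'$, $\sup f_k\ge E_{G-P_G}(S')\ge h_G$), the level $h_G$ is attainable in each such container. Now $\sum_{\type\in\nonemptysubs{P_G\cup S'}}\mu^\type=\mu_0+\nu$, where $\nu\eqdef\sum_{\type:\,\emptyset\ne\type\setminus P_G\subseteq S'}\mu^\type$ is the total mass in $G$ of the players whose allowed resources outside $P_G$ form a nonempty subset of $S'$, and $E_{G-P_G}(S')=\equalize{f_k:k\in S'}(\nu)$; combining this with \crefpart{intermediate-equalize}{compose} (grouping $P_G$ and $S'$), \cref{equalize-well-defined-nondecreasing}, and \cref{equalize-continuous} --- place $\mu_0$ in the $P_G$-block (level $h_G$) and $\nu$ in the $S'$-block (level $\ge h_G$), then equalize --- shows $E_G(P_G\cup S')$ is real and $\ge h_G$, hence $E_G(P_G\cup S')=h_G$ by maximality of $h_G$. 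The remaining step is $M_G(P_G\cup S')=\emptyset$: given nonempty $T\subsetneq P_G\cup S'$, write $T=(T\cap P_G)\sqcup(T\cap S')$; from $M_G(P_G)=\emptyset$ (or from $\mu_0$ if $T\cap P_G=P_G$) get $\mu_P$, at most the available mass of $T\cap P_G$, with $\equalize{f_k:k\in T\cap P_G}(\mu_P)=h_G$, and from $M_{G-P_G}(S')=\emptyset$ together with the intermediate-value theorem and $f_k(0)\le h_G\le E_{G-P_G}(S')$ (or from $\nu$ if $T\cap S'=S'$) get $\mu_S$, at most the available mass of $T\cap S'$, with $\equalize{f_k:k\in T\cap S'}(\mu_S)=h_G$; then \crefpart{intermediate-equalize}{compose} gives $\equalize{f_k:k\in T}(\mu_P+\mu_S)=h_G$, while $\mu_P+\mu_S$ is at most the available mass of $T$ because the contributing type-sets are disjoint. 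Thus $P_G\cup S'\in D_G$ with $E_G(P_G\cup S')=h_G=\Max_{S\in D_G}E_G(S)$, i.e.\ $P_G\cup S'\in\arg\Max_{S\in D_G}E_G(S)$; but $P_G\cup S'\supsetneq P_G$ (as $\emptyset\ne S'\subseteq[n]\setminus P_G$), contradicting that $P_G$ is the greatest element of that set. Hence $h_{G-P_G}<h_G$.

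I expect the main obstacle to be the last step of part~(b) --- proving that adjoining $P_G$ to a set that is ``nonproblematic'' for $G-P_G$ yields a set that is again ``nonproblematic'' for $G$ --- which forces one to combine the composition law \crefpart{intermediate-equalize}{compose}, the fact that no $f_k(0)$ exceeds $h_G$, and careful accounting of which player types can reach which resources. In part~(a) the analogous, easier, hurdle is making the transportation-feasibility step rigorous: there $M_G(P_G)=\emptyset$ supplies precisely the Hall-type ``lower'' inequalities~(ii), while maximality of $h_G$ supplies the ``upper'' inequalities~(i).
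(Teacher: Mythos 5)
Your proof is correct, and for \crefpart{construct}{stopping-order} it is essentially the paper's own argument: assume $h_{G-P_G}\ge h_G$, set $S'=P_{G-P_G}$, use the intermediate value theorem and $f_k(0)\le h_G$ to bring the $S'$-loads down to level exactly $h_G$, combine via \crefpart{intermediate-equalize}{compose}, \cref{equalize-well-defined-nondecreasing} and \crefpart{equalize-continuous}{suffix} to get $E_G(P_G\cup S')=h_G$ and $M_G(P_G\cup S')=\emptyset$, and contradict the maximality of $P_G$ in $\arg\Max_{S\in D_G}E_G(S)$ --- this matches the paper's proof step for step, including the disjointness bookkeeping at the end. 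For \crefpart{construct}{distribute} you also set up exactly the paper's reduction (the demand windows $[t_j,T_j]=f_j^{-1}(h_G)$, the ``lower'' inequalities coming from $M_G(P_G)=\emptyset$ and the ``upper'' ones from $h_G=\Max_{S\in\types}E_G(S)$, i.e.\ the \emph{normality} of the distribution constraint in \cref{distribute-compatible}), but you then discharge feasibility by invoking a max-flow/min-cut (Gale supply--demand) theorem as a black box. The paper instead proves ``normal $\Rightarrow$ satisfiable'' from scratch (\cref{distribute-aux}, via the explicit double induction of \cref{distribute-consume,distribute-remove}), precisely because its stated program is to avoid linear programming and flow/cut arguments and to produce an explicit witness; your shortcut is logically sound and shorter, but it reimports the very tool the paper is built to eliminate, so it buys brevity at the cost of the paper's constructiveness claim. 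Two smaller remarks: you take $E_G(P_G)=h_G$ and $M_G(P_G)=\emptyset$ for granted from \cref{p-remark}; in the paper this is \cref{p-in-argmax}, whose proof itself runs the feasibility argument (\cref{distribute}) for arbitrary argmax elements and then closes the argmax under unions, so a fully self-contained write-up would either cite that lemma explicitly or note that your part-(a) argument applies verbatim to any $S\in\arg\Max_{S'\in D_G}E_{G}(S')$ and then bootstrap. And in part (b) the phrase ``place $\mu_0$ at level $h_G$ and $\nu$ at level $\ge h_G$, then equalize'' needs the IVT reduction you only state later, since \cref{equalize} requires all levels to be \emph{equal}; the paper performs that reduction before equalizing.
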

\fussy

We remark that the constructive machinery that we develop in \cref{proofs-derivation}, and in particular in \cref{constrained-distribution}, in order to prove \crefpart{construct}{distribute} can also be used to derive a variant of \cref{compute-height} that computes not only the equilibrium resource costs, but also some concrete equilibium, by using this machinery to compute, in each step of the \lcnamecref{compute-height}, a consumption profile for all players of types in $\nonemptysubs{P_{G-S}}$ such that they all consume from resources in $P$ and such that the cost of each resource in $P$ is $h_{G-S}$.

\subsection{Unified, Tangible Intuition and Insights}

Several approaches in the literature nonconstructively show the existence of Nash equilibria in resource selection games by characterizing Nash equilibria as (local) minima of certain functions. A ramification of these approaches is that \emph{every}~construction that finds a Nash equilibrium in a resource selection game must minimize these functions (at least locally); in particular, even though it does not explicitly aim to do so (indeed, these functions do not appear in any of our proofs or definitions), so does our hydraulic construction. It is interesting to note, though, that our hydraulic construction does not merely minimize these functions as a ``side effect'' of finding a Nash (in fact, strong) equilibrium, but in fact also provides very tangible intuition into these abstract nonconstructive approaches, and thus, in a sense, unifies the intuition behind these approaches, which may otherwise be thought of as somewhat detached from one another.

One approach, which is more straightforward to connect to our hydraulic construction, is that of characterizing Nash equilibria as consumption profiles $s$ for which the resource costs vector $(\height{j})_{j=1}^n$, when sorted in nondecreasing order, is (locally) lexicographically minimal \citep{FKKMS09}. Indeed, among other properties, our hydraulic system constructively lexicographically minimizes the sorted vector of resource costs by first minimizing the highest resource cost, then minimizing the second-highest resource cost, etc. A seemingly-unrelated approach, which at first glance may seem less straightforward to connect to our hydraulic construction, is that of characterizing Nash equilibria as consumption profiles minimizing an appropriate \emph{potential function} \citep{MondererShapley96,Beckmann56}, so named due to its properties, which resemble those of an abstract physical potential.
While this approach may at first glance seem less straightforward to connect to our hydraulic construction, in the following \lcnamecref{potential}
we show that our analysis in fact provides powerful insights into the abstract game-theoretic potential function, via a natural concrete physical interpretation.

\subsubsection{Abstract Game-Theoretic Potential as Physical Gravitational Potential}\label{potential}

A popular nonconstructive method for proving the existence of Nash equilibrium in resource selection games (and, more generally, in congestion games) is to define an appropriate \emph{potential function}, so named due to its properties, which resemble those of an abstract physical potential. (The term ``potential function'', in this context, is due to \cite{MondererShapley96}, who used it for atomic congestion games; see \cite{Nisanbook} for details regarding the usage for nonatomic congestion games, which was originated by \cite{Beckmann56}.) In the notation of this paper, the proof defines the following scalar function of consumption profiles:
\[P^*(s)\eqdef\sum_{j\in[n]}\int_0^{\load{j}}f_j(x)dx,\]
and shows that when a single player deviates from one strategy to another, the change in this player's cost equals, roughly speaking, the derivative of $P^*$ in the direction of the deviation. The conclusion (under certain assumptions) is that there exists a consumption profile minimizing~$P^*$, and that this profile is therefore a Nash equilibrium; in fact, it can be shown that a consumption profile is a Nash equilibrium iff it minimizes $P^*$.
We now show that as claimed above, our hydraulic construction does not merely minimize $P^*$ as a ``side effect'' of finding a Nash equilibrium, but also gives a natural physical interpretation to $P^*$, which justifies the name ``potential function'' not only abstractly, but also concretely.

The reader may recall from high-school physics class that the \emph{gravitational potential energy} of a point mass of mass $m$ near the surface of the earth is given by $mgh$, where $h$ is the height of the mass, and $g$ is the standard acceleration due to gravity.\footnote{As is customary, we ignore the negligible effect of small changes in $h$ on the value of $g$.} More generally, the gravitational potential energy of a non-point-mass system may be expressed by the \mbox{Riemann--Stieltjes} integral $\int_{-\infty}^{\infty} gh\,dm(h)$,
where $m(h)$ is the cumulative mass in the system up to height $h$. In our hydraulic system, we have
\[ \int_{-\infty}^{\infty} gh\,dm(h) = g\sum_{j\in[n]}\int_0^{\load{j}}f_j(x)dx=g\cdot P^*(s), \]
and so $P^*$, up to a multiplicative constant, is precisely the gravitational potential energy of our system (which our construction therefore turns out to minimize).
Perhaps more intuitive to nonphysicists would be to reason not about the gravitational potential energy of our hydraulic system, but rather about the height of the \emph{center of mass} of the system, given by:
\[ \frac{1}{\mu}\int_{-\infty}^{\infty} h\,dm(h)=\frac{1}{\mu}\sum_{j\in[n]}\int_0^{\load{j}}f_j(x)dx=\frac{P^*(s)}{\mu},\]
where $\mu\eqdef \lim_{h\to\infty} m(h)=\sum_{\type\in\type}\mu^{\type}$ is the total mass of the system. Once again, the height of the center of mass, which our construction turns out to minimize, equals $P^*$ up to a multiplicative constant.\footnote{The reader is referred once again to the special case of our construction that is given in \cite{noncoop-market-alloc}, which can be more easily and intuitively shown to minimize gravitation potential energy and height of center of mass.}

We conclude this discussion by noting that the generalization of resource selection games studied in the next \lcnamecref{compress-expand} also demonstrates
that hydraulic analysis is by no means confined to games that can be analyzed
via the game-theoretic potential approach.
Indeed, even though the games studied in the next \lcnamecref{compress-expand} are generally not potential games in the sense of \cite{MondererShapley96}, our hydraulic construction naturally extends to solving them. (Indeed, in that generalized setting, in which the total mass of the system is no longer constant or the density not uniform, our construction no longer necessarily minimizes the gravitational potential energy, nor the height of the center of mass, of the system.)

\section{Resource Selection Games with I.D.-Dependent Weighting}\label{compress-expand}

In this \lcnamecref{compress-expand}, we describe
an extension of the results of \cref{id-independent} to a model where the cost of a resource may depend on the identity, rather than merely the quantity, of players using it.
While such major extensions have been studied in the context of atomic games, no tools have been previously offered to tackle them in nonatomic settings.

\subsection{Setting}

For $n,k\in\mathbb{N}$, an \emph{$n$-resource/$k$-player-type resource selection game with  I.D.-dependent weighting} is defined by a triple
$\idgame$,
where $f_j:\Rge\rightarrow\mathbb{R}$ is a nondecreasing function for every resource $j\in[n]$, where $\type^i\in\types$ for every player type~${i\in[k]}$, and where ${f^i_j:[0,1]\rightarrow\Rge}$ is an increasing function for every player-type/legal-resource pair~$(i,j)\in\bigcup_{i\in[k]}\{i\}\times\type^i$. For each player type~$i\in[k]$,\ \ $\type^i$ specifies the set of resources from
which this player type may consume. As before, for each resource~$j\in[n]$,\ \ $f_j$~is a function from the consumption amount of this resource to its
cost. The newly introduced functions $f^i_j$ replace and generalize the player-type masses $\mu^{\type}$ from \cref{id-independent} and indicate the weighting of the consumption of player type~$i$ from resource $j$ (see below).

A \emph{consumption (strategy) profile} in this game is a function $s:[k]\rightarrow \Rge^{[n]}$ s.t.\ $s(i) \in \Delta^{\type^i}$ for every $i\in[k]$,
indicating the fraction of players of type $i$ that consume from each resource.
Given a consumption profile $s$ in this game, for every $j \in [n]$ we define $\load{j}\eqdef\sum_{i:j\in\type^i} f_j^i\bigl(s_j(i)\bigr)$ (note the newly introduced weighting) --- the \emph{weighted} load on (i.e., total weighted consumption from) resource $j$. As before, we define $\height{j}\eqdef f_j(\load{j})$ for every $j\in[n]$ --- the cost of resource $j$. A \emph{Nash equilibrium} in this game is a consumption profile $s$ s.t.\ for every $i\in[k]$ and for every $\ell\in\supp\bigl(s(i)\bigr)$ and $j\in\type^i$, it is the case that $\height{\ell}\le\height{j}$.

\begin{remark}
A resource selection game as in \cref{id-independent} can be represented as a resource selection game with I.D.-dependent weighting by defining, for
every player type $\type$ in the former game with $\mu^{\type}\ne0$, a player type $i$ in the latter game with $\type^i=\type$ and $f^i_j=\mu^{\type}\cdot\id$ for all $j\in\type=\type^i$. So, the setting of this \lcnamecref{compress-expand} is indeed a (strict) generalization of resource selection games as defined
in \cref{id-independent}.
\end{remark}

\begin{example}[A Cloud Computing Market]
Consider a scenario in which the resources are computer servers, and each of the many player wishes to run a relatively small computing job, where jobs corresponding to players of the same type are of a similar nature.
A player of type $i\in[k]$ may choose between the machines~$\type^i$, whose hardware is compatible with jobs of players of this type, and would like for her job to be completed as soon as possible given this constraint. $f^i_j$~in this case is a linear function s.t.\ $f^i_j(x)$~is proportional to the number of cycles of machine~$j$ required to compute the jobs of an $x$ fraction of the players of type~$i$. (The hardware of each machine may run jobs of some nature  more efficiently than jobs of another nature; e.g., machine~$2$ may run image-processing jobs faster than text-analysis ones, while machine $3$ may run the latter faster than the former.) For each $j\in[n]$, we choose~$f_j$ s.t.\ $\height{j}=f_j(\load{j})$ is proportional to the number of seconds required for $\load{j}$ cycles of machine $j$ to complete. (Assume that the resources of each machine are parallelized between its different users, so that their jobs are all completed at the same time.)
\end{example}

An additional example for the natural emergence of I.D.-dependent weighting may be given by analyzing traffic congestion games, where some vehicles, such as trucks, may cause significantly higher congestion than cars in narrow roads, while causing only moderately higher congestion than cars (or even the same congestion) in wide roads.

\subsection{Hydraulic Adaptation and Formal Results}

Intuitively, the hydraulic construction from \cref{\balloonswithanimation}
may be adapted to this generalized framework by inserting ``compressors/expanders'' into the tubes between balloons corresponding to the same player type. E.g., if $\type^1=\{1,2\}$, $f_1^1(x)=x$ and $f_2^1(x)=2x$, then the balloon system corresponding to player type $1$ consists of two balloons, one in container $1$ and the other in container $2$, connected by a compressor/expander tube s.t.\ for each drop of liquid that enters the tube from the balloon in container $1$, two drops exit into the balloon in container $2$, and for every two drops of liquid that enter the tube from the balloon in container $2$, one drop exits into the balloon in container~$1$.

The first thing that we note about this generalized game is that it no longer holds that $\height{j}$ is independent of the choice of Nash equilibrium $s$; see \cref{many-heights} for an illustration.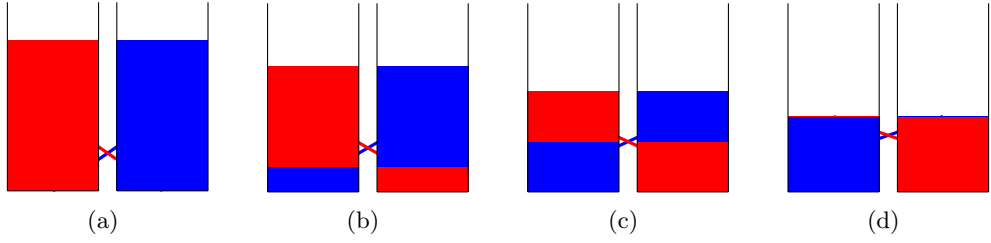
\begin{figure}[ht]%
\centering%
\subfigure[]{%
\begin{tikzpicture}[yscale=1,xscale=1.2]
\compressor{0}
\end{tikzpicture}%
}\qquad
\subfigure[]{%
\begin{tikzpicture}[yscale=1,xscale=1.2]
\compressor{1/3}
\end{tikzpicture}%
}\qquad
\subfigure[]{%
\begin{tikzpicture}[yscale=1,xscale=1.2]
\compressor{2/3}
\end{tikzpicture}%
}\qquad
\subfigure[]{%
\label{many-heights-super-strong}%
\begin{tikzpicture}[yscale=1,xscale=1.2]
\compressor{1}
\end{tikzpicture}%
}
\caption{Liquid distributions among balloons, corresponding to a plethora of Nash equilibria $s$ with distinct $\height{j}$, when $n=2$, $k=2$ (blue corresponding to $i=1$, and red --- to $i=2$), $f_1=f_2=\id$, $\type^1=\type^2=\{1,2\}$, $f^1_1(x)=f^2_2(x)=x$ and $f^1_2(x)=f^2_1(x)=2x$.
Only the Nash equilibrium depicted in \cref{many-heights-super-strong} is strong (in fact, it is super-strong); this is the unique equilibrium that our hydraulic construction finds.}%
\label{many-heights}%
\end{figure}
Nonetheless, if we accept the physical intuition that when compressed via pistons, each of the liquid distributions given in \cref{many-heights} eventually reaches the liquid distribution depicted in \cref{many-heights-super-strong}, then it is intuitively clear why our construction can be formally shown to yield a strong (and under conditions similar to those of \crefpart{super-strong}{super-strong}, super-strong) Nash equilibrium, proving the existence of such an equilibrium. Consequently, uniqueness of~$\height{j}$ can still be shown to hold among strong Nash equilibria; this result draws a novel fundamental connection between group deviation and I.D.-congestion, which to the best of our knowledge has never been drawn before in either nonatomic or atomic resource selection or congestion games. Formally, \cref{nash-exists,indifference-resource-costs,indifference,super-strong} generalize as follows.

\begin{theorem}[$\exists$ Strong Nash Equilibrium]\label{id-nash-exists}
Let $G=\idgame$ be a resource selection game with I.D.-dependent weighting.
If $(f_j)_{j=1}^{n}$ and $(f^i_j)_{j\in\type\smash{^i}}^{i\in[k]}$ are all continuous, then a strong Nash equilibrium exists in $G$.
\end{theorem}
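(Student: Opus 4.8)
The plan is to replay, step for step, the development of \cref{id-independent}: I will generalize communicating-vessel equalization and the derived quantities $E_G,M_G,D_G,P_G,h_G$ so that they account for the per-type weighting functions $f^i_j$ (this is the effect of inserting the compressors/expanders described above into the tubes), re-establish the analogues of \cref{p-nonempty}, \cref{associative-remove} and \cref{construct}, run the ``cascade'' of \cref{compute-height} to build a Nash equilibrium layer by layer, and finally argue that \emph{this particular} equilibrium is strong. Two things make the generalization more than bookkeeping. First, a compressor changes the volume of liquid passing through it, so the total mass of a sub-system is no longer a fixed quantity read off from the player masses; hence equalization can no longer be parametrized by a single scalar~$\mu$. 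Second, by \cref{many-heights} not every Nash equilibrium is strong here, so the strength argument must use the cascade equilibrium specifically and cannot merely invoke (the analogue of) \crefpart{super-strong}{strong}.

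For the generalized equalization, fix a resource set $S$ and the finite family of ``forced'' player types, namely those $i$ with $\type^i\subseteq S$. A configuration of the $S$-sub-game assigns each forced $i$ a point $s(i)\in\Delta^{\type^i}$; it induces weighted loads $\load{j}=\sum_{i:j\in\type^i}f^i_j\bigl(s_j(i)\bigr)$ and costs $\height{j}=f_j(\load{j})$ for $j\in S$. I define the equalization of $S$ to be the \emph{least} common cost attained over configurations in which $\height{j}$ is the same for all $j\in S$ (and $\undefined$ if there is no such configuration), and take this as the new $E_G(S)$; the quantities $M_G,D_G,P_G,h_G$ are then built from $E_G$ exactly as in \cref{id-independent}, with $M_G(S)$ again deleting those subsets $S'\subseteq S$ into which too little forced-player mass can be funneled to realize cost $E_G(S)$ while only forced players consume from $S$. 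I will prove the analogues of \cref{equalize-well-defined-nondecreasing}, \cref{intermediate-equalize} and \cref{equalize-continuous} --- the one that matters for this theorem being that, when all $f_j$ and $f^i_j$ are continuous, the equalization depends continuously on the data, which follows from a compactness argument over the product of simplices --- and then the analogue of \cref{p-nonempty} ($P_G\ne\emptyset$, $h_G\in\mathbb{R}$).

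Resource removal $G-S$ is defined as before, with one change: player types with a common residual resource set can no longer be merged (they may carry different weighting functions), so they are kept as distinct types; this is harmless because the number of \emph{resources} still strictly decreases at each cascade step, which is all the induction needs, and the analogue of \cref{associative-remove} goes through verbatim. The crux of the existence proof is the analogue of \cref{construct}: (a) there is a configuration of the $P_G$-sub-game in which every $j\in P_G$ costs exactly $h_G$, and --- the extra property I will want for strength --- this configuration \emph{minimizes} $\sum_{j\in P_G}\load{j}$ over all configurations of the $P_G$-sub-game (the ``fully compressed'' state in which no liquid can be pushed downward, which is precisely what $M_G(P_G)=\emptyset$ guarantees); and (b) if $P_G\ne[n]$ then $h_G>h_{G-P_G}$. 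Part~(a) will be proven by a constructive water-filling / exchange argument using continuity of the cost functions; it is necessarily more delicate than in \cref{id-independent} because the total mass on $P_G$ is not pinned down in advance. Running the cascade $G_0=G$, $G_{t+1}=G_t-P_{G_t}$ and gluing the per-layer configurations yields a consumption profile $s$ on $G$; since a player whose residual type first becomes contained in some $P_{G_t}$ has \emph{all} of its original resources in the layers $P_{G_0},\ldots,P_{G_t}$, each of cost $\ge h_{G_t}$ --- the cost it pays --- it has no beneficial unilateral move, so $s$ is a Nash equilibrium, with $\max_{j\in[n]}\height{j}=h_G$.

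Finally I show $s$ is strong, by induction on the number of cascade layers. Suppose $s'\ne s$ is a beneficial deviation, i.e.\ $\heightt{\ell}<h^i$ whenever $s'_\ell(i)>s_\ell(i)$, where $h^i$ is the common cost paid by type $i$ in $s$. If $s'$ leaves the loads of the top layer $P\eqdef P_G$ unchanged, then (as one checks, forcing any $P$-consumer's cost to stay at $h_G$) no player changes its consumption on $P$, so $s'$ restricts to a beneficial deviation from the cascade equilibrium of $G-P$ (a player's cost on $G$ and on $G-P$ coincide), and we are done by induction. Otherwise $P_<\eqdef\{\ell\in P:\heightt{\ell}<h_G\}$ is nonempty: in $s$ only the forced players ($\type^i\subseteq P$) consume from $P$, each distributing its full unit over $P$ in both $s$ and $s'$, so any change to $P$'s loads involves some player increasing its consumption on some $\ell\in P$, which forces $\heightt{\ell}<h_G$. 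Each $\ell\in P_<$ has $\loadt{\ell}<\load{\ell}$ (as $f_\ell$ is nondecreasing), while, writing the weighted load on $P$ in $s'$ as a forced-player part (a configuration of the $P_G$-sub-game) plus a non-forced part that can only exceed its value in $s$ (each $f^i_\ell$ is increasing and these players put nonnegative, resp.\ zero, consumption on $\ell\in P$ in $s'$, resp.\ $s$), the minimality of part~(a) gives $\sum_{\ell\in P}\loadt{\ell}\ge\sum_{\ell\in P}\load{\ell}$. Hence some $\ell^*\in P\setminus P_<$ has $\loadt{\ell^*}>\load{\ell^*}$, which forces $\heightt{\ell^*}\ge\height{\ell^*}=h_G$ and exhibits a type $i$ with $s'_{\ell^*}(i)>s_{\ell^*}(i)$ (as $f^i_{\ell^*}$ is increasing), whence $\heightt{\ell^*}<h^i\le h_G$ --- a contradiction. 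The main obstacle is establishing the minimality/maximal-compression clause of part~(a) and, upstream of it, pinning down the generalized equalization and $M_G$ so that $D_G,P_G,h_G$ behave as the hydraulic picture dictates; this is exactly the technical content behind the promised connection between group deviation and I.D.-congestion.
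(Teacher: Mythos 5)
Your overall architecture --- a generalized equalization defined over \emph{legal} configurations of the ``forced'' types of each resource set, a layer-by-layer cascade, and a direct strength argument for the constructed profile --- matches what the paper sketches in \cref{compress-expand}; note that the paper defers its full proof to the full version, indicating only that the explicit formula must be of the ``bulkier'' kind $B_G(S)$, defined as the cost of a balanced-cost \emph{strong} Nash equilibrium of the $S$-sub-game. There is, however, a genuine gap in your key lemma, part~(a): the claim that the least-cost balanced configuration of the $P_G$-sub-game minimizes $\sum_{j\in P_G}\load{j}$ over \emph{all} configurations is false. Take $n=2$, $k=1$, $\type^1=\{1,2\}$, $f^1_1(x)=x$, $f^1_2(x)=2x$, $f_1=\id$, $f_2(y)=\nicefrac{y}{2}$. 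Here $P_G=\{1,2\}$, the unique balanced configuration is $s(1)=(\nicefrac{1}{2},\nicefrac{1}{2})$ with $\load{1}=\nicefrac{1}{2}$, $\load{2}=1$ and total weighted load $\nicefrac{3}{2}$, yet the configuration $(1,0)$ has total weighted load $1<\nicefrac{3}{2}$. Consequently the inequality $\sum_{\ell\in P}\loadt{\ell}\ge\sum_{\ell\in P}\load{\ell}$ that your strength argument extracts from this minimality can genuinely fail --- in the example, the profile $(\nicefrac{1}{2}+\varepsilon,\nicefrac{1}{2}-\varepsilon)$ has total weighted load $\nicefrac{3}{2}-\varepsilon$ --- so the step producing $\ell^*$ is unsupported as written.

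What your contradiction actually needs is weaker, and identifying and proving it is the real content: the load vector of the chosen configuration must be \emph{coordinatewise} minimal, i.e., no configuration of the $P_G$-sub-game weakly decreases every $\load{\ell}$, $\ell\in P_G$, and strictly decreases some. Indeed, under a beneficial deviation no load on $P_G$ can strictly increase (an increase at $\ell$ would force some $i$ with $s'_\ell(i)>s_\ell(i)$, hence $\heightt{\ell}<h^i\le h_G=\height{\ell}$ and thus, $f_\ell$ being nondecreasing, $\loadt{\ell}<\load{\ell}$ --- a contradiction); therefore the forced part of $(\loadt{\ell})_{\ell\in P_G}$ is coordinatewise dominated by $(\load{\ell})_{\ell\in P_G}$, and coordinatewise minimality forces equality, which is exactly the ``maximal compression'' the hydraulic picture suggests and lets the induction proceed. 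But coordinatewise minimality of the least-cost balanced configuration is a different statement from sum-minimality: it is not obvious (it interacts with the curvature of the $f^i_j$, as the example above already hints), it is essentially the assertion that the witness configuration is itself a strong equilibrium of the sub-game --- which is how the paper's $B_G(S)$ builds the property in by definition --- and your proposal neither states it nor indicates how to prove it. Until that lemma is established, the argument does not close.
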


\begin{theorem}[Uniqueness of Strong Equilibrium Resource Costs]\label{id-indifference-resource-costs}
Let $G$ be an $n$-resource selection game with I.D.-dependent weighting.
$\height{j}=\heightt{j}$ for every $j\in[n]$ and every two strong Nash equilibria $s,s'$ in $G$.
\end{theorem}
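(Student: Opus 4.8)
The plan is to mirror the proof of \cref{indifference-resource-costs} (via \cref{highest-stopping}, as carried out in \cref{proofs-results}), but to rerun the development of \cref{definitions,derivation} with an I.D.-dependent analogue of the ``highest-costing resources'' machinery and, crucially, to invoke the strong-equilibrium hypothesis at the one point where \cref{highest-stopping} could afford an arbitrary Nash equilibrium. As \cref{many-heights} shows, in this generalized setting the set of highest-costing resources and their common cost are \emph{not} determined by $G$ alone across arbitrary Nash equilibria: a player type can ``waste'' capacity by distributing its unit of mass sub-optimally among its containers, which --- because the weighting functions $f^i_j$ are not all copies of a scaled identity --- raises the level at which the corresponding pistons come to rest (compare \cref{many-heights-super-strong}); it is exactly strongness that rules such configurations out. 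So the target is an I.D.-dependent strengthening of \cref{highest-stopping}: for every \emph{strong} Nash equilibrium $s$ in $G$, the set $P^s\eqdef\arg\Max_{j\in[n]}\height{j}$ equals a set $P_G$ depending only on $G$, every $j\in P^s$ has $\height{j}=h_G$ for a quantity $h_G$ depending only on $G$, no player type with a legal resource outside $P^s$ consumes from $P^s$ at all, and the natural restriction of $s$ to the resource-removal game $G-P^s$ is again a \emph{strong} Nash equilibrium with the same resource costs on $[n]\setminus P^s$. Given this, \cref{id-indifference-resource-costs} follows by induction on $n$, exactly as \cref{indifference-resource-costs} follows from \cref{highest-stopping} (equivalently, by running the I.D.-dependent analogue of \cref{compute-height}, whose output manifestly does not reference $s$).

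The first ingredient is to set up the I.D.-dependent analogues of $\equalize$, $E_G$, $M_G$, $D_G$, $P_G$, $h_G$, and of resource removal. The equalization operator must absorb the weighting: for a collection of player types all confined to a common set $S$ of containers, the quantity conserved per type $i$ is the unit mass $\sum_{j\in\type^i}s_j(i)=1$, whereas what enters container $j$ is the weighted amount $f^i_j\bigl(s_j(i)\bigr)$, so the stable level is the common value of the $f_j$'s applied to the summed weighted loads, and --- as the computation behind \cref{many-heights-super-strong} indicates --- one takes the \emph{minimal} such level (the level keeps descending as mass is concentrated on cheaper containers until no freedom remains). With this operator, $E_G(S)$, $M_G(S)$, $D_G$, $P_G$, $h_G$ are defined verbatim as in \cref{definitions} with the $\mu^{\type}$'s replaced by the weightings, and resource removal $G-S$ keeps exactly the player types $i$ with $\type^i\not\subseteq S$, replacing each such $\type^i$ by $\type^i\setminus S$ and restricting each $f^i_j$ to $j\notin S$. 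I expect the analogues of \cref{equalize-well-defined-nondecreasing,intermediate-equalize,p-nonempty,associative-remove} to go through by the same arguments, with strict monotonicity of the $f^i_j$ playing the role that linearity of $\id$ played there.

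The core of the argument is the strong-equilibrium analogue of \cref{highest-stopping}. Given a strong Nash equilibrium $s$, one first argues as in \cref{highest-stopping} that a player type $i$ with $\type^i\not\subseteq P^s$ consumes from no $j\in P^s$ (else $\height{j}$, being the global maximum, would have to equal the strictly smaller cost of $i$'s resources outside $P^s$). The new step is to show that on $P^s$ the liquid sits in the canonical, minimal-level equalized configuration, so that $\height{j}=h_G$ on $P^s$ and $P^s=P_G$: were it not, there would be room to re-route, among the players confined to $P^s$, a small mass of some type from a higher-costing container to a lower-costing one, and (using only monotonicity, not continuity) this can be done so that no resource cost strictly increases for any affected player while at least one strictly decreases --- a group deviation contradicting strongness. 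Finally, one verifies that the restriction $s'$ of $s$ to $G-P^s$ inherits strongness: any group deviation available in $G-P^s$ lifts to one in $G$ that leaves the untouched, maximally-costing players confined to $P^s$ unharmed, contradicting strongness of $s$. An induction on $n$ then closes the proof.

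The main obstacle is precisely this pair of ``strongness'' steps. Unlike in \cref{id-independent}, where re-routing a mass of liquid conserves total volume because every $f^i_j$ is a copy of $\mu^{\type}\cdot\id$, here a player type moving a fraction $\varepsilon$ out of container $j$ withdraws $f^i_j\bigl(s_j(i)\bigr)-f^i_j\bigl(s_j(i)-\varepsilon\bigr)$ units of liquid but deposits a generally different amount into the destination container, so the deviation must be designed in ``weighted'' coordinates, jointly balancing the induced changes in the several loads $\load{j}$ so that no participant's cost rises, and one must separately check that some participant's cost strictly falls. Carrying out this bookkeeping of weighted re-routings correctly --- both to force the canonical configuration on $P^s$ and to lift deviations from $G-P^s$ back to $G$ --- is where the real work lies; everything else should be a faithful transcription of \cref{definitions,derivation}.
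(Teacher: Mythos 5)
Your skeleton matches what the paper outlines for this theorem: induction on $n$ via resource removal, with strongness invoked to pin down the liquid configuration on the set of highest-costing resources, and with the level on a set $S$ characterized by a minimization over profiles rather than by a closed-form equalization of a conserved total mass. Be aware, though, that the paper does not actually carry out this proof in the present version --- \cref{compress-expand} explicitly defers the full details to the full paper and gives only a sketch --- so the only comparison available is with that sketch. There the role of your ``minimal-level equalization operator'' is played by $B_G(S)$, the cost of a balanced-cost \emph{strong} equilibrium of the subgame restricted to $S$; since such an equilibrium is the least-cost balanced profile, your operator is essentially $B_G(S)$ in different clothing, and the paper's remark that no explicit formula of the $E_G$ kind exists in this setting is consistent with your construction.

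The one step in your plan that, as written, would fail is the deviation you propose in order to force the canonical configuration on $P^s$. By definition $P^s=\arg\Max_{j\in[n]}\height{j}$, so \emph{every} resource in $P^s$ has the same cost in the Nash equilibrium $s$; there is no ``higher-costing container'' and ``lower-costing container'' within $P^s$ between which to re-route a small mass, and indeed in \cref{many-heights} the non-strong equilibria are perfectly balanced. The deviation witnessing non-strongness must instead be a wholesale, coalition-wide move of the $P^s$-confined player types to a balanced configuration with a strictly lower common level (for the uniqueness statement one can compare the two given strong equilibria directly and use the lower-level one as the deviation target, which sidesteps any attainment issue for the infimal level); under such a deviation every deviator who increases consumption of some $k\in P^s$ faces a cost strictly below $h^{\type}=h_G$, as required. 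Beyond this, the two places you flag as ``where the real work lies'' --- the weighted bookkeeping that establishes the canonical configuration, and the lifting of deviations from $G-P^s$ back to $G$ while preserving strongness --- are exactly the substance of the proof and are left undone, so what you have is a plausible and well-aimed plan rather than a proof.
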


\begin{corollary}\label{id-indifference} Let $G=\idgame$ be a resource selection game with I.D.-dependent weighting.
\begin{parts}
\item\thmitemtitle{Players are Indifferent between Strong Equilibria}\label{id-indifference-players}
$\height{k}=\heightt{k'}$ for every $k\in\supp\bigl(s(\type)\bigr)$ and $k'\in\supp\bigl(s'(\type)\bigr)$, for every $\type\in\types$ and every two strong Nash equilibria $s,s'$ in $G$.
\item\thmitemtitle{Uniqueness of Strong Equilibrium Resource Loads}\label{id-indifference-resources}
If $(f_j)_{j=1}^n$ are strictly increasing, then $\load{j}=\loadt{j}$ for every $j\in[n]$ and every two strong Nash equilibria $s,s'$ in $G$.
\end{parts}
\end{corollary}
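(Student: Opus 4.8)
The plan is to deduce both parts quickly from \cref{id-indifference-resource-costs} (uniqueness of resource costs across strong equilibria), in exact parallel to the way \cref{indifference} is deduced from \cref{indifference-resource-costs} in the I.D.-independent setting; thus essentially all the difficulty has already been absorbed into \cref{id-indifference-resource-costs}, which we are free to invoke, and what remains here is light.

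For \crefpart{id-indifference}{players}, I would first record a purely definitional fact: if $s$ is any Nash equilibrium and $i$ is a player type, then every resource in $\supp\bigl(s(i)\bigr)$ carries the same cost, and this common cost equals $\min_{j\in\type^i}\height{j}$. This is immediate since $\supp\bigl(s(i)\bigr)$ is a nonempty subset of $\type^i$, so the Nash inequalities $\height{\ell}\le\height{j}$ (for $\ell\in\supp\bigl(s(i)\bigr)$, $j\in\type^i$) force equality among the supported resources and minimality over $\type^i$; in particular this holds for strong equilibria. Now, given two strong Nash equilibria $s,s'$, \cref{id-indifference-resource-costs} yields $\height{j}=\heightt{j}$ for all $j\in[n]$, hence $\min_{j\in\type^i}\height{j}=\min_{j\in\type^i}\heightt{j}$ for each player type $i$; combining with the previous fact gives $\height{k}=\min_{j\in\type^i}\height{j}=\min_{j\in\type^i}\heightt{j}=\heightt{k'}$ for every $k\in\supp\bigl(s(i)\bigr)$ and $k'\in\supp\bigl(s'(i)\bigr)$, which is the claim. (The assertion is vacuous when $G$ admits no strong equilibrium, and nonvacuous in the regime of \cref{id-nash-exists}.)

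For \crefpart{id-indifference}{resources}, assume each $f_j$ is strictly increasing. For two strong Nash equilibria $s,s'$ and any $j\in[n]$, the definition $\height{j}=f_j(\load{j})$ combined with \cref{id-indifference-resource-costs} gives $f_j(\load{j})=\height{j}=\heightt{j}=f_j(\loadt{j})$; since a strictly increasing function is injective on $\Rge$, we conclude $\load{j}=\loadt{j}$.

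The genuine obstacle lies entirely upstream, in \cref{id-indifference-resource-costs} itself; past that, the arguments above are a near-verbatim translation of the I.D.-independent case with ``Nash equilibrium'' weakened to ``strong Nash equilibrium''. The one structural difference worth flagging is the hypothesis of \crefpart{id-indifference}{resources}: in the I.D.-dependent setting the weighted load $\load{j}=\sum_{i:j\in\type^i}f^i_j\bigl(s_j(i)\bigr)$ is no longer a plain sum of consumption amounts, and there is no analogue of the \cref{compute-height}-style determination of the load \emph{distribution} inside a block of equal-cost resources; so one can recover $\load{j}$ from the (now unique) cost $\height{j}$ only by inverting $f_j$, which is why the ``no shared plateau height'' condition sufficing in \crefpart{indifference}{resources} must be strengthened here to strict monotonicity of every $f_j$.
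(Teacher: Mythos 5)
Your proposal is correct: part (\labelcref{id-indifference-players}) is a verbatim adaptation of the paper's proof of \crefpart{indifference}{players} (supported resources attain $\min_{j\in\type^i}\height{j}$, then apply \cref{id-indifference-resource-costs}), and part (\labelcref{id-indifference-resources}) follows by injectivity of the strictly increasing $f_j$, which is exactly what the strengthened hypothesis is for. The paper defers the I.D.-dependent proofs to its full version, but your argument matches the stated plan of mirroring \cref{indifference}, and your closing remark about why the weaker ``no shared plateau heights'' condition cannot be salvaged here is consistent with the paper's own counterexample following the corollary.
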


\begin{theorem}[All Strong Equilibria are Super-Strong]\label{id-super-strong}
Let $G=\idgame$ be a resource selection game with I.D.-dependent weighting.
If $\height{j}$ is not a plateau height of $f_j$ for each $j\in[n]$ in any/every strong Nash equilibrium $s$, then all strong Nash equilibria in $G$ are super-strong.
\end{theorem}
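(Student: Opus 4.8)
The plan is to adapt the proof of \crefpart{super-strong}{super-strong} from the I.D.-independent setting, using the I.D.-dependent analogs of the lemmas of \cref{id-independent} (in particular of \cref{highest-stopping}) and substituting \cref{id-indifference-resource-costs} --- uniqueness of resource costs across \emph{strong} equilibria --- wherever the I.D.-independent argument invokes uniqueness across \emph{all} Nash equilibria. One consequence worth recording up front is that, under the hypothesis of this \lcnamecref{id-super-strong}, resource \emph{loads} (not just costs) are pinned down across strong equilibria: if $s,s''$ are strong Nash equilibria then $\heighttt{j}=\height{j}$ for all $j$ by \cref{id-indifference-resource-costs}, and since $\height{j}=f_j(\load{j})$ is by assumption not a plateau height of the nondecreasing $f_j$, the preimage $f_j^{-1}(\height{j})$ is the singleton $\{\load{j}\}$, whence $\loadtt{j}=\load{j}$; in particular the no-plateau hypothesis is inherited by every reduced game obtained by removing resources.

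Fix a strong Nash equilibrium $s$ and, for each type $i\in[k]$, let $h^i\eqdef\height{\ell}$ for any $\ell\in\supp\bigl(s(i)\bigr)$ (well defined as $s$ is Nash). Assuming towards a contradiction that $s$ is not super-strong, let $s'$ witness this: $\heightt{\ell}\le h^i$ for every $i$ and $\ell\in\supp\bigl(s'(i)\bigr)$ with $s'_\ell(i)>s_\ell(i)$, and $\heightt{\ell_0}<h^{i_0}$ for some $i_0$ and $\ell_0\in\supp\bigl(s'(i_0)\bigr)$. First I would show that every resource weakly loses load, $\loadt{j}\le\load{j}$ for all $j$: if some type increased its consumption of $j$, then the deviation condition together with the Nash condition for $s$ (using $j\in\type^{i}$) give $\heightt{j}\le h^i\le\height{j}$, and $\loadt{j}>\load{j}$ would force $f_j(\loadt{j})=f_j(\load{j})$, making $\height{j}$ a plateau height of $f_j$ --- contrary to hypothesis; if no type increased its consumption of $j$, then $\loadt{j}=\sum_{i:j\in\type^i}f^i_j\bigl(s'_j(i)\bigr)\le\sum_{i:j\in\type^i}f^i_j\bigl(s_j(i)\bigr)=\load{j}$ as each $f^i_j$ is nondecreasing. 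The inequality is strict for the gaining resource, since $\heightt{\ell_0}<h^{i_0}\le\height{\ell_0}$ already forces $\loadt{\ell_0}<\load{\ell_0}$. (It also follows that $\heightt{\ell}\le h^i$ for \emph{every} $\ell\in\supp\bigl(s'(i)\bigr)$, because such an $\ell$ that was not increased lies in $\supp\bigl(s(i)\bigr)$, so $\heightt{\ell}\le\height{\ell}=h^i$.)

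In the I.D.-independent case this closes the argument instantly: total load is conserved, so all summands weakly decreasing with one strictly decreasing is impossible; hence no load changes, all costs are unchanged, and $\heightt{\ell_0}<h^{i_0}\le\height{\ell_0}=\heightt{\ell_0}$ is absurd. In the I.D.-dependent case the total \emph{weighted} load $\sum_j\load{j}=\sum_i\sum_{j\in\type^i}f^i_j\bigl(s_j(i)\bigr)$ is \emph{not} conserved --- moving a type's mass toward resources on which it weighs less lowers it --- so this step is the main obstacle and must be replaced. I would replace it by the resource-peeling device behind \cref{highest-stopping}. Let $P^s\eqdef\arg\Max_{j\in[n]}\height{j}$ and $h^{*}\eqdef\max_{j\in[n]}\height{j}$; by \cref{id-indifference-resource-costs} these are independent of the strong equilibrium $s$. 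The Nash condition forces that no type with $\type^i\not\subseteq P^s$ consumes from $P^s$ in $s$, and --- lifting any beneficial group deviation of $G-P^s$ back to $G$ while leaving the resources of $P^s$ and the types contained in them untouched --- $s$ induces a strong Nash equilibrium of $G-P^s$ (whose cost functions, and the load of each surviving resource, are unchanged). Now split on the gaining pair $(i_0,\ell_0)$: if $h^{i_0}<h^{*}$, equivalently $\type^{i_0}\not\subseteq P^s$, then the deviation restricts to a super-strong deviation of $G-P^s$ from the induced strong equilibrium, contradicting the inductive hypothesis on the number of resources; if $h^{i_0}=h^{*}$, then $\ell_0$, any type $i_1$ that decreased its consumption of $\ell_0$, and any resource $\ell_1\in\type^{i_1}$ that $i_1$ increased all lie over $P^s$ (their $s$-costs all equal $h^{*}$), and I would run a finite chain $\ell_0,i_1,\ell_1,i_2,\ell_2,\dots$: at each resource of the chain, either it strictly lost load and thus furnishes the next type, or it did not, in which case the increase feeding it forces some type to have decreased it; since costs are nondecreasing along the chain while the no-plateau hypothesis propagates strict load decreases, this cannot persist in a finite game.

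The two points that will require genuine care are exactly these replacements: verifying that in the first case the restriction to $G-P^s$ really remains a \emph{super-strong} deviation --- in particular that the strictly-gaining pair survives the restriction and that the types with $\type^i\not\subseteq P^s$ may be taken not to consume from $P^s$ in $s'$ either --- and verifying that in the second case the chain over $P^s$ is genuinely forced to terminate in a contradiction, i.e.\ that one cannot loop indefinitely through resources of cost $h^{*}$ while respecting $\loadt{j}\le\load{j}$ and the strict decrease at $\ell_0$. Everything else (well-definedness of $h^i$, the load inequalities, the behavior of resource removal, inheritance of the no-plateau property) is routine given the machinery of \cref{id-independent} and its I.D.-dependent analogs.
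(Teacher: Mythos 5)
First, a caveat on the comparison: the paper defers the proofs of all results in \cref{compress-expand}, including this \lcnamecref{id-super-strong}, to the full version, so the only in-paper benchmark is the I.D.-independent proof of \cref{super-strong}. Your proposal correctly mirrors that proof's peel-and-induct skeleton, and your preliminary steps (well-definedness of $h^i$, the inequalities $\loadt{j}\le\load{j}$ with strictness at $\ell_0$, inheritance of the no-plateau hypothesis) are sound. You also correctly isolate the obstacle: the bound $\sum_{j\in P^s}\loadt{j}\ge\sum_{j\in P^s}\load{j}$, which closes the argument on $P^s$ in the I.D.-independent case, is unavailable once weights are I.D.-dependent. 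Branch (A) is plausible modulo the bookkeeping you flag.

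The genuine gap is branch (B). The chain $\ell_0,i_1,\ell_1,i_2,\ldots$ has no termination mechanism: all $s$-costs along it equal $h^*$, nothing strictly decreases from one link to the next, and the chain can simply cycle. More tellingly, branch (B) uses only the Nash property of $s$, the load inequalities, and the no-plateau hypothesis --- it never uses that $s$ is \emph{strong} --- yet the statement is false for merely-Nash equilibria. In \cref{many-heights} (where $f_1=f_2=\id$, so there are no plateau heights at all), the profile in which each type sits entirely on the resource where it weighs $2x$ is a Nash equilibrium, and the swap to the profile of \cref{many-heights-super-strong} is a deviation in which every weighted load strictly decreases and every mover strictly gains; your chain there is exactly the two-cycle $\ell_0\to i_1\to\ell_1\to i_2\to\ell_0$ and runs forever with all your constraints satisfied. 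So every hypothesis you invoke in branch (B) can hold while the conclusion fails, and any correct completion must inject the strongness of $s$ --- for instance by upgrading the deviation restricted to $P^s$ and the types contained in it (using the no-plateau condition to convert the weak inequalities $\heightt{k}\le h^i$ at increased pairs into strict ones, or into load equalities that can be undone) into a \emph{strong} deviation from $s$, contradicting strongness directly. That upgrade is the actual content of the theorem in the I.D.-dependent setting --- it is precisely what forbids the ``circulation'' deviations responsible for the failure of cost uniqueness across non-strong Nash equilibria --- and it is missing from the proposal.
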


We note that the conditions of \crefpart{id-indifference}{resources} (strict monotonicity of cost functions) are considerably stricter than those of the analogous \crefpart{indifference}{resources} (no two cost functions sharing a plateau height). As we now show, the weaker conditions of the latter \lcnamecref{indifference} do not suffice for the former one.

\begin{example}[Nonuniqueness of Strong Equilibrium Resource Loads When All But One Function are Strictly Increasing]
Consider the game $G=\idgame$, for $n=2$, $k=2$, $f_1=\id$ (strictly increasing, and so having no plateau heights), $f_2(x)=\min\{x,1\}$, $\type^1=\type^2=\{1,2\}$, $f^1_1(x)=f^2_1(x)=f^1_2(x)=x$ and $f^2_2(x)=2x$. In this game, a strong Nash equilibrium~$s$ is given by
$1\mapsto(0,1)$ and $2\mapsto(1,0)$. Note that an additional strong Nash equilibrium $s'$ is given by $1\mapsto(1,0)$ and $2\mapsto(0,1)$.
As \cref{id-indifference-resource-costs} predicts, indeed $\height{1}=1=\heightt{1}$ and $\height{2}=1=\heightt{2}$. Nonetheless, $\load{2}=1\ne2=\loadt{2}$. (In fact, a continuum of strong Nash equilibria $s''$ exist in this game, with $\loadtt{2}$ attaining all values in $[1,2]$.)
\end{example}

\subsection{Formal Tools Adaptation}

The formal analysis is similar to that of \cref{derivation}, and is in fact simpler. This seemingly curious simplicity is due to the nature of the explicit formula for strong equilibrium resource costs that we obtain in each of the scenarios. In the scenario of \cref{id-independent}, much of the complexity of our proof was in order to give the (relatively simple) explicit formula $h_G=\Max_{\smash{S\in\types}} E_G(S)$ (see \cref{p-remark}), where we recall that $E_G(S)=\equalize{f_k:k\in S}(\sum_{\smash{\type\in\nonemptysubs{S}}}\mu^{\type})$ --- the communicating-vessel equalization, among the vessels corresponding to $S$, of the entire mass of players who cannot consume from resources outside $S$. (Recall that the maximum is taken only over sets $S$ for which $E_G(S)$ is defined.) We remark that a bulkier and far less elegant formula, which is considerably easier to prove, is $h_G=\Max_{\smash{S\in\types}} B_G(S)$, where $B_G(S)$ stands for the cost in a ``balanced-cost'' strategy profile (i.e., a strategy profile in which all resources have the same cost) in the game $\bigl((f_j)_{j\in S};(\mu^{\type})_{\smash{\type\in\nonemptysubs{S}}}\bigr)$. (Again, this maximum is taken only over sets $S$ for which $B_G(S)$ is defined.\footnote{It is easy to verify that $E_G(S)$ is defined whenever $B_G(S)$ is defined, but not \emph{vice versa}.}) The former (much simpler) $\Max_{\smash{S\in\types}} E_G(S)$ does not explicitly check whether any of the distributions of the total mass of players $\sum_{\smash{\type\in\nonemptysubs{S}}}\mu^{\type}$ that yield an equalized cost across $S$ is in fact a legal strategy profile, i.e., one that does not assign players to resources that are illegal for them --- the existence of such a legal player distribution is shown in our proof (see \crefpart{construct}{distribute} and the results supporting it: \cref{distribute,p-in-argmax,max-on-all} and the entire analysis of \cref{constrained-distribution}) rather than taken as an (unrequired) assumption.\footnote{Such a ``bulkier'' $B_G(S)$ may be seen as somewhat closer in a sense to the ``packing oracle'' of \cite{HHKS13}.} Unfortunately, however, in the case of I.D.-depdendent weighting, as each player's mass changes in a type-dependent manner with her chosen resource, there is no way to provide an explicit formula for resource costs without resorting to a formula of the ``bulkier'' kind\footnote{In this case, $B_G(S)$ is defined as the cost in a balanced-cost  strong Nash equilibrium (if such exists) in the game $\bigl((f_j)_{j\in S};(\type^i)_{\smash{i:\type^i\in\nonemptysubs{S}}};(f^i_j)_{j\in\type\smash{^i}}^{i:\type\smash{^i\in\nonemptysubs{S}}}\bigr)$. (For comparison with how $B_G(S)$ would have been defined in the scenario of \cref{id-independent}, note that when a balanced-cost strong Nash equilibrium exists, then it is a balanced-cost strategy profile \emph{with least cost} among all balanced-cost strategy profiles, which may indeed have various costs as in \cref{many-heights}.)} that we were able to avoid in \cref{id-independent} (it is no longer possible to simply reason about the ``sum of the mass'', or the ``total load'', of players who cannot consume from resources outside $S$), resulting in simpler proofs, which we defer, along with the full details, to the full paper.

\section{Beyond Games}
\label{hall-and-beyond}

As we have been pointing out above, the development of the machinery of this paper is free of any fixed-point theorem, of the Minimax theorem and any equivalent result, and of linear programming. As such nonconstructive techniques are traditionally the tools used when attempting to establish the existence of equilibria, one may claim that in a sense, hydraulic analysis ``replaces'' such techniques in our analysis.
It is therefore only natural to ask whether other results that are traditionally obtained via linear-programming methods can be also be derived as consequences of our machinery.\footnote{This question also naturally arises from noting that a na\"{\i}ve \emph{in silico} computation of whether a descending piston is blocked bears a striking resemblance to a search for an augmenting path.} In this \lcnamecref{hall-and-beyond}, we show that hydraulic analysis can indeed serve as a constructive substitute to linear-programming approaches also outside the realm of games, shedding new light on several flow/linear-programming problems. We start by deducing a novel, surprisingly intuitive, proof of Hall's theorem using hydraulic analysis.

\subsection{Case Study: Application to Hall's Fractional Marriage Theorem}\label{hall}

For this \lcnamecref{hall}, let $n\in\mathbb{N}$, and for every $i\in[n]$, let $\type^i\subseteq[n]$. We consider a scenario involving $n$ women and $n$ men, where for every $i\in[n]$, we interpret $\type^i$ as the set of men acceptable to woman $i$; a \emph{perfect marriage} is a one-to-one correspondence $\mathcal{M}:[n]\rightarrow[n]$, where $\mathcal{M}(i)\in\type^i$ for every $i\in[n]$. For every subset $I\subseteq [n]$, which we interpret as a set of women, we define $\type^I=\cup_{i\in I}\type^i$ --- the set of men acceptable to at least one woman in~$I$. A well-known result in graph theory is the following characterization of the conditions for the existence of a perfect marriage.

\begin{theorem}[Hall's Marriage Theorem~\citep{Hall35}]\label{hall-int}
A perfect marriage exists iff $|I|\le|\type^I|$ for every $I\subseteq[n]$.
\end{theorem}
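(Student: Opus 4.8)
The plan is to realise the marriage problem as a ``standard'' $n$-resource selection game and to read Hall's condition directly off the explicit formula $h_G=\Max_{S\in\types}E_G(S)$ supplied by \cref{p-remark}. I would take the $n$ resources to be the $n$ men, set $f_j\eqdef\id$ for every $j\in[n]$, and for every $\type\in\types$ put $\mu^{\type}\eqdef\bigl|\{i\in[n]:\type^i=\type\}\bigr|$; call the resulting game $G$. Then $\height{j}=\load{j}$ for every resource $j$ and every consumption profile $s$, and, since $s(\type)\in\mu^{\type}\cdot\Delta^{\type}$, one always has $\sum_{j\in[n]}\load{j}=\sum_{\type\in\types}\mu^{\type}$. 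The heart of the argument will be the equivalence: Hall's condition holds if and only if every resource has load exactly $1$ in a Nash equilibrium of $G$; a perfect marriage is then extracted from such an equilibrium.

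The direction ``$\Rightarrow$'' of the \lcnamecref{hall-int} needs no hydraulics: given a perfect marriage $\mathcal{M}$, the restriction $\mathcal{M}|_I$ injects $I$ into $\type^I$, whence $|I|\le|\type^I|$. For ``$\Leftarrow$'', assume $|I|\le|\type^I|$ for all $I\subseteq[n]$. Taking $I$ to be singletons forces $\type^i\ne\emptyset$ for every $i$, so $\sum_{\type\in\types}\mu^{\type}=n$. As every $f_j$ is continuous, \cref{nash-exists} yields a Nash equilibrium $s$ of $G$, and by \crefpart{highest-stopping}{who}, \crefpart{highest-stopping}{where}, and \cref{p-remark},
\[
\Max_{j\in[n]}\height{j}=h_G=\Max_{S\in\types}E_G(S).
\]
Since all cost functions equal $\id$, \cref{symmetric-equalize} gives $\equalize{f_k:k\in S}(\mu)=\mu/|S|$, so for each $S\in\types$,
\[
E_G(S)=\frac{1}{|S|}\smashoperator{\sum_{\emptyset\ne\type\subseteq S}}\mu^{\type}=\frac{\bigl|\{i\in[n]:\type^i\subseteq S\}\bigr|}{|S|}.
\]
Putting $I_S\eqdef\{i\in[n]:\type^i\subseteq S\}$ we have $\type^{I_S}\subseteq S$, so Hall's condition gives $|I_S|\le|\type^{I_S}|\le|S|$, i.e.\ $E_G(S)\le1$; hence $\load{j}=\height{j}\le h_G\le1$ for all $j$, and since the loads sum to $n$ we conclude $\load{j}=1$ for every $j\in[n]$.

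It remains to turn the load-$1$ equilibrium $s$ into an integral perfect marriage. For each $\type\in\types$ with $\mu^{\type}>0$, I would split the aggregated mass $\mu^{\type}$ uniformly among the $\mu^{\type}$ women of type $\type$ --- i.e.\ set $x_{ij}\eqdef s_j(\type)/\mu^{\type}$ whenever $\type^i=\type$, a trivial transportation problem on a complete bipartite graph with matching margins. This gives weights $x_{ij}\ge0$, supported on pairs with $j\in\type^i$, with $\sum_j x_{ij}=1$ for every woman and $\sum_i x_{ij}=1$ for every man: a \emph{fractional} perfect marriage. To round it without leaving the constructive, linear-programming-free setting of the paper, one cancels cycles in the support graph: such a cycle is even (the support being bipartite), and shifting weight by $\pm\varepsilon$ alternately around it preserves all margins, keeps weights in $[0,1]$, and strictly shrinks the support; once the support is a forest, a leaf is forced by its margin to carry weight $1$ on its unique edge, which isolates that edge --- match along it, delete the two endpoints, and recurse. (Alternatively, invoke Birkhoff--von Neumann, or total unimodularity of the bipartite incidence matrix.)

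The proof carries almost no new technical weight: the substance sits entirely in \cref{nash-exists}, \cref{highest-stopping}, and the formula $h_G=\Max_{S\in\types}E_G(S)$. The one genuine insight is choosing the right game and noticing that \cref{symmetric-equalize} reduces $E_G(S)$ to the average $\bigl|\{i\in[n]:\type^i\subseteq S\}\bigr|/|S|$, so that Hall's condition becomes literally ``$h_G\le1$''. The hard part, such as it is, will be the fractional-to-integral rounding; the cycle-cancellation argument sketched above is the natural way to supply it while staying clear of linear programming, in keeping with the paper's philosophy.
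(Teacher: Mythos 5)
Your proof is correct, but note that the paper itself never proves \cref{hall-int}: it states the integral theorem with a citation to Hall and explicitly proves only the ``slightly weaker'' fractional version, \cref{hall-frac}. Your argument for the fractional core is essentially the paper's: the same game $G$ with $f_j=\id$ and $\mu^{\type}=\bigl|\{i:\type^i=\type\}\bigr|$, and the same reduction of Hall's condition to $h_G\le 1$ via \cref{symmetric-equalize} and the identity $h_G=\Max_{S\in\types}E_G(S)$ --- only run forwards rather than in the contrapositive (the paper assumes some stopping height exceeds $1$ and exhibits a violating set $I$, whereas you verify $E_G(S)=\bigl|\{i:\type^i\subseteq S\}\bigr|/|S|\le 1$ for every $S$ directly from Hall's condition; your handling of the degenerate case $\type^i=\emptyset$ via singleton sets $I$ is a detail the paper leaves implicit). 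Where you genuinely go beyond the paper is the rounding step: converting the load-$1$ equilibrium into a doubly stochastic matrix supported on allowed pairs and then extracting an integral matching by cycle cancellation in the bipartite support graph (or Birkhoff--von Neumann). That step is classical and correct, and your cycle-cancellation variant keeps with the paper's constructive, linear-programming-free ethos, but it is external to the hydraulic machinery; the paper deliberately stops at the fractional statement because its nonatomic framework has no notion of integrality. So your proposal establishes strictly more than the paper does for this statement, at the cost of importing a standard combinatorial rounding argument.
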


We now use the machinery of this paper to prove a slightly weaker form of \cref{hall-int}. A \emph{perfect fractional marriage} is a function $s:[n]\rightarrow\Rge^{[n]}$ s.t.\ $s(i)\in\Delta^{\type^i}$ for every $i\in[n]$ and s.t.\ $\sum_{i=1}^n s_j(i)=1$ for every $j\in[n]$.

\begin{theorem}[Fractional Version of Hall's Marriage Theorem]\label{hall-frac}
A perfect fractional marriage exists iff $|I|\le|\type^I|$ for every $I\subseteq[n]$.
\end{theorem}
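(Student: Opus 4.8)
The plan is to reduce the fractional marriage problem to a plain resource selection game and then read the marriage off a Nash equilibrium, using \cref{nash-exists,highest-stopping}. The ``only if'' direction needs no machinery: given a perfect fractional marriage $s$ and any $I\subseteq[n]$, note that $s_j(i)=0$ whenever $j\notin\type^i\subseteq\type^I$, so $|I|=\sum_{i\in I}\sum_{j\in[n]}s_j(i)=\sum_{j\in\type^I}\sum_{i\in I}s_j(i)\le\sum_{j\in\type^I}\sum_{i\in[n]}s_j(i)=|\type^I|$.

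For the ``if'' direction I would build the $n$-resource selection game $G=\game$ whose resources are the $n$ men, with $f_j\eqdef\id$ for every $j\in[n]$ (so $\height{j}=\load{j}$) and with player-type masses $\mu^{\type}\eqdef\bigl|\{i\in[n]\mid\type^i=\type\}\bigr|$ for each $\type\in\types$. Applying the hypothesis to singletons forces every $\type^i\ne\emptyset$, so each woman belongs to exactly one legitimate type and the total mass is $\sum_{\type}\mu^{\type}=n$. Since $\id$ is continuous, \cref{nash-exists} yields a Nash equilibrium $s$ in $G$. I claim that $\load{j}=1$ for every $j\in[n]$. Granting the claim, one finishes by disaggregation: for each type $\type$ with $\mu^{\type}>0$, split the vector $s(\type)\in\mu^{\type}\cdot\Delta^{\type}$ evenly among the $\mu^{\type}$ women of that type, i.e.\ assign each such woman $i$ the strategy $s(i)\eqdef s(\type^i)/\mu^{\type^i}\in\Delta^{\type^i}$; then $\sum_{i\in[n]}s_j(i)=\sum_{\type}s_j(\type)=\load{j}=1$ for every $j$, which is precisely a perfect fractional marriage.

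To prove the claim, let $h\eqdef\Max_{j\in[n]}\load{j}$ and let $P^s\eqdef\arg\Max_{j\in[n]}\height{j}=\arg\Max_{j\in[n]}\load{j}$, which is nonempty ($P^s=P_G$ by \cref{highest-stopping}, cf.\ \cref{p-nonempty}). By \crefpart{highest-stopping}{on} --- or directly from the Nash condition, since if a player of type $\type$ consumes from some $j\in P^s$ then every $j'\in\type$ satisfies $\load{j'}\ge\load{j}=h$ and hence $\load{j'}=h$, i.e.\ $\type\subseteq P^s$ --- the entire load on $P^s$ comes from women $i$ with $\type^i\subseteq P^s$, and each such woman contributes her whole mass $1$. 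Writing $I\eqdef\{i\in[n]\mid\type^i\subseteq P^s\}$, this gives $h\cdot|P^s|=\sum_{j\in P^s}\load{j}=|I|$. Since $\type^I=\bigcup_{i\in I}\type^i\subseteq P^s$, the hypothesis yields $|I|\le|\type^I|\le|P^s|$, hence $h\le1$; on the other hand $h\ge\frac1n\sum_{j\in[n]}\load{j}=1$, so $h=1$, and as the maximum load now equals the average load, every load equals $1$.

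The crux is the third paragraph: recognizing that one should apply the Hall inequality to the set $P^s$ of most-loaded resources against exactly the women ``trapped'' inside $P^s$ (those women can only feed $P^s$, yet $P^s$ contains at least as many men as such women, so $P^s$ cannot be strictly overloaded) --- which is exactly the information supplied by \crefpart{highest-stopping}{on}. The remaining ingredients (the easy direction, the disaggregation step, and the averaging argument) are routine.
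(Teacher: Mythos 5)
Your proof is correct and follows essentially the same route as the paper: both reduce to the game $G$ with $f_j=\id$ and $\mu^{\type}=\bigl|\{i\mid\type^i=\type\}\bigr|$, invoke \cref{nash-exists}, and apply the Hall inequality to the set of most-loaded resources $P^s=P_G$ against the women confined to it (the paper phrases this contrapositively via $h_G=\Max_S E_G(S)>1$ and \cref{symmetric-equalize}, whereas you argue directly from \crefpart{highest-stopping}{on} that $h\cdot|P^s|=|I|\le|\type^I|\le|P^s|$, which is the same computation). Your explicit disaggregation of the type-indexed equilibrium into per-woman strategies is a point the paper glosses over in \cref{hall-char}, so including it is a mild improvement rather than a deviation.
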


To prove \cref{hall-frac}, we analyze the underlying scenario as a resource selection game.
Let $f_j\eqdef\id$ for every $j\in[n]$, let $\mu^{\type}\eqdef\Bigl|\bigl\{i\in[n]~\big|~\type^i=\type\bigr\}\Bigr|$ for every $\type\in\types$, and define $G\eqdef\game$ --- an $n$-resource selection game. The following two \lcnamecrefs{hall-char} are obtained directly from definitions.

\begin{lemma}\label{hall-char}
A function $s:[n]\rightarrow\Rge^{[n]}$ is a perfect fractional marriage iff $s$ is a Nash equilibrium in $G$ with $\height{j}=1$ for every $j\in[n]$.
\end{lemma}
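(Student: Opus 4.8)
The plan is to unwind both sides of the biconditional and observe that each reduces to the same pair of conditions on $s$. I would first record two consequences of taking $f_j=\id$ for all $j$: first, in any consumption profile the cost of resource $j$ coincides with its load, $\height{j}=f_j(\load{j})=\load{j}$; second, since the player types of $G$ are exactly the distinct sets among $\type^1,\ldots,\type^n$ and $\mu^{\type}$ counts the women whose acceptable set is $\type$, identifying a woman-indexed assignment $s:[n]\rightarrow\Rge^{[n]}$ (one with $s(i)\in\Delta^{\type^i}$ for each $i$) with the type-aggregated consumption profile $\type\mapsto\sum_{i:\type^i=\type}s(i)$ of $G$ yields a legal consumption profile and gives $\load{j}=\sum_{i=1}^{n}s_j(i)$, the total fractional mass of women assigned to man $j$.

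Granting these two observations, the forward implication is immediate: if $s$ is a perfect fractional marriage then $s(i)\in\Delta^{\type^i}$ for every $i$, so $s$ is a legal consumption profile in $G$; and $\sum_{i=1}^{n}s_j(i)=1$ for every $j$, so $\load{j}=1$ and hence $\height{j}=1$ for every $j\in[n]$. As all resource costs are then equal, the Nash condition holds vacuously (every inequality it demands reads $1\le 1$), so $s$ is a Nash equilibrium of $G$ with $\height{j}=1$ throughout. Conversely, if $s$ is a Nash equilibrium of $G$ with $\height{j}=1$ for every $j$, then being a legal consumption profile already forces $s(i)\in\Delta^{\type^i}$ for every $i$, and from the two observations $\sum_{i=1}^{n}s_j(i)=\load{j}=\height{j}=1$ for every $j$; hence $s$ is a perfect fractional marriage. (If one prefers to start literally from a type-indexed equilibrium of $G$, the only extra step is to split each type-block of total mass $\mu^{\type}$ into $\mu^{\type}$ unit-mass vectors in $\Delta^{\type}$, which is always possible and leaves every $\load{j}$ unchanged.)

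I do not expect a genuine obstacle here --- the lemma is, as advertised, a direct translation of definitions. The only point requiring any care is the second observation, namely checking that passing between the woman-indexed description of a fractional marriage and the type-indexed consumption profiles of $G$ preserves the load on each resource; once that bookkeeping is in hand, both directions close in a line, the Nash condition being automatic precisely because forcing all loads to $1$ forces all resource costs to the common value $1$.
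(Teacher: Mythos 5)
Your proof is correct and follows essentially the same route as the paper's one-line argument: both directions are pure definitional unwinding, with the Nash condition holding automatically because all resource costs equal $1$. The only difference is that you make explicit the bookkeeping identification between the woman-indexed assignment and the type-aggregated consumption profile of $G$ (and its converse splitting), which the paper leaves implicit.
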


\begin{proof}
By definition, $s$ is a perfect fractional marriage iff $s$ is a consumption profile in $G$ with $\height{j}=1$ for every $j\in[n]$. A consumption profile $s$ in $G$ with $\height{j}=1$ for every $j\in[n]$ is, by definition, a Nash equilibrium.
\end{proof}

\begin{lemma}\label{hall-cond}
A perfect fractional marriage exists iff $\height{j}=1$ for every $j\in[n]$ and every Nash equilibrium $s$ in $G$.
\end{lemma}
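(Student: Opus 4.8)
The plan is to obtain this \lcnamecref{hall-cond} directly from three facts already at our disposal: the characterization of perfect fractional marriages as exactly the Nash equilibria of $G$ in which every resource costs $1$ (\cref{hall-char}), the existence of a Nash equilibrium in $G$ (\cref{nash-exists}), and the uniqueness of Nash equilibrium resource costs across all Nash equilibria (\cref{indifference-resource-costs}). I would argue the two directions of the equivalence separately.

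For the ``only if'' direction, I would start from a perfect fractional marriage; by \cref{hall-char} it is in particular a Nash equilibrium $s'$ of $G$ with $\heightt{j}=1$ for every $j\in[n]$. Taking now an arbitrary Nash equilibrium $s$ in $G$, \cref{indifference-resource-costs} (which imposes no hypothesis on the $f_j$) yields $\height{j}=\heightt{j}=1$ for every $j\in[n]$, which is precisely the right-hand side of the asserted equivalence.

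For the ``if'' direction, assume $\height{j}=1$ for every $j\in[n]$ and every Nash equilibrium $s$ in $G$. The only thing that needs supplying is a concrete witness: since each $f_j=\id$ is continuous, \cref{nash-exists} provides a (strong, hence in particular) Nash equilibrium $s$ in $G$; by the standing assumption this $s$ satisfies $\height{j}=1$ for all $j\in[n]$, so \cref{hall-char} identifies $s$ itself as a perfect fractional marriage.

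There is no genuine obstacle here; the proof is essentially two lines in each direction. The single point that warrants care is that the ``if'' direction would be vacuously satisfied --- yet yield nothing --- were $G$ to possess no Nash equilibrium at all, so it is exactly at that step, and only there, that continuity of the cost functions (trivially in force, as each equals the identity) and the existence guarantee of \cref{nash-exists} are indispensable; no shortcut avoids invoking the existence of some Nash equilibrium in $G$.
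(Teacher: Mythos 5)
Your proposal is correct and is exactly the argument the paper intends by its one-line proof (``Immediate from \cref{hall-char,nash-exists,indifference-resource-costs}''): \cref{hall-char} together with \cref{indifference-resource-costs} for the ``only if'' direction, and \cref{nash-exists} (applicable since each $f_j=\id$ is continuous) together with \cref{hall-char} for the ``if'' direction. Your remark that the existence guarantee is the one non-vacuous ingredient in the ``if'' direction is also exactly right.
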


\begin{proof}
Immediate from \cref{hall-char,nash-exists,indifference-resource-costs}.
\end{proof}

\cref{hall-char,hall-cond}, in conjunction with the analysis of \cref{id-independent}, give rise to the following hydraulic algorithm for finding a perfect fractional marriage (or disproving its existence): set up the hydraulic system corresponding to $G$ (as in \cref{hydraulic-intuition}), and start lowering the pistons until a Nash equilibrium is obtained. If the resulting stopping heights are all $1$, then this equilibrium is a perfect fractional marriage; otherwise, no perfect fractional marriage exists. We use this algorithm to outline what we consider to be a surprisingly intuitive proof for \cref{hall-frac}; we now focus on the ``harder'' direction of this \lcnamecref{hall-frac}, i.e., that lack of a perfect fractional marriage implies that $|I|>|\type^I|$ for some $I\subseteq[n]$; the other (trivial) direction is left to the reader. A succinct formalization of the following argument is given in \cref{proofs-hall}.

\begin{proof}[Proof sketch]
Assume that not all stopping heights are $1$. Hence, the earliest-stopping pistons, $P_G$,  stop at a height higher than $1$. Therefore,
there is more than $|P_G|$ liquid under the pistons~$P_G$ when they stop. Thus,
there exists a set $I$ (of all players corresponding to this mass of liquid) s.t.\ $\type^I=P_G$ even though~$|I|>|P_G|=|\type^I|$, as required.
\end{proof}

\subsection{Beyond Hall's Theorem}\label{beyond-hall}

We note that, in fact,
the argument presented in \cref{hall}
does not require hydraulic analysis, and can also be carried out using \cref{nash-exists,indifference-resource-costs} (and our explicit formula for $P_G$) as black boxes. While we find that phrasing it in terms of hydraulic analysis makes for far more tangible intuition, this is not the only reason why we have chosen this presentation. Indeed, as we show in this \lcnamecref{beyond-hall}, while several generalizations of Hall's (fractional) theorem can still be analyzed via a reduction to resource selection games, it may be intuitively simpler to analyze them directly via hydraulic analysis, and, moreover, even further generalizations can be hydraulically analyzed while it is not clear how to analyze them using resource selection games as defined in this paper.
(We stress that for ease of exposition, we intentionally do not describe the most general class of flow/linear-programming problems solvable using our machinery.)

For the remainder of this \lcnamecref{beyond-hall}, let $n,k\in\mathbb{N}$, for every $i\in[k]$ let $0\le\mu^i\le \mathcal{M}^i$, for every $j\in[n]$ let $0\le t_j\le T_j$, and for every $i\in[k]$ and $j\in[n]$,
let $0\le m^i_j\le M^i_j$.
A \emph{solution} to the triple $\genconstraint$ is a matrix $(q^i_j)_{j\in[n]}^{i\in[k]}$ satisfying all of the following.
\begin{itemize}
\item
$m^i_j\le q^i_j\le M^i_j$ for every $i\in[k]$ and $j\in[n]$.
\item
$\mu^i\le\sum_{j=1}^n q^i_j\le\mathcal{M}^i$ for every $i\in[k]$.
\item
$t_j\le\sum_{i=1}^k q^i_j\le T_j$ for every $j\in[n]$.
\end{itemize}

We note that Hall's theorem deals with the question of the existence of a solution for $\mu^i=\mathcal{M}^i=t_j=T_j=1$, $m^i_j=0$ and $M^i_j=\mathds{1}_{\type^i}(j)$.
(The attentive reader may note that the scenario described in this \lcnamecref{beyond-hall} is also a strict generalization of the problem of the satisfiability of distribution constraints from \cref{constrained-distribution}.)

While it is not hard to find conditions for the existence of such a solution, as well as methods for efficiently finding such a solution, by formulating an equivalent flow problem, we now analyze this problem using the hydraulic machinery of this paper.

We note that w.l.o.g.\ we may assume that $m^i_j=0$ for all $i$ and $j$ (otherwise, $m^i_j$ may be subtracted from $m^i_j,\mu^i,\mathcal{M}^i,t_j$, and $T_j$).
We first consider an ``intermediate'' case in which $\mu^i=\mathcal{M}^i$ for every $i\in[k]$, $t_j=T_j$ for every $j\in[n]$, and $M^i_j\in\{0,\mu^i\}$ (i.e., either forcing $q^i_j=0$ or not enforcing any limitation thereon) for every $i\in[k]$ and $j\in[n]$. (We assume in this case that $\sum_{i=1}^k\mu^i=\sum_{j=1}^n t_j$; otherwise, no solution can possibly exist.)

While this case may be easily solved using the same machinery as in the previous \lcnamecref{hall}, by setting $f_j(\mu)\eqdef\frac{\mu}{t_j}$ and setting $\mu^{\type}\eqdef
\sum_{i\in[k]:\type^i=\type}\mu^i$, where $\type^i\eqdef\bigl\{j\in[n]~\big|~ M^i_j>0\bigr\}$, it may also be directly analyzed using hydraulic analysis without the need to vary the shape of containers.

The main observation that we now make use of is that in the hydraulic algorithm presented in \cref{hall}, it is in fact not necessary to lower the pistons simultaneously. In fact, lowering them using any timing, as long as all of them eventually reach a height of $1$, results in a perfect fractional marriage, while failing to do so (using any timing) proves the absence of a perfect fractional marriage. Using this observation, we readily obtain a simpler hydraulic algorithm for solving the above ``intermediate case'': 
set up a hydraulic system with $n$ containers (all corresponding to the identity function, as in \cref{hydraulic-intuition}) and $k$ ``liquid colors'', where the $i$th liquid color has $\mu^i$ volume and has balloons in all containers $j\in[n]$ s.t.\ $M^i_j>0$. Start lowering the pistons in any order (say, sequentially) so that for every $j\in[n]$, piston $j$ eventually reaches height $t_j$ (or gets blocked from reaching this height). If all pistons successfully reach their respective desired stopping heights, then the liquid distribution is a solution as required; otherwise, no solution exists.\footnote{While colored liquids are often solutions, they are not the type of solutions we are interested in.}

Let us now consider arbitrary $0\le\mu^i\le\mathcal{M}^i$ and $0\le t_j\le T_j$ (but not yet arbitrary $M^i_j$).
While this scenario may still be analyzed as a resource selection game, the transformation into such a game becomes increasingly complex: (the verification of the following transformation into an $(n\!+\!1)$-resource selection game is left to the reader)
\[
f_j(\mu)\eqdef
\begin{cases}\frac{\mu}{t_j} & \mu<t_j \\
1 & t_j\le\mu\le T_j \\
\frac{\mu}{T_j} & T_j < \mu
\end{cases}
,\qquad
f_{n+1}\equiv 1
,\qquad
\mu^{\type}=\smashoperator{\sum_{\substack{i\in[k]: \\ \type^i=\type}}}\mu^i
,\qquad
\mu^{\type\cup\{n+1\}}=\smashoperator{\sum_{\substack{i\in[k]: \\ \type^i=\type}}}\mathcal{M}^i-\mu^i
.
\]
Nonetheless, using an argument similar to the one above, the hydraulic algorithm for solving such a case, while also more complex and with an additional element, is considerably easier to visualize than the one solving this resource selection game; indeed, it does not require oddly shaped containers corresponding to functions that are not strictly increasing (as in \cref{vessels-odd-shapes-min}), nor does it require the description of how precisely pistons interact with such containers. This algorithm may be described as follows:
set up a hydraulic system with $n\!+\!1$ containers (all corresponding to the identity function, as in \cref{hydraulic-intuition}) and $2k$ ``liquid colors'', where the $i$th liquid color has $\mu^i$ volume and has balloons in all containers $j\in[n]$ s.t.\ $M^i_j>0$ and where the $(k\!+\!i)$th liquid color has $\mathcal{M}^i\!-\!\mu^i$ volume and has balloons in the same containers as the $i$th liquid color, and, additionally, in container $n\!+\!1$. Imagine also that container $n\!+\!1$ has no associated piston, but is ``pressurized'' so that liquid flows into it only if a currently descending piston would otherwise get stuck (i.e., only if all space in all other relevant containers is already occupied). Start lowering the pistons in any order (say, sequentially) so that for every $j\in[n]$, piston $j$ eventually reaches height $T_j$ (or gets stuck in the process). If any piston gets stuck during this process, then no solution exists. We note that at the end of this process, the liquid distribution may not (yet) constitute a solution as required, since it is possible that some container $j$ contains less than $t_j$ liquid. Hence (say, one by one), we lower each piston $j$ from height $T_j$ to height $t_j$, or until it gets stuck in the process. At the end of this process (with each piston $j$ either at height $t_j$ or blocked from reaching this height), if each container $j\in[n]$ contains at least $t_j$ liquid (i.e., if each piston touches the liquid surface of its container), then the liquid distribution is a solution as required; otherwise, no solution exists.

Finally, generalizing to arbitrary $M^i_j$ (once again assuming w.l.o.g.\ that $m^i_j=0$), while the corresponding resource selection game would have to be generalized beyond the definition of a resource selection game as presented in either \cref{id-independent} or \cref{compress-expand}, the hydraulic algorithm may be modified by adding one additional very intuitive ``physical'' constraint: for every $i\in[k]$ and $j\in[n]$, all balloons of liquid color $i$ and of liquid color~$k\!+\!i$ in container $j$ are wrapped together in an outer balloon that may not inflate to a height greater than $M^i_j$.

We conjecture that hydraulic analysis may indeed yield many more intuitive and visually appealing proofs for various other linear-programming problems. Can it be applied even beyond linear-programming problems?

\section{Discussion}\label{discussion}

In addition to proving a gamut of novel theoretical results (in \cref{id-independent,compress-expand}), including some results that significantly strengthen age-old central theorems, and in addition to constructively reproving old results (in \cref{hall-and-beyond}), a significant feature of our machinery is that it provides an explicit expression (see \cref{definitions}, as well as \cref{highest-stopping} in \cref{derivation}) for the highest resource cost (and for the set of highest-costing resources) in equilibrium, which can be sequentially used (see \cref{compute-height} in \cref{derivation}) to compute all resource costs in equilibrium.

While the benefits of having an explicit expression for resource costs are by far not limited merely to actual computation, the question of the complexity of such actual computation is a valid one.
The complexity of calculating this expression in practice depends critically on our ability to compute the equalization $\equalize{f_k:f\in S}$ of any of the cost functions defining the resource selection game at hand; as with our ability to compute the cost functions themselves, our ability to compute their equalization strongly depends on the way they are specified. Indeed, in situations where the mere evaluation of some of the cost functions $f_j$ may be costly, it is hard to expect the calculation of their equalization to be any less costly; on the other hand, in many naturally occurring scenarios, the calculation of this equalization can be undertaken easily and efficiently, as is demonstrated in \cref{symmetric-equalize} (see, e.g., our proof of \cref{hall-frac}).
Assuming for a moment that the computation of this equalization can be carried out efficiently, then by \cref{p-remark}, the complexity of calculating the maximum cost (and when $f_j$ is strictly increasing --- also the highest-costing resources) is linear in the size of the input~$(\mu^{\type})_{\smash{\type\in\types}}$. It should be noted, though, that in some real-life scenarios, the input is sparse and can be efficiently encoded into a size considerably smaller than $\Theta(2^n)$; when no prior information is known regarding the structure of the input, this may render the complexity of this computation exponential in the encoded input size.

Nonetheless, as emphasized above, the benefits of having an explicit expression for resource costs are by far not limited merely to actual computation. Indeed, in \cref{hall-and-beyond} this explicit expression is used to phrase an extremely concise proof of Hall's theorem. For a more elaborate example, we turn to \cite{noncoop-market-alloc}, where we analyze the dynamics of a complex two-stage game: in the first stage, merchants choose store locations (some store locations are accessible to more customers than others, but in turn are associated with higher real-estate prices), and the second stage is a resource selection game, where each customer aims to purchase from a least crowded store; the payoffs to the merchants are determined according to the Nash equilibrium loads in the second-stage resource selection game (these are well defined by \cref{nash-exists,indifference-resource-costs}). When analyzing game dynamics between the merchants in this complex multistage game, we must somehow quantify the effect of 
dynamic changes in merchants' strategies in the first stage of the game (i.e., the effect of changes in the availability of a certain merchant to some customers) on the Nash equilibrium loads in the second-stage resource selection game; in other words, we must perform a comparative-statics analysis of the second-stage game.
Our explicit expression for resource costs offers a concise way to do precisely this.
(A proof of \cref{cost-lipschitz} that directly uses our explicit expression for resource costs and is \emph{free of any reasoning about incentives or deviations} is given in \cref{proofs-discussion}.)

\begin{proposition}[Comparative Statics: $h_j$ as a Function of $\mu^{\type}$]\label{cost-lipschitz}
Let $G=\game$ be  a resource selection game s.t.\ $f_1,\ldots,f_n$ are continuous. For every $j\in[n]$ and $\type\in\types$, both of the following hold, where $h_j$ is the cost of resource $j$ in all Nash equilibria of $G$.
\begin{parts}
\item\label{cost-lipschitz-nondecreasing}
$h_j$ is continuous and nondecreasing as a function of $\mu^{\type}$.
\item\label{cost-lipschitz-lipschitz}
If $f_j$ is Lipschitz, then $h_j$ is Lipschitz as a function of $\mu^{\type}$, with the same Lipschitz constant.\footnote{One may compare the elementary tools that we use to derive \cref{cost-lipschitz} with the considerably more complex tools used in \cite{Milchtaich00} to derive a statement similar in spirit to our \crefpart{cost-lipschitz}{lipschitz}; we note that \crefpart{cost-lipschitz}{nondecreasing}, around which the bulk of our proof of \cref{cost-lipschitz} revolves, has no counterpart following from the analysis of \cite{Milchtaich00}.}
\end{parts}
\end{proposition}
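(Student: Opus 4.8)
The plan is to work entirely with the explicit formula behind \cref{highest-stopping} and \cref{compute-height}, with no reasoning about incentives or deviations (as the footnote promises). Fix $\type\in\types$, regard every $\mu^{\type'}$ with $\type'\ne\type$ as constant, and write $h_j(x)$ for the resulting function of $x\eqdef\mu^{\type}\in\Rge$, and $G_x$ for the corresponding game. Since $f_1,\dots,f_n$ are continuous, \cref{nash-exists} gives a Nash equilibrium of $G_x$ for every $x$, so by \cref{indifference-resource-costs} and the correctness of \cref{compute-height} (which is \cref{highest-stopping}), $h_j(x)$ is exactly the value \cref{compute-height} returns for $j$ on input $G_x$. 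I would obtain continuity and \crefpart{cost-lipschitz}{nondecreasing} from a ``$\min$ over subgames'' formula for $h_j$, and then deduce \crefpart{cost-lipschitz}{lipschitz} from continuity together with a piecewise Lipschitz bound.

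The key identity is
\[ h_j(x)=\min\bigl\{\,h_{G_x-S}\ :\ S\subseteq[n]\setminus\{j\}\,\bigr\}, \]
where $h_{G'}$ denotes the highest equilibrium resource cost in a game $G'$. One inclusion is easy: taking $S_0\eqdef\{k\in[n]:h_k(x)>h_j(x)\}$, which is a union of complete ``layers'' of \cref{compute-height}, repeated application of \crefpart{highest-stopping}{rest} and \crefpart{associative-remove}{associative} exhibits a Nash equilibrium of $G_x-S_0$ in which every remaining resource $k$ costs $h_k(x)$, whence $h_{G_x-S_0}=\max_{k\,:\,h_k(x)\le h_j(x)}h_k(x)=h_j(x)$, so $\min_S h_{G_x-S}\le h_j(x)$. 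The reverse inequality $h_{G_x-S}\ge h_j(x)$ for every $S\subseteq[n]\setminus\{j\}$ reduces (since $h_{G_x-S}$ dominates the cost of $j$ in $G_x-S$) to the statement that deleting resources — together with the players thereby left with no legal resource — can only weakly increase the equilibrium cost of each surviving resource; this is the crux of the argument, and I would prove it by induction on $|S|$, reducing to deletion of a single resource and arguing directly from the formula $h_G=\Max_{S\in\types}E_G(S)$ of \cref{p-remark}. Granting the identity, continuity and monotonicity of $h_j$ follow: for each fixed $S$ we have $h_{G_x-S}=\Max_{S'}E_{G_x-S}(S')$ by \cref{p-remark}, and each $E_{G_x-S}(S')$ equals $\equalize{f_k:k\in S'}$ evaluated at an affine function of $x$ of slope $0$ or $1$ (the image of $\type$ under resource removal is the only player type of $G_x-S$ whose mass can contain $x$, and then with coefficient $1$); by \cref{equalize-well-defined-nondecreasing,equalize-continuous} such a function is continuous and nondecreasing wherever it is defined, and the maximum over the finitely many $S'$ is continuous even at the $x$ where some $E_{G_x-S}(S')$ passes from undefined to defined, because the singleton values $E_{G_x-S}(\{k\})$ (each $f_k$ evaluated at a nonnegative mass, hence $\ge f_k(0)$, and continuous in $x$) supply a continuous lower envelope never below the value an equalization takes at the instant it becomes defined. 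Hence each $h_{G_x-S}$ is continuous and nondecreasing in $x$, and so is $h_j$, being a minimum of finitely many of them.

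It remains to prove \crefpart{cost-lipschitz}{lipschitz}. The run of \cref{compute-height} on $G_x$ depends on $x$ only through finitely many tests — defined-ness of various $E_{G_x-S}(S')$, equalities among them, and membership of $E_{G_x-S}(S')$ in the ranges of the equalizations defining the sets $M_{G_x-S}$ — each an (in)equality among quantities that, as above, are continuous in $x$. Hence there are finitely many $0=\bar x_0<\bar x_1<\dots<\bar x_{r+1}=\infty$ so that on each closed subinterval the entire run of \cref{compute-height} is a fixed combinatorial object; there $h_j(x)=\equalize{f_k:k\in P}(\alpha(x))$ for a fixed set $P\ni j$ and a fixed affine $\alpha$ of slope $0$ or $1$, so that if $f_j$ is $L_j$-Lipschitz then, since $j\in P$, the Remark following \cref{equalize-continuous} makes $\equalize{f_k:k\in P}$ $L_j$-Lipschitz, hence $h_j$ is $L_j$-Lipschitz there; a function that is continuous (by the previous paragraph) and $L_j$-Lipschitz on each of finitely many intervals covering $\Rge$ is globally $L_j$-Lipschitz. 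The main obstacle, as flagged, is the monotonicity-under-resource-removal lemma powering the ``$\ge$'' half of the $\min$ formula; an alternative that avoids it is to prove continuity of $h_j$ directly — via a subsequence argument inside one interval $(\bar x_{i-1},\bar x_i)$ together with the uniqueness in \cref{indifference-resource-costs}, showing that the limiting layer costs produced by \cref{compute-height} as $x$ approaches a breakpoint still constitute the (unique) equilibrium costs of the limiting game (which in turn may be checked against \cref{highest-stopping,construct}).
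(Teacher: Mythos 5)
Your treatment of \crefpart{cost-lipschitz}{nondecreasing} is essentially sound but takes a genuinely different route from the paper's. You correctly identify the crux --- that removing resources (and the stranded players) can only weakly raise the equilibrium cost of each surviving resource --- which is precisely the paper's \cref{remove-rises}; the paper proves it by a double induction (outer on $n$, inner on $n-|S|$), peeling off $P_{G-S}$-layers rather than single resources, and your one-line sketch (``induction on $|S|$, single-resource deletion'') would need comparable work with the $M_G$ machinery. Granting that lemma, your identity $h_j=\min_{S\subseteq[n]\setminus\{j\}}h_{G-S}$ is correct and is a nice linearization of the paper's recursion: the paper instead gets monotonicity by induction on $n$ (splitting on $j\in P_{G'}$ versus not), and gets continuity by a completely different and much shorter argument --- monotonicity of all the $h_k$ plus conservation of total load forces $|\Delta\load{j}|\le|\Delta\mu^{\type}|$, whence $|\Delta h_j|<\varepsilon$. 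Your term-by-term continuity of $\Max_{S'}E_{G_x-S}(S')$, including the ``lower envelope at the instant an equalization becomes defined'' step, can be made rigorous (at the left endpoint of the domain of $\equalize{f_k:k\in S'}$ some $\mu_{k^*}$ must vanish, so the value there is $f_{k^*}(0)\le E(\{k^*\})$), and has the merit of not assuming strict monotonicity, which the paper's written continuity argument does assume ``for simplicity.''

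There is, however, a genuine gap in your proof of \crefpart{cost-lipschitz}{lipschitz}. You assert that the run of \cref{compute-height} on $G_x$ is a fixed combinatorial object on each of \emph{finitely many} intervals, because it depends on $x$ only through finitely many tests. But those tests are equalities and inequalities between continuous, merely nondecreasing functions of $x$ (and the membership tests for $M_{G_x-S}$ quantify over a continuum of $\mu$); the truth set of such a test is a closed set, not in general a finite union of intervals, so the set $P\ni j$ in $j$'s layer can in principle change infinitely often, and your piecewise representation $h_j(x)=\equalize{f_k:k\in P}(\alpha(x))$ with a \emph{fixed} $P$ does not patch together. Nor can you fall back on ``min of max of Lipschitz functions'': the terms $E_{G_x-S}(S')$ that actually vary with $x$ are those with $S'\supseteq\type\setminus S$, which need not contain $j$, so they inherit no Lipschitz bound from $f_j$. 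The repair is exactly the consequence of your own part~(\labelcref{cost-lipschitz-nondecreasing}) that the paper exploits: since every $h_k$ is nondecreasing in $\mu^{\type}$ and every $f_k$ is nondecreasing, every load $\load{k}$ moves weakly in the same direction, while their sum changes by exactly $\Delta\mu^{\type}$; hence $0\le\Delta\load{j}\le\Delta\mu^{\type}$ and $|\Delta h_j|=|f_j(\load{j}+\Delta\load{j})-f_j(\load{j})|\le K\,|\Delta\mu^{\type}|$ with $K$ the Lipschitz constant of $f_j$ --- no piecewise structure required.
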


Indeed, a special case of \cref{cost-lipschitz} allows us \cite[in][]{noncoop-market-alloc} to prove powerful results regarding convergence of dynamics in this complex multistage game.

We conclude with a note about hydraulics. 
As can be seen when examining the extensions of our machinery in \cref{compress-expand} and in \cref{beyond-hall}, our hydraulic analysis framework is both flexible and robust; indeed, we conjecture that the full extent of its power is yet to be discovered, both within the realm of games and beyond.
The results of this paper,
as well as the earlier results of \cite{Fisher},
show not only that physical hydraulic systems may be a fruitful source
of intuition for proofs regarding equilibria, but furthermore that they may be used to naturally
``calculate'' a variety of flavors of equilibria. It would be interesting to rigorously define a ``hydraulic'' calculation, and to study its strength and limitations.

\section*{Acknowledgments}

Yannai Gonczarowski is supported by the Adams Fellowship Program of the Israel Academy of Sciences and Humanities; his work is supported by the European Research Council under the European Community's Seventh Framework Programme (FP7/2007-2013) / ERC grant agreement no.\ [249159], by ISF grants 230/10 and 1435/14 administered by the Israeli Academy of Sciences, and by Israel-USA Bi-national Science Foundation (BSF) grant number 2014389. We thank Noga Alon, Sergiu Hart, Noam Nisan, Binyamin Oz, two anonymous referees, and multiple seminar participants, for helpful comments and suggestions.

\bibliographystyle{abbrvnat}
\bibliography{hydraulic-selection}

\appendix

\section{Proofs and Auxiliary Results}

\subsection{Proofs of Lemmas and Corollaries from Sections~\refintitle{definitions} and~\refintitle{derivation}, and Auxiliary Results}\label{proofs-derivation}

\subsubsection{Definitions for Formalizing the Observations from Section~\refintitle{hydraulic-intuition}}

We begin with an immediate consequence of \cref{nondecreasing}.

\begin{lemma}\label{equalize-aux}
Let $m\in\mathbb{N}$ and let $f_1,\ldots,f_m:\Rge\rightarrow\Rundefined$ be nondecreasing functions.
Let $\mu_1,\ldots,\mu_m\in\Rge$ s.t.\ $f_1(\mu_1)=f_2(\mu_2)=\cdots=f_m(\mu_m)\in\mathbb{R}$ and let $\mu'_1,\ldots,\mu'_m\in\Rge$ s.t.\ 
$\sum_{j=1}^{m}\mu'_j\ge\sum_{j=1}^{m}\mu_j$.
\begin{parts}
\item\label{equalize-aux-uneven}
If $f_1(\mu'_1),f_2(\mu'_2),\ldots,f_m(\mu'_m)\in\mathbb{R}$, then there exists $j\in[m]$ s.t.\ $f_j(\mu'_j)\ge f_j(\mu_j)$.
\item\label{equalize-aux-even}
If $f_1(\mu'_1)=f_2(\mu'_2)=\cdots=f_m(\mu'_m)\in\mathbb{R}$, then $f_1(\mu'_1)\ge f_1(\mu_1)$.
\end{parts}
\end{lemma}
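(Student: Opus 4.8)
The plan is to derive \crefpart{equalize-aux}{uneven} by contraposition from \cref{nondecreasing}, and then to obtain \crefpart{equalize-aux}{even} as a one-line corollary of \crefpart{equalize-aux}{uneven}.

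First, for \crefpart{equalize-aux}{uneven}, I would argue by contradiction: suppose that $f_j(\mu'_j)<f_j(\mu_j)$ for \emph{every} $j\in[m]$. Fix such a $j$. Both $f_j(\mu_j)$ and $f_j(\mu'_j)$ are real by hypothesis, so the defining property of a nondecreasing function to $\Rundefined$ (\cref{nondecreasing}) applies to this pair: if we had $\mu_j\le\mu'_j$, then $f_j(\mu_j)\le f_j(\mu'_j)$ would follow (immediately if $\mu_j=\mu'_j$, and from \cref{nondecreasing} if $\mu_j<\mu'_j$), contradicting $f_j(\mu'_j)<f_j(\mu_j)$. Hence $\mu'_j<\mu_j$ for every $j$, and summing over $j$ gives $\sum_{j=1}^m\mu'_j<\sum_{j=1}^m\mu_j$, contradicting the standing hypothesis $\sum_{j=1}^m\mu'_j\ge\sum_{j=1}^m\mu_j$. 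This establishes that some $j$ must satisfy $f_j(\mu'_j)\ge f_j(\mu_j)$.

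For \crefpart{equalize-aux}{even}, note that the hypothesis $f_1(\mu'_1)=f_2(\mu'_2)=\cdots=f_m(\mu'_m)\in\mathbb{R}$ in particular forces each $f_j(\mu'_j)$ to be real, so the hypotheses of \crefpart{equalize-aux}{uneven} are met and there is some $j\in[m]$ with $f_j(\mu'_j)\ge f_j(\mu_j)$. Combining this with the two chains of equalities $f_1(\mu'_1)=f_j(\mu'_j)$ (the assumed equalized configuration) and $f_1(\mu_1)=f_j(\mu_j)$ (the standing hypothesis of the \lcnamecref{equalize-aux}) yields $f_1(\mu'_1)=f_j(\mu'_j)\ge f_j(\mu_j)=f_1(\mu_1)$, as desired.

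I do not expect any genuine obstacle here; the only point requiring a little care is unwinding \cref{nondecreasing} correctly — it constrains only those pairs of arguments at which the function takes real values, which is exactly the situation we are in — together with handling the degenerate case $\mu_j=\mu'_j$ separately. This is precisely why the \lcnamecref{equalize-aux} is advertised as an immediate consequence of \cref{nondecreasing}.
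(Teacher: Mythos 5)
Your proof is correct and is essentially the paper's own argument: the paper proves part (\labelcref{equalize-aux-uneven}) directly (since $\sum\mu'_j\ge\sum\mu_j$ there is some $j$ with $\mu'_j\ge\mu_j$, whence $f_j(\mu'_j)\ge f_j(\mu_j)$ by monotonicity on $f_j^{-1}(\mathbb{R})$), while you phrase the identical pigeonhole-plus-monotonicity step contrapositively; part (\labelcref{equalize-aux-even}) is handled identically in both. No gaps.
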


\begin{proof}
For \cref{equalize-aux-uneven}, since $\sum_{j=1}^{m}\mu'_j\ge\sum_{j=1}^{m}\mu_j$, there exists $j\in[m]$ s.t.\ $\mu'_j\ge\mu_j$. As $f_j$ is nondecreasing and as $\mu_j,\mu'_j\in f_j^{-1}(\mathbb{R})$, we have $f_j(\mu'_j)\ge f_j(\mu_j)$, as required.
For \cref{equalize-aux-even}, by \cref{equalize-aux-uneven} there exists $j\in[m]$ s.t.\ $f_j(\mu'_j)\ge f_j(\mu_j)$; therefore, $f_1(\mu'_1)=f_j(\mu'_j)\ge f_j(\mu_j)=f_1(\mu_1)$.
\end{proof}

\begin{proof}[Proof of \cref{equalize-well-defined-nondecreasing}]
We start by showing that $\equalize{f_1,\ldots,f_m}(\mu)$ is well defined for every $\mu\in\Rge$.
We have to show that if there exist $\mu_1,\ldots,\mu_m\in\Rge$ s.t.\ $\sum_{j=1}^{m}\mu_m=\mu$ and $f_1(\mu_1)=f_2(\mu_2)=\cdots=f_m(\mu_m)\in\mathbb{R}$, then $f_1(\mu'_1)=f_1(\mu_1)$ for every $\mu'_1,\ldots,\mu'_m\in\Rge$ s.t.\ $\sum_{j=1}^{m}\mu'_m=\mu$ and $f_1(\mu'_1)=f_2(\mu'_2)=\cdots=f_m(\mu'_m)\in\mathbb{R}$ as well. This follows directly from \crefpart{equalize-aux}{even}, as both $\sum_{j=1}^{m}\mu_m\le\sum_{j=1}^{m}\mu'_m$ and
$\sum_{j=1}^{m}\mu'_m\le\sum_{j=1}^{m}\mu_m$.

The fact that $\equalize{f_1,\ldots,f_m}$ is nondecreasing follows directly from \crefpart{equalize-aux}{even} as well.
\end{proof}

\begin{proof}[Proof of \cref{intermediate-equalize}]
\cref{intermediate-equalize-idempotent} follows directly by definition, as when $m=1$, we always have $\mu_1=\mu$. We move on to prove \cref{intermediate-equalize-compose};
let $k\in[m]$ and $1\le j_1<j_2<\cdots<j_k<m$; define $j_0\eqdef0$ and $j_{k+1}\eqdef m$.

If $h\eqdef\equalize{f_1,\ldots,f_m}(\mu)\in\mathbb{R}$, then there exist $\mu_1,\ldots,\mu_m\in\Rge$ s.t.\ $\sum_{j=1}^{m}\mu_m=\mu$ and $f_1(\mu_1)=f_2(\mu_2)=\cdots=f_m(\mu_m)=h$. Let $i\in[k+1]$; as $f_{j_{i-1}+1}(\mu_{j_{i-1}+1})=f_{j_{i-1}+2}(\mu_{j_{i-1}+2})=\cdots=f_{j_i}(\mu_{j_i})$, we have that 
$\equalize{f_{{j_{i-1}}+1},\ldots,f_{j_i}}(\sum_{\ell=j_{i-1}+1}^{j_i}\mu_{\ell})=f_{j_{i-1}+1}(\mu_{j_{i-1}+1})=h$.
Hence, we have $h=\equalize{\equalize{f_1,\ldots,f_{j_1}},\equalize{f_{{j_1}+1},\ldots,f_{j_2}},\ldots,\equalize{f_{{j_k}+1},\ldots,f_m}}(\sum_{i=1}^{k+1}\sum_{\ell=j_{i-1}+1}^{j_i}\mu_{\ell})=
\equalize{\equalize{f_1,\ldots,f_{j_1}},\equalize{f_{{j_1}+1},\ldots,f_{j_2}},\ldots,\equalize{f_{{j_k}+1},\ldots,f_m}}(\mu)$, as required.

Conversely, if $h\eqdef\equalize{\equalize{f_1,\ldots,f_{j_1}},\equalize{f_{{j_1}+1},\ldots,f_{j_2}},\ldots,\equalize{f_{{j_k}+1},\ldots,f_m}}(\mu)\in\mathbb{R}$,
then there exist $\tilde{\mu}_1,\ldots,\tilde{\mu}_{k+1}$ s.t.\ $\sum_{i=1}^{k+1}\tilde{\mu}_i=\mu$ and $\equalize{f_{{j_{i-1}}+1},\ldots,f_{j_i}}(\tilde{\mu}_i)=h$ for every $i\in[k+1]$. Therefore, for every $i\in[k+1]$, there exist $\mu_{j_{i-1}+1},\ldots,\mu{j_i}$ s.t.\
$\sum_{\ell=j_{i-1}+1}^{j_i}\mu_{\ell}=\tilde{\mu}_i$ and $f_{j_{i-1}+1}(\mu_{j_{i-1}+1})=\cdots=f_{j_i}(\mu_{j_i})=h$. As $\sum_{j=1}^{m}\mu_m=\sum_{i=1}^{k+1}\tilde{\mu}_i=\mu$ and $h=f_1(\mu_1)=f_2(\mu_2)=\cdots=f_m(\mu_m)$, we have that $\equalize{f_1,\ldots,f_m}(\mu)=h$, as required.
\end{proof}

\begin{proof}[Proof of \cref{equalize-continuous}]
By \cref{intermediate-equalize}, when proving either \lcnamecref{equalize-continuous-continuous} it is enough to consider the case in which $m=2$. (The case $m=1$ follows from \crefpart{intermediate-equalize}{idempotent}, while the case $m>2$ follows from the case $m=2$ by iteratively applying \crefpart{intermediate-equalize}{compose}.)

We start by proving \cref{equalize-continuous-continuous}.
Let $\mu\in\Rge$ s.t.\ $h\eqdef\equalize{f_1,f_2}(\mu)\in\mathbb{R}$ and let $\varepsilon>0$; assume w.l.o.g.\ that $f_1$ is continuous.
By definition of $h$, there exists $\mu_1\in[0,\mu]$ s.t.\ $f_1(\mu_1)=f_2(\mu-\mu_1)=h$. By continuity of $f_1$, there exists $\delta>0$ s.t.\ $|f_1(\mu')-h|<\varepsilon$ for every $\mu'\in(\mu-\delta,\mu+\delta)\cap f_1^{-1}(\mathbb{R})$.
Let $\mu'\in(\mu-\delta,\mu+\delta)\cap \equalize{f_1,f_2}^{-1}(\mathbb{R})$; by definition, there exists $\mu'_1\in[0,\mu']$ s.t.\ $f_1(\mu'_1)=f_2(\mu'-\mu'_1)=h'\eqdef \equalize{f_1,f_2}(\mu')$. If $h'=h$, then we trivially have $|h'-h|=0<\varepsilon$, as required; assume, therefore, that $h'\ne h$.
We show that $\mu'_1\in(\mu_1-\delta,\mu_1+\delta)$ by considering two cases. If $h'>h$, then as $f_1,f_2$ are nondecreasing and as $f_1(\mu_1)=h<h'=f_1(\mu'_1)$ and $f_2(\mu-\mu_1)=h<h'=f_1(\mu'-\mu'_1)$,
we have $\mu_1<\mu'_1$ and $\mu-\mu_1<\mu'-\mu'_1$; combining these, we have that $\mu'_1\in(\mu_1,\mu_1+\mu'-\mu)\subseteq(\mu_1,\mu_1+\delta)\subseteq(\mu_1-\delta,\mu_1+\delta)$ in this case. If $h'<h$, then similarly, as $f_1,f_2$ are nondecreasing and as $f_1(\mu_1)=h>h'=f_1(\mu'_1)$ and $f_2(\mu-\mu_1)=h>h'=f_1(\mu'-\mu'_1)$,
we have $\mu_1>\mu'_1$ and $\mu-\mu_1>\mu'-\mu'_1$; combining these, we have that $\mu'_1\in(\mu_1+\mu'-\mu,\mu_1)\subseteq(\mu_1-\delta,\mu_1)\subseteq(\mu_1-\delta,\mu_1+\delta)$ in this case as well. By definition of $\delta$ and as $f_1(\mu')=h'\in\mathbb{R}$, we obtain $|h'-h|=|f_1(\mu')-h|<\varepsilon$, as required.

We proceed to the proof of \cref{equalize-continuous-suffix}. By \cref{equalize-continuous-continuous}, $\equalize{f_1,f_2}$ is continuous; it therefore remains to show that $\equalize{f_1,f_2}$
is defined on a suffix of $\Rge$. Recall that for every $\mu\in\mathbb{R}$, by definition $\equalize{f_1,f_2}(\mu)\in\mathbb{R}$ iff there exists $\mu_1\in[0,\mu]$ s.t.\  $f_1(\mu_1)=f_2(\mu-\mu_1)\in\mathbb{R}$.
Let $\mu\in\Rge$ s.t.\ $\equalize{f_1,f_2}(\mu)\in\mathbb{R}$; therefore, there exists $\mu_1\in[0,\mu]$ s.t.\ $f_1(\mu_1)=f_2(\mu-\mu_1)\in\mathbb{R}$.
Let $\mu'>\mu$;
note that as $\mu_1\le\mu$, we have $\mu_1+\mu'-\mu\le\mu'$.
Since $f_1(\mu_1),f_2(\mu-\mu_1)\in\mathbb{R}$ and as $\mu'>\mu$, we have, by $f_1$ and $f_2$ being defined on a suffix of $\Rge$, that $f_1(\mu_1+\mu'-\mu),f_2(\mu'-\mu_1)\in\mathbb{R}$ as well. Furthermore, as $f_1$
and $f_2$ are nondecreasing, we have $f_1(\mu_1)=f_2(\mu-\mu_1)\le f_2(\mu'-\mu_1)$ and $f_1(\mu_1+\mu'-\mu)\ge f_1(\mu_1)=f_2(\mu-\mu_1)=f_2\bigl(\mu'-(\mu_1+\mu'-\mu)\bigr)$.
By continuity of $f_1$ and $f_2$ and as $[\mu_1,\mu_1+\mu'-\mu]\subseteq f_1^{-1}(\mathbb{R})$ and $[\mu'-(\mu_1+\mu'-\mu),\mu'-\mu_1]=[\mu-\mu_1,\mu'-\mu_1]\subseteq f_2^{-1}(\mathbb{R})$,
we have by the intermediate value theorem that there exists $\mu'_1\in[\mu_1,\mu_1+\mu'-\mu]\subseteq[0,\mu']$ s.t.\ $f_1(\mu'_1)=f_2(\mu'-\mu'_1)\in\mathbb{R}$, as required.
\end{proof}

\begin{proof}[Proof of \cref{equalize-same-bottom}]
By \crefpart{equalize-continuous}{suffix}, we have that $\equalize{f_1,\ldots,f_m}$ is a real function iff 
$\equalize{f_1,\ldots,f_m}(0)\in\mathbb{R}$, which by definition holds iff
$f_1(0)=f_2(0)=\cdots=f_m(0)$.
\end{proof}

\sloppy
\begin{proof}[Proof of \cref{p-nonempty}]
By definition, $S\notin M_G(S)$ for every $S\in\types$ (by taking $\mu\eqdef\sum_{\type\in\nonemptysubs{S}}\mu^{\type}$).
Therefore, $M_G\bigl(\{1\}\bigr)=\emptyset$; furthermore, by \crefpart{intermediate-equalize}{idempotent}, $E_G\bigl(\{1\}\bigr)=\equalize{f_1}(\mu^{\{1\}})=f_1(\mu^{\{1\}})\in\mathbb{R}$.
Therefore, $\{1\}\in D_G$. In particular, we have that $D_G\ne\emptyset$, and so,
by finiteness of $D_G$, we have that $P_G\ne\emptyset$ and that $h_G\in\mathbb{R}$ is well defined.
\end{proof}
\fussy

\begin{proof}[Proof of \cref{associative-remove}]
Both \lcnamecrefs{associative-remove-empty} of the \lcnamecref{associative-remove} follow straight from definition.
\end{proof}

\subsubsection{Uniqueness and Strength}

\begin{proof}[Proof of \cref{highest-stopping}]
We start by proving \cref{highest-stopping-on}.
Let $\type\in\types$ s.t.\ there exists $j\in P^s$ s.t.\ $s_j(\type)>0$; it is enough to show that $\type\in\nonemptysubs{P^s}$. By definition of $s$, $\height{j}\le\height{k}$ for every $k\in\type$, and as $\height{j}=\Max_{i\in[n]}\height{i}\ge\height{k}$, we have $\height{k}=\height{j}$ and so $k\in P^s$ for every $k\in\type$. Therefore, $\type\in\nonemptysubs{P^s}$ as required.

We proceed to the proof of \cref{highest-stopping-rest}. We first show that $s'$ is a consumption profile in the game $G'\eqdef G-P^s$.
Let $\type'\in\nonemptysubs{[n]\setminus P^s}$.
By definition of $s'$, we have that 
$s'_j(\type')=\sum_{\type\in \mathcal{R}(\type',G')}s_j(\type)\ge 0$ for every $j\in[n]\setminus P^s$;
furthermore, for every $j\in\bigl([n]\setminus P^s\bigr)\setminus\type'$, we have by definition that $j\notin\type$ for every $\type\in \mathcal{R}(\type',G')$, and so $s'_j(\type')=\sum_{\type\in \mathcal{R}(\type',G')}s_j(\type)= 0$.
Finally, we have that
$\sum_{j\in[n]\setminus P^s}s'_j(\type')=
\sum_{j\in[n]\setminus P^s}\sum_{\type\in \mathcal{R}(\type',G')}s_j(\type)=
\sum_{\type\in \mathcal{R}(\type',G')}\sum_{j\in[n]\setminus P^s}s_j(\type)=
\sum_{\type\in \mathcal{R}(\type',G')}\sum_{j\in[n]}s_j(\type)=
\sum_{\type\in \mathcal{R}(\type',G')}\mu^{\type}$, where the penultimate equality is by \cref{highest-stopping-on}.

We move on to show that $\height{j}=\heightt{j}$ for every $j\in[n]\setminus P^s$.
By definition of $s'$, we have for every $j\in[n]\setminus P^s$ that $\loadt{j}=
\sum_{\type'\in\nonemptysubs{[n]\setminus P^s}}s'_j(\type')=
\sum_{\type'\in\nonemptysubs{[n]\setminus P^s}}\sum_{\type\in \mathcal{R}(\type',G')}s_j(\type)=
\sum_{\type\in\types\setminus\nonemptysubs{P^s}}s_j(\type)=
\sum_{\type\in\types}s_j(\type)=
\load{j}$ (where the penultimate equality is since $j\notin\type$ for every $\type\in\nonemptysubs{P^s}$), and hence $\heightt{j}=f_j(\loadt{j})=f_j(\load{j})=\height{j}$, as required.

We conclude by showing that $s'$ is indeed a Nash equilibrium in $G'$.
Let $\type'\in\nonemptysubs{[n]\setminus P^s}$, and let $k\in\supp\bigl(s'(\type')\bigr)$ and $j\in\type'$.
As $0<s'_k(\type')=\sum_{\type\in \mathcal{R}(\type',G')}s_k(\type)$, we have that there exists $\type\in \mathcal{R}(\type',G')$ s.t.\ $k\in\supp\bigl(s(\type)\bigr)$.
As $j\in\type'\subseteq\type$, since $s'$ is a Nash equilibrium in $G$, we have that $\height{k}\le\height{j}$; therefore, $\heightt{k}=\height{k}\le\height{j}=\heightt{j}$ and so $s'$ is a Nash equilibrium in $G'$, as required.

Before proceeding to prove \cref{highest-stopping-who,highest-stopping-where}, we prove a few auxiliary results.
We first show that
\begin{equation}\label{highest-stopping-where-ps}
\forall j\in P^s: \height{j}=E_G(P^s)=\equalize{f_k:k\in P^s}\Bigl(\smashoperator{\sum_{\type\in\nonemptysubs{P^s}}}\mu^{\type}\Bigr).
\end{equation}
By definition of $P^s$, $f_j(\load{j})=\height{j}=\height{k}=f_k(\load{k})$ for every $j,k\in P^s$. Therefore,
$\height{j}=\equalize{f_k:k\in P^s}\bigl(\sum_{k\in P^s}\load{k}\bigr)$ for every $j\in P^s$. It is therefore enough to show that
$\sum_{k\in P^s}\load{k}=\sum_{\type\in\nonemptysubs{P^s}}\mu^{\type}$.
Indeed, we have $\sum_{k\in P^s}\load{k}=
\sum_{k\in P^s}\sum_{\type\in\types}s_k(\type)=
\sum_{\type\in\types}\sum_{k\in P^s}s_k(\type)=
\sum_{\type\in\nonemptysubs{P^s}}\sum_{k\in P^s}s_k(\type)=
\sum_{\type\in\nonemptysubs{P^s}}\mu^{\type}$, where the penultimate equality is by \cref{highest-stopping-on}, and the last equality is because $s(\type)\in\mu^{\type}\cdot\Delta^{\type}\subseteq\mu^{\type}\cdot\Delta^{P^s}$ for every $\type\in\nonemptysubs{P^s}$.

Next, we show that for every $S\in\types$ s.t.\ $E_G(S)=\equalize{f_k:k\in S}\bigl(\sum_{\type\in\nonemptysubs{S}}\mu^{\type}\bigr)\in\mathbb{R}$,
there exists $k\in S$ s.t.\ $f_k(\load{k})\ge E_G(S)$. Indeed,
since $s(\type)\in\mu^{\type}\cdot\Delta^{\type}\subseteq\mu^{\type}\cdot\Delta^{S}$ for every $\type\in\nonemptysubs{S}$, we have that
$\sum_{\type\in\nonemptysubs{S}}\mu^{\type}=\sum_{\type\in\nonemptysubs{S}}\sum_{k\in S}s_k(\type)\le\sum_{\type\in\types}\sum_{k\in S}s_k(\type)=\sum_{k\in S}\sum_{\type\in\types}s_k(\type)=\sum_{k\in S}\load{k}$ and so, by \crefpart{equalize-aux}{uneven}, there exists $k\in S$ s.t.\ $f_k(\load{k})\ge\equalize{f_k:k\in S}\bigl(\sum_{\type\in\nonemptysubs{S}}\mu^{\type}\bigr)=E_G(S)$, as required.

We now show that $P^s\in\arg\Max_{S\in\types}E_G(S)$, where the value $\undefined$ is treated as $-\infty$ for comparisons by the $\Max$ operator.
Let $S\in\types$ s.t.\ $E_G(S)\in\mathbb{R}$.
As shown above, there exists $k\in S$ s.t.\ $f_k(\load{k})\ge E_G(S)$. Therefore, by \cref{highest-stopping-where-ps} and by definition of $P^s$ we obtain that
$E_G(P^s)=
\Max_{j\in[n]}\height{j}\ge\height{k}=f_k(\load{k})\ge E_G(S)$, and so indeed $P^s\in\arg\Max_{S\in\types}E_G(S)$.

Finally, we show that $M_G(P^s)=\emptyset$. We have to show that 
for every $S\in\nonemptysubs{P^s}$ there exists $\mu\le\sum_{\type\in\nonemptysubs{P^s}\setminus\nonemptysubs{P^s\setminus S}}\mu^{\type}$
s.t.\ $\equalize{f_k:k\in S}(\mu)=E_G(P^s)$.
Let, therefore, $S\in\nonemptysubs{P^s}$ and define $\mu\eqdef\sum_{j\in S}\load{j}$. By \cref{highest-stopping-where-ps} and by definition of $P^s$, it is enough to show that both
$\equalize{f_k:k\in S}(\mu)=\Max_{j\in[n]}\height{j}$ and $\mu\le\sum_{\type\in\nonemptysubs{P^s}\setminus\nonemptysubs{P^s\setminus S}}\mu^{\type}$.
Since $S\subseteq P^s$, we have $f_k(\load{k})=\height{k}=\Max_{\in[n]}\height{j}$ for every $k\in S$,
and so, by definition, $\equalize{f_k:k\in S}(\mu)=\equalize{f_k:k\in S}(\sum_{j\in S}\load{j})=\Max_{j\in[n]}\height{j}$. 
For every $j\in S$, we have
$\load{j}=\sum_{\type\in\types}s_j(\type)=\sum_{\type\in\nonemptysubs{P^s}}s_j(\type)=\sum_{\type\in\nonemptysubs{P^s}\setminus\nonemptysubs{P^s\setminus S}}s_j(\type)$,
where the penultimate equality is by \cref{highest-stopping-on} since $j\in S\subseteq P^s$, and the last inequality is since $j\notin\type$ for every $\type\in\nonemptysubs{P^s\setminus S}$.
Therefore,
$\sum_{j\in S}\load{j}=\sum_{j\in S}\sum_{\type\in\nonemptysubs{P^s}\setminus\nonemptysubs{P^s\setminus S}}s_j(\type)=
\sum_{\type\in\nonemptysubs{P^s}\setminus\nonemptysubs{P^s\setminus S}}\sum_{j\in S}s_j(\type)\le
\sum_{\type\in\nonemptysubs{P^s}\setminus\nonemptysubs{P^s\setminus S}}\sum_{j\in [n]}s_j(\type)=
\sum_{\type\in\nonemptysubs{P^s}\setminus\nonemptysubs{P^s\setminus S}}\mu^{\type}$, as required, and so $M_G(P^s)=\emptyset$.

We proceed to prove \cref{highest-stopping-who} by showing mutual containment between the two sides of the equality.

$\subseteq$:
It is enough to show that $P^s\in\arg\Max_{S\in D_G}E_G(S)$.
As $M_G(P^s)=\emptyset$ and as by \cref{highest-stopping-where-ps} $E_G(P^s)\in\mathbb{R}$, we have $P^s\in D_G$.
As $E_G(P^s)=\Max_{S\in\types}E_G(S)\ge\Max_{S\in D_G}E_G(S)$, we therefore have $P^s\in\arg\Max_{S\in D_G}E_G(S)$, as required.

$\supseteq$:
We must show that $S\subseteq P^s$ for every $S\in\arg\Max_{S''\in D_G}E_G(S'')$.
Define $S'\eqdef S\setminus P^s\in2^S$ and assume by way of contradiction that $S'\ne\emptyset$.
It is enough to show that $\equalize{f_k:k\in S'}(\mu)\ne E_G(S)$ for every $\mu\le\sum_{\type\in\nonemptysubs{S}\setminus\nonemptysubs{S\setminus S'}}\mu^{\type}$, since this implies $S'\in M_G(S)$ --- a contradiction, as $S\in D_G$.
Let, therefore, $\mu\le\sum_{\type\in\nonemptysubs{S}\setminus\nonemptysubs{S\setminus S'}}\mu^{\type}$;
as by \cref{p-nonempty}, $E_G(S)=h_G\in\mathbb{R}$, it is enough to show that if $\equalize{f_k:k\in S'}(\mu)\in\mathbb{R}$, then $\equalize{f_k:k\in S'}(\mu)<E_G(S)$.
Recall from the proof of the other direction (``$\subseteq$'') that $P^s\in\arg\Max_{S''\in D_G}E_G(S'')$;
therefore, by definition of $S$, by \cref{highest-stopping-where-ps} and by definition of $P^s$, we obtain that $E_G(S)=E_G(P^s)=\Max_{k\in[n]}\height{k}$. It is thus enough to show that 
$\equalize{f_k:k\in S'}(\mu)<\Max_{k\in[n]}\height{k}$.

By definition of $S'$ and $P^s$, we have that $\height{j}<\Max_{k\in[n]}\height{k}$ for every $j\in S'$ and $\height{j}=\Max_{k\in[n]}\height{k}$ for every $j\in S\setminus S'$;
ergo, $s_j(\type)=0$ for every $j\in S\setminus S'$ and $\type\in\nonemptysubs{S}\setminus\nonemptysubs{S\setminus S'}$.
Hence,
$\sum_{j\in S'}\load{j}=\sum_{j\in S'}\sum_{\type\in\types}s_j(\type)\ge
\sum_{j\in S'}\sum_{\type\in\nonemptysubs{S}\setminus\nonemptysubs{S\setminus S'}}s_j(\type)=
\sum_{\type\in\nonemptysubs{S}\setminus\nonemptysubs{S\setminus S'}}\sum_{j\in S'}s_j(\type)=
\sum_{\type\in\nonemptysubs{S}\setminus\nonemptysubs{S\setminus S'}}\sum_{j\in S}s_j(\type)=\sum_{\type\in\nonemptysubs{S}\setminus\nonemptysubs{S\setminus S'}}\mu^{\type}\ge\mu$. Therefore,
by \crefpart{equalize-aux}{uneven} there exists $j\in S'$ s.t.\ $f_j(\load{j})\ge\equalize{f_k:k\in S'}(\mu)$, and thus
$\equalize{f_k:k\in S'}(\mu)\le f_j(\load{j})=\height{j}<\Max_{k\in[n]}\height{k}$, as required.

We conclude by proving \cref{highest-stopping-where}.
Recall from the proof of the first direction ($\subseteq$) of \cref{highest-stopping-who} that $E_G(P^s)=\Max_{S\in D_G}E_G(S)$.
Therefore, by \cref{highest-stopping-where-ps}, $\height{j}=E_G(P^s)=\Max_{S\in D_G}E_G(S)=h_G$ for every $j\in P^s$, as required.
\end{proof}

\subsubsection{Constrained Distribution}\label{constrained-distribution}

Before proving \cref{construct}, we first formulate and prove a combinatorial result that we use in the proof of this \lcnamecref{construct}.

\begin{definition}[Distribution Constraint]\label{distribute-compatible}
\leavevmode
\begin{itemize}
\item
A \emph{distribution constraint} is a pair $\constraint$, where
$n\in\mathbb{N}$, $\mu^{\type}\in\Rge$ for every $\type\in\types$, and $t_j\le T_j\in\Rge$ for every $j\in[n]$.
\item
We say that a distribution constraint $C=\constraint$ is \emph{satisfiable} if
there exist $(\mu_j^{\type})_{j\in[n]}^{\type\in\types}$ s.t.\ $(\mu_j^{\type})_{j\in[n]}\in\mu^{\type}\cdot\Delta^{\type}$ for every $\type\in\types$ and $\sum_{\type\in\types}\mu_j^{\type}\in[t_j,T_j]$ for every $j\in[n]$.
\item
Given a distribution constraint $C=\constraint$, for every $S\in\types$ we define
$m_C(S)\eqdef\sum_{\type\in\nonemptysubs{S}}\mu^{\type}$, $M_C(S)\eqdef\sum_{\type\in\nonemptysubs{[n]}\setminus\nonemptysubs{[n]\setminus S}}\mu^{\type}$, $t_C(S)\eqdef\sum_{j\in S}t_j$ and $T_C(S)\eqdef\sum_{j\in S}T_j$. We say that $C$ is \emph{normal} if both $t_C(S)\le M_C(S)$ and $m_C(S)\le T_C(S)$ for every $S\in\types$.
\end{itemize}
\end{definition}

We note that it is trivial to show that every satisfiable distribution constraint is normal. In this \lcnamecref{constrained-distribution}, we \emph{constructively} show (without the use of, e.g., linear programming) that the other direction holds as well, and give a procedure for explicitly finding a solution to (i.e., a witness to the satisfiability of) any given normal distribution.

\begin{lemma}\label{distribute-aux}
Every normal distribution constraint is satisfiable.
\end{lemma}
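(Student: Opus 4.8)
The plan is to prove the \lcnamecref{distribute-aux} by induction on the number $n$ of resources, via an ``eliminate one resource'' reduction. For $n=1$ the only type is $\{1\}$, and normality applied to $S=\{1\}$ gives $t_1\le\mu^{\{1\}}\le T_1$, so $\mu_1^{\{1\}}\eqdef\mu^{\{1\}}$ satisfies $C$. For $n\ge 2$, I would fix a total load $\ell_n\in[t_n,T_n]$ to place on resource $n$ together with a \emph{routing} $\rho$ that assigns to each type $\type$ with $n\in\type$ an amount $\rho(\type)\in[0,\mu^{\type}]$ flowing into resource $n$, subject to $\rho(\{n\})=\mu^{\{n\}}$ (this mass has nowhere else to go) and $\sum_{n\in\type}\rho(\type)=\ell_n$. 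This induces a residual $(n-1)$-resource distribution constraint $C'$ on the resources $[n-1]$: keep the intervals $[t_j,T_j]$ for $j<n$, and give each $\type'\in\nonemptysubs{[n-1]}$ the merged mass $\mu^{\type'}+\mu^{\type'\cup\{n\}}-\rho(\type'\cup\{n\})$. If $C'$ is normal, the induction hypothesis produces a witness for $C'$, which together with $\rho$ and the load $\ell_n$ on resource $n$ is a witness for $C$; the trivial direction (every satisfiable constraint is normal) is left to the reader.

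So the crux is to choose $\ell_n$ and $\rho$ so that $C'$ is normal. A direct expansion of the merged masses gives, for every $S\subseteq[n-1]$, the identities $M_{C'}(S)=M_C(S)-\sum_{n\in\type,\;\type\cap S\ne\emptyset}\rho(\type)$ and $m_{C'}(S)=m_C(S\cup\{n\})-\sum_{n\in\type,\;\type\subseteq S\cup\{n\}}\rho(\type)$. Hence normality of $C'$ amounts to requiring, for every $S\subseteq[n-1]$, that $\sum_{n\in\type,\,\type\cap S\ne\emptyset}\rho(\type)\le M_C(S)-t_C(S)$ and $\sum_{n\in\type,\,\type\subseteq S\cup\{n\}}\rho(\type)\ge m_C(S\cup\{n\})-T_C(S)$. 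Taking $S=[n-1]$, together with $\rho(\{n\})=\mu^{\{n\}}$, confines $\ell_n$ to a subinterval of $[t_n,T_n]$ which is nonempty by the normality of $C$ (for $[n-1]$ and for $[n]$); fixing such an $\ell_n$, the remaining requirements become a feasibility condition on $\rho$ --- a family of upper bounds on ``sums over types meeting $S$'' and lower bounds on ``sums over types inside $S\cup\{n\}$'', with prescribed total.

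The main obstacle is exactly this last step, since producing $\rho$ is itself a distribution-type feasibility problem, so either the statement must be strengthened so that the induction also delivers a suitable $\rho$, or $\rho$ must be built by a nested greedy construction: first send $\rho(\{n\})=\mu^{\{n\}}$ (this lowers $m_{C'}(S)$ for all $S$ and lowers $M_{C'}(S)$ for none), then top up toward the required $\ell_n$ by raising $\rho(\type)$ for types $\type$ in order of increasing $|\type\setminus\{n\}|$ --- smaller $\type\setminus\{n\}$ helps more of the $m_{C'}$-lower bounds and harms fewer of the $M_{C'}$-upper bounds --- while using a minimal \emph{tight} set $S$ (one attaining equality in a normality inequality of $C$) to certify that no surviving inequality is violated; if no legal move remains while some inequality is still violated, that tight set should unwind into a violated normality inequality of $C$, a contradiction. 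Equivalently, one may phrase the whole argument through the hydraulic ``raise every piston $j$ to $T_j$, then lower it to $t_j$'' procedure in the spirit of \cref{beyond-hall}, reading the normality inequalities as the obstructions that procedure detects. The case $n\le 2$, where $\rho$ has a single free parameter and normality of $C'$ reduces to the nonemptiness of an explicit intersection of intervals --- each nonemptiness condition being exactly one normality inequality of $C$ --- is a good template for the general bookkeeping.
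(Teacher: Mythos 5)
Your skeleton matches the paper's in outline --- induction on $n$, a routing of mass from types containing $n$ into resource $n$, and a reduction to a residual $(n-1)$-resource constraint (your merged masses $\mu^{\type'}+\mu^{\type'\cup\{n\}}-\rho(\type'\cup\{n\})$ are exactly what the paper's \cref{distribute-consume,distribute-remove} produce in combination), and your identities for $M_{C'}(S)$ and $m_{C'}(S)$ are correct. But the step you yourself flag as ``the main obstacle'' --- actually producing $\rho$ --- is the entire content of the lemma, and your sketch of how to fill it does not constitute a proof. The feasibility system you derive for $\rho$ (upper bounds indexed by sets $S$ meeting $\type$, lower bounds indexed by sets containing $\type\setminus\{n\}$, box constraints, and a prescribed total) is itself a nontrivial intersection of combinatorial constraints; asserting that a greedy ``in order of increasing $|\type\setminus\{n\}|$'' works, with tight sets ``unwinding into a violated normality inequality,'' is precisely the claim that needs proof, and the cardinality ordering is not the relevant criterion --- the constraints are not nested by cardinality, and in the paper's argument the order in which types are processed is in fact immaterial.

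What the paper does instead, and what is missing from your proposal, is: (i) it does not fix $\ell_n$ in advance, but routes mass incrementally, one type at a time, maintaining normality of the \emph{full} $n$-resource constraint throughout; the admissible increment for type $\type$ is the explicit quantity $\may{\type}_C$, a minimum of normality slacks $M_C(S)-t_C(S)$ and $T_C(S)-m_C(S)$ over the sets $S$ that the move affects (\crefpart{distribute-consume}{normal}); (ii) the crux is \crefpart{distribute-consume}{may-if-needed} and \crefpart{distribute-remove}{may-if-needed}: when the greedy is stuck (all relevant $\may{\type}_C=0$), then necessarily $t_n\le\mu^{\{n\}}$ and the collapsed $(n-1)$-resource constraint is normal. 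Proving this requires \cref{bad-Mt-Tm} --- that the $M_C=t_C$-tight sets are closed under union and the $T_C=m_C$-tight sets under intersection (a super/submodularity argument) --- which yields a maximal tight set $S_1$ and a minimal tight set $S_2\ni n$, followed by a counting argument exhibiting a positive-mass type $\type$ with $\{n\}\subsetneq\type\subseteq S_2\setminus S_1$ whenever more routing is needed. None of this machinery appears in your proposal, so as written the argument has a genuine gap at its core; the $n\le 2$ ``interval intersection'' template you invoke does not scale, because for general $n$ the obstructions are exponentially many interacting set inequalities rather than a single free parameter.
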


Before proving \cref{distribute-aux}, we first develop some machinery.

\begin{lemma}\label{bad-Mt-Tm}
Let $C=\constraint$ be a normal distribution constraint.
\begin{parts}
\item\label{bad-Mt-Tm-Mt-union}
$M_C(S\cup S')=t_C(S\cup S')$, for every $S,S'\in\types$ s.t.\ $M_C(S)=t_C(S)$ and $M_C(S')=t_C(S')$.
\item\label{bad-Mt-Tm-Tm-intersect}
$m_C(S\cap S')=T_C(S\cap S')$, for every $S,S'\in\types$ s.t.\ $m_C(S)=T_C(S)$, $m_C(S')=T_C(S')$ and $S\cap S'\ne\emptyset$.
\end{parts}
\end{lemma}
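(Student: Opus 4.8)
The plan is to deduce both parts from three elementary structural facts about the set functions in \cref{distribute-compatible}, together with the normality hypothesis. Namely: $t_C$ and $T_C$ are \emph{modular} ($t_C(S\cup S')+t_C(S\cap S')=t_C(S)+t_C(S')$, likewise for $T_C$); $m_C$ is \emph{supermodular} ($m_C(S\cup S')+m_C(S\cap S')\ge m_C(S)+m_C(S')$); and $M_C$ is \emph{submodular} ($M_C(S\cup S')+M_C(S\cap S')\le M_C(S)+M_C(S')$). Throughout I extend all four functions to $\emptyset$ via empty sums, so that they all vanish there and normality holds trivially at $\emptyset$; this lets me apply the identities and the normality inequalities to $S\cap S'$ even when $S\cap S'=\emptyset$ (which can happen in the first part).

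Modularity of $t_C$ and $T_C$ is immediate, as each is a sum of per-coordinate contributions. For supermodularity of $m_C$, I would fix a type $\type$ and observe that $\mathds{1}[\type\subseteq S\cup S']+\mathds{1}[\type\subseteq S\cap S']\ge\mathds{1}[\type\subseteq S]+\mathds{1}[\type\subseteq S']$ (the left-hand indicators dominate, respectively, the maximum and the minimum of the two right-hand indicators), then multiply by $\mu^{\type}\ge0$ and sum over $\type\in\types$. Dually, for submodularity of $M_C$ I would use $\type\cap(S\cup S')=(\type\cap S)\cup(\type\cap S')$, so that $\mathds{1}[\type\cap(S\cup S')\ne\emptyset]$ equals the maximum of the two relevant indicators while $\mathds{1}[\type\cap(S\cap S')\ne\emptyset]$ is at most their minimum, and sum as before. (Equivalently, one may note $M_C(S)=m_C([n])-m_C([n]\setminus S)$ and get submodularity of $M_C$ from supermodularity of $m_C$ applied to complements.)

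With these in hand, \crefpart{bad-Mt-Tm}{Mt-union} follows from the chain
\begin{align*}
t_C(S\cup S')+t_C(S\cap S')
&\le M_C(S\cup S')+M_C(S\cap S') \\
&\le M_C(S)+M_C(S') \\
&= t_C(S)+t_C(S') \\
&= t_C(S\cup S')+t_C(S\cap S'),
\end{align*}
where the first inequality is normality (at $S\cup S'$ and at $S\cap S'$), the second is submodularity of $M_C$, the first equality is the hypothesis, and the last is modularity of $t_C$; since the two ends coincide, every step is an equality, and combining $t_C(S\cup S')+t_C(S\cap S')=M_C(S\cup S')+M_C(S\cap S')$ with the termwise inequalities $t_C(S\cup S')\le M_C(S\cup S')$ and $t_C(S\cap S')\le M_C(S\cap S')$ forces $t_C(S\cup S')=M_C(S\cup S')$. \crefpart{bad-Mt-Tm}{Tm-intersect} is proved symmetrically via
\begin{align*}
m_C(S\cup S')+m_C(S\cap S')
&\ge m_C(S)+m_C(S') \\
&= T_C(S)+T_C(S') \\
&= T_C(S\cup S')+T_C(S\cap S') \\
&\ge m_C(S\cup S')+m_C(S\cap S'),
\end{align*}
using supermodularity of $m_C$, the hypotheses, modularity of $T_C$, and normality; here the assumption $S\cap S'\ne\emptyset$ is precisely what lets me invoke normality at $S\cap S'$, and collapsing the chain together with $m_C(S\cap S')\le T_C(S\cap S')$ yields $m_C(S\cap S')=T_C(S\cap S')$.

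I do not expect a real obstacle in this \lcnamecref{bad-Mt-Tm}; the only points needing care are getting the directions of the submodularity/supermodularity inequalities right and checking that the boundary case $S\cap S'=\emptyset$ in \crefpart{bad-Mt-Tm}{Mt-union} causes no trouble, which it does not since all four set functions vanish on the empty set.
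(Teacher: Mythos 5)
Your proposal is correct and takes essentially the same route as the paper's proof: both rest on submodularity of $M_C$, supermodularity of $m_C$, and modularity of $t_C$ and $T_C$, combined with normality. The only difference is presentational --- you close the inequality chain into a cycle and collapse it to equalities, whereas the paper chains the same facts directly into the one nontrivial inequality $M_C(S\cup S')\le t_C(S\cup S')$ (resp.\ $m_C(S\cap S')\ge T_C(S\cap S')$) and cites normality for the converse; your explicit handling of the case $S\cap S'=\emptyset$ via empty sums matches the convention the paper adopts shortly afterwards.
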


\begin{proof}
For every $S,S'\in\types$ s.t.\ $M_C(S)=t_C(S)$ and $M_C(S')=t_C(S')$, we have
$M_C(S\cup S') \le
M_C(S)+M_C(S')-M_C(S\cap S') = t_C(S)+t_C(S')-M_C(S\cap S') \le t_C(S)+t_C(S')-t_C(S\cap S')=t_C(S\cup S')$, as required. (The other side of the inequality follows from normality of $C$.)

For every $S,S'\in\types$ s.t.\ $m_C(S)=T_C(S)$, $m_C(S')=T_C(S')$ and $S\cap S'\ne\emptyset$, we have
$m_C(S\cap S') \ge
m_C(S)+m_C(S')-m_C(S\cup S') = T_C(S)+T_C(S')-m_C(S\cup S') \ge T_C(S)+T_C(S')-T_C(S\cup S')=T_C(S\cap S')$, as required. (Once again, the other side of the inequality follows from normality of $C$.)
\end{proof}

\begin{lemma}[Moving Mass from $\type$ to $\{n\}$]\label{distribute-consume}
Let $C=\constraint$ be a normal distribution constraint.
For every $\type\in\types$ s.t.\ $\{n\}\subsetneq\type$, we define
\[
\may{\type}_C\eqdef
\min\Bigl\{
\smashoperator[r]{\min_{\substack{S\in\nonemptysubs{[n-1]}: \\ S\cap\type\ne\emptyset}}}\:\bigl(M_C(S)-t_C(S)\bigr),
\smashoperator[r]{\min_{\substack{S\in\nonemptysubs{[n]}: \\ \type\not\subseteq S\And n\in S}}}\:\bigl(T_C(S)-m_C(S)\bigr)
\Bigr\}.
\]
\begin{parts}
\item\label{distribute-consume-nonnegative}
$\may{\type}_C\ge0$ for every $\type\in\types$ s.t.\ $\{n\}\subsetneq\type$.
\item\label{distribute-consume-may-if-needed}
If $t_n>\mu^{\{n\}}$, then there exists $\type\in\types$ s.t.\ $\{n\}\subsetneq\type$, $\mu^{\type}>0$ and $\may{\type}_C>0$.
\end{parts}
Let $\type\in\types$ s.t.\ $\{n\}\subsetneq\type$ and let $\mu\in[0,\mu^{\type}]$. For every $\type'\in\types\setminus\bigl\{\type,\{n\}\bigr\}$,
let $\mu'^{\type'}\eqdef\mu^{\type'}$ and let $\mu'^{\type}\eqdef\mu^{\type}-\mu\ge0$ and $\mu'^{\{n\}}\eqdef\mu^{\{n\}}+\mu$. Define $C'\eqdef\bigl((\mu'^{\type})_{\type\in\types},\bigl([t_j,T_j]\bigr)_{j\in[n]}\bigr)$.
\begin{parts}
\setcounter{partsi}{2}
\item\label{distribute-consume-normal}
If $\mu\le\may{\type}_C$, then $C'$ is normal. Furthermore, in this case $\may{\type}_{C'}=\may{\type}_C-\mu$, and $\may{\type'}_{C'}\le\may{\type'}_C$ for every
$\type'\in\types$ s.t.\ $\{n\}\subsetneq\type'$.
\item\label{distribute-consume-satisfiable}
If $C'$ is satisfiable, then $C$ is satisfiable.
\end{parts}
\end{lemma}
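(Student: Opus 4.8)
The plan is to treat the four parts essentially separately, with \crefpart{distribute-consume}{may-if-needed} being the only one requiring real work. For \crefpart{distribute-consume}{nonnegative}: each of the two inner minima in the definition of $\may{\type}_C$ is taken over a nonempty collection of sets (in the first, take $S=\{j\}$ for any $j\in\type\cap[n-1]$, which exists since $\{n\}\subsetneq\type$; in the second, take $S=\{n\}$, which is legal since $\{n\}\subsetneq\type$ forces $\type\not\subseteq\{n\}$), and by normality of $C$ every quantity being minimized is nonnegative ($M_C(S)\ge t_C(S)$ and $T_C(S)\ge m_C(S)$ for all $S\in\types$); hence $\may{\type}_C\ge 0$.

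For \crefpart{distribute-consume}{normal} and \crefpart{distribute-consume}{satisfiable} I would first record how moving mass $\mu$ from $\type$ to $\{n\}$ changes the four set functions. Since $\{n\}\subseteq S\iff n\in S$, and $\type\subseteq S\Rightarrow n\in S$ (as $n\in\type$), and likewise $\{n\}\cap S\neq\emptyset\iff n\in S$ and $n\in S\Rightarrow\type\cap S\neq\emptyset$, one gets $t_{C'}=t_C$, $T_{C'}=T_C$, while $m_{C'}(S)=m_C(S)$ unless $n\in S$ and $\type\not\subseteq S$, in which case $m_{C'}(S)=m_C(S)+\mu$; and $M_{C'}(S)=M_C(S)$ unless $n\notin S$ and $\type\cap S\neq\emptyset$, in which case $M_{C'}(S)=M_C(S)-\mu$. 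Normality of $C'$ then reduces exactly to normality of $C$ together with $\mu\le\may{\type}_C$: the only nontrivial inequalities are $t_C(S)\le M_C(S)-\mu$ for $S\subseteq[n-1]$ with $S\cap\type\neq\emptyset$ (use $\mu\le\may{\type}_C\le M_C(S)-t_C(S)$) and $m_C(S)+\mu\le T_C(S)$ for $S\ni n$ with $\type\not\subseteq S$ (use $\mu\le\may{\type}_C\le T_C(S)-m_C(S)$). The same update formulas give $\may{\type}_{C'}=\may{\type}_C-\mu$ directly, and $\may{\type'}_{C'}\le\may{\type'}_C$ for every $\type'$ with $\{n\}\subsetneq\type'$, since each minimized quantity can only have decreased or stayed the same. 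For \crefpart{distribute-consume}{satisfiable}: given a witness $(\mu'^{\type''}_j)$ of $C'$, note that the mass $\mu^{\{n\}}+\mu$ of type $\{n\}$ all sits on resource $n$; form a witness for $C$ by putting mass $\mu^{\{n\}}$ of type $\{n\}$ on $n$, leaving every type $\neq\type,\{n\}$ untouched, and distributing type $\type$ as $\mu'^{\type}$ distributes it but with an extra $\mu$ added to its $n$-coordinate (legal since $n\in\type$). The load on every resource is unchanged — resource $n$ loses $\mu$ from type $\{n\}$ and gains $\mu$ from type $\type$ — so all resource bounds still hold.

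The heart of the matter is \crefpart{distribute-consume}{may-if-needed}, which I would prove by contradiction. Assume $\may{\type}_C=0$ for every ``candidate'' $\type$, i.e., every $\type$ with $\{n\}\subsetneq\type$ and $\mu^{\type}>0$; write $N$ for the set of candidates. Applying normality of $C$ to $S=\{n\}$ gives $t_n=t_C(\{n\})\le M_C(\{n\})=\mu^{\{n\}}+\sum_{\type\in N}\mu^{\type}$, so $\sum_{\type\in N}\mu^{\type}\ge t_n-\mu^{\{n\}}>0$ and $N\neq\emptyset$. Now let $U$ be the union of all ``tight lower'' sets $S\subseteq[n-1]$ (those with $M_C(S)=t_C(S)$), which by \crefpart{bad-Mt-Tm}{Mt-union} again satisfies $M_C(U)=t_C(U)$ and lies in $[n-1]$, and let $V$ be the intersection of all ``tight upper'' sets $S$ with $n\in S$ (those with $m_C(S)=T_C(S)$), taking $V=[n]$ if there are none; by \crefpart{bad-Mt-Tm}{Tm-intersect} it satisfies $m_C(V)=T_C(V)$ and $n\in V$. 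Unwinding the definition of $\may{\type}_C$ using normality shows that, for $\type$ with $\{n\}\subsetneq\type$, one has $\may{\type}_C>0$ iff $\type$ is disjoint from $U$ and contained in $V$, i.e., iff $\type\subseteq V\setminus U$; so the assumption says precisely that $V\setminus U$ contains no candidate type.

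To finish I would combine two facts. First, normality on the set $U\cup\{n\}$, together with $M_C(U)=t_C(U)$, yields $t_n\le$ (total mass of types $\type$ with $n\in\type$ and $\type\cap U=\emptyset$) $=\mu^{\{n\}}+$ (total mass of candidates disjoint from $U$), and under the assumption every such candidate additionally has $\type\not\subseteq V$. Second, the upper-tightness $m_C(V)=T_C(V)$ (counting mass confined to $V$ versus mass escaping it, and using the lower-tightness of $U$) should force the mass of those candidates to be too small to be consistent with $t_n>\mu^{\{n\}}$. Extracting this second bound correctly — carefully accounting for which types are confined to $V$, which meet $U$, and how $T_C(U\cap V)$, $T_C(V\setminus U)$ and $T_n$ interact, and exploiting that $V$ is the \emph{intersection} (hence the minimal) of the tight upper sets containing $n$ — is the step I expect to be the main obstacle; everything else above is bookkeeping.
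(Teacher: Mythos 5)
Your parts (\labelcref{distribute-consume-nonnegative}), (\labelcref{distribute-consume-normal}), and (\labelcref{distribute-consume-satisfiable}) are correct and essentially the paper's own arguments: nonnegativity straight from normality; the case analysis of how $m_C,M_C,t_C,T_C$ change under the transfer (using $\type\subseteq S\Rightarrow n\in S$ and $n\in S\Rightarrow \type\cap S\ne\emptyset$), which identifies exactly the two families of constraints that tighten by $\mu$ and hence gives normality of $C'$, the exact formula for $\may{\type}_{C'}$, and the monotonicity of $\may{\type'}_{C'}$; and the load-preserving reallocation of $\mu$ from the $\{n\}$-coordinate of type $\{n\}$ to the $n$-coordinate of type $\type$ for part (\labelcref{distribute-consume-satisfiable}).

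The gap is in part (\labelcref{distribute-consume-may-if-needed}), precisely where you flag it. Your setup coincides with the paper's ($U$ and $V$ are its $S_1$ and $S_2$, and your characterization $\may{\type}_C>0\iff\type\subseteq V\setminus U$ is correct), and your ``first fact'' does dispose of the case $V=[n]$ (which is also the case where upper-tightness of $V$ can fail, so it must be separated out). But the case $V\ne[n]$ is the substance of the lemma, and you have not produced the bound --- you only assert that upper-tightness of $V$ ``should force'' a contradiction. The paper closes this case by setting $S\eqdef S_1\cap S_2$ and running two chains. First, normality applied to $S_1\setminus S$ together with $M_C(S_1)=t_C(S_1)$ yields
\[
\smashoperator{\sum_{\type'\in\nonemptysubs{S_2}\setminus\nonemptysubs{S_2\setminus S}}}\mu^{\type'}\;\le\;t_C(S)\;\le\;T_C(S).
\]
Second, normality applied to $S_2\setminus(S\cup\{n\})$ together with $T_C(S_2)=m_C(S_2)$ gives $m_C\bigl(S_2\setminus(S\cup\{n\})\bigr)\le m_C(S_2)-T_C(S)-T_n<m_C(S_2)-T_C(S)-\mu^{\{n\}}$, where the strict inequality is the only place $T_n\ge t_n>\mu^{\{n\}}$ enters; decomposing $m_C\bigl(S_2\setminus(S\cup\{n\})\bigr)$ as $m_C(S_2)$ minus the mass of types in $S_2$ meeting $S$, minus $\mu^{\{n\}}$, minus the mass of types strictly containing $\{n\}$ and contained in $S_2\setminus S=S_2\setminus S_1$, and comparing with the first chain, forces that last mass to be strictly positive, producing the required candidate. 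Note that the decisive tight sets are $S_1\setminus S$ and $S_2\setminus(S\cup\{n\})$, not the $U\cup\{n\}$ of your first fact; until some such computation is carried out, part (\labelcref{distribute-consume-may-if-needed}) is unproven, and with it \cref{distribute-aux} and everything downstream of it.
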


\begin{remark}
The condition of \crefpart{distribute-consume}{normal} is actually also necessary; i.e., $C'$ is normal iff $\mu\le\may{\type}_C$.
\end{remark}

\begin{proof}[Proof of \cref{distribute-consume}]
\cref{distribute-consume-nonnegative} follows directly from the fact that $C$ is normal, and so $M_C(S)-t_C(S)\ge0$ and $T_C(S)-m_C(S)\ge0$ for every $S\in\types$.

To prove \cref{distribute-consume-may-if-needed}, let $S_1\eqdef\bigcup\bigl\{S\in\nonemptysubs{[n-1]}~\big|~ M_C(S)=t_C(S)\bigr\}\subseteq\nonemptysubs{[n-1]}$ and $S_2\eqdef[n]\cap\bigcap\bigl\{S\in\nonemptysubs{[n]}~\big|~n\in S\And T_C(S)=m_C(S)\bigr\}\supseteq\{n\}$ (the intersection with $[n]$ has an effect only if $[n]$ is the sole element in the intersection defining $S_2$). We first show that there exists $\type\subseteq S_2\setminus S_1$ s.t.\ $\{n\}\subsetneq\type$ and $\mu^{\type}>0$.

For ease of notation, we extend the definition of $m_C(S)$, $M_C(S)$, $t_C(S)$, and $T_C(S)$ also to the case $S=\emptyset$, via the same definition; we note that these all equal zero when $S=\emptyset$,
as they are all defined by empty sums in this case.
We note that if $S_1\ne\emptyset$, then $M_C(S_1)=t_C(S_1)$ by \crefpart{bad-Mt-Tm}{Mt-union}, and if $S_1=\emptyset$,
then $M_C(S_1)=0=t_C(S_1)$ by definition.

We first consider the case where $S_2\ne[n]$. In this case, by \crefpart{bad-Mt-Tm}{Tm-intersect}, $T_C(S_2)=m_C(S_2)$.
Let $S\eqdef S_1\cap S_2\subseteq[n-1]$. We note that $t_C(S_1)-t_C(S) = t_C(S_1\setminus S) \le M_C(S_1\setminus S) = M_C(S_1) - \sum_{\type\in\types\setminus\nonemptysubs{[n]\setminus S}:\type\cap (S_1\setminus S)=\emptyset}\mu^{\type} \le M_C(S_1)-\sum_{\type\in\nonemptysubs{S_2}\setminus\nonemptysubs{S_2\setminus S}}\mu^{\type}
= t_C(S_1) - \sum_{\type\in\nonemptysubs{S_2}\setminus\nonemptysubs{S_2\setminus S}}\mu^{\type}$; therefore,
$\sum_{\type\in\nonemptysubs{S_2}\setminus\nonemptysubs{S_2\setminus S}}\mu^{\type} \le t_C(S)\le T_C(S)$.
Hence, and as $T_n\ge t_n>\mu^{\{n\}}$, we have that $m_C(S_2) - \sum_{\type\in\nonemptysubs{S_2}\setminus\nonemptysubs{S_2\setminus S}}\mu^{\type} - \mu^{\{n\}}
- \sum_{\type\in\nonemptysubs{S_2\setminus S}:\{n\}\subsetneq\type}\mu^{\type}=
m_C\bigl(S_2\setminus (S\cup\{n\})\bigr) \le T_C\bigl(S_2\setminus (S\cup\{n\})\bigr) = T_C(S_2) - T_C(S) - T_C\bigl(\{n\}\bigr) = m_C(S_2) - T_C(S) - T_n < m_C(S_2) - \sum_{\type\in\nonemptysubs{S_2}\setminus\nonemptysubs{S_2\setminus S}}\mu^{\type} - \mu^{\{n\}}$.
Therefore, $\sum_{\type\in\nonemptysubs{S_2\setminus S}:\{n\}\subsetneq\type}\mu^{\type}>0$, and so there exists
$\type\subseteq S_2\setminus S = S_2\setminus S_1$ s.t.\ $\{n\}\subsetneq\type$
and $\mu^{\type}>0$, as required.

We now consider the case in which $S_2=[n]$. Note that $M_C\bigl(S_1\cup\{n\}\bigr)\ge t_C\bigl(S_1\cup\{n\}\bigr)=t_C(S_1)+t_n=M_C(S_1)+t_n>M_C(S_1)+\mu^{\{n\}} = M_C\bigl(S_1\cup\{n\}\bigr)-\sum_{\type\in\nonemptysubs{[n]\setminus S_1}:\{n\}\subsetneq\type}\mu^{\type}$; therefore, $\sum_{\type\in\nonemptysubs{[n]\setminus S_1}:\{n\}\subsetneq\type}\mu^{\type}>0$,
and so there exists $\type\subseteq [n]\setminus S_1 = S_2\setminus S_1$ s.t.\ $\{n\}\subsetneq\type$ and $\mu^{\type}>0$, as required.

Either way, there exists $\type\subseteq S_2\setminus S_1$ s.t.\ $\{n\}\subsetneq\type$ and $\mu^{\type}>0$.
Therefore, for every $S\in\nonemptysubs{[n-1]}$ s.t.\ $S\cap\type\ne\emptyset$, we have $S\not\subseteq S_1$ and so $M_C(S)\ne t_C(S)$ and by normality of $C$,
$M_C(S)>t_C(S)$; for every $S\in\types$ s.t.\ $n\in S$ and $\type\not\subseteq S$, we have $S_2\not\subseteq S$ and so $T_C(S)\ne m_C(S)$ and by normality of $C$,
$T_C(S)>m_C(S)$. By both of these, $\may{\type}_C>0$
and the proof of \cref{distribute-consume-may-if-needed} is complete.

We move on to \cref{distribute-consume-normal}; let $S\in\types$.
If $\type\subseteq S$ (and so also $n\in S$) or both $\type\not\subseteq S$ and $n\notin S$, then by normality of $C$,
\[
T_{C'}(S) = T_C(S) \ge m_C(S) = \smashoperator[r]{\sum_{\type\in\nonemptysubs{S}}}\mu^{\type} = \smashoperator[r]{\sum_{\type\in\nonemptysubs{S}}}\mu'^{\type} = m_{C'}(S);
\]
otherwise, $\type\not\subseteq S$ and $n\in S$, and by definition of $\mu$ and of $\may{\type}_C$,
\[
T_{C'}(S) = T_C(S) \ge m_C(S)+\may{\type}_C \ge m_C(S)+\mu = \smashoperator[r]{\sum_{\type\in\nonemptysubs{S}}}\mu'^{\type} = m_{C'}(S).
\]
If $S\cap \type=\emptyset$ (and so also $n\notin S$) or both $S\cap \type\ne\emptyset$ and $n\in S$, then by normality of $C$,
\[
t_{C'}(S) = t_C(S) \le M_C(S) = \smashoperator[r]{\sum_{\type\in\nonemptysubs{[n]}\setminus\nonemptysubs{[n]\setminus S}}}\mu^{\type} =
\smashoperator[r]{\sum_{\type\in\nonemptysubs{[n]}\setminus\nonemptysubs{[n]\setminus S}}}\mu'^{\type} = M_{C'}(S);
\]
otherwise, $S\cap \type\ne\emptyset$ and $n\notin S$, and by definition of $\mu$ and of $\may{\type}_C$,
\[
t_{C'}(S) = t_C(S) \le M_C(S) - \may{\type}_C \le M_C(S) - \mu = \smashoperator[r]{\sum_{\type\in\nonemptysubs{[n]}\setminus\nonemptysubs{[n]\setminus S}}}\mu'^{\type} = M_{C'}(S).
\]
Therefore, $C'$ is normal.

For every $S\in\nonemptysubs{[n-1]}$ s.t.\ $S\cap\type\ne\emptyset$, we have that $M_{C'}(S)=M_C(S)-\mu$;
for every $S\in\nonemptysubs{[n]}$ s.t.\ $\type\not\subseteq S$ and $n\in S$, we have that $m_{C'}(S)=m_C(S)+\mu$. Therefore,
\begin{align*}
\may{\type}_{C'}=&\:
\min\Bigl\{
\smashoperator[r]{\min_{\substack{S\in\nonemptysubs{[n-1]}: \\ S\cap\type\ne\emptyset}}}\:\bigl(M_{C'}(S)-t_{C'}(S)\bigr),
\smashoperator[r]{\min_{\substack{S\in\nonemptysubs{[n]}: \\ \type\not\subseteq S\And n\in S}}}\:\bigl(T_{C'}(S)-m_{C'}(S)\bigr)
\Bigr\}=\\
=&\:
\min\Bigl\{
\smashoperator[r]{\min_{\substack{S\in\nonemptysubs{[n-1]}: \\ S\cap\type\ne\emptyset}}}\:\bigl(M_C(S)-\mu-t_C(S)\bigr),
\smashoperator[r]{\min_{\substack{S\in\nonemptysubs{[n]}: \\ \type\not\subseteq S\And n\in S}}}\:\bigl(T_C(S)-m_C(S)-\mu\bigr)
\Bigr\}=\\
=&\:\may{\type}_C-\mu.
\end{align*}

For every $S\in\nonemptysubs{[n-1]}$, we have $M_{C'}(S)\in\bigl\{M_C(S),M_C(S)-\mu\bigr\}$ (as shown above, depending on whether or not both $S\cap\type\ne\emptyset$ and $n\notin S$);
for every $S\in\nonemptysubs{[n]}$ s.t.\ $n\in S$, we have that $m_{C'}(S)\in\bigl\{m_C(S),m_C(S)+\mu\bigr\}$ (as shown above, depending on whether or not both $\type\not\subseteq S$ and $n\in S$). Therefore,
for every $\type'\in\types$ s.t.\ $\{n\}\subsetneq\type'$,
\begin{align*}
\may{\type'}_{C'}=&\:
\min\Bigl\{
\smashoperator[r]{\min_{\substack{S\in\nonemptysubs{[n-1]}: \\ S\cap\type'\ne\emptyset}}}\:\bigl(M_{C'}(S)-t_{C'}(S)\bigr),
\smashoperator[r]{\min_{\substack{S\in\nonemptysubs{[n]}: \\ \type'\not\subseteq S\And n\in S}}}\:\bigl(T_{C'}(S)-m_{C'}(S)\bigr)
\Bigr\}\le\\
\le&\:
\min\Bigl\{
\smashoperator[r]{\min_{\substack{S\in\nonemptysubs{[n-1]}: \\ S\cap\type'\ne\emptyset}}}\:\bigl(M_C(S)-t_C(S)\bigr),
\smashoperator[r]{\min_{\substack{S\in\nonemptysubs{[n]}: \\ \type'\not\subseteq S\And n\in S}}}\:\bigl(T_C(S)-m_C(S)\bigr)
\Bigr\}=\\
=&\:\may{\type'}_C.
\end{align*}
Therefore, the proof of \cref{distribute-consume-normal} is complete.

We conclude by proving \cref{distribute-consume-satisfiable}.
As $C'$ is satisfiable, by definition there exist $(\mu_j'^{\type'})_{j\in[n]}^{\type'\in\types}$ s.t.\ $(\mu_j'^{\type'})_{j\in[n]}\in\mu'^{\type'}\cdot\Delta^{\type'}$ for every $\type'\in\types$ and $\sum_{\type'\in\types}\mu_j'^{\type'}\in[t_j,T_j]$ for every $j\in[n]$.
For every $(j,\type')\in[n]\times\types$, if $j\ne n$ or $\type'\notin\bigl\{\type,\{n\}\bigr\}$,
let $\mu_j^{\type'}\eqdef\mu_j'^{\type'}$; let $\mu_n^{\type}\eqdef\mu_n'^{\type}+\mu$ and $\mu_n^{\{n\}}\eqdef\mu_n'^{\{n\}}-\mu$. (As $\mu_n'^{\{n\}}=\mu'^{\{n\}}=\mu^{\{n\}}+\mu$, we have that $\mu_n^{\{n\}}\in\Rge$.)

For every $\type'\in\types\setminus\bigl\{\type,\{n\}\bigr\}$, by definition $(\mu_j^{\type'})_{j\in[n]}=(\mu_j'^{\type'})_{j\in[n]}\in\mu'^{\type'}\cdot\Delta^{\type'}=\mu^{\type'}\cdot\Delta^{\type'}$.
Furthermore, as $(\mu_j'^{\type})_{j\in[n]}\in\mu'^{\type}\cdot\Delta^{\type}$ and by definition of $(\mu_j^{\type})_{j\in[n]}$ and as $n\in\type$,
we have that $(\mu_j^{\type})_{j\in[n]}\in(\mu'^{\type}+\mu)\cdot\Delta^{\type}=\mu^{\type}\cdot\Delta^{\type}$.
Similarly, as $(\mu_j'^{\{n\}})_{j\in[n]}\in\mu'^{\{n\}}\cdot\Delta^{\{n\}}$ and by definition of $(\mu_j^{\{n\}})_{j\in[n]}$,
we have that $(\mu_j^{\{n\}})_{j\in[n]}\in(\mu'^{\{n\}}-\mu)\cdot\Delta^{\{n\}}=\mu^{\{n\}}\cdot\Delta^{\{n\}}$.
Finally, $\sum_{\type'\in\types}\mu_j^{\type'}=\sum_{\type'\in\types}\mu_j'^{\type'}\in[t_j,T_j]$ for every $j\in[n]$, and the proof is complete.
\end{proof}

\begin{lemma}[Distributing All Mass but $\mu^{\{n\}}$ among ${[n-1]}$]\label{distribute-remove}
Let $C=\constraint$ be a normal distribution constraint s.t.\ $\mu^{\{n\}}\ge t_n$.
We say that \emph{condition $D_C$ holds} if  $T_C(S)\ge m_C\bigl(S\cup\{n\}\bigr)-\mu^{\{n\}}$ for every $S\in\nonemptysubs{[n-1]}$.
\begin{parts}
\item\label{distribute-remove-nonnegative}
If $T^n=\mu^{\{n\}}$, then condition $D_C$ holds.
\item\label{distribute-remove-may-if-needed}
If $\may{\type}_C=0$ for every $\type\in\types$ s.t. $\{n\}\subsetneq\type$ and $\mu^{\type}>0$, then condition $D_C$ holds.
\end{parts}
For every $\type\in\nonemptysubs{[n-1]}$,
let $\mu'^{\type}\eqdef\mu^{\type}+\mu^{\type\cup\{n\}}$. Define $C'\eqdef\bigl((\mu'^{\type})_{\type\in\nonemptysubs{[n-1]}},\bigl([t_j,T_j]\bigr)_{j\in[n-1]}\bigr)$.
\begin{parts}
\setcounter{partsi}{2}
\item\label{distribute-remove-normal}
If condition $D_C$ holds, then $C'$ is normal.
\item\label{distribute-remove-satisfiable}
If $C'$ is satisfiable, the $C$ is satisfiable.
\end{parts}
\end{lemma}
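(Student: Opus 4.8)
The plan is to treat \crefpart{distribute-remove}{nonnegative}, \crefpart{distribute-remove}{normal} and \crefpart{distribute-remove}{satisfiable} as routine bookkeeping around condition $D_C$, and to concentrate the real argument on \crefpart{distribute-remove}{may-if-needed}. \crefpart{distribute-remove}{nonnegative} is immediate: if $T_n=\mu^{\{n\}}$, then for every $S\in\nonemptysubs{[n-1]}$ normality of $C$ applied to $S\cup\{n\}\in\types$ gives $m_C(S\cup\{n\})\le T_C(S\cup\{n\})=T_C(S)+T_n=T_C(S)+\mu^{\{n\}}$, which is exactly condition $D_C$ at $S$. For \crefpart{distribute-remove}{normal} I would first record the bookkeeping identities obtained by classifying each nonempty $\type\subseteq S\cup\{n\}$ according to whether it contains $n$: for every $S\in\nonemptysubs{[n-1]}$,
\[ m_{C'}(S)=\sum_{\emptyset\ne\type\subseteq S}\bigl(\mu^{\type}+\mu^{\type\cup\{n\}}\bigr)=m_C(S\cup\{n\})-\mu^{\{n\}},\qquad t_{C'}(S)=t_C(S),\qquad T_{C'}(S)=T_C(S), \]
and $M_{C'}(S)=M_C(S)$ (a type $\type\subseteq[n]$ meets $S$ iff $\type\setminus\{n\}$ does, and $\mu'^{\type\setminus\{n\}}$ collects exactly $\mu^{\type\setminus\{n\}}$ together with $\mu^{\type}$ when $n\in\type$). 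Given these, $t_{C'}(S)\le M_{C'}(S)$ for all $S\in\nonemptysubs{[n-1]}$ is inherited from normality of $C$, while $m_{C'}(S)\le T_{C'}(S)$ for all such $S$ is precisely condition $D_C$; this gives \crefpart{distribute-remove}{normal}.

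For \crefpart{distribute-remove}{satisfiable}, given a witness $(\mu_j'^{\type})_{j\in[n-1]}^{\type\in\nonemptysubs{[n-1]}}$ for $C'$ I would ``un-merge'' it: for each $\type\in\nonemptysubs{[n-1]}$ with $\mu'^{\type}>0$, split each load $\mu_j'^{\type}$ proportionally, giving $\tfrac{\mu^{\type}}{\mu'^{\type}}\mu_j'^{\type}$ of resource $j$ to the $\type$-players and $\tfrac{\mu^{\type\cup\{n\}}}{\mu'^{\type}}\mu_j'^{\type}$ to the $(\type\cup\{n\})$-players for each $j\in\type$ (with nothing on resource $n$, and taking the zero allocation when $\mu'^{\type}=0$), and placing all of $\mu^{\{n\}}$ on resource $n$. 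Each resource $j\in[n-1]$ keeps its $C'$-load and hence lies in $[t_j,T_j]$; resource $n$ carries $\mu^{\{n\}}$, which lies in $[t_n,T_n]$ by the hypothesis $\mu^{\{n\}}\ge t_n$ and the normality inequality $\mu^{\{n\}}=m_C(\{n\})\le T_C(\{n\})=T_n$; and the per-type totals and supports are correct by construction.

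The core is \crefpart{distribute-remove}{may-if-needed}, which I would prove by contraposition: assuming $D_C$ fails, produce a type $\type$ with $\{n\}\subsetneq\type$, $\mu^{\type}>0$ and $\may{\type}_C>0$. Imitating the proof of \crefpart{distribute-consume}{may-if-needed}, set
\[ S_1\eqdef\bigcup\bigl\{S\in\nonemptysubs{[n-1]}\mid M_C(S)=t_C(S)\bigr\},\qquad S_2\eqdef[n]\cap\bigcap\bigl\{S\in\nonemptysubs{[n]}\mid n\in S,\ T_C(S)=m_C(S)\bigr\}. \]
By \crefpart{bad-Mt-Tm}{Mt-union} and \crefpart{bad-Mt-Tm}{Tm-intersect}, $S_1$ is itself tight ``from above'' when nonempty and $S_2$ is itself tight ``from below'' when $S_2\ne[n]$; moreover every tight-from-above $S\in\nonemptysubs{[n-1]}$ is contained in $S_1$, and every tight-from-below $S\in\nonemptysubs{[n]}$ with $n\in S$ contains $S_2$. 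Plugging these containments into the definition of $\may{\type\cup\{n\}}_C$ yields the characterization: for every $\emptyset\ne\type\subseteq[n-1]$,
\[ \may{\type\cup\{n\}}_C>0 \iff \bigl(\type\cap S_1=\emptyset\ \text{ and }\ \type\subseteq S_2\bigr) \iff \emptyset\ne\type\subseteq W,\qquad\text{where}\quad W\eqdef(S_2\cap[n-1])\setminus S_1. \]
Hence the hypothesis of \crefpart{distribute-remove}{may-if-needed} (all $\may{}_C$ vanish) is equivalent to saying $\mu^{\type\cup\{n\}}=0$ for every $\emptyset\ne\type\subseteq W$, and it suffices to show that, when $D_C$ fails, some $\emptyset\ne\type\subseteq W$ has $\mu^{\type\cup\{n\}}>0$. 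To this end, fix a minimal (under inclusion) $S^{*}\in\nonemptysubs{[n-1]}$ violating $D_C$. From the failure of $D_C$ at $S^{*}$ together with the elementary inequality $m_C(S^{*}\cup\{n\})-\mu^{\{n\}}\le M_C(S^{*})$ one reads off $M_C(S^{*})>T_C(S^{*})$ and $\sum_{\emptyset\ne\type\subseteq S^{*}}\mu^{\type\cup\{n\}}>T_C(S^{*})-m_C(S^{*})\ge0$, so flexible mass does sit inside $S^{*}$; minimality also yields, for each $j\in S^{*}$, the local bound $T_j<\sum_{\type\subseteq S^{*}:\,j\in\type}(\mu^{\type}+\mu^{\type\cup\{n\}})$. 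Writing $S^{*}=A\sqcup(S^{*}\cap W)\sqcup B$ with $A\eqdef S^{*}\cap S_1$ and $B\eqdef(S^{*}\setminus S_1)\setminus S_2$, one splits $m_C(S^{*}\cup\{n\})-\mu^{\{n\}}$ into the contribution of types meeting $A$, the contribution of types contained in $S^{*}\setminus S_1$ meeting $B$, and the contribution of types contained in $S^{*}\cap W$; the goal is to bound the first two contributions plus $m_C(S^{*}\cap W)$ by $T_C(S^{*})$, which, since $m_C(S^{*}\cup\{n\})-\mu^{\{n\}}>T_C(S^{*})$, leaves strictly positive flexible mass on a type $\type\subseteq S^{*}\cap W\subseteq W$ — the desired contradiction.

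I expect this last estimate to be the main obstacle. The bound on the $S^{*}\cap W$ piece is just normality of $C$, but the bounds on the $A$- and $B$-pieces cannot come from normality alone: ``$M_C(A)\le T_C(A)$'' for $A\subseteq S_1$ is not a normality inequality, and must instead be squeezed out of the tightness identities $M_C(S_1)=t_C(S_1)$ and $m_C(S_2)=T_C(S_2)$, the subadditivity of $M_C$ exploited in \cref{bad-Mt-Tm}, and the minimality of $S^{*}$ applied to $S^{*}$ with a tight set removed. Everything else — \crefpart{distribute-remove}{nonnegative}, \crefpart{distribute-remove}{normal}, \crefpart{distribute-remove}{satisfiable}, and the reduction of \crefpart{distribute-remove}{may-if-needed} to the characterization above — is routine by comparison.
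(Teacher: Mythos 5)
Your handling of \crefpart{distribute-remove}{nonnegative}, \crefpart{distribute-remove}{normal} and \crefpart{distribute-remove}{satisfiable} is correct and essentially the paper's own argument: the identities $m_{C'}(S)=m_C\bigl(S\cup\{n\}\bigr)-\mu^{\{n\}}$, $t_{C'}(S)=t_C(S)$, $T_{C'}(S)=T_C(S)$, $M_{C'}(S)=M_C(S)$ reduce normality of $C'$ exactly to normality of $C$ plus condition $D_C$, and the proportional un-merging with all of $\mu^{\{n\}}$ placed on resource $n$ is the paper's witness for \crefpart{distribute-remove}{satisfiable}. Your characterization $\may{\type\cup\{n\}}_C>0\iff\emptyset\ne\type\subseteq W$ (with $W=(S_2\cap[n-1])\setminus S_1$) is also correct and is implicit in the paper's treatment. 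The genuine gap is in \crefpart{distribute-remove}{may-if-needed}: the entire content of that part is precisely the estimate you defer --- bounding the contributions of your $A$- and $B$-pieces so that the strict excess $m_C(S^{*}\cup\{n\})-\mu^{\{n\}}>T_C(S^{*})$ is forced onto types inside $W$ --- and you explicitly leave it unproved, remarking only that it ``must be squeezed out of'' tightness, subadditivity, and minimality. An outline that stops at the acknowledged crux does not establish the lemma, and it is not evident that the minimality of $S^{*}$ and the local bounds $T_j<\sum_{\type\subseteq S^{*}:j\in\type}(\mu^{\type}+\mu^{\type\cup\{n\}})$ you extract from it actually suffice to control the $A$- and $B$-contributions.

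For comparison, the paper closes this step without minimality and without your three-way decomposition, by splitting on whether $S_2=[n]$. If $S_2\ne[n]$, the failure of $D_C$ is used only to deduce $T_n>\mu^{\{n\}}$ (contrapositive of \crefpart{distribute-remove}{nonnegative} plus normality), and then the $S_2\ne[n]$ computation from the proof of \crefpart{distribute-consume}{may-if-needed} --- which relies on the tightness $m_C(S_2)=T_C(S_2)$ and on $\mu^{\{n\}}<T_n$, never on $\mu^{\{n\}}<t_n$ --- already yields $\type\subseteq S_2\setminus S_1$ with $\{n\}\subsetneq\type$ and $\mu^{\type}>0$. If $S_2=[n]$, your set $B$ is empty, and it suffices to excise $S'\eqdef S_1\cap S$ from an \emph{arbitrary} violating set $S$: the tightness $M_C(S_1)=t_C(S_1)$ gives $\sum_{\type\in\nonemptysubs{S\cup\{n\}}\setminus\nonemptysubs{(S\cup\{n\})\setminus S'}}\mu^{\type}\le T_C(S')$, normality gives $m_C(S\setminus S')\le T_C(S)-T_C(S')$, and adding these and comparing with $m_C\bigl(S\cup\{n\}\bigr)-\mu^{\{n\}}>T_C(S)$ leaves strictly positive mass on some $\type$ with $\{n\}\subsetneq\type\subseteq(S\cup\{n\})\setminus S'\subseteq S_2\setminus S_1$, which by your own characterization has $\may{\type}_C>0$. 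If you want to salvage your single-decomposition approach, you must supply the bound on the $B$-piece via $m_C(S_2)=T_C(S_2)$ in the case $S_2\ne[n]$; as written, that is exactly the step that is missing.
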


\begin{remark}
Once again, the condition of \crefpart{distribute-remove}{normal} is actually also necessary; i.e., $C'$ is normal iff condition $D_C$ holds.
\end{remark}

\begin{proof}[Proof of \cref{distribute-remove}]
\cref{distribute-remove-nonnegative} holds as for every $S\in\nonemptysubs{[n-1]}$, $T_C(S)=T_C\bigl(S\cup\{n\}\bigr)-T^n\ge m_C\bigl(S\cup\{n\}\bigr)-T^n=m_C\bigl(S\cup\{n\}\bigr)-\mu^{\{n\}}$.

To prove \cref{distribute-remove-may-if-needed}, define $S_1$ and $S_2$ as in the proof of \crefpart{distribute-consume}{may-if-needed};
as in that proof, it suffices to show that if condition $D_C$ does not hold, then there exists $\type\subseteq S_2\setminus S_1$ s.t.\ $\{n\}\subsetneq\type$ and $\mu^{\type}>0$.
As in that proof, we extend the definition of $m_C(S)$, $M_C(S)$, $t_C(S)$ and $T_C(S)$ also to the case $S=\emptyset$.
By \cref{distribute-remove-nonnegative}, $T_n>\mu^{\{n\}}$. If $S_2\ne[n]$, then the proof follows as in the proof of \crefpart{distribute-consume}{may-if-needed} (as that proof only uses the fact that $\mu^{\{n\}}<T_n$ when $S_2\ne[n]$, and does not rely on the inequality $\mu^{\{n\}}<t_n$ for this case). It therefore remains to consider the case in which $S_2=[n]$.

Recall that if $S_1\ne\emptyset$, then $M_C(S_1)=t_C(S_1)$ by \crefpart{bad-Mt-Tm}{Mt-union}, and if $S_1=\emptyset$,
then $M_C(S_1)=0=t_C(S_1)$ by definition.
As condition $D_C$ does not hold, there exists $S\in\nonemptysubs{[n-1]}$ s.t.\ $T_C(S) < m_C\bigl(S\cup\{n\}\bigr)-\mu^{\{n\}}$.

Let $S'\eqdef S_1\cap S\subseteq[n-1]$. We note that $t_C(S_1)-t_C(S') = t_C(S_1\setminus S') \le M_C(S_1\setminus S') = M_C(S_1) - \sum_{\type\in\types\setminus\nonemptysubs{[n]\setminus S'}:\type\cap (S_1\setminus S')=\emptyset}\mu^{\type} \le M_C(S_1)-\sum_{\type\in\nonemptysubs{S\cup\{n\}}\setminus\nonemptysubs{S\cup\{n\}\setminus S'}}\mu^{\type}
= t_C(S_1) - \sum_{\type\in\nonemptysubs{S\cup\{n\}}\setminus\nonemptysubs{S\cup\{n\}\setminus S'}}\mu^{\type}$; therefore,
$\sum_{\type\in\nonemptysubs{S\cup\{n\}}\setminus\nonemptysubs{S\cup\{n\}\setminus S'}}\mu^{\type} \le t_C(S') \le T_C(S')$.
Hence, by definition of $S$ we have that
$m_C\bigl(S\cup\{n\}\bigr) - \sum_{\type\in\nonemptysubs{S\cup\{n\}}\setminus\nonemptysubs{(S\cup\{n\})\setminus S'}}\mu^{\type} - \mu^{\{n\}}
- \sum_{\type\in\nonemptysubs{(S\cup\{n\})\setminus S'}:\{n\}\subsetneq\type}\mu^{\type}=
m_C(S\setminus S') \le T_C(S\setminus S') = T_C(S) - T_C(S') < m_C\bigl(S\cup\{n\}\bigr) - T_C(S') - \mu^{\{n\}} \le m_C\bigl(S\cup\{n\}\bigr) - \sum_{\type\in\nonemptysubs{S\cup\{n\}}\setminus\nonemptysubs{(S\cup\{n\})\setminus S'}}\mu^{\type} - \mu^{\{n\}}$.
Therefore, $\sum_{\type\in\nonemptysubs{(S\cup\{n\})\setminus S'}:\{n\}\subsetneq\type}\mu^{\type}>0$, and so there exists
$\type\subseteq (S\cup\{n\})\setminus S' \subseteq S_2\setminus S_1$ s.t.\ $\{n\}\subsetneq\type$
and $\mu^{\type}>0$, as required, and the proof of \cref{distribute-remove-may-if-needed} is complete.

We move on to \cref{distribute-remove-normal}. For every $S\in\nonemptysubs{[n-1]}$, as condition $D_C$ holds,
\[
T_{C'}(S) = T_C(S) \ge m_C\bigl(S\cup\{n\}\bigr)-\mu^{\{n\}} = \smashoperator[r]{\sum_{\type\in\nonemptysubs{S}}}(\mu^{\type}+\mu^{\type\cup\{n\}}) = m_{C'}(S);
\]
furthermore,
\[
t_{C'}(S) = t_C(S) \le M_C(S) = \smashoperator[r]{\sum_{\type\in\nonemptysubs{[n]}\setminus\nonemptysubs{[n]\setminus S}}}\mu^{\type} =
\smashoperator[r]{\sum_{\type\in\nonemptysubs{[n-1]}\setminus\nonemptysubs{[n-1]\setminus S}}}(\mu^{\type}+\mu^{\type\cup\{n\}}) = M_{C'}(S).
\]
Therefore, $C'$ is normal, as required.

We conclude by proving \cref{distribute-remove-satisfiable}.
As $C'$ is satisfiable, by definition there exist $(\mu_j'^{\type})_{j\in[n-1]}^{\type'\in\nonemptysubs{[n-1]}}$ s.t.\ $(\mu_j'^{\type})_{j\in[n-1]}\in\mu'^{\type}\cdot\Delta^{\type}$ for every $\type\in\types$ and $\sum_{\type\in\nonemptysubs{[n-1]}}\mu_j'^{\type}\in[t_j,T_j]$ for every $j\in[n-1]$.
For every $\type\in\nonemptysubs{[n-1]}$, if $\mu'^{\type}=0$, then we define $\mu_j^{\type}\eqdef0$ and $\mu_j^{\type\cup\{n\}}\eqdef0$ for every $j\in[n-1]$;
otherwise, we define $\mu_j^{\type}\eqdef\frac{\mu^{\type}}{\mu'^{\type}}\cdot\mu_j'^{\type}$ and $\mu_j^{\type\cup\{n\}}\eqdef\frac{\mu^{\type\cup\{n\}}}{\mu'^{\type}}\cdot\mu_j'^{\type}$ for every $j\in[n-1]$. We further define $\mu_n^{\type}\eqdef0$ for every $\type\in\types\setminus\{n\}$, 
$\mu_n^{\{n\}}\eqdef\mu^{\{n\}}$ and $\mu_j^{\{n\}}\eqdef 0$ for every $j\in[n-1]$.

For every $\type\in\nonemptysubs{[n-1]}$, if $\mu'^{\type}=0$ then by definition
$(\mu_j^{\type})_{j\in[n]}\equiv0\in0\cdot\Delta^{\type}=\mu^{\type}\cdot\Delta^{\type}$ and similarly
$(\mu_j^{\type\cup\{n\}})_{j\in[n]}\equiv0\in0\cdot\Delta^{\type}=\mu^{\type\cup\{n\}}\cdot\Delta^{\type\cup\{n\}}$; otherwise.
as $(\mu_j'^{\type})_{j\in[n-1]}\in\mu'^{\type}\cdot\Delta^{\type}$ and by definition of $(\mu_j^{\type})_{j\in[n]}$
and $(\mu_j^{\type\cup\{n\}})_{j\in[n]}$, we have that $(\mu_j^{\type})_{j\in[n]}\in\frac{\mu^{\type}}{\mu'^{\type}}\mu'^{\type}\cdot\Delta^{\type}=\mu^{\type}\cdot\Delta^{\type}$ and similarly
$(\mu_j^{\type\cup\{n\}})_{j\in[n]}\in\frac{\mu^{\type\cup\{n\}}}{\mu'^{\type}}\mu'^{\type}\cdot\Delta^{\type}=\mu^{\type\cup\{n\}}\cdot\Delta^{\type}\subseteq\mu^{\type\cup\{n\}}\cdot\Delta^{\type\cup\{n\}}$.
Furthermore, by definition $(\mu_j^{\{n\}})_{j\in[n]}\in\mu^{\{n\}}\cdot\Delta^{\{n\}}$.
Finally, $\sum_{\type\in\types}\mu_j^{\type}=\sum_{\type\in\nonemptysubs{[n-1]}}\mu_j'^{\type}\in[t_j,T_j]$ for every $j\in[n-1]$,
and $\sum_{\type\in\types}\mu_n^{\type}=\mu_n^{\{n\}}=\mu^{\{n\}}\in[t_n,T_n]$ (where $\mu^{\{n\}}=m_C\bigl(\{n\}\bigr)\le T_C\bigl(\{n\}\bigr)=T_n$ by normality of $C$)
and the proof is complete.
\end{proof}

\begin{proof}[Proof of \cref{distribute-aux}]
Let $C=\constraint$ be a normal distribution constraint. We prove the claim by induction on $n\in\mathbb{N}$.

(Outer induction) Base: For $n=1$, we have by definition that $m_C\bigl(\{1\}\bigr)=\mu^{\{1\}}=M_C\bigl(\{1\}\bigr)$, and so $t_1=t_C\bigl(\{1\}\bigr)\le M_C\bigl(\{1\}\bigr)=\mu^{\{1\}}=m_C\bigl(\{1\}\bigr)\le T_C\bigl(\{1\}\bigr)=T_1$. Therefore, setting $\mu^{\{1\}}_1\eqdef\mu^{\{1\}}$ completes the proof of the (outer) induction base.

(Outer induction) Step: Let $n>1$ and assume that the \lcnamecref{distribute-aux} holds for $n-1$.
We prove the induction step by full induction on $\Bigl|\bigl\{\type\in\types~\big|~\{n\}\subsetneq\types\And\mu^{\type}>0\And \may{\type}_C>0\bigr\}\Bigr|$.

(Inner induction) Base: If $\Bigl|\bigl\{\type\in\types~\big|~\{n\}\subsetneq\types\And\mu^{\type}>0\And \may{\type}_C>0\bigr\}\Bigr|=0$, then by
\crefpart{distribute-consume}{may-if-needed}, $t_n\le\mu^{\{n\}}$, and by \crefpart{distribute-remove}{may-if-needed},
condition $D_C$ holds. Therefore, by \crefpart{distribute-remove}{normal}, $C'$ as defined in \cref{distribute-remove} is normal, and by the (outer) induction hypothesis for $n-1$, $C'$ is satisfiable. By \crefpart{distribute-remove}{satisfiable}, $C$ is satisfiable as well.

(Inner induction) Step: Assume that $\Bigl|\bigl\{\type\in\types~\big|~\{n\}\subsetneq\type\And\mu^{\type}>0\And \may{\type}_C>0\bigr\}\Bigr|>0$ and that the claim holds whenever this set
is of smaller cardinality.
Therefore, there exists $\type\in\types$ s.t.\ $\{n\}\subsetneq\type$, $\mu^{\type}>0$ and $\may{\type}_C>0$;
let $\mu\eqdef\min\{\may{\type}_C,\mu^{\type}\}>0$, and define $C'$ w.r.t.\ $\type$ and $\mu$ as in \cref{distribute-consume}. By \crefpart{distribute-consume}{normal}, $C'$ is normal.
If $\mu=\mu^{\type}$, then $\mu'^{\type}=0$; otherwise, $\mu=\may{\type}_C$ and by \crefpart{distribute-consume}{normal}, $\may{\type}_{C'}=\may{\type}_C-\mu=0$. Either way, and by
definition of $C'$ and as by \crefpart{distribute-consume}{normal} $\may{\type'}_{C'}=0$ whenever $\may{\type'}_C=0$, we have that
$\Bigl|\bigl\{\type'\in\types~\big|~\{n\}\subsetneq\type'\And\mu'^{\type'}>0\And \may{\type'}_{C'}>0\bigr\}\Bigr|\le\Bigl|\bigl\{\type'\in\types~\big|~\{n\}\subsetneq\type'\And\mu^{\type'}>0\And \may{\type'}_C>0\bigr\}\setminus\{\type\}\Bigr|=\Bigl|\bigl\{\type'\in\types~\big|~\{n\}\subsetneq\type'\And\mu^{\type'}>0\And \may{\type'}_C>0\bigr\}\Bigr|-1$ and so, by the (inner) induction hypothesis, $C'$ is satisfiable. By \crefpart{distribute-consume}{satisfiable}, $C$ is satisfiable as well and the proof is complete.
\end{proof}

\subsubsection{Existence}

\begin{lemma}[$h_G=\Max_{S\in\types}E_G(S)$]\label{max-on-all}
Let $G=\game$ be a resource selection game.
\begin{parts}
\item\label{max-on-all-a-max-r}
$\Max_{S\in\types}E_G(S)\in\mathbb{R}$ is well defined.
\item\label{max-on-all-eq}
If $f_1,\ldots,f_n$ are continuous, then
$h_G=\Max_{S\in\types}E_G(S)$.
\end{parts}
In both \lcnamecrefs{max-on-all-a-max-r}, the value $\undefined$ is treated as $-\infty$ for comparisons by the $\Max$ operator.
\end{lemma}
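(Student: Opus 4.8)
The plan is to obtain part (a) essentially for free and to reduce part (b) to a short induction on $|S|$ powered by the $\Equalize$-monotonicity already in hand. For part (a): $\types$ is finite, and by definition each $E_G(S)$ is either a single real number or the symbol $\undefined$ (read as $-\infty$), so $\Max_{S\in\types}E_G(S)$ is the maximum of a finite collection of reals together with some $-\infty$'s; it is a genuine real number as soon as one of the values is real, which holds since $E_G(\{1\})=\equalize{f_1}(\mu^{\{1\}})=f_1(\mu^{\{1\}})\in\mathbb{R}$ by \crefpart{intermediate-equalize}{idempotent}. No continuity is used here. For part (b), one inequality is immediate: $h_G\in\mathbb{R}$ is well defined by \cref{p-nonempty}, $h_G=\Max_{S\in D_G}E_G(S)$, and $D_G\subseteq\types$, so $h_G\le\Max_{S\in\types}E_G(S)$. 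The substance is the reverse inequality, which amounts to showing $E_G(S)\le h_G$ for every $S\in\types$ with $E_G(S)\in\mathbb{R}$.

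I would prove this last claim by induction on $|S|$. For $|S|=1$ a singleton $\{j\}$ has $M_G(\{j\})=\emptyset$ (no nonempty proper subset to exclude), hence $\{j\}\in D_G$ and $E_G(\{j\})\le h_G$. For $|S|>1$ with $E_G(S)\in\mathbb{R}$: if $S\in D_G$ we are done, so assume $M_G(S)\ne\emptyset$ and fix $S'\in M_G(S)$, where $\emptyset\ne S'\subsetneq S$. Choose a witnessing distribution $(\mu_k)_{k\in S}$ with $\sum_{k\in S}\mu_k=\sum_{\type\in\nonemptysubs S}\mu^\type$ and $f_k(\mu_k)=E_G(S)$ for all $k\in S$. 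Then $\mu'\eqdef\sum_{k\in S'}\mu_k$ satisfies $\equalize{f_k:k\in S'}(\mu')=E_G(S)$, so $S'\in M_G(S)$ forces $\mu'>\sum_{\type\in\nonemptysubs S\setminus\nonemptysubs{S\setminus S'}}\mu^\type$; since $\nonemptysubs{S\setminus S'}$ and $\nonemptysubs S\setminus\nonemptysubs{S\setminus S'}$ partition $\nonemptysubs S$, this rearranges to $\sum_{k\in S\setminus S'}\mu_k<\sum_{\type\in\nonemptysubs{S\setminus S'}}\mu^\type$. Now $(\mu_k)_{k\in S\setminus S'}$ witnesses that $\equalize{f_k:k\in S\setminus S'}$ takes the value $E_G(S)\in\mathbb{R}$ at the point $\sum_{k\in S\setminus S'}\mu_k$; since each $f_j$ is a continuous real function (hence continuous and trivially defined on a suffix of $\Rge$), \crefpart{equalize-continuous}{suffix} makes $\equalize{f_k:k\in S\setminus S'}$ defined on a suffix of $\Rge$, so from $\sum_{k\in S\setminus S'}\mu_k<\sum_{\type\in\nonemptysubs{S\setminus S'}}\mu^\type$ we get $E_G(S\setminus S')=\equalize{f_k:k\in S\setminus S'}\bigl(\sum_{\type\in\nonemptysubs{S\setminus S'}}\mu^\type\bigr)\in\mathbb{R}$, and by \crefpart{equalize-aux}{even} (the same functions, equalized at a strictly larger total mass) $E_G(S\setminus S')\ge E_G(S)$. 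As $|S\setminus S'|<|S|$, the induction hypothesis gives $E_G(S\setminus S')\le h_G$, whence $E_G(S)\le E_G(S\setminus S')\le h_G$. This proves $\Max_{S\in\types}E_G(S)\le h_G$, and combined with the easy direction, $h_G=\Max_{S\in\types}E_G(S)$.

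The step I expect to demand the most care is mining the indirect definition of $M_G(S)$: one must extract both the strict mass inequality $\sum_{k\in S\setminus S'}\mu_k<\sum_{\type\in\nonemptysubs{S\setminus S'}}\mu^\type$ and, crucially, the fact that $E_G(S\setminus S')$ is not $\undefined$ — and it is exactly here that continuity of $f_1,\ldots,f_n$ enters, via the ``defined on a suffix'' conclusion of \crefpart{equalize-continuous}{suffix}, after which \cref{equalize-aux} pushes the value back up to at least $E_G(S)$. Everything else is routine bookkeeping with the definitions of $E_G$, $M_G$, $D_G$, $h_G$ and the set identities among the $\nonemptysubs{\cdot}$-indexed sums.
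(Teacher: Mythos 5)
Your proof is correct and follows essentially the same route as the paper's: both hinge on the observation that for $S'\in M_G(S)$, the witnessing distribution forces $\sum_{k\in S\setminus S'}\mu_k<\sum_{\type\in\nonemptysubs{S\setminus S'}}\mu^{\type}$, whence continuity plus \crefpart{equalize-continuous}{suffix} and monotonicity of $\Equalize$ give $E_G(S\setminus S')\ge E_G(S)$ with $E_G(S\setminus S')\in\mathbb{R}$. The only difference is organizational --- you induct on $|S|$ over all $S$ with $E_G(S)\in\mathbb{R}$, whereas the paper iteratively strips elements of $M_G(S_i)$ from a maximizer until reaching a member of $D_G$ --- and this changes nothing of substance.
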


\begin{proof}
To show \cref{max-on-all-a-max-r}, note that by \crefpart{intermediate-equalize}{idempotent}, $E_G\bigl(\{1\}\bigr)=\equalize{f_1}(\mu^{\{1\}})=f_1\bigl(\mu^{\{1\}}\bigr)\in\mathbb{R}$;
therefore, $E_G(\{1\})\in\mathbb{R}$, and so $\Max_{S\in\types}E_G(S)\in\mathbb{R}$, as required.

Define $A\eqdef\arg\Max_{S\in\types}E_G(S)$.
Before proving \cref{max-on-all-eq}, we first show that for every $S\in A$ and $M'\in M_G(S)$, if $(f_j)_{j\in S\setminus M'}$ are continuous, then also $S\setminus M' \in A$.
Let, therefore, $S\in A$ and let $M'\in M_G(S)$ s.t.\ $(f_j)_{j\in S\setminus M'}$ are continuous. By definition of $M'$, we have both that $M'\subsetneq S$ (see \cref{p-nonempty})
and that
$\equalize{f_k:k\in M'}(\mu)\ne E_G(S)$
for every $\mu\le\sum_{\type\in\nonemptysubs{S}\setminus\nonemptysubs{S\setminus M'}}\mu^{\type}$.
By \cref{max-on-all-a-max-r}, $E_G(S)\in\mathbb{R}$, and so by definition there exist $(\mu_j)_{j\in S}\in\Rge^S$ s.t.\ $\sum_{j\in S}\mu_j=\sum_{\type\in\nonemptysubs{S}}\mu^{\type}$ and $f_j(\mu_j)=E_G(S)$ for every $j\in S$. Therefore,
$\equalize{f_k:k\in S\setminus M'}\bigl(\sum_{j\in S\setminus M'}\mu_j\bigr)=E_G(S)$, and also
$\equalize{f_k:k\in M'}(\mu)=E_G(S)$,
where $\mu\eqdef\sum_{j\in M'}\mu_j$.

As $M'\in M_G(S)$, we thus have that $\mu>\sum_{\type\in\nonemptysubs{S}\setminus\nonemptysubs{S\setminus M'}}\mu^{\type}$.
Therefore, $\sum_{\type\in\nonemptysubs{S\setminus M'}}\mu^{\type}=\sum_{\type\in\nonemptysubs{S}}\mu^{\type}-\sum_{\type\in\nonemptysubs{S}\setminus\nonemptysubs{S\setminus M'}}\mu^{\type}>
\sum_{\type\in\nonemptysubs{S}}\mu^{\type}-\mu=
\sum_{j\in S}\mu_j-\mu=\sum_{j\in S}\mu_j-\sum_{j\in M'}\mu_j=\sum_{j\in S\setminus M'}\mu_j$.
Recall that $\equalize{f_k:k\in S\setminus M'}\bigl(\sum_{j\in S\setminus M'}\mu_j\bigr)=E_G(S)\in\mathbb{R}$; therefore, by continuity of $(f_k)_{k\in S\setminus M'}$ and by \crefpart{equalize-continuous}{suffix}, we obtain that also
$\equalize{f_k:k\in S\setminus M'}\bigl(\sum_{\type\in\nonemptysubs{S\setminus M'}}\mu^{\type}\bigr)\in\mathbb{R}$. Hence, by \cref{equalize-well-defined-nondecreasing} we conclude that
$E_G(S\setminus M')=\equalize{f_k:k\in S\setminus M'}\bigl(\sum_{\type\in\nonemptysubs{S\setminus M'}}\mu^{\type}\bigr)\ge\equalize{f_k:k\in S\setminus M'}\bigl(\sum_{j\in S\setminus M'}\mu_j\bigr)=E_G(S)$,
and so indeed $S\setminus M' \in A$, as required.

We conclude by proving \cref{max-on-all-eq}. By definition $h_G\le\Max_{S\in\types}E_G(S)$; we therefore have to show that $h_G\ge\Max_{S\in\types}E_G(S)$. Let $S\in A$. We sequentially define a series $(S_i)_{i=0}^{k}$, for $k\in\mathbb{N}$ to be determined, as follows:
\begin{itemize}
\item
$S_0\eqdef S$.
\item
If $M_G(S_i)=\emptyset$, then we set $k\eqdef i$ and conclude.
Otherwise, choose $M_i\in M_G(S_i)$ arbitrarily, and set $S_{i+1}\eqdef S_i\setminus M_i$.
\end{itemize}
We now show by induction that $S_i\in A$ and $|S_i|\le |S|-i$ for every $i$ for which $S_i$ is defined.
\begin{itemize}
\item
Base: By definition, $S_0=S\in A$ and $|S_0|\le|S_0|-0=|S|-0$, as required.
\item
Step: Let $i>0$ for which $S_i$ is defined. By the induction hypothesis, $S_{i-1}\in A$; therefore, as shown above and by continuity of $(f_j)_{j\in S_{i-1}\setminus M_{i-1}}$, we have that $S_i=S_{i-1}\setminus M_{i-1}\in A$.
Furthermore, as by definition $M_{i-1}\ne\emptyset$, we have by the induction hypothesis that $|S_i|=|S_{i-1}|-|M_{i-1}|\le |S_{i-1}|-1 \le |S|-(i-1)-1 = |S|-i$, as required.
\end{itemize}
We conclude that the process constructing $(S_i)_i$ indeed stops (i.e., $k$ is well defined), and with $k<|S|$.
By definition, $M_G(S_k)=\emptyset$, and as $S_k\in A$, by \cref{max-on-all-a-max-r} we have $E_G(S_k)=\Max_{S'\in\types}E_G(S')\in\mathbb{R}$; therefore, $S_k\in D_G$.
Therefore, $h_G\ge E_G(S_k)=\Max_{S'\in\types}E_G(S')$, as required.
\end{proof}

We note that it can be shown that, in the context of \crefpart{max-on-all}{eq}, for every $S\in A$ s.t.\ $M_G(S)\ne\emptyset$, in fact $\bigcup M_G(S)\in M_G(S)$ and $M_G\bigl(S\setminus \bigcup M_G(S)\bigr)=\emptyset$. While this may be used to avoid the inductive construction concluding the proof of this \lcnamecref{max-on-all}, the need to prove these facts would result in a considerably longer total length for the proof.

\begin{lemma}\label{distribute}
Let $G=\game$ be a resource selection game s.t.\ $f_1,\ldots,f_n$ are continuous.
For every $S\in\arg\Max_{S'\in D_G}E_G(S')$, there exists a strategy profile $s$ in the $|S|$-resource selection game $G'\eqdef\bigl((f_j)_{j\in S};(\mu^{\type})_{\type\in\nonemptysubs{S}}\bigr)$,
s.t.\ $\height{j}=h_G$ for every $j\in S$.
\end{lemma}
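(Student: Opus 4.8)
The plan is to reduce the statement to the satisfiability of a suitable distribution constraint and then invoke \cref{distribute-aux}. Identifying the resource set $S$ with $[|S|]$, I would first set $h\eqdef E_G(S)$; since $S\in D_G$ this is real, and since $S\in\arg\Max_{S'\in D_G}E_G(S')$ it equals $h_G$. Unwinding the definitions of $E_G$ and of $\Equalize$, there are $(\mu_j)_{j\in S}\in\Rge^S$ with $\sum_{j\in S}\mu_j=m$ and $f_j(\mu_j)=h$ for every $j\in S$, where $m\eqdef\sum_{\type\in\nonemptysubs{S}}\mu^{\type}$. For each $j\in S$, continuity and monotonicity of $f_j$ make $f_j^{-1}(h)$ a nonempty closed interval $[t_j,T_j^*]$ (with $T_j^*\in\Rge\cup\{\infty\}$) containing $\mu_j$; I would set $T_j\eqdef\min\{T_j^*,m\}$, so that $t_j\le\mu_j\le T_j\in\Rge$ and $[t_j,T_j]\subseteq f_j^{-1}(h)$. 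The target is then the distribution constraint $C\eqdef\bigl((\mu^{\type})_{\type\in\nonemptysubs{S}},([t_j,T_j])_{j\in S}\bigr)$: once $C$ is known to be satisfiable, any witness $(\mu_j^{\type})$ gives the strategy profile $s$ in $G'$ via $s(\type)\eqdef(\mu_j^{\type})_{j\in S}$, for which $\load{j}=\sum_{\type\in\nonemptysubs{S}}\mu_j^{\type}\in[t_j,T_j]\subseteq f_j^{-1}(h)$ and hence $\height{j}=f_j(\load{j})=h=h_G$ for every $j\in S$, as required.

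So the work is in showing that $C$ is normal, so that \cref{distribute-aux} applies. For the first family of inequalities, $t_C(S')\le M_C(S')$ over nonempty $S'\subseteq S$: for $S'=S$ this follows from $\sum_{j\in S}t_j\le\sum_{j\in S}\mu_j=m=M_C(S)$, and for $S'\subsetneq S$ it follows from $M_G(S)=\emptyset$ (which holds as $S\in D_G$). Indeed, $S'\notin M_G(S)$ yields some $\mu\le M_C(S')$ with $\equalize{f_k:k\in S'}(\mu)=h$, and since a witnessing distribution $(\nu_k)_{k\in S'}$ satisfies $\nu_k\ge t_k$, we get $t_C(S')=\sum_{k\in S'}t_k\le\mu\le M_C(S')$.

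The second family, $m_C(S')\le T_C(S')$, is the part I expect to be the crux. Suppose it fails for some nonempty $S'\subseteq S$. If $T_k=m$ for some $k\in S'$, then $m_C(S')\le m=T_k\le T_C(S')$, a contradiction; so $f_k^{-1}(h)=[t_k,T_k]$ is bounded for every $k\in S'$. By \crefpart{equalize-continuous}{suffix}, $\equalize{f_k:k\in S'}$ is continuous and defined on a suffix of $\Rge$; it takes the value $h$ at $t_C(S')\le T_C(S')<m_C(S')$ (take $\nu_k\eqdef t_k$), hence it is defined at $m_C(S')$, with value $h'\eqdef E_G(S')\ge h$ by monotonicity (\cref{equalize-well-defined-nondecreasing}); and if $h'=h$, a witnessing distribution would have $\nu_k\in[t_k,T_k]$ for all $k\in S'$, forcing $m_C(S')\le T_C(S')$ — a contradiction — so $h'>h$. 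But \crefpart{max-on-all}{eq} (using continuity of $f_1,\ldots,f_n$) gives $E_G(S')\le\Max_{S''\in\types}E_G(S'')=h_G=h$, contradicting $h'>h$. Hence $C$ is normal.

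The main obstacle is precisely this last inequality: it is where the maximality of $E_G(S)$ among \emph{all} of $(E_G(S''))_{S''\in\types}$ — i.e.\ \cref{max-on-all}, not merely the maximality over $D_G$ built into $h_G$ — must be combined with the structural fact, available thanks to continuity, that the only way equalizing bounded plateau intervals can fail to absorb a prescribed mass is by the surface level rising strictly above $h$. The remaining ingredients — the closed-interval description of $f_j^{-1}(h)$, converting a satisfiability witness into a strategy profile, and the first normality inequality from $M_G(S)=\emptyset$ — are routine.
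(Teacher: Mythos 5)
Your proposal is correct and follows essentially the same route as the paper's proof: define the plateau intervals $[t_j,T_j]\subseteq f_j^{-1}(h_G)$, verify normality of the resulting distribution constraint using $M_G(S)=\emptyset$ for one family of inequalities and $h_G=\Max_{S''\in\types}E_G(S'')$ (via \crefpart{max-on-all}{eq} together with \crefpart{equalize-continuous}{suffix}) for the other, and then invoke \cref{distribute-aux}. The only differences are cosmetic --- you cap $T_j$ at $\sum_{\type\in\nonemptysubs{S}}\mu^{\type}$ rather than the total mass and dispatch the unbounded-preimage case up front rather than at the end.
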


\begin{proof}
For every $j\in S$, let $t_j\eqdef\min f_j^{-1}(h_G)$ and $T_j\eqdef\Max f_j^{-1}(h_G)$ if $\sup f_j^{-1}(h_G)\ne\infty$ and $T_j\eqdef\sum_{\type\in\types}\mu^{\type}$ otherwise ($t_j$ and $T_j$ are well defined by continuity of $f_j$ and since $E_G(S)=h_G$); regardless of how we define $T_j$, we have that both $f_j(T_j)=h_G$ and $T_j\ge t_j$ (when
$\sup f_j^{-1}(h_G)=\infty$, this is since $E_G(S)=h_G$ and since
$f_j$ is nondecreasing).

We now show that $C\eqdef\bigl((\mu^{\type})_{\type\in\nonemptysubs{S}},\bigl([t_j,T_j]\bigr)_{j\in S}\bigr)$ is a normal distribution constraint. (See \cref{constrained-distribution}; we slightly abuse notation
by treating $S$ in the context of $C$ as $\bigl[|S|\bigr]$, using an arbitrary isomorphism.)
Let $S'\in\nonemptysubs{S}$. As $M(S)=\emptyset$, there exists
$\mu\le\sum_{\type\in\nonemptysubs{S}\setminus\nonemptysubs{S\setminus S'}}\mu^{\type}$ s.t.\ $\equalize{f_k:k\in S'}(\mu)=E_G(S)=h_G$; therefore,
$t_C(S')=\sum_{j\in S'} t_j\le\mu\le\sum_{\type\in\nonemptysubs{S}\setminus\nonemptysubs{S\setminus S'}}\mu^{\type}=M_C(S')$.
Assume by way of contradiction that $\sum_{\type\in\nonemptysubs{S'}}\mu^{\type}>\sum_{j\in S'}T_j$. As $f_j(T_j)=h_G$ for every $j\in S'$, we have
$\equalize{f_k:k\in S'}\bigl(\sum_{j\in S'}T_j\bigr)=h_G\in\mathbb{R}$. By continuity of $(f_j)_{j\in S'}$ and by \crefpart{equalize-continuous}{suffix}, we thus have that
$E_G(S')\in\mathbb{R}$; therefore, there exist $(\mu_j)_{j\in S'}$ s.t.\ $\sum_{j\in S'}\mu_j=\sum_{\type\in\nonemptysubs{S'}}\mu^{\type}$ and
$f_j(\mu_j)=E_G(S')$ for every $j\in S'$. As $\sum_{j\in S'}\mu_j=\sum_{\type\in\nonemptysubs{S'}}\mu^{\type}>\sum_{j\in S'}T_j$, there exists $j\in S'$ s.t.\ $\mu_j>T_j$;
As $T_j<\mu_j\le\sum_{\type\in\nonemptysubs{S'}}\mu^{\type}\le\sum_{\type\in\types}\mu^{\type}$, we have that $T_j=\Max f_j^{-1}(h_G)$, and so $E_G(S')=f_j(\mu_j)>h_G=\Max_{S''\in\types}E_G(S'')$ (where the last equality is by \crefpart{max-on-all}{eq}) --- a contradiction. Therefore,
$m_C(S')=\sum_{\type\in\nonemptysubs{S'}}\mu^{\type}\le\sum_{j\in S'}T_j=T_C(S')$.

As $C$ is normal, by \cref{distribute-aux} it is satisfiable, and so there exist $\bigl(s_j(\type)\bigr)_{j\in S}^{\type\in\nonemptysubs{S}}$ s.t.\ $s(\type)\in\mu^{\type}\cdot\Delta^{\type}$ for every $\type\in\nonemptysubs{S}$ and $\sum_{\type\in\nonemptysubs{S}}s_j(\type)\in[t_j,T_j]$ for every $j\in S$. By the former, $s$ is a strategy profile in $G'$, and by the latter, for every $j\in S$ we have that
$\load{j}\in[t_j,T_j]$, and so by definition of $t_j$ and $T_j$ and since $f_j$ is nondecreasing, $\height{j}=f_j(\load{j})=h_G$ and the proof is complete.
\end{proof}

\begin{lemma}[$E_G(P_G)=h_G$]\label{p-in-argmax}
Let $G=\game$ be a resource selection game.
If $f_1,\ldots,f_n$ are continuous, then
$P_G\in\arg\Max_{S\in D_G}E_G(S)$.
\end{lemma}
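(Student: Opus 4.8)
The plan is to show that $\arg\Max_{S\in D_G}E_G(S)$ is closed under pairwise union; being a finite and (by \cref{p-nonempty}) nonempty family, it is then closed under arbitrary union, so its union $P_G$ is one of its members, which is precisely the assertion. The first move is to apply \cref{max-on-all}: continuity of $f_1,\dots,f_n$ gives $h_G=\Max_{S\in\types}E_G(S)\in\mathbb{R}$, so that $\arg\Max_{S\in D_G}E_G(S)=\{S\in D_G\mid E_G(S)=h_G\}$; it thus suffices to prove that if $S,S'\in D_G$ with $E_G(S)=E_G(S')=h_G$, then $S\cup S'\in D_G$ and $E_G(S\cup S')=h_G$.

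To this end I would construct a strategy profile in the subgame $\bigl((f_j)_{j\in S\cup S'};(\mu^{\type})_{\type\in\nonemptysubs{S\cup S'}}\bigr)$ in which every resource in $S\cup S'$ costs exactly $h_G$. For each $j\in S\cup S'$ — say $j\in S$ — the hypothesis $M_G(S)=\emptyset$ forces $\{j\}\notin M_G(S)$, so $h_G=E_G(S)=\equalize{f_j}(\mu)=f_j(\mu)$ for some $\mu$ (using \crefpart{intermediate-equalize}{idempotent}); hence $h_G$ lies in the range of $f_j$, and by continuity $f_j^{-1}(h_G)$ is a closed interval, so, exactly as in the proof of \cref{distribute}, I may set $t_j\eqdef\min f_j^{-1}(h_G)$ and $T_j\eqdef\Max f_j^{-1}(h_G)$ (or $T_j\eqdef\sum_{\type\in\types}\mu^{\type}$ if $\sup f_j^{-1}(h_G)=\infty$), obtaining $0\le t_j\le T_j$, $f_j\equiv h_G$ on $[t_j,T_j]$, and (in the first case) $f_j(x)>h_G$ for $x>T_j$. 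Then I claim the distribution constraint $C\eqdef\bigl((\mu^{\type})_{\type\in\nonemptysubs{S\cup S'}},([t_j,T_j])_{j\in S\cup S'}\bigr)$, over ground set $S\cup S'$, is normal; granting this, \cref{distribute-aux} yields a satisfying distribution, i.e.\ a strategy profile $s$ with $\load{j}\in[t_j,T_j]$, hence $\height{j}=f_j(\load{j})=h_G$, for every $j\in S\cup S'$. From such an $s$ the conclusion follows by the familiar counting: summing $\load{j}$ over $j\in S\cup S'$ shows $\equalize{f_k:k\in S\cup S'}\bigl(\sum_{\type\in\nonemptysubs{S\cup S'}}\mu^{\type}\bigr)=h_G$, so $E_G(S\cup S')=h_G\in\mathbb{R}$; and for any $U\in\nonemptysubs{S\cup S'}$, setting $\mu\eqdef\sum_{j\in U}\load{j}$ gives $\equalize{f_k:k\in U}(\mu)=h_G=E_G(S\cup S')$ with $\mu=\sum_{\type:\type\cap U\ne\emptyset}\sum_{j\in U}s_j(\type)\le\sum_{\type\in\nonemptysubs{S\cup S'}\setminus\nonemptysubs{(S\cup S')\setminus U}}\mu^{\type}$, so $M_G(S\cup S')=\emptyset$; together $S\cup S'\in D_G$ with $E_G(S\cup S')=h_G$.

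The main obstacle is the normality of $C$, which has two halves. For $m_C(U)\le T_C(U)$, i.e.\ $\sum_{\type\subseteq U}\mu^{\type}\le\sum_{j\in U}T_j$ for $U\in\nonemptysubs{S\cup S'}$: if some $T_j=\sum_{\type\in\types}\mu^{\type}$ this is immediate; otherwise each $f_j$ ($j\in U$) satisfies $f_j(x)>h_G$ for $x>T_j$, and if the inequality failed we would have $\sum_{\type\subseteq U}\mu^{\type}>\sum_{j\in U}T_j\ge\sum_{j\in U}t_j$, so since $\equalize{f_k:k\in U}\bigl(\sum_{j\in U}t_j\bigr)=h_G$ and, by \crefpart{equalize-continuous}{suffix}, $\equalize{f_k:k\in U}$ is defined on a suffix of $\Rge$, it is defined at $\sum_{\type\in\nonemptysubs{U}}\mu^{\type}$; any equalizing distribution there has total exceeding $\sum_{j\in U}T_j$, forcing some coordinate above its $T_j$ and hence $E_G(U)>h_G=\Max_{S''\in\types}E_G(S'')$, contradicting maximality. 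For $t_C(U)\le M_C(U)$, i.e.\ $\sum_{j\in U}t_j\le\sum_{\type\subseteq S\cup S':\,\type\cap U\ne\emptyset}\mu^{\type}$: split $U=(U\cap S)\sqcup(U\setminus S)$ with $U\setminus S\subseteq S'$; applying $M_G(S)=\emptyset$ to $U\cap S$ (if nonempty) and $M_G(S')=\emptyset$ to $U\setminus S$ (if nonempty) produces masses $\mu_1\le\sum_{\type\subseteq S:\,\type\cap U\ne\emptyset}\mu^{\type}$ and $\mu_2\le\sum_{\type\subseteq S':\,\type\cap(U\setminus S)\ne\emptyset}\mu^{\type}$, each admitting an equalization at height $h_G$, whence $\sum_{j\in U\cap S}t_j\le\mu_1$ and $\sum_{j\in U\setminus S}t_j\le\mu_2$. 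The two index families are disjoint (members of the first have $\type\subseteq S$, members of the second have $\type\not\subseteq S$ since they meet $U\setminus S\subseteq[n]\setminus S$) and both are contained in $\{\type\subseteq S\cup S'\mid\type\cap U\ne\emptyset\}$, so adding the two bounds gives $\sum_{j\in U}t_j\le\mu_1+\mu_2\le M_C(U)$. The one point to watch throughout is that $M_C(U)$ is taken over the ground set $S\cup S'$ rather than $S$ or $S'$, which is exactly what makes the disjointness observation necessary; the remaining steps are routine monotonicity bookkeeping of the kind underlying \cref{equalize-aux}.
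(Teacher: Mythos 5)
Your proof is correct, and it follows the paper's high-level skeleton (reduce to closure of $\arg\Max_{S\in D_G}E_G(S)$ under pairwise union, then exhibit a balanced-cost profile on $S\cup S'$), but the technical core is genuinely different. The paper applies \cref{distribute} twice --- once to $S$ and once to $S'$ --- and \emph{glues} the resulting profiles: types in $\nonemptysubs{S'}\setminus\nonemptysubs{S}$ keep only the portion of their consumption lying in $S'\setminus S$, yielding a profile for reduced masses $\tilde\mu^{\type}\le\mu^{\type}$; it then needs \cref{equalize-well-defined-nondecreasing} and \crefpart{equalize-continuous}{suffix} to upgrade $\equalize{f_k:k\in S\cup S'}(\sum\tilde\mu^{\type})=h_G$ to $E_G(S\cup S')\ge h_G$, and \crefpart{max-on-all}{eq} to cap it at $h_G$. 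You instead prove normality of the distribution constraint on $S\cup S'$ directly --- in effect a strengthening of \cref{distribute} that applies to the union before it is known to lie in $D_G$ --- and invoke \cref{distribute-aux} once. The $m_C\le T_C$ half is verbatim from the proof of \cref{distribute}; the new content is your $t_C(U)\le M_C(U)$ argument, splitting $U$ into $U\cap S$ and $U\setminus S\subseteq S'$, extracting masses $\mu_1,\mu_2$ from $M_G(S)=\emptyset$ and $M_G(S')=\emptyset$, and observing that the two witnessing type-families are disjoint (one consists of subsets of $S$, the other of sets meeting $[n]\setminus S$), so the bounds add. That argument is sound. What each approach buys: the paper's gluing avoids redoing any normality analysis but only produces a reduced-mass profile and so needs the extra monotonicity/suffix step; your construction delivers a full-mass balanced profile in one shot, which makes both $E_G(S\cup S')=h_G$ and $M_G(S\cup S')=\emptyset$ immediate by the same counting used in \crefpart{highest-stopping}{who}.
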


\begin{proof}
Define $A\eqdef\arg\Max_{S\in D_G}E_G(S)$.
By \cref{p-nonempty}, $A\ne\emptyset$.
Therefore, by definition of $P_G$, it is enough to show that $S'\cup S''\in A$ for every $S',S''\in A$.
Let, therefore, $S',S''\in A$; it is enough to show that $S'\cup S''\in D_G$ and that $E_G(S'\cup S'')=h_G$.

By \cref{distribute}, there exists a strategy profile $s'$ in the game $\bigl((f_j)_{j\in S'};(\mu^{\type})_{\type\in\nonemptysubs{S'}}\bigr)$
s.t.\ $\heightt{j}=h_G$ for every $j\in S'$; similarly, there exists a strategy profile $s''$ in the game $\bigl((f_j)_{j\in S''};(\mu^{\type})_{\type\in\nonemptysubs{S''}}\bigr)$ s.t.\ $\heighttt{j}=h_G$ for every $j\in S''$.
For every $\type\in\nonemptysubs{S'}$, we define $s(\type)\eqdef s'(\type)$ and set $\tilde{\mu}^{\type}\eqdef\mu^{\type}$; for every $\type\in\nonemptysubs{S''}\setminus\nonemptysubs{S'}$, we define $s_j(\type)\eqdef s''_j(\type)$ for every $j\in S''\setminus S'$ and $s_j(\type)\eqdef0$ for every $j\in S'\cap S''$
and set $\tilde{\mu}^{\type}\eqdef\mu^{\type}-\sum_{j\in S'\cap S''}s_j(\type)$; finally, for every $\type\in\nonemptysubs{S'\cup S''}\setminus(\nonemptysubs{S'}\cup\nonemptysubs{S''})$,
we define $s(\type)\eqdef0$ and set $\tilde{\mu}^{\type}\eqdef0$. By definition, $s$ is a consumption profile in the game $\bigl((f_j)_{j\in S'\cup S''}; (\tilde{\mu}^{\type})_{\type\in\nonemptysubs{S'\cup S''}}\bigr)$. Note that for every $\type\in\nonemptysubs{S'\cup S''}$, we have that $\tilde{\mu}^{\type}\le\mu^{\type}$.

For every $j\in S'$, by definition of $s$ we have $\load{j}=\loadt{j}$ and so $f_j(\load{j})=f_j(\loadt{j})=\heightt{j}=h_G$. For every $j\in S''\setminus S'$, by definition we have $\load{j}=\loadtt{j}$ and so $f_j(\load{j})=f_j(\loadtt{j})=\heighttt{j}=h_G$. Therefore, $h_G=\equalize{f_k:k\in S'\cup S''}\bigl(\sum_{j\in S'\cup S''}\load{j}\bigr)=\equalize{f_k:k\in S'\cup S''}\bigl(\sum_{\type\in\nonemptysubs{S'\cup S''}}\tilde{\mu}^{\type}\bigr)$.
As $\sum_{\type\in\nonemptysubs{S'\cup S''}}\tilde{\mu}^{\type}\le\sum_{\type\in\nonemptysubs{S'\cup S''}}\mu^{\type}$, we have by continuity of $(f_k)_{k\in S'\cup S''}$,
by \cref{p-nonempty} and by \crefpart{equalize-continuous}{suffix} that $E_G(S'\cup S'')\in\mathbb{R}$.
Therefore, by \cref{equalize-well-defined-nondecreasing}, we obtain $E_G(S'\cup S'')\ge\equalize{f_k:k\in S'\cup S''}\bigl(\sum_{\type\in\nonemptysubs{S'\cup S''}}\tilde{\mu}^{\type}\bigr)=h_G$.
Thus, by \crefpart{max-on-all}{eq}, $E_G(S'\cup S'')=h_G$. It therefore remains to show that $S'\cup S''\in D_G$.

We have to show that for every $S\in\nonemptysubs{S'\cup S''}$, there exists $\mu\le\sum_{\type\in\nonemptysubs{S'\cup S''}\setminus\nonemptysubs{(S'\cup S'')\setminus S}}\mu^{\type}$ s.t.\ $\equalize{f_k:k\in S}(\mu)=E_G(S'\cup S'')$; let therefore $S\in\nonemptysubs{S'\cup S''}$.
Define $\mu\eqdef\sum_{j\in S}\load{j}$. As $f_j(\load{j})=h_G$ for every $j\in S'\cup S''$, we have $\equalize{f_k:k\in S}(\mu)=h_G=E_G(S'\cup S'')$.
By definition of $s$, we have that $\mu=\sum_{j\in S}\load{j}\le\sum_{\type\in\nonemptysubs{S'\cup S''}\setminus\nonemptysubs{(S'\cup S'')\setminus S}}\tilde{\mu}^{\type}\le
\sum_{\type\in\nonemptysubs{S'\cup S''}\setminus\nonemptysubs{(S'\cup S'')\setminus S}}\mu^{\type}$, and the proof is complete.
\end{proof}

\begin{proof}[Proof of \cref{construct}]
\cref{construct-distribute} follows directly from \cref{distribute,p-in-argmax}.
We therefore prove \cref{construct-stopping-order}. Let $G'\eqdef G-P_G$ and assume by way of contradiction that $h_{G'}\ge h_G$; recall that by definition $P_{G'}\subseteq[n]\setminus P_G$ and so $P_{G'}$ and $P_G$ are disjoint.
As by \cref{p-nonempty}, $P_{G'}\ne\emptyset$, we aim to obtain a contradiction by showing that $P_{G'}\subseteq P_G$.

By \cref{p-in-argmax}, $P_G\in\arg\Max_{S\in D_G} E_G(S)$; therefore, we have by definition that
$h_G=E_G(P_G)=\equalize{f_k:k\in P_G}\bigl(\sum_{\type\in\nonemptysubs{P_G}}\mu^{\type}\bigr)$.
By \cref{p-nonempty}, $h_G\in\mathbb{R}$ and so there exist $(\mu_j)_{j\in P_G}\in\Rge^{P_G}$ s.t.\ $\sum_{j\in P_G}\mu_j=\sum_{\type\in\nonemptysubs{P_G}}\mu^{\type}$ and $f_j(\mu_j)=h_G$ for every $j\in P_G$.

Similarly, by \cref{p-in-argmax}, $P_{G'}\in\arg\Max_{S\in D_{G'}} E_{G'}(S)$; therefore, and by definition of $G'$, we have that
$h_{G'}=E_{G'}(P_{G'})=\equalize{f_k:k\in P_{G'}}\bigl(\sum_{\type\in\nonemptysubs{P_G\cup P_{G'}}\setminus\nonemptysubs{P_G}}\mu^{\type}\bigr)$.
By \cref{p-nonempty}, $h_{G'}\in\mathbb{R}$ and so there exist $(\mu'_j)_{j\in P_{G'}}\in\Rge^{P_{G'}}$ s.t.\ $\sum_{j\in P_{G'}}\mu'_j=\sum_{\type\in\nonemptysubs{P_G\cup P_{G'}}\setminus\nonemptysubs{P_G}}\mu^{\type}$ and $f_j(\mu'_j)=h_{G'}\ge h_G$ for every $j\in P_{G'}$. Let $j\in P_{G'}$. As $f_j$ is nondecreasing, by \crefpart{intermediate-equalize}{idempotent} and by definition of $h_G$, we have $f_j(0)\le f_j(\mu^{\{j\}})=E_G(\{j\})\le h_G$. By continuity
of $f_j$ and by the intermediate value theorem, there thus exists $\mu_j\in[0,\mu'_j]$ s.t.\ $f_j(\mu_j)=h_G$.

As $f_j(\mu_j)=h_G$ for every $j\in P_G\cup P_{G'}$, by definition
$\equalize{f_k:k\in P_G\cup P_{G'}}\bigl(\sum_{j\in P_G\cup P_{G'}}\mu_j\bigr)=h_G$.
As $\sum_{j\in P_G\cup P_{G'}}\mu_j=\sum_{j\in P_G}\mu_j+\sum_{j\in P_{G'}}\mu_j\le
\sum_{j\in P_G}\mu_j+\sum_{j\in P_{G'}}\mu'_j=
\sum_{\type\in\nonemptysubs{P_G}}\mu^{\type}+\sum_{\type\in\nonemptysubs{P_G\cup P_{G'}}\setminus\nonemptysubs{P_G}}\mu^{\type}=
\sum_{\type\in\nonemptysubs{P_G\cup P_{G'}}}\mu^{\type}$, we have by \crefpart{equalize-continuous}{suffix} that $E_G(P\cup P_{G'})\in\mathbb{R}$ and therefore, by \cref{equalize-well-defined-nondecreasing}, $E_G(P_G\cup P_{G'})\ge h_G$.
Therefore, by definition of $P_G$, in order to show that $P_{G'}\subseteq P_G$ and complete the proof, it is enough to show that $M_G(P_G\cup P_{G'})=\emptyset$.

Let $S\in\nonemptysubs{P_G\cup P_{G'}}$. By \cref{p-in-argmax}, $M_G(P_G)=\emptyset$ and so, if $S\cap P_G\ne\emptyset$, then there exists $\mu''\le\sum_{\type\in\nonemptysubs{P_G}\setminus\nonemptysubs{P_G\setminus(S\cap P_G)}}\mu^{\type}$ s.t.\ $\equalize{f_k:k\in {S\cap P_G}}(\mu'')=E_G(P_G)=h_G$.
If $S\cap P_G=\emptyset$, then set $\mu''\eqdef0$.

Similarly, by \cref{p-in-argmax}, $M_{G'}(P_{G'})=\emptyset$ and so, if $S\cap P_{G'}\ne\emptyset$, then there exists $\mu'\le\sum_{\type\in\nonemptysubs{P_G\cup P_{G'}}\setminus\nonemptysubs{P_G\cup P_{G'}\setminus(S\cap P_{G'})}}\mu^{\type}$ s.t.\ $\equalize{f_k:k\in {S\cap P_{G'}}}(\mu')=E_{G'}(P_{G'})=h_{G'}$.
As $f_j(\mu_j)=h_G$ for every $j\in P_{G'}$, we also have in this case that $\equalize{f_k:k\in S\cap P_{G'}}\bigl(\sum_{j\in S\cap P_{G'}}\mu_j\bigr)=h_G$.
By both of these and by \cref{equalize-well-defined-nondecreasing}, $\equalize{f_k:k\in S\cap P_{G'}}\bigl(\min\{\mu',\sum_{j\in S\cap P_{G'}}\mu_j\}\bigr)=\min\{h_{G'},h_G\}=h_G$
in this case.
If $S\cap P_{G'}=\emptyset$, then set $\mu'\eqdef0$.

Define $\mu\eqdef\mu''+\min\{\mu',\sum_{j\in S\cap P_{G'}}\mu_j\}$.
By definition of $\mu$ (and by \crefpart{intermediate-equalize}{compose} if neither $S\cap P_G=\emptyset$
nor $S\cap P_{G'}=\emptyset$), we have that $\equalize{f_k:k\in S}(\mu)=h_G$; it is therefore enough to show that $\mu\le\sum_{\type\in\nonemptysubs{P_G\cup P_{G'}}\setminus\nonemptysubs{P_G\cup P_{G'}\setminus S}}\mu^{\type}$ in order to complete the proof. Indeed, since $P_G$ and $P_{G'}$ are disjoint, we obtain that
$\mu=
\mu''+\min\{\mu',\sum_{j\in S\cap P_{G'}}\mu_j\}\le
\mu''+\mu'\le
\sum_{\type\in\nonemptysubs{P_G}\setminus\nonemptysubs{P_G\setminus(S\cap P_G)}}\mu^{\type}+
\sum_{\type\in\nonemptysubs{P_G\cup P_{G'}}\setminus\nonemptysubs{P_G\cup P_{G'}\setminus(S\cap P_{G'})}}\mu^{\type}\le
\sum_{\type\in\nonemptysubs{P_G\cup P_{G'}}\setminus\nonemptysubs{P_G\cup P_{G'}\setminus S}}\mu^{\type}$, as required.
\end{proof}

\subsection{Proof of the Theorems and Corollary from Section~\refintitle{results}}\label{proofs-results}

We defer the proof of \cref{nash-exists} until after that of \cref{super-strong}.

\begin{proof}[Proof of \cref{indifference-resource-costs}]
We prove the \lcnamecref{indifference-resource-costs} by full induction on $n$.

Let $n\in\mathbb{N}$ and assume that the \lcnamecref{indifference-resource-costs} holds for all smaller natural values of $n$. Let $G=\game$ be an $n$-resource selection game
and let $s,s'$ be Nash equilibria in~$G$.
By \cref{highest-stopping}(\labelcref{highest-stopping-who},\labelcref{highest-stopping-where}),
$\height{j}=h_G=\heightt{j}$, for every $j\in P_G$.
If $P_G=[n]$, then the proof is complete. Otherwise,
let $s'',s''':\nonemptysubs{[n]\setminus P_G}\rightarrow\Rge^{[n]\setminus P_G}$ be the functions defined by
$s''_j(\type')\eqdef\sum_{\type\in \mathcal{R}(\type',G-P_G)}s_j(\type)$ and $s'''_j(\type')\eqdef\sum_{\type\in \mathcal{R}(\type',G-P_G)}s'_j(\type)$ for every $j\in[n]\setminus P_G$.
By \cref{highest-stopping}(\labelcref{highest-stopping-who},\labelcref{highest-stopping-rest}), $s'',s'''$ are both Nash equilibria in
$G-P_G$,
and so, by the induction hypothesis (since $P_G\ne\emptyset$ by \cref{p-nonempty}), we obtain that $\heighttt{j}=\heightttt{j}$ for every $j\in[n]\setminus P_G$. Therefore, by \cref{highest-stopping}(\labelcref{highest-stopping-who},\labelcref{highest-stopping-rest}), we have
$\height{j}=\heighttt{j}=\heightttt{j}=\heightt{j}$ for every $j\in[n]\setminus P_G$ as well, and so $\height{j}=\heightt{j}$ for every $j\in[n]$, as required.
\end{proof}

\begin{proof}[Proof of \cref{indifference}]
We start by proving \cref{indifference-players}. Let $s,s'$ be Nash equilibria in $G$, and
let $\type\in\types$. By definition of Nash equilibrium and by \cref{indifference-resource-costs},
we have for every $k\in\supp\bigl(s(\type)\bigr)$ and $k'\in\supp\bigl(s'(\type)\bigr)$ that $\height{k}=\min_{j\in\type}\height{j}=\min_{j\in\type}\heightt{j}=\heightt{k'}$, as required.

We move on to prove \cref{indifference-resources}; Let $j\in[n]$. Let $S=\bigl\{k\in[n]~\big|~\height{k}=\height{j}\}$; by \cref{indifference-resource-costs},
$S=\bigl\{k\in[n]~\big|~\heightt{k}=\heightt{j}\}$ as well. Let $\mathcal{R}\eqdef\bigl\{\type\in\types~\big|~\supp\bigl(s(\type)\bigr)\subseteq S\bigr\}$; by \cref{indifference-resource-costs,indifference-players}, $\mathcal{R}=\bigl\{\type\in\types~\big|~\supp\bigl(s'(\type)\bigr)\subseteq S\bigr\}$ as well. Assume w.l.o.g.\ that $\height{j}$ is not a plateau height of any of $S\setminus\{j\}$; we therefore have to show that $\load{k}=\loadt{k}$ for every $k\in S$. For every $k\in S\setminus\{j\}$, as $\height{j}$ is not a plateau height of $f_k$, there exists a unique value $\mu_k\in\Rge$ s.t.\ $f_k(\mu_k)=\height{j}$. Therefore, and as by definition of $S$ and by \cref{indifference-resource-costs} we have that $f_k(\load{k})=\height{k}=\height{j}=\heightt{j}=\heightt{k}=f_k(\loadt{k})$ for every $k\in S\setminus\{j\}$, we have that
$\load{k}=\mu_k=\loadt{k}$ for every $k\in S\setminus\{j\}$. By \cref{indifference-players}, we have that $\sum_{k\in S}\load{k}=\sum_{\type\in\mathcal{R}}\mu^{\type}=\sum_{k\in S}\loadt{k}$, and so $\load{j}=\sum_{\type\in\mathcal{R}}\mu^{\type}-\sum_{k\in S\setminus\{j\}}\load{k}=\sum_{\type\in\mathcal{R}}\mu^{\type}-\sum_{k\in S\setminus\{j\}}\loadt{k}=\loadt{j}$. Therefore, $\load{k}=\loadt{k}$ for every $k\in S$ and the proof is complete.
\end{proof}

\begin{proof}[Proof of \cref{super-strong}]
We begin by proving \cref{super-strong-strong} by full induction on $n$.
Let $n\in\mathbb{N}$ and assume that the claim holds for all smaller natural values of $n$. Let $G=\game$ be an $n$-resource selection game
and let $s$ be Nash equilibrium in $G$. For every $\type\in\types$ with $\mu^{\type}\ne 0$, let $h^{\type}\eqdef\height{j}$ for
every $j\in\supp\bigl(s(\type)\bigr)$.
Let $s'$ be a consumption profile s.t.\
$\heightt{k}< h^{\type}$ for every $\type\in\types$ and $k\in\supp\bigl(s'(\type)\bigr)$ s.t.\ $s'_k(\type)>s_k(\type)$. We must show that $s'=s$.

We begin by showing that $s'(\type)=s(\type)$ for every $\type\in\nonemptysubs{P^s}$, for $P^s$ as defined in \cref{highest-stopping}.
Assume by way of contradiction that $s'(\type)\ne s(\type)$ for some $\type\in\nonemptysubs{P^s}$. Let $S=\{j\in P^s\mid \heightt{j}<h_G\}\subseteq P^s$.
As $s'(\type)\ne s(\type)$, there exists $k\in\type$ s.t.\ $s'_k(\type)>s_k(\type)$ and so $k\in\supp\bigl(s'(\type)\bigr)$. Therefore, by definition of $s'$ and by \crefpart{highest-stopping}{where}, we have that $\heightt{k}<\height{k}=h_G$ and so $k\in S$; in particular, $S\ne\emptyset$.

For every $j\in S$, by definition of $S$ and by \crefpart{highest-stopping}{where}, we have that $f_j(\loadt{j})=\heightt{j}<h_G=\height{j}=f_j(\load{j})$; therefore,
as $f_j$ is nondecreasing we have that $\loadt{j}<\load{j}$ for every such $j$. Therefore, as $S\ne\emptyset$,
$\sum_{j\in S}\loadt{j}<\sum_{j\in S}\load{j}$.
By definition of consumption profile and by \crefpart{highest-stopping}{on}, $\sum_{j\in P^s}\loadt{j}\ge\sum_{\type'\in\nonemptysubs{P^s}}\mu^{\type'}=\sum_{j\in P^s}\load{j}$.
By both of these, $\sum_{j\in P^s\setminus S}\loadt{j}>\sum_{j\in P^s\setminus S}\load{j}$, and so there exists $j\in P^s\setminus S$ s.t. $\loadt{j}>\load{j}$;
hence, there exists $\type'\in\types$ s.t.\ $s'_j(\type')>s_j(\type')$
(and so $j\in\supp\bigl(s'(\type')\bigr)$),
however, by definition of $S$ and as $f_j$ is nondecreasing, we have that
\begin{equation}\label{super-strong-strong-ge}
\heightt{j}=f_j(\loadt{j})\ge f_j(\load{j})=\height{j}=h_G \ge h^{\type'}
\end{equation} (as $s'_j(\type')>0$, $h^{\type'}$ is well defined), even though
$s'_j(\type')>s_j(\type')$ --- a contradiction. Therefore, $s'(\type)=s(\type)$ for every $\type\in\nonemptysubs{P^s}$.
By definition of $s'$, by definition of $P^s$
and by \crefpart{highest-stopping}{on}, we thus obtain that $s'_j\equiv s_j$ for every $j\in P^s$.

If $P^s=[n]$, then the proof is complete. Otherwise, define $s'':\nonemptysubs{[n]\setminus P^s}\rightarrow\Rge^{[n]\setminus P^s}$ by $s''_j(\type')\eqdef\sum_{\type\in \mathcal{R}(\type',G-P^s)}s_j(\type)$ for every $j\in[n]\setminus P^s$.
By \crefpart{highest-stopping}{rest}, $s''$ is a Nash equilibrium in
$G-P^s$, and $\heighttt{j}=\height{j}$ for every $j\in[n]\setminus P^s$.
For every $\type'\in\nonemptysubs{[n]\setminus P^s}$ with $\mu^{\type'}\ne 0$, let $h'^{\type'}\eqdef\heighttt{j}$ for every $j\in\supp\bigl(s''(\type)\bigr)$;
by definition, $h'^{\type'}=h^{\type}$ for every $\type'\in\nonemptysubs{[n]\setminus P^s}$ and $\type\in \mathcal{R}(\type',G-P^s)$ s.t.\ $\mu^{\type}\ne0$.

Similarly, define $s''':\nonemptysubs{[n]\setminus P^s}\rightarrow\Rge^{[n]\setminus P^s}$, by $s'''_j(\type')\eqdef\sum_{\type\in \mathcal{R}(\type',G-P^s)}s'_j(\type)$ for every $j\in[n]\setminus P^s$.
As $s'_j\equiv s_j$ for every $j\in P^s$, we have that, similarly to the proof of \crefpart{highest-stopping}{rest},
$s'''$ is a strategy profile in $G-P^s$ and $\heightttt{j}=\heightt{j}$ for every $j\in[n]\setminus P^s$.
By definition, we have that
$\heightttt{k}=\heightt{k}<h^{\type}=h'^{\type}$ for every $\type'\in\nonemptysubs{[n]\setminus P^s}$ and $k\in\supp\bigl(s'''(\type')\bigr)$ s.t.\ $s'''_k(\type')>s''_k(\type')$, where $\type\in \mathcal{R}(\type',G-P^s)$
s.t.\ $k\in\supp\bigl(s'(\type)\bigr)$ and $s'_k(\type)>s_k(\type)$ (there exists such $\type$ by definition of $\type'$). By the induction hypothesis (since $P^s\ne\emptyset$ by definition), $s'''=s''$,
and so $s'=s$ and the proof of \cref{super-strong-strong} is complete.

The proof of \cref{super-strong-super-strong} is very similar;
the main difference is that in \cref{super-strong-strong-ge} we would have, by $\height{j}$ not being a plateau height
of $f_j$,
that $\heightt{j}=f_j(\loadt{j})> f_j(\load{j})=\height{j}=h_G \ge h^{\type'}$. The remaining trivial differences between \cref{super-strong-strong,super-strong-super-strong} are left to the reader.
\end{proof}

\begin{proof}[Proof of \cref{nash-exists}]
We prove by full induction on $n$ that in every $n$-resource selection game $G=\game$ s.t.\ $f_1,\ldots,f_n$ are continuous, there exists a Nash equilibrium $s$ s.t.\ $\Max_{j\in [n]}\height{j}\le h_G$;
from this claim, the existence of Nash equilibrium \emph{a fortiori} follows (while the existence of Nash equilibrium also follows from a theorem of \cite{Schmeidler73}, we constructively reprove it here via hydraulic analysis rather than a nonconstructive fixed-point theorem). The \lcnamecref{nash-exists} then follows by \cref{super-strong}.
Let $n\in\mathbb{N}$ and assume that the claim holds for all smaller natural values of $n$. Let $G=\game$ be an $n$-resource selection game.

By \crefpart{construct}{distribute}, there exists a strategy profile $s''$ in the $|P_G|$-resource selection game $G''\eqdef\bigl((f_j)_{j\in P_G};(\mu^{\type})_{\type\in\nonemptysubs{P_G}}\bigr)$
s.t.\ $\heighttt{j}=h_G$ for every $j\in P_G$. By definition of Nash equilibrium, $s''$ is a Nash equilibrium in $G''$.
If $P_G=[n]$, then $s\eqdef s''$ is a Nash equilibrium as required, and the proof of the induction step is complete.
Assume, therefore, that $P_G\subsetneq[n]$; hence, and since $P_G\ne\emptyset$ by \cref{p-nonempty}, by the induction hypothesis there exists a Nash equilibrium $s'$ in the $\bigl|[n]\setminus P_G\bigr|$-resource selection game $G'\eqdef G-P_G$,
s.t.\ $\Max_{j\in [n]\setminus P_G}\height{j}\le h_{G'}$.

We construct a strategy profile $s$ in $G$ as follows:
$s(\type)\eqdef s''(\type)$ for every $\type\in\nonemptysubs{P_G}$, and for every $\type'\in\nonemptysubs{[n]\setminus P_G}$, we pick $\bigl(s(\type)\bigr)_{\type\in \mathcal{R}(\type',G')}$
arbitrarily among the tuples satisfying $s(\type)\in\mu^{\type}\cdot\Delta^{\type'}$ for every $\type\in \mathcal{R}(\type',G')$ and $\sum_{\type\in \mathcal{R}(\type',G')}s(\type')=s'(\type')$. This is a well-defined strategy profile in $G$ since $\type'=\type\setminus P_G\subseteq\type$ for every $\type\in \mathcal{R}(\type',G')$ and $\type'\in\nonemptysubs{[n]\setminus P_G}$, and by definition of the player mass in $G'$ and $G''$.
By definition of $s$, we have that $\height{j}=\heighttt{j}$ for every $j\in P_G$ and $\height{j}=\heightt{j}$ for every $j\in[n]\setminus P_G$.
Therefore, by definition of $s''$ we have that $\height{j}=\heighttt{j}=h_G$ for every $j\in P_G$, and by definition of $s'$ and by \crefpart{construct}{stopping-order}, we have that
$\height{j}=\heightt{j}\le h_{G'}<h_G$ for every $j\in[n]\setminus P_G$. Therefore, we have that $\height{j}\le h_G$ for every $j\in[n]$.

We complete the proof by showing that $s$ is a Nash equilibrium in $G$.
For every $\type\in\nonemptysubs{P_G}$, $k\in\supp\bigl(s(R)\bigr)\subseteq\type$ and $j\in\type$, we have by definition of $s,s''$ that $\height{k}=\heighttt{k}=h_G=\heighttt{j}=\height{j}$. Let $\type\in\types\setminus\nonemptysubs{P_G}$, $k\in\supp\bigl(s(R)\bigr)$ and $j\in\type$.
By definition of $s$, we have that $k\in\supp\bigl(s'(R\setminus P_G)\bigr)\subseteq\nonemptysubs{[n]\setminus P_G}$.
If $j\in[n]\setminus P_G$, then $j\in\type\setminus P_G$ and by definition of $s,s'$ we have that $\height{k}=\heightt{k}\le\heightt{j}=\height{j}$; otherwise,
i.e., if $j\in P_G$, then by \crefpart{construct}{stopping-order} and
by definition of $s,s',s''$ we have that $\height{k}=\heightt{k}\le h'_G<h_G=\heighttt{j}=\height{j}$. Either way, $\height{k}\le\height{j}$ and the proof is complete.
\end{proof}

\subsection{Proof of Theorem~\refintitle{hall-frac} from Section~\refintitle{hall-and-beyond}}\label{proofs-hall}

\begin{proof}[Proof of \cref{hall-frac}]
As in the main text, we prove only one direction, leaving the proof of the other (trivial) direction to the reader. Assume that no fractional perfect marriage exists.

By \cref{nash-exists}, there exists a (strong) Nash equilibrium $s$ in $G$. As no fractional perfect marriage exists, by \cref{hall-char} we have that not all loads in $s$ are $1$. As by definition of~$G$ the average of all loads in $s$ is $1$, we have that the highest load in $s$ is greater than $1$. By \crefpart{highest-stopping}{where}, we therefore have $h_G>1$.
Therefore, by definition there exists a set of pistons $S\in\types$ s.t.\ $E_G(S)>1$. As $f_j=\id$ for all $j\in S$, we have, as in \cref{symmetric-equalize} and by definition of~$\mu^{\type}$, that $1<E_G(S)=\equalize{f_j:j\in S}\Bigl(\sum_{\type\in\nonemptysubs{S}}\mu^{\type}\Bigr)=\frac{\sum_{\type\in\nonemptysubs{S}}\mu^{\type}}{|S|}=\frac{|\{i\in[n]\mid\type^i\in\nonemptysubs{S}\}|}{|S|}$.
As by definition, $\type^{\{i\in[n]\mid\type^i\in\nonemptysubs{S}\}}\subseteq S$, we have that $\Bigl|\bigl\{i\in[n]~\big|~\type^i\in\nonemptysubs{S}\bigl\}\Bigl|>|S|\ge\Bigl|\type^{\{i\in[n]\mid\type^i\in\nonemptysubs{S}\}}\Bigl|$, i.e.,
that $|I|>|\type^I|$ for $I\eqdef\bigl\{i\in[n]~\big|~\type^i\in\nonemptysubs{S}\bigr\}$, as required.
\end{proof}

\subsection{Proof of Proposition~\refintitle{cost-lipschitz} from Section~\refintitle{discussion}, and Auxiliary Results}\label{proofs-discussion}

\begin{definition}
Let $G=\game$ be a resource selection game s.t.\ $f_1,\ldots,f_n$ are continuous. We denote by $h_j(G)\in\mathbb{R}$ the value $\height{j}$ for every Nash equilibrium $s$ in $G$.
\end{definition}

\begin{remark}
$h_j(G)$ is well defined by \cref{nash-exists,indifference-resource-costs}.
\end{remark}

\begin{lemma}\label{remove-rises}
Let $G=\game$ be a resource selection game s.t.\ $f_1,\ldots,f_n$ are continuous, and let $S\subseteq[n]$.
For every $j\in[n]\setminus S$,\ \ $h_j(G-S)\ge h_j(G)$.
\end{lemma}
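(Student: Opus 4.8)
The plan is to derive \cref{remove-rises} from a single cleaner auxiliary claim, stated purely in terms of equilibria: \emph{for every resource selection game $G=\game$ with continuous $f_1,\dots,f_n$, every $j\in[n]$, and every $T\subseteq[n]\setminus\{j\}$, one has $h_j(G)\le h_{G-T}$}, where $h_{G-T}$ is the quantity from \cref{definitions} (which by \cref{highest-stopping,p-nonempty} is the well-defined common cost of the highest-costing resources in any Nash equilibrium of $G-T$). To see this suffices, I would first record a peeling fact: for every continuous resource selection game $G'$ and resource $j$ there is $T^{*}\subseteq[n']\setminus\{j\}$ with $h_j(G')=h_{G'-T^{*}}$. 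This is essentially the correctness of \cref{compute-height}, and is proved by induction on the number of resources: if $j\in P_{G'}$ take $T^{*}=\emptyset$ (using \crefpart{highest-stopping}{where} and \crefpart{associative-remove}{empty}); otherwise $h_j(G')=h_j(G'-P_{G'})$ by \crefpart{highest-stopping}{rest}, so apply the inductive hypothesis to $G'-P_{G'}$ (continuous, with fewer resources as $P_{G'}\ne\emptyset$ by \cref{p-nonempty}) and collapse the two removals via \crefpart{associative-remove}{associative}. Now given $S$ and $j\in[n]\setminus S$, apply the peeling fact to $G-S$ to get $T^{*}\subseteq([n]\setminus S)\setminus\{j\}$ with $h_j(G-S)=h_{(G-S)-T^{*}}=h_{G-(S\cup T^{*})}$, the last step by \crefpart{associative-remove}{associative}. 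Since $S\cup T^{*}\subseteq[n]\setminus\{j\}$, the auxiliary claim applied to $G$ with this set gives $h_j(G)\le h_{G-(S\cup T^{*})}=h_j(G-S)$, which is \cref{remove-rises}.

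To prove the auxiliary claim, fix a Nash equilibrium $s$ of $G$ (exists by \cref{nash-exists}), write $h\eqdef\height{j}=h_j(G)$, and set $D\eqdef\{k\in[n]\mid\height{k}\ge h\}$, so $j\in D$. By the definition of Nash equilibrium, any player consuming from a resource in $D$ has \emph{all} its allowed resources in $D$ (else it would deviate to a cheaper one); hence the load $s$ places on $D$ equals the total mass of players whose type lies in $\nonemptysubs{D}$, i.e.\ $\sum_{k\in D}\load{k}=\sum_{\type\in\nonemptysubs{D}}\mu^{\type}$. The same reasoning applied to $D\cap T$ gives $\sum_{\type\in\nonemptysubs{D\cap T}}\mu^{\type}\le\sum_{k\in D\cap T}\load{k}$, as such players consume only within $D\cap T$. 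Next let $s'$ be a Nash equilibrium of $G-T$ and put $D'\eqdef D\setminus T\ni j$. In $G-T$ every player whose type is contained in $D'$ must consume within $D'$, so $\sum_{k\in D'}\loadt{k}$ is at least the total mass of such players, which by the definition of resource removal equals $\sum_{\type:\,\emptyset\ne\type\setminus T\subseteq D'}\mu^{\type}=\sum_{\type:\,\type\subseteq D\cup T,\ \type\not\subseteq T}\mu^{\type}\ge\sum_{\type:\,\type\subseteq D,\ \type\not\subseteq T}\mu^{\type}=\sum_{\type\in\nonemptysubs{D}}\mu^{\type}-\sum_{\type\in\nonemptysubs{D\cap T}}\mu^{\type}$. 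Combining these,
\[
\sum_{k\in D'}\loadt{k}\ \ge\ \sum_{\type\in\nonemptysubs{D}}\mu^{\type}-\sum_{k\in D\cap T}\load{k}\ =\ \sum_{k\in D}\load{k}-\sum_{k\in D\cap T}\load{k}\ =\ \sum_{k\in D'}\load{k}.
\]
As $D'$ is finite and nonempty, some $k\in D'$ satisfies $\loadt{k}\ge\load{k}$; since $f_k$ is nondecreasing and $k\in D'\subseteq D$, we get $\heightt{k}=f_k(\loadt{k})\ge f_k(\load{k})=\height{k}\ge h$. Finally, by \crefpart{highest-stopping}{where} (applied to $G-T$ and $s'$), $h_{G-T}=\Max_{\ell\in[n]\setminus T}\heightt{\ell}\ge\heightt{k}\ge h=h_j(G)$, proving the auxiliary claim.

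I expect the main obstacle to be the bookkeeping in the second paragraph: translating accurately between ``the load an equilibrium places on a set of resources'' and ``the mass of players confined to that set'', both in $G$ and in the removed game $G-T$, and verifying the set identity $\{\type:\emptyset\ne\type\setminus T\subseteq D\setminus T\}=\{\type:\type\subseteq D\cup T,\ \type\not\subseteq T\}$ together with the inclusions between the relevant families of types. Everything else is a pigeonhole step plus monotonicity of the $f_k$. Note that the argument uses only \cref{nash-exists,highest-stopping,associative-remove,p-nonempty}: it never mentions $E_G$, $P_G$, or the equalization operator, and it needs no hypothesis beyond the continuity already posited in \cref{remove-rises}.
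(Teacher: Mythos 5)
Your proof is correct, but it takes a genuinely different route from the paper's. The paper proves \cref{remove-rises} by a double induction (on $n$, and on $n-|S|$), and the crux is the case $P_G\not\subseteq S$: there it shows $h_{G-S}\ge h_G$ by staying entirely inside the explicit-formula machinery --- $M_G(P_G)=\emptyset$ supplies a mass that equalizes $P_G\setminus S$ at height $h_G$, this mass is dominated by the corresponding player mass in $G-S$, and monotonicity of $\Equalize$ together with \crefpart{max-on-all}{eq} yields $E_{G-S}(P_G\setminus S)\ge h_G$. You instead reduce everything to the single monotonicity statement $h_j(G)\le h_{G-T}$ for $T\subseteq[n]\setminus\{j\}$ (via a ``peeling'' step that is just iterated \crefpart{highest-stopping}{rest}, i.e.\ \cref{remove-highest-same}), and prove that statement by directly comparing two Nash equilibria: the Nash property confines every type consuming in $D=\{k\mid\height{k}\ge h\}$ to $D$, mass conservation then forces $\sum_{k\in D\setminus T}\loadt{k}\ge\sum_{k\in D\setminus T}\load{k}$, and pigeonhole plus monotonicity of some $f_k$ finishes. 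Your set-identity bookkeeping and the inequality chain all check out, and you correctly avoid circularity (everything you invoke precedes \cref{remove-rises}). What each approach buys: yours is more elementary and entirely bypasses $E_G$, $M_G$, $D_G$ and the equalization operator; the paper's, while heavier, keeps the argument formula-driven --- which matters for the paper's advertised point that \cref{cost-lipschitz} (whose proof rests on \cref{remove-rises}) is derived ``free of any reasoning about incentives or deviations.'' Your auxiliary claim does reason about incentives (it uses the no-deviation property of $s$ to show types in $D$ are confined to $D$), so it would sacrifice that particular selling point, though not the correctness of anything downstream.
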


\begin{remark}\label{remove-highest-same}
By \cref{highest-stopping}(\labelcref{highest-stopping-who},\labelcref{highest-stopping-rest}), taking $S\eqdef P_G$ in \cref{remove-rises} yields an equality.
\end{remark}

\begin{proof}[Proof of \cref{remove-rises}]
We prove the \lcnamecref{remove-rises} by full induction on $n$.

(Outer induction) Step: Let $n\in\mathbb{N}$ and assume that the \lcnamecref{remove-rises} holds for all smaller values of $n$. We prove the (outer) induction step by full induction on $n-|S|$.

(Inner induction) Base: If $S=[n]$, then the claim vacuously holds.

(Inner induction) Step: Let $S\subsetneq[n]$ and assume that the (outer induction) step holds for all $S$ of larger cardinality.
We consider two cases.

If $S\supseteq P_G$, then by by \crefpart{associative-remove}{associative}, by the (outer) induction hypothesis (since $P_G\ne\emptyset$ by \cref{p-nonempty}), and by \cref{remove-highest-same}, we have $h_j(G-S) = h_j\bigl((G-P_G)-(S\setminus P_G)\bigr) \ge h_j(G-P_G) = h_j(G)$, as required.

Otherwise, i.e., if $P'\eqdef P_G\setminus S\ne\emptyset$, we claim that $h_{G-S}\ge h_G$.
By \cref{p-in-argmax}, we have that $P_G\in\arg\Max_{S'\in D_G}E_G(S')$; in particular, $P_G\in D_G$. Since $E_G(P_G)=h_G\in\mathbb{R}$ by \cref{p-in-argmax,p-nonempty}, we therefore have that $M(P_G)=\emptyset$. Therefore, $P'\notin M(P_G)$, and so there exists $\mu\le\sum_{\smash{\type\in\nonemptysubs{P_G}\setminus\nonemptysubs{P_G\setminus P'}}}\mu^{\type}$ s.t.\ $\equalize{f_k:k\in P'}(\mu)=E_G(P_G)$.
We note that $\nonemptysubs{P_G}\setminus\nonemptysubs{P_G\setminus P'}\subseteq
\nonemptysubs{S\cup P'}\setminus\nonemptysubs{(S\cup P')\setminus P'}=
\bigcup_{\smash{\type'\in\nonemptysubs{P'}}}\mathcal{R}(\type',G-S)$, where the union is of disjoint sets;
therefore, $\mu\le
\sum_{\smash{\type\in\nonemptysubs{P_G}\setminus\nonemptysubs{P_G\setminus P'}}}\mu^{\type}\le
\sum_{\smash{\type'\in\nonemptysubs{P'}}}\sum_{\type\in\mathcal{R}(\type',G-S)}\mu^{\type}$.
By \namecrefs{equalize-well-defined-nondecreasing}~\labelcref{equalize-well-defined-nondecreasing} and~\labelcref{equalize-continuous}(\labelcref{equalize-continuous-suffix}), we therefore have that $h_G=E_G(P_G)=
\equalize{f_k:k\in P'}(\mu)\le
\equalize{f_k:k\in P'}\bigl(\sum_{\smash{\type'\in\nonemptysubs{P'}}}\sum_{\type\in\mathcal{R}(\type',G-S)}\mu^{\type}\bigr)=
E_{G-S}(P')\le h_{G-S}$, where the last inequality is by \crefpart{max-on-all}{eq}.

For every $j\in P_{G-S}$, By \cref{highest-stopping}(\labelcref{highest-stopping-who},\labelcref{highest-stopping-where}), we have $h_j(G-S)=h_{G-S}\ge h_G \ge h_j(G)$. For every $j\in [n]\setminus (S\cup P_{G-S})$, by \cref{remove-highest-same}, by \crefpart{associative-remove}{associative} and by the (inner) induction hypothesis (since $P_{G-S}\ne\emptyset$ by \cref{p-nonempty}), we have $h_j(G-S)=h_j\bigl((G-S)-P_{G-S}\bigr)=h_j\bigl(G-(S\cup P_{G-S})\bigr)\ge h_j(G)$ and the proof is complete.
\end{proof}

\begin{proof}[Proof of \cref{cost-lipschitz}]
We prove that $h_1,\ldots,h_n$ are nondecreasing by full induction on $n$. Let $n\in\mathbb{N}$ and assume that the claim holds for all smaller values of $n$. Let $G=\game$ be an $n$-resource selection game s.t.\ $f_1,\ldots,f_n$ are continuous, let $\type\in\types$, let $\mu'^{\type}>\mu^{\type}$ and let $G'$ be the game obtained from $G$ by increasing the mass of player type $\type$ from $\mu^{\type}$ to $\mu'^{\type}$.

By \crefpart{max-on-all}{eq} and by \cref{equalize-well-defined-nondecreasing}, $h_{G'}=\Max_{\smash{S\in\types}} E_{G'}(S)\ge\Max_{\smash{S\in\types}} E_G=h_G$.
Therefore, by \cref{highest-stopping}(\labelcref{highest-stopping-who},\labelcref{highest-stopping-where}), $h_j(G')=h_{G'}\ge h_G=h_j(G)$ for every $j\in P_{G'}$;
it therefore remains to show that $h_j(G')\ge h_j(G)$ for every $j\in [n]\setminus P_{G'}$ as well.
Before we show this, we claim that $h_j(G'-P_{G'})\ge h_j(G-P_{G'})$ for every $j\in[n]\setminus P_{G'}$; to show this, we consider two cases.
If $\type\subseteq P_{G'}$, then by definition $G'-P_{G'}=G-P_{G'}$, and so $h_j(G'-P_{G'})=h_j(G-P_{G'})$ for every $j\in[n]\setminus P_{G'}$. Otherwise, i.e., if $\type\setminus P_{G'}\ne\emptyset$, then by definition $G'-P_{G'}$ is the game obtained from $G-P_{G'}$ by increasing the mass of player type $\type\setminus P_{G'}$ by $\mu'^{\type}-\mu^{\type}>0$. Therefore, by the induction hypothesis (since $P_{G'}\ne\emptyset$ by \cref{p-nonempty}), we therefore have that $h_j(G'-P_{G'})\ge h_j(G-P_{G'})$ for every $j\in [n]\setminus P_{G'}$ in this case as well.
Finally, by \cref{remove-highest-same,remove-rises}, we therefore have for every $j\in[n]\setminus P_{G'}$ that $h_j(G')=h_j(G'-P_{G'})\ge h_j(G-P_{G'})\ge h_j(G)$, and the proof by induction is complete.

We move on to prove continuity of $h_1,\ldots,h_n$; for simplicity, we show continuity only for the case in which $f_1,\ldots,f_n$ are strictly increasing.
W.l.o.g.\ we will show that $h_1$ is continuous; let $\varepsilon>0$.
For every $j\in[n]$, let $\mu_j(G)=f_j^{-1}\bigl(h_j(G)\bigr)$ --- this is the value $\load{j}$ for every Nash equilibrium $s$ in $G$. By continuity of $f_1$, there exists $\delta>0$ s.t.\ $\bigl|f_1(\mu)-f_1\bigl(\mu_1(G)\bigr)\bigr|<\varepsilon$
for every $\mu\in\bigl(\mu_1(G)-\delta,\mu_1(G)+\delta\bigr)$. Let $\mu'^{\type}\in(\mu^{\type}-\delta,\mu^{\type}+\delta)$ and denote by $G'$ the game obtained from $G$ by changing the mass of player type $\type$ from $\mu^{\type}$ to $\mu'^{\type}$. For every $j\in[n]$, let $\mu_j(G')=f_j^{-1}\bigl(h_j(G')\bigr)$.
We first consider the case in which $\mu'^{\type}\ge\mu^{\type}$. In this case, as shown above, for every $j\in[n]$ we have that $h_j(G')\ge h_j(G)$; as $f_j$ is increasing, therefore $\mu_j(G')\ge \mu_j(G)$ for every such $j$. As $\sum_{j\in[n]} \mu_j(G') = \sum_{\smash{\type'\in\types\setminus\{\type\}}} \mu^{\type'} + \mu'^{\type} =
\sum_{j\in[n]} \mu_j(G) + \mu'^{\type}-\mu^{\type} < \sum_{j\in[n]} \mu_j(G) + \delta$, we therefore have that $\mu_j(G) \le \mu_j(G')< \mu_j(G)+\delta$ for every $j$. In particular, $\mu_1(G) \le \mu_1(G') < \mu_1(G) + \delta$, and so $0\le h_1(G')-h_1(G)=f_1(\mu_1(G'))-f_1(\mu_1(G))<\varepsilon$, as required. The case in which $\mu'^{\type}<\mu^{\type}$ is analogous.

The proof of \cref{cost-lipschitz-lipschitz} is virtually identical to that of the continuity of $h_1,\ldots,h_n$, noticing that we can choose $\delta=\frac{\varepsilon}{K}$, where $K$ is the Lipschitz constant of $f_1$.
\end{proof}

\end{document}